\renewcommand*{\backref}[1]{} \renewcommand*{\backrefalt}[4]{%
 \ifcase #1 %
(Not cited.)%
\or \textbf {Cited} on page~#2.%
\else \textbf{Cited} on pages~#2.%
\fi}    \usepackage{memhfixc}
\title{Quantitative Analysis of Information Leakage \\in Probabilistic and Nondeterministic Systems}
\author{Miguel E. Andr\'es.\\[3pt]%
Institute for Computing and Information Sciences, \\[3pt]%
Radboud University Nijmegen, The Netherlands. \\[3pt]%
\url{mandres@cs.ru.nl}}
\begin{document}
\pagestyle{empty}

\pagebreak

\begin{titlepage}
\begin{center}
\phantom{x}
\vspace{5cm}
{\Large {\bf Quantitative Analysis of Information Leakage in}}\\
\vspace{3mm}
{\Large {\bf Probabilistic and Nondeterministic Systems}} \\
\vspace{2cm}
{\Large {\bf Miguel E. Andr\'es}}  \\
\vspace{30pt}
\end{center}
\end{titlepage}

\noindent{Copyright \copyright~2011 Miguel E. Andr\'es.}\\
ISBN: 978-94-91211-74-4.\\
IPA dissertation series: 2011-09.\\
\hfill

\noindent
This thesis is typeset using \LaTeX.\\
Translation of the Dutch summary: Peter van Rossum.\\
Cover designed by Marieke Meijer -  \href{www.mariekemeijer.com}{www.mariekemeijer.com}.\\
Thesis printed by Ipskamp Drukers -  \href{www.ipskampdrukkers.nl}{www.ipskampdrukkers.nl}.

\vspace{-0.5cm}

\begin{figure}[!h]
\begin{center}
\subfigure{
\includegraphics[width=75pt]{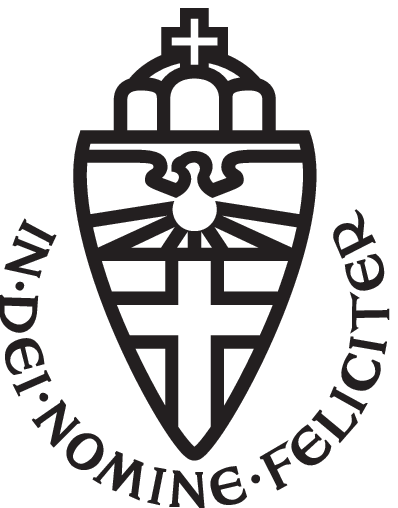}}\hspace{1cm}
\subfigure{\vspace{1.5cm}
\includegraphics[width=125pt]{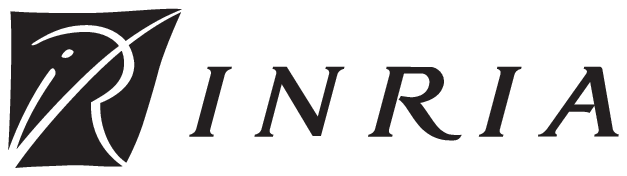}}\\
\hspace{-0.5cm}\subfigure{\hspace{-0.5cm}
\includegraphics[width=125pt]{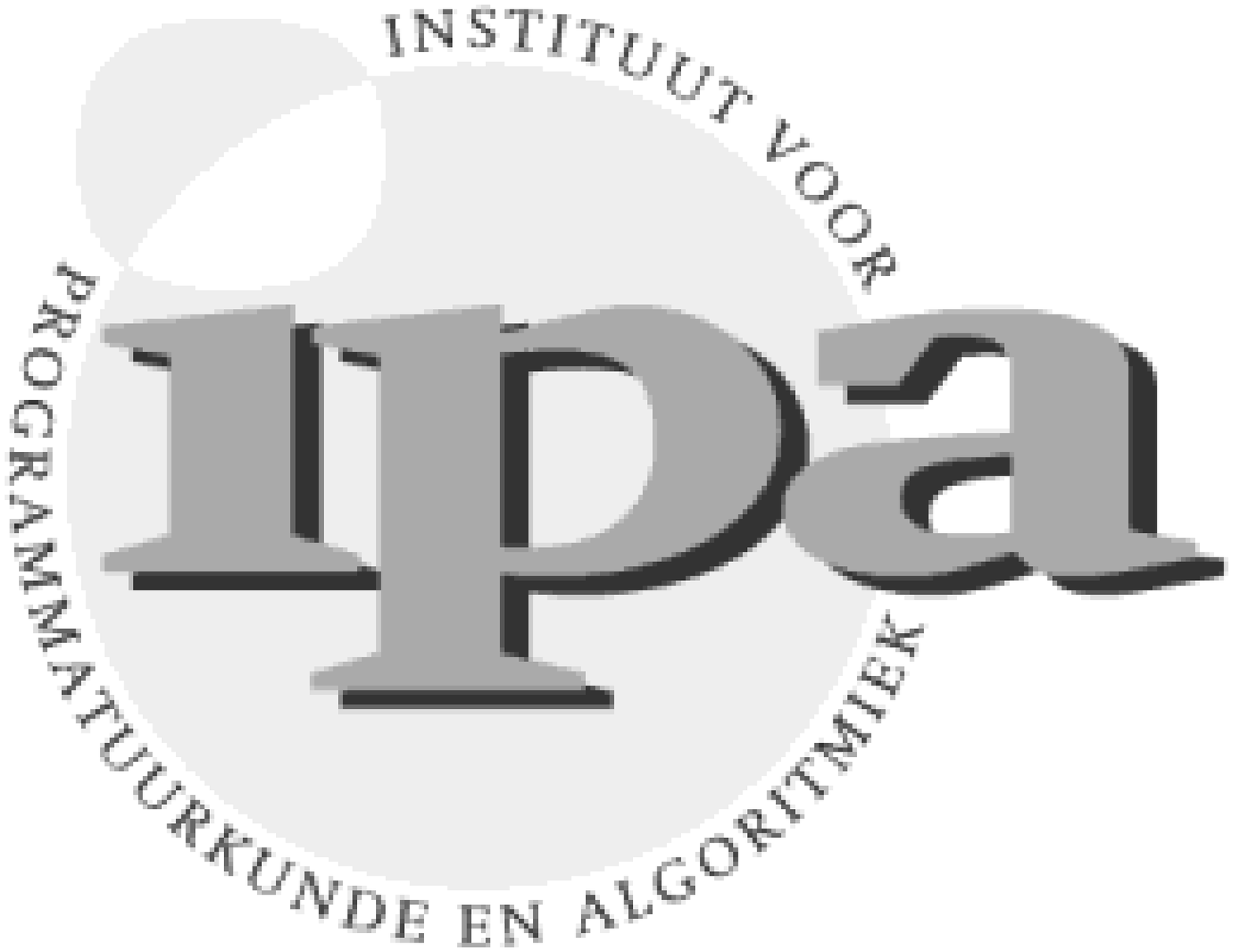}}\hspace{0.75cm}
\subfigure{
\includegraphics[width=100pt]{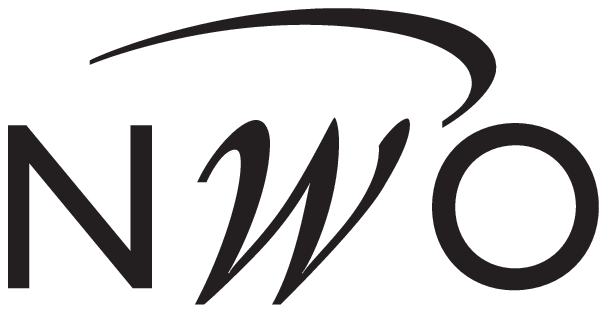}}
\end{center}
\end{figure}
\vspace{-0.25cm}
\noindent The work in this thesis has been carried out at Radboud University and under the auspices of the
research school IPA (Institute for Programming research and Algorithmics). The research funding was provided by the NWO Grant through the open project 612.000.526: Analysis of Anonymity. The author also wishes to acknowledge the French Institute for Research in Computer Science and Control (INRIA) for providing funding for several research visits to the \'Ecole Polytechnique of Paris.

\vfill\pagebreak 

\vspace*{1.5cm}
\begin{center}
{\Large\bf Quantitative Analysis of Information Leakage in}\\
\vspace{2mm}
{\Large\bf Probabilistic and Nondeterministic Systems}\\
\vspace{1cm}
Een wetenschappelijke proeve op het gebied van de Natuurwetenschappen, Wiskunde en Informatica.\\
\vspace{1cm}
{\sc Proefschrift}\\
\vspace{1cm}
ter verkrijging van de graad van doctor\\
aan de Radboud Universiteit Nijmegen\\
op gezag van de rector magnificus, prof. mr. S.C.J.J. Kortmann,\\
volgens besluit van het College van Decanen\\
in het openbaar te verdedigen op vrijdag 1 juli 2011 \\
om 10:30 uur precies\\
\vspace{.6cm}
door\\
\vspace{.6cm}
Miguel E. Andr\'es\\
\vspace{.6cm}
geboren op 02 July 1980,\\
te R\'io Cuarto, C\'ordoba, Argentini\"e.\\
\end{center}

\vfill
\pagebreak

\noindent
\textbf{Promotor:}
\vspace{1pt}

\begin{tabular}{l}
prof. dr. Bart P.F. Jacobs \hspace{0.75cm} \\[5mm]
\end{tabular}

\noindent \textbf{Copromotoren:} \vspace{1pt}

\begin{tabular}{l}
dr. Peter van Rossum\\
dr. Catuscia Palamidessi\hspace{1.15cm} INRIA\\ [5mm]
\end{tabular}

\noindent
\textbf{Manuscriptcommissie:}
\vspace{1pt}

\begin{tabular}{ll}
prof. dr. Joost-Pieter Katoen &  RWTH Aachen University \\
dr. Pedro R. D'Argenio & Universidad Nacional de C\'ordoba\\
prof. dr. Frits W. Vaandrager &  \\
prof. dr. Holger Hermanns & Saarland University\\
dr. Mari{\"e}lle Stoelinga & University of Twente
\end{tabular}
\comment{Decide the order}
 \vfill \pagebreak

\vspace*{1.5cm}
\begin{center}
{\Large\bf Quantitative Analysis of Information Leakage in }\\
\vspace{3mm}
{\Large\bf Probabilistic and Nondeterministic Systems}\\
\vspace{1cm}
A scientific essay in Science.\\
\vspace{1cm}
{\sc Doctoral thesis}\\
\vspace{1cm}
to obtain the degree of doctor\\
from Radboud University Nijmegen\\
on the authority of the Rector Magnificus, Prof. dr. S.C.J.J. Kortmann,\\
according to the decision of the Council of Deans\\
to be defended in public on Friday, 1$^{\text{st}}$  July 2011 \\
at 10:30 hours\\
\vspace{.6cm}
by\\
\vspace{.6cm}
Miguel E. Andr\'es\\
\vspace{.6cm}
born in R\'io Cuarto, C\'ordoba, Argentina\\
on 02 July 1980.\\
\end{center}

\vfill\pagebreak

\noindent
\textbf{Supervisor:}
\vspace{1pt}

\begin{tabular}{l}
prof. dr. Bart P.F. Jacobs \hspace{0.75cm} \\[5mm]
\end{tabular}

\noindent \textbf{Co-supervisors:} \vspace{1pt}

\begin{tabular}{l}
dr. Peter van Rossum\\
dr. Catuscia Palamidessi\hspace{1.15cm} INRIA \\ [5mm]
\end{tabular}

\noindent
\textbf{Doctoral Thesis Committee:}
\vspace{1pt}

\begin{tabular}{ll}
prof. dr. Joost-Pieter Katoen &  RWTH Aachen University \\
dr. Pedro R. D'Argenio & Universidad Nacional de C\'ordoba\\
prof. dr. Frits W. Vaandrager &  \\
prof. dr. Holger Hermanns & Saarland University\\
dr. Mari{\"e}lle Stoelinga & University of Twente
\end{tabular}
\comment{Decide the order}
 \vfill \pagebreak

%
%
%
%
%
%
%

\renewcommand{\thepage}{\roman{page}}
\pagestyle{fancy} \setcounter{page}{1}
{\chapter*{Summary}}
\addcontentsline{toc}{chapter}{Summary}

As we dive into the digital era, there is growing concern about the amount of personal digital information that is being gathered about us. Websites often track people's browsing behavior, health care insurers gather medical data, and many smartphones and navigation systems store or transmit information that makes it possible to track the physical location of their users at any time. Hence, anonymity, and privacy in general, are increasingly at stake. Anonymity protocols counter this concern by offering anonymous communication over the Internet. To ensure the correctness of such protocols, which are often extremely complex, a rigorous framework is needed in which anonymity properties can be expressed, analyzed,  and ultimately verified. Formal methods provide a set of mathematical techniques that allow us to rigorously specify and verify anonymity properties.


This thesis addresses the foundational aspects of formal methods for applications in security and in
particular in anonymity. More concretely, we develop frameworks for the specification of anonymity properties
and propose algorithms for their verification. Since in practice anonymity protocols always leak \emph{some}
information, we focus on quantitative properties which capture the \emph{amount} of information leaked by a
protocol. 



We start our research on anonymity from its very foundations, namely conditional probabilities  -- these are the key ingredient of most quantitative anonymity properties. In Chapter 2 we present cpCTL, the first temporal logic making it possible to specify conditional probabilities. In addition, we present an algorithm to verify cpCTL formulas in a model-checking fashion. This logic, together with the model-checker, allows us to specify and verify quantitative anonymity properties over complex systems where probabilistic and nondeterministic behavior may coexist. 

We then turn our attention to more practical grounds: the constructions of algorithms to compute information leakage. More precisely, in Chapter 3 we present polynomial algorithms to compute the (information-theoretic) leakage of several kinds of fully probabilistic protocols (i.e. protocols without nondeterministic behavior). The techniques presented in this chapter are the first ones enabling the computation of (information-theoretic) leakage in interactive protocols. 

\pagestyle{empty}
In Chapter 4 we attack a well known problem in distributed anonymity protocols, namely full-information
scheduling. To overcome this problem, we propose an alternative definition of schedulers together with several new definitions of anonymity (varying according to the attacker's power), and revise the famous definition of strong-anonymity from the literature. Furthermore, we provide a technique to verify
that a distributed protocol satisfies some of the proposed definitions.

In Chapter 5 we provide (counterexample-based) techniques to debug complex systems, allowing for the detection
of flaws in security protocols. Finally, in Chapter 6 we briefly discuss extensions to the frameworks and techniques
proposed in Chapters 3 and 4.


\vfill
\pagebreak

{\section*{Acknowledgements}}
\addcontentsline{toc}{chapter}{Acknowledgements}

This thesis would not have been possible without the continuous support of many people to whom I will always be grateful.

I am heartily thankful to my supervisor Bart Jacobs. He has closely followed the evolution of my PhD and made sure I always had all the resources a PhD student could possibly need.

I owe my deepest gratitude to my co-supervisor, Peter van Rossum. Four years have passed since he decided to take the risk to hire \emph{me}, an Argentinian guy that he barely knew. I was really lucky to have Peter as my supervisor; he has always been very supportive, flexible, and extremely easygoing with me. I will never forget the football World Cup of $2006$ (not that Argentina did very well); back then I was invited to spend one week in Nijmegen for an \emph{official job interview}. But before I had the time to stress too much about formal talks and difficult questions, I found myself sharing a beer with Peter while watching  Argentina vs the Netherlands (fortunately Argentina did not win --- I still wonder what would have happened otherwise). This was just the first of many nice moments we shared together, including dinners, conversations, and trips. In addition to having fun, we have worked hard together --- indeed we completed one of the most important proofs of this thesis at midnight after a long working day at Peter's house (and also after Mari{\"e}lle finally managed to get little Quinten to sleep $\ddot\smile$).

I cannot allow myself to continue this letter without mentioning Catuscia Palamidessi. Much has changed in my life since I first met her in June 2007. Catuscia came then to visit our group in Nijmegen and we discovered that we had many research interests in common. Soon after, Catuscia invited me to visit her group in Paris and this turned out to be the beginning of a very fruitful collaboration. Since then we have spent countless days (and especially nights) working very hard together, writing many articles, and attending many conferences --- including some in amazing places like Australia, Barbados, and Cyprus. Catuscia is not only an exceptionally bright and passionate scientist, but she is also one of the most thoughtful people I have ever met (placing the interests of her colleagues and PhD students above her own), a wonderful person to work with (turning every work meeting into a relaxed intellectual discussion, enhanced with the finest \emph{caff\`e italiano}), and, above all, an unconditional friend. For these reasons and many more (whose enumeration would require a second volume for this thesis), I am forever indebted to Catuscia.




This work has also greatly benefited from the insightful remarks and suggestions of the members of the reading committee Joost-Pieter Katoen, Pedro D'Argenio, and Frits Vaandrager, whom I wish to thank heartily. To Pedro I am also grateful for his sustained guidance and support in my life as a researcher. Many results in this thesis are a product of joint work, and apart from Peter and Catuscia, I am grateful to my co-authors M\'ario S. Alvim, Pedro R. D'Argenio, Geoffrey Smith and Ana Sokolova, all of whom shared their expertise with me. I am also thankful to Jasper Berendsen, Domingo G\'omez, David Jansen, Mari\"elle Stoelinga, Tingting Han, Sergio Giro, J\'er\'emy Dubreil, and Konstantinos Chatizikokolakis for many fruitful  discussions during my time as a PhD student. Also many thanks to Anne-Lise Laurain for her constant (emotional and technical) support during the writing of my thesis, Alexandra Silva for her insightful comments on the introduction of this work, and Marieke Meijer for devoting her artistic talent to the design of the cover of this thesis.

Special thanks to my paranymphs and dear friends Vicen\c{c}, Igor, Cristina, and Flavio. Together we have shared so much... uncountable lunches in the Refter, coffees in the blue coaches, and trips around the world among many more experiences. But, even more importantly, we have always been there to support each other in difficult times, and this is worth the world to me.  

I wish to thank my colleagues in the DS group for providing such a friendly atmosphere which contributed greatly to the realization of my PhD. I explicitly want to thank Alejandro, Ana, Chris, Christian, Erik, Fabian, Gerhard, Ichiro, Jorik, Ken, \L ukasz, Olha, Pieter, Pim, Roel, Thanh Son, and Wojciech with whom I have shared many coffees, nice conversations, table tennis, and much more. My journey in the DS group would not have been as easy if it was not for Maria and Desiree whose help on administrative issues saved me lots of pain; as for any kind of technical problem or just IT advice, Ronny and Engelbert have been always very helpful.

I also wish to thank all the members of the Com\`ete group for making me feel welcome in such a wonderful and fun group. In particular, I want to thank the Colombian crowd -- Frank, Andr\'es, and Luis -- for the nice nights out to ``La Pe\~na'' and all the fun we had together,  J\'er\'emy for introducing me to the tennis group, Sophia for helping correct my English in this thesis, Kostas for many interesting conversations on the most diverse topics of computer science (and life in general), and M\'ario for finally acknowledging the superiority of the Argentinian soccer over the Brazilian one $\ddot\smile$.

Along the years, I always had a very happy and social life in Nijmegen. Many people, in addition to my paranymphs, have greatly contributed to this. A very big thanks to ``Anita'', for so many happy moments and her unconditional support during my life in Nijmegen. Also many thanks to my dear friend Ren\'ee, every time I hear ``It is difficult to become friends with Dutch people, but once they are your friends they would NEVER let you down'' I have to think of her. Special thanks to Elena and Clara for being there for me when I could not even speak English, Errez for sharing his wisdom in hunting matters $\ddot\smile$, Hilje and Daphne for being there for me when I just arrived to Nijmegen (this was very important to me), also thanks to Klasien, David, and Patricia for many nice nights out and to my dear neighbours -- Marleen, Kim, and Marianne -- for welcoming me in their house and being always so understanding with me. Besides, I would like to thank to the ``Blonde Pater crowd'' including Cristina, Christian, Daniela, Davide, Deniz, Francesco, Jordan, Mariam, Nina, Shankar, and Vicen\c{c} with all of whom I shared many nice cappuccinos, conversations, and nights out. Special thanks to the sweet Nina for being always willing help me, to Francesco for the wonderful guitar nights, to Mariam for her help in Dutch, and to Christian... well, simply for his ``buena onda''.

More than four years have passed since I left Argentina, and many things have certainly changed in my life. However, the support and affection of my lifetime friends have remained immutable. Many thanks to Duro, Seba, Tony, Viole, Gabi, and Mart\'in for all they have contributed to my life. Special thanks to my ``brothers'' Tincho and Lapin, my life would not be the same without them. 

Last but not least, all my affection to my family: dad Miguel, mum Bacho, sister Josefina, and brothers Ignacio and Augusto. Every single success in my life I owe mostly to them. Thanks to my brothers and sister for their constant support and everlasting smiles, which mean to me more than I can express in words. Thanks to my parents for their incessant and selfless sacrifice, thanks to which their children have had all anybody could possible require to be happy and successful. My parents are the greatest role models for me and to them I dedicate this thesis.

\begin{flushright} \noindent\emph{To my dear parents, Miguel and Bacho.}\end{flushright}

%
%
\bigskip
Por \'ultimo, el ``gracias'' mas grande del mundo es para mis queridos padres –- Miguel y Bacho –- y hermanos –- Ignacio, Josefina y Augusto. Porque cada logro conseguido en mi vida, ha sido (en gran parte) gracias a ellos. 
A mis hermanos, les agradezco su apoyo y sonrisas constantes, que significaron y siguen significando para m\'i mucho m\'as de lo que las palabras puedan expresar. A mis padres, su incansable y desinteresado sacrificio, gracias al cual sus hijos han tenido y tienen todas las oportunidades del mundo para ser felices y exitosos. Ellos son, sin lugar a dudas, mi ejemplo de vida, y a ellos dedico esta tesis:
%
\begin{flushright}\noindent\emph{A mis queridos padres, Miguel y Bacho.}\end{flushright}

%
%
%

\vfill

\begin{flushright}
Miguel E. Andr\'es\\
Paris, May 2011.
\end{flushright}
\thispagestyle{empty}
\vfill
\pagebreak

\tableofcontents

\chapter{Introduction}
\pagestyle{fancy} \setcounter{page}{1}
\renewcommand{\thepage}{\arabic{page}}
\label{ch.intro}
%
%
%


\section{Anonymity}

The world \emph{Anonymity} derives from the Greek $\grave{\alpha} \nu\omega\nu\upsilon\mu\acute{\iota}\alpha$, which means
``without a name''. In general, this term is used to express the fact that \rev{the identity of an individual} is not publicly known.

\begin{wrapfigure}{r}{4.25cm}
  \vspace{-1cm}
    \begin{center}
      \includegraphics[width=4.25cm]{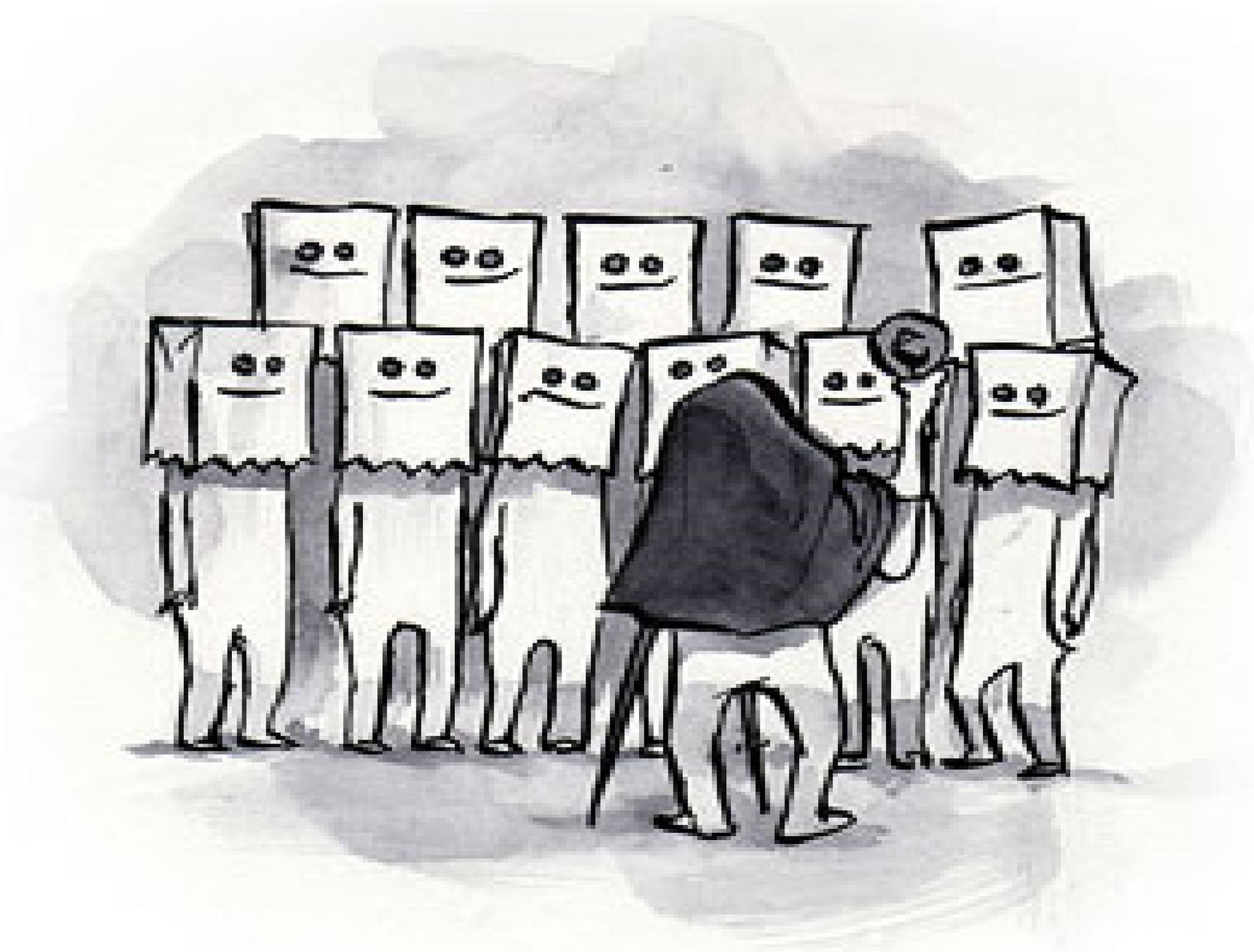}
    \end{center}
  \vspace{-0.75cm}
\end{wrapfigure}
Since the beginning of human society, anonymity has been an
important issue. For instance, people have always felt the need to
be able to express their opinions without being identified,
because of the fear of social and economical retribution,
harassment, or even threats to their lives.

\subsection{The relevance of anonymity nowadays}


With the advent of the Internet, the  issue of anonymity has been
magnified to extreme proportions. On the one hand, the Internet
increases the opportunities of interacting online, communicating
information, expressing opinion in public forums, etc.  On the
other hand, by using the Internet we are disclosing information
about ourselves: every time we visit a website certain data about us may be recorded. In this way,
organizations like multinational corporations can build a
permanent, commercially valuable record of our interests.
Similarly, every email we send goes through multiple control
points and it is most likely scanned by profiling software
belonging to organizations like the National Security Agency of the
USA. Such information can be used against us, ranging from
slightly annoying practices like commercial spam, to more serious
offences like stealing \rev{credit cards' information} for criminal
purposes.
	
Anonymity, however, is not limited to individual issues: it has
considerable social and political implications. In countries
controlled by repressive governments, the Internet
is\comment{arrange the formulation of this first sentence}
becoming increasingly more restricted, with the purpose of
preventing their citizens from accessing uncensored information
and from sending information to the outside world. The role of
anonymizing technologies in this scenario is twofold: (1) they can
help accessing sources of censored information via proxies (2)
they \rev{can help individuals to freely express their ideas (for instance via online forums)}.
\comment{maybe tell the stories about iran and china/google???}

The practice of censoring the Internet is actually not limited to
repressive governments. In fact, a recent research project
conducted by the universities of Harvard, Cambridge, Oxford and
Toronto, studied government censorship in 46 countries and
concluded that 25 of them, including various western countries,
filter to some extent communications concerning political or
religious positions.

Anonymizing technologies, as most technologies, can also be used
for malicious purposes. For instance, they can be used to help
harassment, hate speech, financial scams, disclosure of private information, etc. Because of their nature, they
are actually more controversial than other technologies: people
are concerned that terrorists, pedophiles, or other criminals
could take advantage of them.

Whatever is the use one can make of anonymity, and the personal
view one may have on this topic, it is clearly important to be
able to assess the degree of anonymity of a given system. This is
one of the aims of this thesis.

%

\subsection{Anonymizing technologies nowadays}

The most common use of anonymizing technologies is to  prevent observers from
discovering the source of communications.

\begin{wrapfigure}{r}{4cm}
  \vspace{-1cm}
    \begin{center}
      \includegraphics[width=4cm]{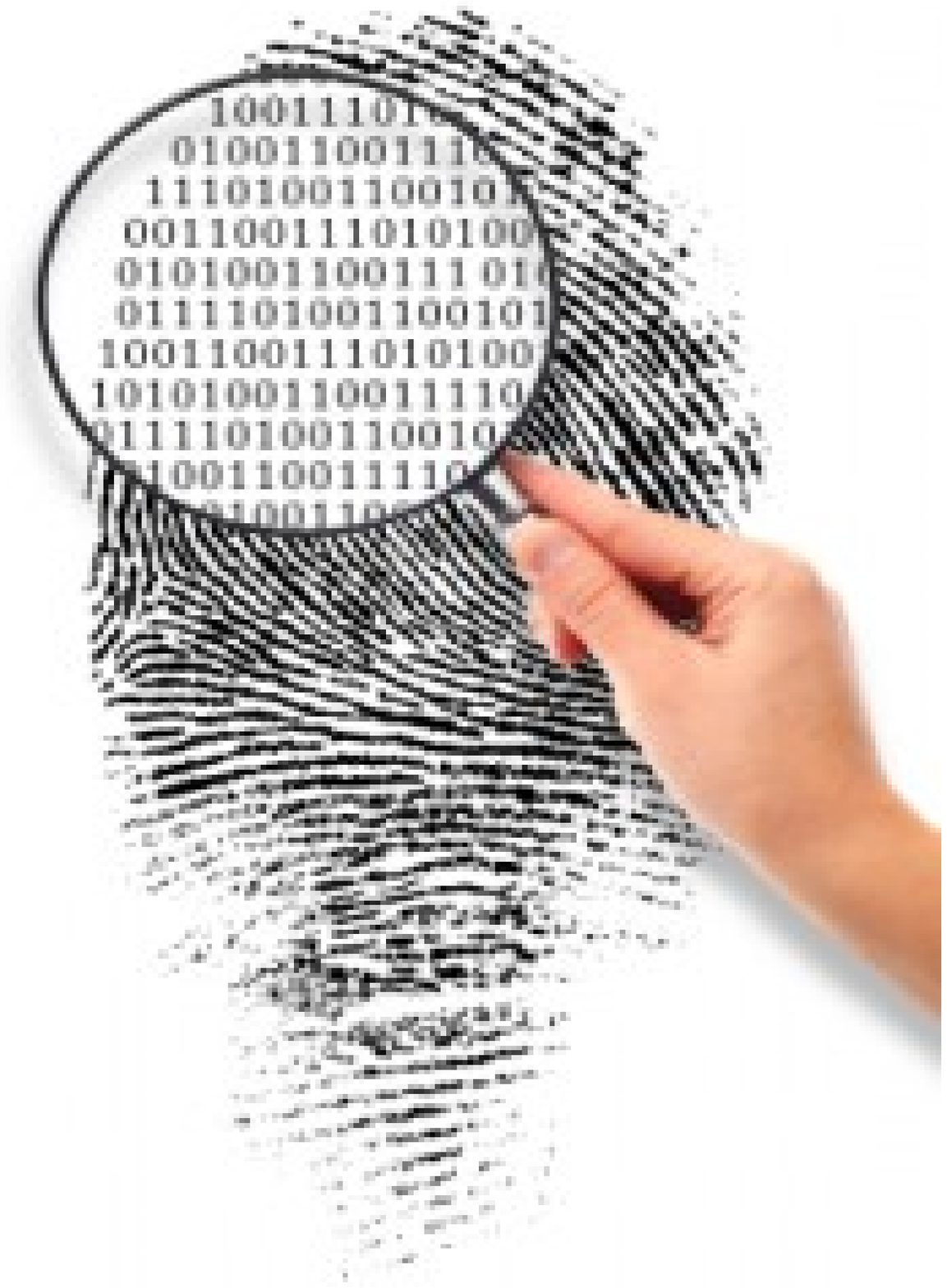}
    \end{center}
  \vspace{-0.9cm}
\end{wrapfigure}
This is not an easy task, since in general users must include in the
message information about themselves. In practice, for Internet
communication, this information is the (unique) IP address of the
computer in use, which  specifies its location in the topology of
the network. This IP number is usually logged along with the host
name (logical name of the sender). Even when the user connects
to the Internet with a temporary IP number assigned to him for a
single session, this number is in general logged by the ISP
(Internet Service Provider), which makes it possible, with the
ISP's \rev{collaboration}, to know who used a certain IP number at a
certain time and thus to find out the identity of the user.

The currently available anonymity tools aim at preventing the
observers of an online communication from learning the IP address of
the participants. Most applications rely on proxies, i.e.
intermediary computers to which messages are forwarded and which
appear then as senders of the communication, thus hiding the
original initiator of the communication. Setting up a proxy server nowadays is easy to
implement and maintain. However, single-hop architectures in which
all users enter and leave through the same proxy, create a single
point of failure which can significantly threaten the security of
the network. Multi-hop architectures have therefore been developed
to increase the performance as well as the security of the system.
In \rev{the so-called} daisy-chaining anonymization for instance,  traffic hops
deliberately via a series of participating nodes (changed for
every new communication) before reaching the intended receiver,
which prevents any single entity from identifying the user.
Anonymouse [Ans], FilterSneak [Fil] and Proxify [Pro] are well-known
free web based proxies, while Anonymizer [Ane] is currently one of
the leading commercial solutions.

%






\subsection{Anonymizing technologies: a bit of history}

Anonymous posting/reply services on the Internet were started
around 1988 and were introduced primarily for use on specific
newsgroups which discussed particularly volatile, sensitive and
personal subjects.
In 1992, anonymity services using  remailers were originated by
Cypherpunk. Global anonymity servers which served the entire
Internet soon sprang up, combining the functions of anonymous
posting as well as anonymous remailing in one service. The new
global services also introduced the concept of pseudonymity
which allowed anonymous mail to be replied.

The first popular anonymizing tool was the Penet remailer
developed by Johan Helsingius of Finland in the early 1990s. The
tool was originally intended to serve only Scandinavia but
Helsingius eventually expanded to worldwide service due to a flood
of international requests.

Based on this tool, in 1995\rev{,} Mikael Berglund made a study on how
anonymity was used. His study was based on scanning all publicly
available newsgroups in a Swedish Usenet News server. He randomly
selected a number of messages from users of the Penet remailer and
classified them by topic. His results are shown in
Table~\ref{table:statAnon}.

In 1993, Cottrell wrote the Mixmaster remailer and two years later
he launched Anonymizer \rev{which} became the first Web-based
anonymity tool. 
%
\begin{figure}[h!]
 \centering
 \qquad\includegraphics[width=08cm]{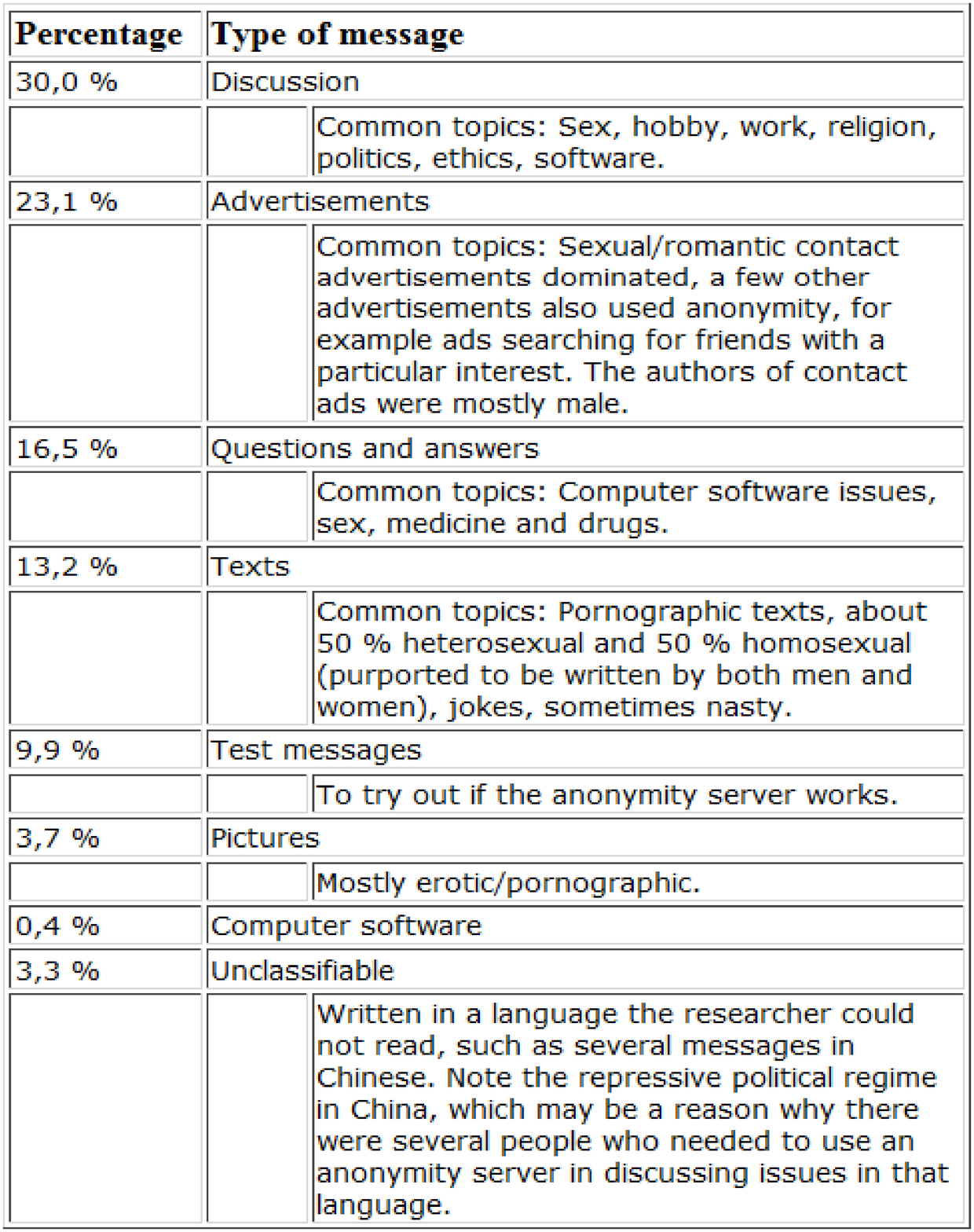}
  \caption{Statistics on the Use of Anonymity -- Penet}\label{table:statAnon}
\end{figure}
\subsection{Anonymity and computer science}

The role of computer science with respect to anonymity is twofold.
On one the hand, the theory of communication helps in the design
and implementation of anonymizing protocols. On the other hand,
like for all software systems, there is the issue of correctness,
i.e., of ensuring that the protocol achieves the expected anonymity
guarantees.

While most of the work on anonymity in the literature belongs to
the \comment{first area and then issue, not so nice...} first
challenge, \rev{this thesis addresses the second one. Ensuring
the correctness of a protocol involves (1) the use of formalisms
to precisely model the behaviour of the protocol, and (2) the
use of formalisms to specify unambiguously the desired
properties. Once the protocol and its desired properties have
been specified, it is possible to employ} verification
techniques to prove formally that the specified model satisfy
such properties. These topics belong to the
branch of computer science called \emph{formal methods}.

\section{Formal methods}

Formal methods are a particular kind of mathematically-based
techniques used in computer science and software engineering for
the specification and verification of software and hardware
systems. These techniques have their foundations on the most
diverse conceptual frameworks: logic calculi, automata theory,
formal languages, program semantics, etc.

\comment{Maybe add references?}

\subsection{The need of formal verification}

As explained in previous sections, internet technologies play an
important role in our lives. However, Internet is not the only
kind of technology we are in contact with: Every day we interact
with embedded systems such as mobile phones, smart cards, GPS
receivers, videogame consoles, digital cameras, DVD players, etc.
%
%
Technology also plays an important role \rev{in critical-life systems, i.e., systems where the malfunction of any component may incur in life losses. Example of such systems can be found in the areas of medicine, aeronautics, nuclear energy generation, etc.}

The malfunction of a technological device can have important
negative consequences ranging from material to life loss. In
the following we list some famous examples of disasters caused by software failure.


\begin{wrapfigure}{r}{5cm}
  \vspace{-0.75cm}
    \begin{center}
      \includegraphics[width=5cm]{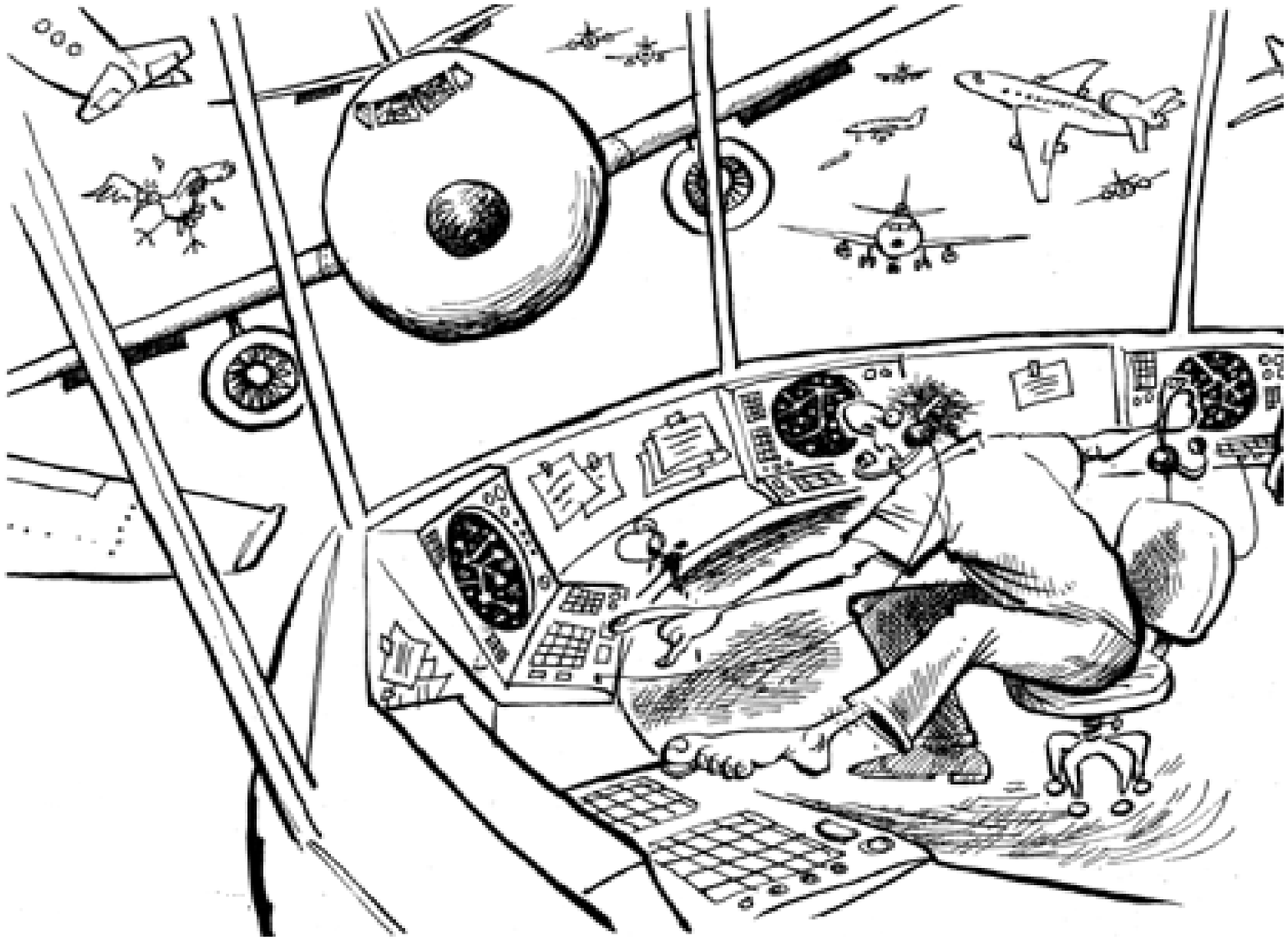}
    \end{center}
  \vspace{-0.5cm}
\end{wrapfigure}
\paragraph{Material loss:} In 2004, the Air Traffic Control
Center of Los Angeles International Airport lost communication
with Airplanes causing the immediate suspension of all operations.
The failure in the radio system was due to a 32-bit countdown
timer that decremented every millisecond.
Due to a bug in the software, when the counter reached zero \rev{the system shut down unexpectedly}. This communication outage disrupted
about 600 flights (including 150 cancellations) impacting over
30.000 passengers \rev{and causing millionaire losses to airway companies involved}.

In 1996, an Ariane 5 rocket launched by the European Space Agency
exploded just forty seconds after lift-off. The rocket was on its
first voyage, after a decade of development costing U\$S 7
billion. The destroyed rocket and its cargo were valued at U\$S
500 million. A board of inquiry investigated the causes of the
explosion and in two weeks issued a report. It turned out that the
cause of the failure was a software error in the inertial
reference system. Specifically a 64 bit floating point number
related to the horizontal velocity of the rocket was converted to
a 16 bit signed integer.



In the early nineties a bug (discovered by a professor of
Lynchburg College, USA) in the floating-point division unit of
the processor Intel Pentium II not only severely damaged Intel's
reputation, but it also forced the replacement of faulty
processors causing a loss of 475 million US dollars for the
company.


\setlength{\abovecaptionskip}{-4pt plus 1pt minus 1pt}

\begin{wrapfigure}{r}{5.1cm}
  \vspace{-0.5cm}
    \begin{center}
      \includegraphics[width=5.1cm]{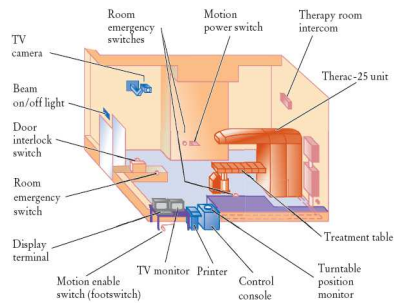}
    \end{center}
  \caption{\!\!Therac-25 Facility.}
  \vspace{-0.25cm}
\end{wrapfigure}
\paragraph{Fatal loss:} A software flaw in the control part of the radiation therapy
machine Therac-25 caused the death of six cancer patients between
1985 and 1987 as they were exposed to an overdose of radiation.

In 1995 the American Airlines Flight 965 connecting Miami and Cali
crashed just five minutes before its scheduled arrival. The
accident led to a total of 159 deaths. Paris Kanellakis, a well
known researcher (creator of the partition refinement algorithm,
broadly used to verify bisimulation), was in the flight together
with his family. Investigations concluded that the accident was 
originated by a sudden turn of the aircraft caused by the autopilot after an instruction
of one of the pilots: the pilot input `R' in the navigational
computer referring to a \rev{location called `Rozo' but the computer erroneously
interpreted it as a location called `Romeo' (due to the spelling similarity and physical proximity of the locations)}.



%



As the use and complexity of technological devices grow quickly,
mechanisms to improve their correctness have become unavoidable.
But, how can we be sure of the correctness of such technologies,
with thousands (and sometimes, millions) of components interacting
in complex ways? One possible answer is by using \emph{formal
verification}, a branch of formal methods.

\subsection{Formal verification}

Formal verification is considered a fundamental area of study in
computer science. In the context of hardware and software systems,
formal verification is the act of proving or disproving the
correctness of the system with respect to a certain property,
using formal methods. In order to achieve this, it is necessary to
construct a \rev{mathematical} model describing all possible behaviors of the
system. In addition, the property must be formally specified
avoiding, in this way, possible ambiguities.


Important formal verification techniques include theorem proving,
simulation, testing, and model checking. In this thesis \rev{we focus on the use} of  this last technique.

\paragraph{Model checking} Model checking is an automated
verification technique that, given a finite model of the system
and a formal property, systematically checks whether the property
holds in the model or not. In addition, if the property is falsified,
debugging information is provided in the form of a
\emph{counterexample}. This situation is represented in
Figure~\ref{fig:ModelCheckingChart}.

\begin{figure}[h!]
 \centering
 \qquad\includegraphics[width=11cm]{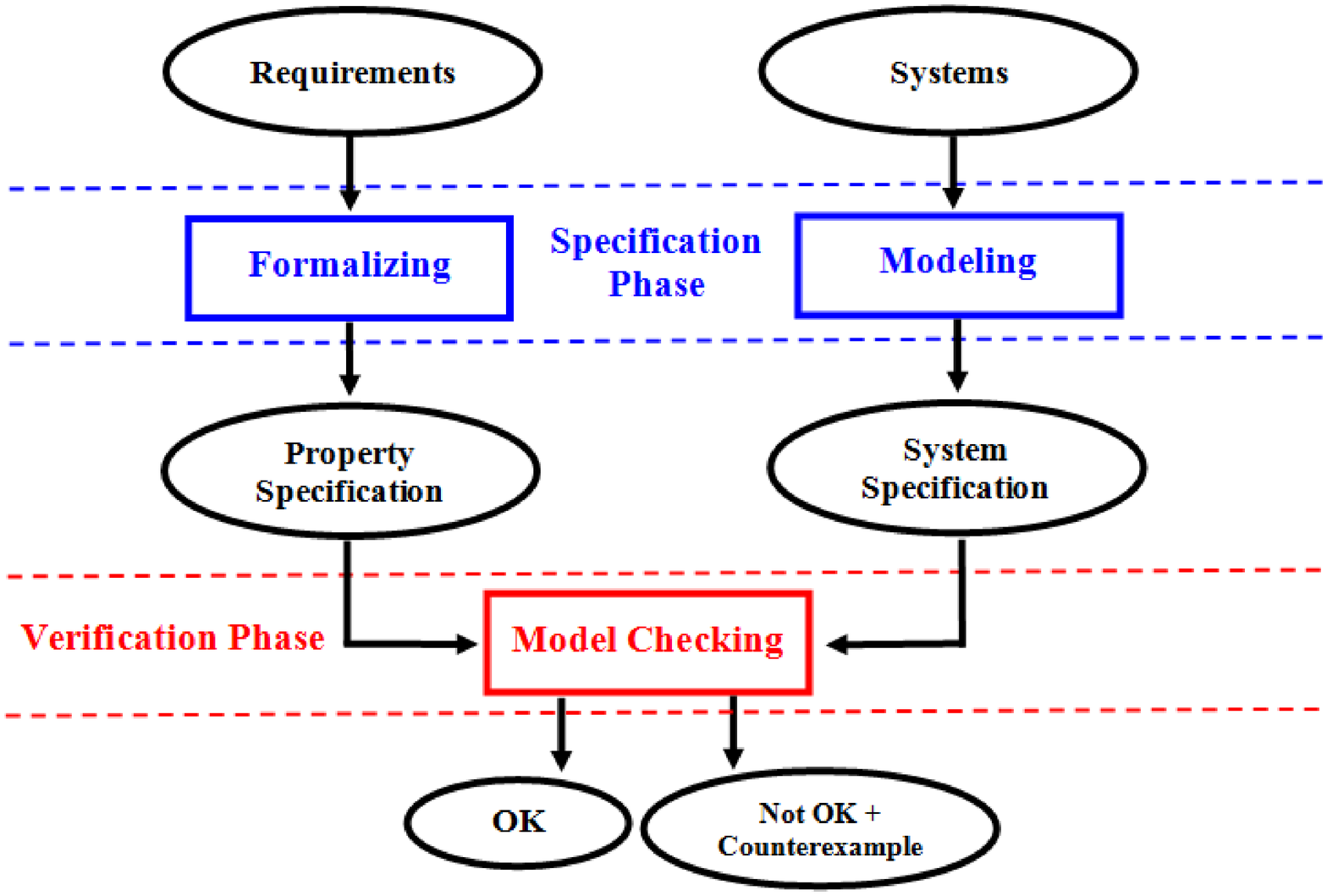}
  \caption{Schematic view of model-checking approach}\label{fig:ModelCheckingChart}
\end{figure}

Usual properties that can be verified are ``Can the system reach
a deadlock state?'', or ``Every sent message is received with
probability at least 0.99?''. Such automated verification is
carried on by a so-called \emph{model checker}, an algorithm that
exhaustively searches the space state of the model looking for
states violating the (correctness) property.

A major strength of model checking is the capability of generating
counterexamples\index{counterexample} which provide
diagnostic information in case the property is violated. Edmund M.
Clarke, one of the pioneers of Model Checking
said~\cite{Clarke:08:modelchecking}: \emph{``It is impossible to
overestimate the importance of the counterexample feature. The
counterexamples are invaluable in debugging complex systems. Some
people use model checking just for this feature''}. In case a
state violating the property under consideration is encountered,
the model checker provides a counterexample describing a possible
execution that leads from the initial state of the system to a
violating state.

%
Other important advantages of model checking are: it is highly
automatic so it requires little interaction and knowledge of
designers, it is rather fast, it can be applied to a large range of
problems, it allows partial specifications.

The main disadvantage of model checking is that the space state of
certain systems, for instance distributed systems, can be rather
large, thus making the verifications inefficient and in some cases
even unfeasible (because of memory limitations). This problem is
known as the \emph{state explosion problem}. Many techniques to
alleviate it have been proposed since the invention of model
checking. Among the most popular ones we mention the use Binary
Decision Diagrams (BDDs), partial order reduction, abstraction,
compositional reasoning, and symmetry reduction. State-of-the-art
model checkers can easily handle up to $10^9$ states with explicit
state representation. For certain specific problems, more
dedicated data structures (like BDDs) can be used thus making it
possible to handle even up to $10^{476}$ states.

The popularity of model checking has grown considerably since
its invention at the beginning of the 80s. Nowadays, model
checking techniques are employed by most or all leading hardware
companies (e.g. INTEL, IBM and MOTOROLA - just to mention few of
them). While model checking is applied less frequently by
software developing companies, there have been several cases in
which it has helped to detect previously unknown defects in
real-world software. %
%
A prominent example is the result of research in Microsoft's SLAM project in which several formal
techniques were used to automatically detect flaws in device
drivers. In 2006, Microsoft released the Static Driver Verifier
as part of Windows Vista, SDV uses the SLAM
software-model-checking engine to detect cases in which device
drivers linked to Vista violate one of a set of interface rules.
Thus SDV helps uncover defects in device drivers, a primary
source of software bugs in Microsoft applications.
Investigations have shown that model checking procedures would
have revealed the exposed defects in, e.g., Intel�s Pentium II
processor and the Therac-25 therapy radiation
machine.\comment{also air control system example}
\comment{maybe a list of popular model checkers nowadays}

\paragraph{Focus of this thesis}

This thesis addresses the foundational aspects of formal methods
for applications in security and in particular in anonymity: We
investigate various issues that have arisen in the area of
anonymity, we develop frameworks for the specification of
anonymity properties, and we propose algorithms for their
verification.

\section{Background}\label{sec:Intro-anonymity}


In this section we give a brief overview of the various approaches
to the foundations of anonymity that have been explored in the
literature. We will focus on anonymity properties, although the
concepts and techniques developed for anonymity apply to a large
extent also to neighbor topics like  \emph{information flow},
\emph{secrecy}, \emph{privacy}. The common denominator of these
problems is the prevention of the leakage of information. More
precisely, we are concerned with situations  in which  there are
certain values (data, identities, actions, etc) that are intended
to be secret, and we want to ensure that an adversary will not be
able to infer the secret values from the information which is
publicly available. Some researchers use the term
\emph{information hiding} to refer to this class of
problems~\cite{Halpern:05:JCS}.

The frameworks for reasoning about anonymity can be classified
into two main categories: the \emph{possibilistic} approaches, and
the \emph{probabilistic} (or \emph{quantitative}) ones.

 \subsubsection{Possibilistic notions}
The term ``possibilistic'' refers to the fact that we do not
consider quantitative aspects. More precisely,  anonymity is
formulated in terms of the possibility or inferring some secrets,
without  worrying about ``how likely'' this is, or ``how much''
we narrow down the secret.

These  approaches have been widely explored in the  literature,
using different conceptual frameworks. Examples include the
proposals based on  epistemic logic
(\cite{Syverson:99:FM,Halpern:05:JCS}), on \qm{function views}
(\cite{Hughes:04:JCS}), and  on process equivalences (see for
instance \cite{Schneider:96:ESORICS,Ryan:01:BOOK}). In the
following we will focus on the latter kind.

In general, possibilistic anonymity means that the observables do
not identify a unique culprit. Often this property relies on
\emph{ nondeterminism}:  for each culprit, the system should be
able to produce alternative executions with different observables,
so that in turn  for a given observable there are many agents that
could be the culprit. More precisely, in its strongest version
this property can be expressed as follows: if in one computation
the identity of the culprit is $i$ and the observable outcome is
$o$, then for every other agent $j$ there must be a computation
where, with culprit $j$, the observable is still $o$.


This kind of approach can be applied also in case of systems that
use randomization. The way this is done is by abstracting the
probabilistic choices into nondeterministic ones. See for example
the Dining Cryptographers example in  \cite{Schneider:96:ESORICS},
where the coin tossing is represented by a nondeterministic
process.

In general the possibilistic approaches have the advantages of
simplicity an efficiency. On the negative side, they lack
precision, and in some cases the approximation can be rather
loose. This is  because every scenario that has a non-null
probability is interpreted as possible. For instance, consider the
case in which a system reveals  the culprit  90 percent of the
times by outputting his identity, while in the remaining 10
percent of the times it outputs the name of some other agent. The
system would not look very anonymous. Yet,
 the possibilistic definition of anonymity  would be satisfied
because all users would appear as possible culprits to the
observer regardless of the output of the system.  In general, in
the \rev{possibilistic} approach the strongest notion of anonymity we can
express is \emph{possible innocence}, which is satisfied when no agent
appear to be the culprit \emph{for sure}: there is always the possibility
that he is innocent (no matter how unlikely it is).

In this thesis we consider only the probabilistic approaches.
Their common feature is that they deal with probabilities in a
concrete way and they are, therefore, much more precise. They have
become very popular in recent times, and there has been a lot of
work dedicated to understanding and  formalizing the notion in a
rigorous way.  In the next section we give a brief  overview of
these efforts.

\subsubsection{Probabilistic notions}

These approaches take  probabilities into account, and are based
on the likelihood that an agent is the culprit, for a given
observable. One notion  of probabilistic anonymity which has been
thoroughly  investigated in the literature is \emph{strong
anonymity}.

\paragraph{Strong anonymity\index{strong anonymity}}
Intuitively the idea behind this notion is that the observables
should not allow to infer any (quantitative) information about the
identity of the culprit. The corresponding notion in the field of
information flow is (quantitative) non-interference.

Once we try to formalize more precisely the above notion we
discover however that there are various possibilities.
Correspondingly, there have been various proposals. We recall here
the three most prominent ones.

\begin{enumerate}
\item {\it Equality of the a posteriori probabilities for different culprits.}
The idea is to consider   a system   strongly anonymous if, given
an observable $o$, the
 \emph{a posteriori} probability that the identity of the culprit is $i$, $\PP(i|o)$,
 is the same as the  \emph{a posteriori}
probability of any other identity $j$. Formally:
\begin{eqnarray}\label{eqn:one}
\PP(i|o)    =   \PP(j|o) \quad \mbox{for all observables $o$, and
all identities $i$ and $j$}
\end{eqnarray}
This is the spirit of  the definition of \emph{strong anonymity}
by Halpern and O'Neill \cite{Halpern:05:JCS}, although their
formalization involves  more sophisticated  epistemic notions.

\item{\it Equality of the a posteriori and a priori probabilities for the same culprit.}
Here the idea is to consider a system strongly anonymous if, for
any observable, the \emph{a posteriori} probability that the
culprit is a certain agent $i$  is the same as its  \emph{a
priori} probability. In other words, the observation does not
increase or decrease the support for suspecting a certain agent.
Formally:
\begin{eqnarray}\label{eqn:two}
\PP (i|o)  =   \PP(i) \quad \mbox{for all observables $o$, and all
identities $i$}
\end{eqnarray}
This is the definition of \emph{anonymity} adopted by Chaum in
\cite{Chaum:88:JC}. He also proved that the Dining Cryptographers
satisfy this property if the coins are fair. Halpern and O'Neill
consider  a similar property in their epistemological setting, and
they call it \emph{conditional anonymity} \cite{Halpern:05:JCS}.

\item{\it Equality of the likelihood of different culprits.}
In this third definition a system is strongly anonymous if, for
any observable $o$  and agent $i$, the \emph{likelihood} of   $i$
being the culprit, namely $\PP(o|i)$, is the same as the
likelihood of any other agent $j$. Formally:
\begin{eqnarray}\label{eqn:three}
\PP(o|i)  =   \PP(o|j) \quad \mbox{for all observables $o$, and
all identities $i$ and $j$}
\end{eqnarray}
This was proposed as definition of \emph{strong anonymity} by
Bhargava and Palamidessi \cite{Bhargava:05:CONCUR}.
\end{enumerate}

In \cite{Beauxis:08:TGC} it has been proved  that definitions
(\ref{eqn:two}) and (\ref{eqn:three}) are equivalent. Definition
(\ref{eqn:three}) has the advantage that it does extend in a
natural way to the case in which the choice of the culprit is
nondeterministic. This could be useful when we do not know the a
priori distribution of the culprit, or when we want to abstract
from it (for instance because we are interested in the worst
case).

Concerning Definition (\ref{eqn:one}), it probably looks at first
sight the most natural, but it actually turns out to be way too
strong.  In fact it is equivalent to (\ref{eqn:two}) and
(\ref{eqn:three}), \emph{plus} the following  condition:
\begin{eqnarray}\label{eqn:four}
\PP(i)   = \PP(j) \qquad \mbox{for all identities $i$ and $j$}
\end{eqnarray}
namely the condition that   the a priori distribution  be uniform.

It is interesting to notice that  (\ref{eqn:one}) can be split in
two orthogonal properties:  (\ref{eqn:three}), which depends only
in the protocol, and (\ref{eqn:four}), which depends only in the
distribution on the secrets.

Unfortunately all the strong anonymity properties discussed above
are too strong, almost never achievable in practice. Hence
researches have started exploring weaker notions. One of the most
renowned properties of this kind  (among the ``simple'' ones based
on conditional probabilities) is that of \emph{probable
innocence}\index{anonymity!probable innocence}.

\paragraph{Probable innocence\index{anonymity!probable innocence}}
The notion of  \emph{probable innocence} was formulated by Rubin
and Reiter  in the context of their work on the Crowds
protocol\index{protocols!crowds} \cite{Reiter:98:TISS}.
Intuitively the idea is that, after the observation,  no agent is
more likely to be the culprit than not to be. Formally:
\[
\PP(i|o) \leq \PP(\neg i|o)  \quad \mbox{for all observations o,
and all identities $i$}
\]
or equivalently
\[
\PP(i|o) \leq \frac{1}{2} \quad \mbox{for all observations o, and
all identities $i$}
\]
In  \cite{Reiter:98:TISS} Rubin and Reiter proved that the Crowds
protocol satisfies probable innocence under a certain assumption
on the number of attackers relatively to the number of honest
users.

All the notions discussed above are rather coarse, in the sense
that they are cut-off notions and do not allow to represent small
variations in the degree of  anonymity. In order to be able to
compare  different protocols in a more precise way, researcher
have started exploring settings to measure the \emph{degree of
anonymity}. The most popular of these approaches are
those based in information theory.

 \subsubsection{Information theory\index{information theory}}
The underlying idea is that   anonymity  systems are interpreted
as \emph{channels} in the information-theoretic sense. The  input
values are the possible identities of the culprit, which,
associated to a probability distribution, form a random variable
$\mathit{Id}$. The outputs are the observables,  and the
transition matrix consists of  the conditional probabilities of
the form $\PP(o|i)$, representing the probability that the system
produces an observable $o$ when the culprit is $i$. A central
notion here is the Shannon entropy, which represents the
uncertainty of a random variable. For the culprit's possible
identity, this is given by:
\[
 H(\mathit{Id}) = - \sum_{i}\PP(i)\log \PP(i) \quad \mbox{(uncertainty a priori)}
 \]
Note that $\mathit{Id}$ and the matrix also determine a
probability distribution on the observables, which can then be
seen as another random variable $\mathit{Ob}$. The
\emph{conditional entropy} $H(\mathit{Id}|\mathit{Ob})$,
representing the uncertainty about the identity of the culprit
\emph{after} the observation, is given by
\[
H(\mathit{Id}|\mathit{Ob}) = - \sum_{o} \PP(o)
\sum_{i}\PP(i|o)\log \PP(i|o) \quad \mbox{(uncertainty a
posteriori)}\] It can be shown that $0 \le
H(\mathit{Id}|\mathit{Ob}) \le H(\mathit{Id})$. We have
$H(\mathit{Id}|\mathit{Ob})=0$ when there is no uncertainty left
about $\mathit{Id}$ after the value of $\mathit{Ob}$ is known.
Namely, when the value of $\mathit{Ob}$ completely determines the
value of $\mathit{Id}$. This is the case of maximum leakage. At
the other extreme, we have
$H(\mathit{Id}|\mathit{Ob})=H(\mathit{Id})$   when $\mathit{Ob}$
gives no information  about $\mathit{Id}$, i.e. when $\mathit{Ob}$
and $\mathit{Id}$ are independent.

The difference between  $H(\mathit{Id})$ and
$H(\mathit{Id}|\mathit{Ob})$ is called \emph{mutual information}
and it is denoted by $I(\mathit{Id};\mathit{Ob})$:
\[
 I(\mathit{Id};\mathit{Ob}) = H(\mathit{Id}) - H(\mathit{Id}|\mathit{Ob})
 \]
The maximum mutual information between $\mathit{Id}$ and
$\mathit{Ob}$ over all possible input distributions
$\PP_\mathit{Id}(\cdot)$ is known as the channel's
\emph{capacity}:
\[
 C = \max_{\PP_\mathit{Id}(\cdot)}I(\mathit{Id};\mathit{Ob})
 \]

In the case of anonymity, the  mutual information represents the
difference between the a priori and the a posteriori uncertainties
about the identity of the culprit. It can therefore be considered
as the  leakage of information due to the system, i.e. the amount
of anonymity which is lost because of  the observables produced by
the system. Similarly, the capacity represents the worst-case
leakage under all possible distributions on the culprit's
possible identities. It can be shown that the capacity is $0$ if and only
if the rows of the matrix are pairwise identical. This corresponds
exactly to the version (\ref{eqn:three}) of strong anonymity.

This view of the degree  of anonymity has been advocated in
various works, including
\cite{Moskowitz:03:WPES,Moskowitz:03:CNIS,Zhu:05:ICDCS,Chatzikokolakis:08:IC}.
In the context of information flow, the same view of leakage in
information theoretic terms has been widely investigated as well.
Without pretending to be exhaustive, we mention
\cite{McLean:90:SSP,Gray:91:SSP,Clark:01:QAPL,Clark:05:QAPL,Lowe:02:CSFW,Boreale:06:ICALP}.

In \cite{Smith:09:FOSSACS} Smith has investigated the use of an
alternative notion of entropy, namely R\'enyi's min entropy
\cite{Renyi:60:Berkeley}, and has proposed to define leakage as
the analogous of mutual information in the setting of R\'enyi's
min entropy. The justification for proposing  this variant is that
it represents better certain attacks called \emph{one-try
attacks}. In general, as K\"opf and Basin illustrate in their
cornerstone paper \cite{Kopf:07:CCS}, one can use the above
information-theoretic approach with many different notions of
entropy,  each representing a different model of attacker, and a
different way of measuring the success of an attack.

A different  information-theoretic approach to leakage has been
proposed in  \cite{Clarkson:09:JCS}:  in that paper, the authors
define as information leakage the difference between the a priori
accuracy of the guess of the attacker, and the a posteriori one,
after the attacker has made his observation. The accuracy of the
guess is defined as the Kullback-Leibler distance between the
\emph{belief} (which is a weight attributed by the attacker to
each input hypothesis) and the true distribution on the
hypotheses. In \cite{Hamadou:10:SSP} a R\'enyi's min entropy
variant of this approach has been explored as well.

We conclude this section by remarking that, in all the approaches
discussed above, the notion of conditional probability plays a
central role.

\section{Contribution and plan of the thesis}
We have seen in Section~\ref{sec:Intro-anonymity} that conditional
probabilities are the key ingredients of all quantitative
definitions of  anonymity. It is therefore desirable to develop
techniques to \rev{analyze and} compute such probabilities.

Our first contribution is cpCTL, a temporal logic allowing us to
specify properties concerned with conditional probabilities in
systems combining probabilistic  and nondeterministic behavior.
This is presented in Chapter 2. cpCTL is essentially pCTL
(probabilistic Computational Tree Logic) \cite{Hansson:94:FACS}
enriched with formulas of the kind $\PP_{\leq a}[\phi|\psi]$,
stating that the probability of $\phi$ given $\psi$ is at most
$a$. We do so by enriching pCTL with formulas of the form
$\PP_{\bowtie a}[\phi|\psi]$. We propose a model checker for
cpCTL. Its design has been quite challenging, due to the fact that
the standard model checking algorithms for pCTL  in MDPs (Markov
Decision Processes) do not extend to conditional probability
formulas. More precisely, in contrast to pCTL, verifying a
conditional probability cannot be reduced to a linear optimization
problem. A related point is that, in contrast to pCTL, the optimal
probabilities are not attained by history independent schedulers.
We attack the problem by proposing the notion of \emph{semi
history independent schedulers}, and we show that these schedulers
do attain optimality with respect to the  conditional
probabilities. Surprisingly, it turns out that we can  further
restrict to deterministic schedulers, and still attain optimality.
Based on these results, we show that it is decidable whether a
cpCTL formula is satisfied in a MDP, and we provide an algorithm
for it. In addition, we define the notion of counterexample for
the logic and sketch an algorithm for counterexample generation.

Unfortunately, the verification of  conditional cpCTL formulae is
not efficient in the presence of nondeterminism. Another issue,
related to nondeterminism within the applications  in the field of
security, is the well known problem of almighty schedulers (see Chapter 4). Such
schedulers have the (unrealistic) ability to peek on internal
secrets of the component and make their scheduling policy
dependent on these secrets, thus leaking the secrets to external
observers. We address these problems in separate chapters.

In Chapter 3 we restrict the framework to purely probabilistic
models where secrets and observables do not interact, and we
consider the problem of computing the leakage and the maximal
leakage  in the information-theoretic approach. These are defined
as mutual information and capacity, respectively. We address
these notions with respect to both the Shannon entropy and the
R\'enyi min entropy. We provide techniques to compute channel
matrices in $O((o \times q)^3)$ time, where $o$ is the number of
observables, and $q$ the number of states. (From the channel
matrices, we can compute mutual information and capacity using
standard techniques.) We also show that, contrarily to what was
stated in literature, the standard  information theoretical
approaches to leakage do not extend to the case in which secrets
and observable interact.

In Chapter 4 we consider the problem of the almighty schedulers.
We define a restricted family of schedulers (\emph{admissible
schedulers}) which cannot base their decisions on secrets, thus
providing more realistic notions of strong anonymity than
arbitrary schedulers. We provide a framework to represent
concurrent systems composed by purely probabilistic components. At
the global level we still have nondeterminism, due to the various
possible ways the component may interact with each other.
Schedulers are then defined as devices that select at every point
of the computation the component(s) moving next. Admissible
schedulers make this choice independently  from the values
of the secrets. In addition, we provide a sufficient (but not
necessary) technique based on automorphisms to prove strong
anonymity for this family of schedulers.

The notion of counterexample has been approached indirectly in
Chapters 2 and 3. In Chapter 5 we come back and fully focus on
this topic. We propose a novel technique to generate
counterexamples for model checking on Markov Chains. Our propose
is to group together violating paths that are likely to provide
similar debugging information thus alleviating the debugging
tasks. We do so by using strongly connected component analysis and
show that it is possible to extend these techniques to Markov
Decision Processes.

\rev{Chapter 6 is an overview chapter. There we briefly describe extensions to  the frameworks presented in Chapters 3
and 4\footnote{For more information about the topics discussed in this chapter we refer the reader to \cite{Alvim:10:CONCUR,Alvim:10:TRa,Alvim:10:LICS,Alvim:10:IFIP-TCS}.}.} First, we consider the case in which secrets and
observables interact, and show that it is still possible to define
an information-theoretic notion of leakage, provided that we
consider a more complex notion of channel, known in literature as
\emph{channel with memory and feedback}. Second, we extend the
systems proposed in Chapter 4 by allowing nondeterminism also
internally to the components. Correspondingly, we define a richer
notion of admissible scheduler and we use it for defining
notion of process equivalences relating to nondeterminism in a
more flexible way than the standard ones in the literature. In
particular, we use these equivalences  for defining notions of
anonymity robust with respect to implementation refinement.

In Figure \ref{fig:chaptersRel} we describe the relation between
the different chapters of the thesis. Chapter 5 is not explicitly
depicted in the figure because it does not fit in any of the branches of cpCTL
(efficiency - security foundations). However, the techniques
developed in Chapter 5 have been applied to the works in both Chapters 2 and 3.

\begin{figure}[h!]
 \centering
 \qquad\includegraphics[width=8cm]{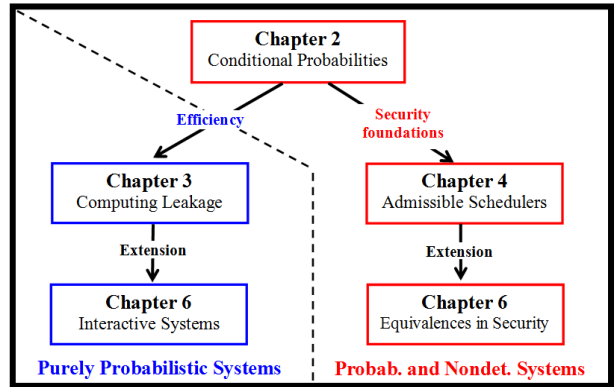}
  \caption{Chapters relation.}\label{fig:chaptersRel}
\end{figure}

%

\rev{We conclude this thesis In Chapter 7, there we present a summary of our main contributions  and discuss further directions.}

\section{Origins of the Chapters and Credits}

\rev{In the following we list, for each chapter, the set of related articles together with their publication venue and corresponding co-authors.}

\begin{itemize}
\item Chapter 2 is mainly based on the article \cite{Andres:08:TACAS} by Peter van Rossum and
myself. The article was presented in TACAS 2008. In addition, this
chapter contains material of an extended version of
\cite{Andres:08:TACAS} that is being prepared for submission to a
journal.
\item Chapter 3 is based on the article \cite{Andres:10:TACAS} by Catuscia Palamidessi, Peter van Rossum,
Geoffrey Smith and myself. The article was presented in TACAS 2010.
\item Chapter 4 is based on
  \begin{itemize}
    \item The article \cite{Andres:10:QEST} by Catuscia Palamidessi, Peter van Rossum, Ana Sokolova and myself. This article was presented in QEST 2010.
    \item The journal article \cite{Andres:10:Admissible} by the same authors.
  \end{itemize}
\item Chapter 5 is based on the article \cite{Andres:08:HVC}  by Pedro D'Argenio, Peter van Rossum, and
  myself. The article was presented in HVC 2008.
\item Chapter 6 is based on
  \begin{itemize}
    \item The article \cite{Alvim:10:LICS} by M\'ario S. Alvim, Catuscia Palamidessi, and myself. This work was presented in LICS 2010 as part of an invited talk by Catuscia Palamidessi.
    \item The article \cite{Alvim:10:CONCUR} by M\'ario S. Alvim, Catuscia Palamidessi, and myself. This work presented in CONCUR 2010.
    \item The journal article \cite{Alvim:10:TRa} by the same authors of the previous works.
    \item The article \cite{Alvim:10:IFIP-TCS} by M\'ario S. Alvim, Catuscia Palamidessi, Peter van Rossum, and myself. This work was presented in IFIP-TCS 2010.
  \end{itemize}
\end{itemize}

The chapters remain close to their published versions, thus there
is inevitably some overlapping between them (in particular in
their introductions where basic notions are explained).

\emph{A short note about authorship:} I am the first author in all
the articles and journal works included in this thesis with the
exception of the ones presented in Chapter 6.

\newpage
\thispagestyle{empty}

\chapter{Conditional Probabilities over Probabilistic and Nondeterministic Systems}
\chaptermark{Conditional probabilistic temporal logic}
\label{chap:cpCTL}

\begin{quote}
\textit{In this chapter we introduce \cpCTL, a logic which extends
the probabilistic temporal logic \pCTL with conditional
probabilities allowing to express statements of the form ``the
probability of $\phi$ given $\psi$ is at most $a$''.  We interpret
\cpCTL over Markov Chains \index{markov chain} and Markov Decision
Processes.  While model checking \cpCTL over Markov Chains
\index{markov chain} can be done with existing techniques, those
techniques do not carry over to Markov Decision Processes. We
study the class of schedulers that suffice to find the maximum and
minimum conditional probabilities, show that the problem is
decidable for Markov Decision Processes and propose a model
checking algorithm. Finally, we present the notion of
counterexamples for \cpCTL model checking and provide a method for
counterexample generation.}
\end{quote}

\section{Introduction}
\label{sec:intro}

  Conditional probabilities\index{conditional probability} are a fundamental concept in probability theory.
  In system validation these appear for instance in
  anonymity, risk assessment, and diagnosability.
  Typical examples here are:
  the probability that a certain message was sent by Alice, given that
  an intruder observes a certain traffic pattern;
  the probability that the dykes break, given that it rains heavily;
  the probability that component A has failed, given error message E.

  In this chapter we introduce \cpCTL (conditional probabilistic CTL), a
  logic which extends strictly the probabilistic temporal logic
  \pCTL \cite{hj_1989_reliability} with new probabilistic
  operators of the form $\CP{}{\leq a}{\phi}{\psi}$. Such formula
  means that the probability of $\phi$ given $\psi$ is at most
  $a$.
  We interpret \cpCTL formulas over Markov Chains \index{markov chain} (MCs) and Markov
  Decision Processes (MDPs).  Model checking \cpCTL over MCs can be
  done with model checking techniques for \pCTLstar, using the
  equality $\CP{}{}{\phi}{\psi} = \Prob{}{}{\phi \land
    \psi}/\allowbreak\Prob{}{}{\psi}$.

  In the case of \MDPs, \cpCTL model checking is significantly more complex.
  Writing $\CP{}{\eta}{\phi}{\psi}$ for the probability
  $\CP{}{}{\phi}{\psi}$ under scheduler $\eta$, model checking
  $\CP{}{\leq a}{\phi}{\psi}$ reduces to computing
  $\CP{+}{}{\phi}{\psi} = \max_\eta \CP{}{\eta}{\phi}{\psi} =
  \allowbreak\max_\eta \allowbreak\Prob{}{\eta}{\phi \land \psi}/\Prob{}{\eta}{\psi}$.
  Thus, we have to maximize a non-linear function.  (Note that in
  general $\CP{+}{}{\phi}{\psi} \not= \Prob{+}{}{\phi
    \land \psi} / \Prob{+}{}{\psi}$.) Therefore, we cannot reuse the
  efficient techniques for \pCTL model checking, since they heavily rely
  on linear optimization techniques~\cite{ba_1995_probabilistic}.

  In particular we show that, differently from what happens in
  \pCTL \cite{ba_1995_probabilistic}, history independent
  schedulers are not sufficient for optimizing conditional reachability
  properties.  This is because in $\cpCTL$ the optimizing schedulers
  are not determined by the local structure of the system. That
  is, the choices made by the scheduler in one branch may
  influence the optimal choices in other branches.  We introduce
  the class of semi history-independent schedulers and show that
  these suffice to attain the optimal conditional probability.
  Moreover, deterministic schedulers still suffice to attain the
  optimal conditional probability. This is surprising
  since many non-linear optimization problems attain their optimal
  value in the interior of a
  convex polytope, which correspond to randomized schedulers in
  our setting.

  Based on these properties, we present an (exponential) algorithm
  for checking whether a given system satisfies a formula in the logic.
  Furthermore, we define the notion of counterexamples for \cpCTL model checking
  and provide a method for counterexample generation.

  To the best of our knowledge, our proposal is the first temporal logic
  dealing with conditional probabilities.

\subsection*{Applications}

\paragraph{Complex Systems.}

One application of the techniques presented in this chapter is in
the area of complex system behavior. We can model the probability
distribution of natural events as probabilistic choices, and the
operator choices as non-deterministic choices. The computation of
maximum and minimum conditional probabilities can then help to
optimize run-time behavior. For instance, suppose that the desired
behavior of the system is expressed as a $\pCTL$ formula $\phi$
and that during run-time we are making an observation about the
system, expressed as a $\pCTL$ formula $\psi$. The techniques
developed in this chapter allow us to compute the maximum
probability of $\phi$ given $\psi$ and to identify the actions
(non-deterministic choices) that have to be taken to achieve this
probability.

\paragraph{Anonymizing Protocols.}

Another application is in the area of anonymizing protocols. The
purpose of these protocols is to hide the identity of the user
performing a certain action. Such a user is usually called the
\emph{culprit}. Examples of these protocols are Onion
Routing~\cite{cl_2005_onion}, Dining Cryptographers
\cite{Chaum:88:JC}, Crowds \cite{Reiter:98:TISS} and voting
protocols~\cite{foo_1992_voting} (just to mention a few). Strong
anonymity\index{strong anonymity} is commonly
formulated~\cite{Chaum:88:JC,Bhargava:05:CONCUR} in terms
of conditional probability: A protocol is considered strongly
anonymous if no information about the culprit's identity can be
inferred from the behavior of the system. Formally, this is
expressed by saying that culprit's identity and the observations,
seen as random variables, are independent from each other. That
is to say, for all users $u$ and all observations of the adversary $o$:
\begin{center}
  \textbf{P}[culprit $=$ $u$ $|$ observation $=$ $o$] $=$
  \textbf{P}[culprit $=$ $u$].
\end{center}

If considering a concurrent setting, it is customary to give the
adversary full control over the network~\cite{dy_1983_public} and
model its capabilities as nondeterministic choices in the system,
while the user behavior and the random choices in the protocol are
modeled as probabilistic choices. Since anonymity should be
guaranteed for all possible attacks of the adversary, the above
equality should hold for all schedulers. That is: the system is
strongly anonymous if for all schedulers $\eta$, all users $u$ and
all adversarial observations $o$:
\begin{center}
  \textbf{P}$_\eta$[culprit $=$ $u$ $|$ observation $=$ $o$]$=$
  \textbf{P}$_\eta$[culprit $=$ $u$]
\end{center}
Since the techniques in this chapter allow us to compute the
maximal and minimal conditional probabilities over all schedulers,
we can use them to prove strong anonymity in presence of
nondeterminism.

Similarly, probable innocence\index{anonymity!probable innocence}
means that a user is not more likely to be innocent than not to be
(where ``innocent'' mans ``not the culprit''). In \cpCTL this can
be expressed as $\CP{}{\leq 0.5}{\text{culprit} =
u\,}{\,\text{observations} = o}$.

\comment{here we should point out the problem with this kind of
almighty schedulers and say something about the non-zero leakage
case}

\paragraph{Organization of the chapter}

In Section~\ref{sec:mdp} we present the necessary background on
MDPs. In Section~\ref{cp} we introduce conditional probabilities
over MDPs and in Section~\ref{sec:cpCTL} we introduce \cpCTL.
Section~\ref{edHIS} introduces the class of semi
history-independent schedulers and Section~\ref{modelchecking}
explains how to compute the maximum and minimum conditional
probabilities. Finally, Section~\ref{sec:cpctl_counterexamples}, we
investigate the notion of counterexamples.


\section{Markov Decision Processes\index{markov decision process}}
\label{sec:mdp}

Markov Decision Processes constitute a formalism that combines
nondeterministic and probabilistic choices.  They are a dominant
model in corporate finance, supply chain optimization, and system
verification and optimization.  While there are many slightly
different variants of this formalism (e.g., action-labeled MDPs
\cite{bel_1957_mdp,FilarVrieze97}, probabilistic automata
\cite{Segala:95:NJC,sv_2004_automata}), we work with the
state-labeled MDPs from~\cite{ba_1995_probabilistic}.

The set of all discrete probability distributions on a set $S$ is
denoted by $\Distr(S)$. The Dirac distribution on an element $s \in S$
is written as $1_s$. We also fix a set $\MP$ of propositions.

\begin{dfn}\label{dfn:nps}
  A \emph{Markov Decision Process} ($\MDP$) is a four-tuple
  $\Pi=(S,s_0,\tau,\allowbreak L)$ where: $S$ is the finite state
  space of the system, $s_0 \in S$ is the initial state,
  $L \colon S \to \wp(\MP)$ is a
  labeling function that associates to each state $s \in S$ a subset
  of propositions, and $\tau \colon S \to \wp(\Distr(S))$ is a function that
  associates to each $s \in S$ a non-empty and finite subset of
  of successor distributions.
\end{dfn}

\rev{In case $|\tau(s)|=1$ for all states $s$ we say that $\Pi$ is a \emph{Markov Chain}.}

\begin{wrapfigure}{r}{4.5cm}
  \vspace{-0.35cm}
    \begin{center}
      \includegraphics[width=4.5cm]{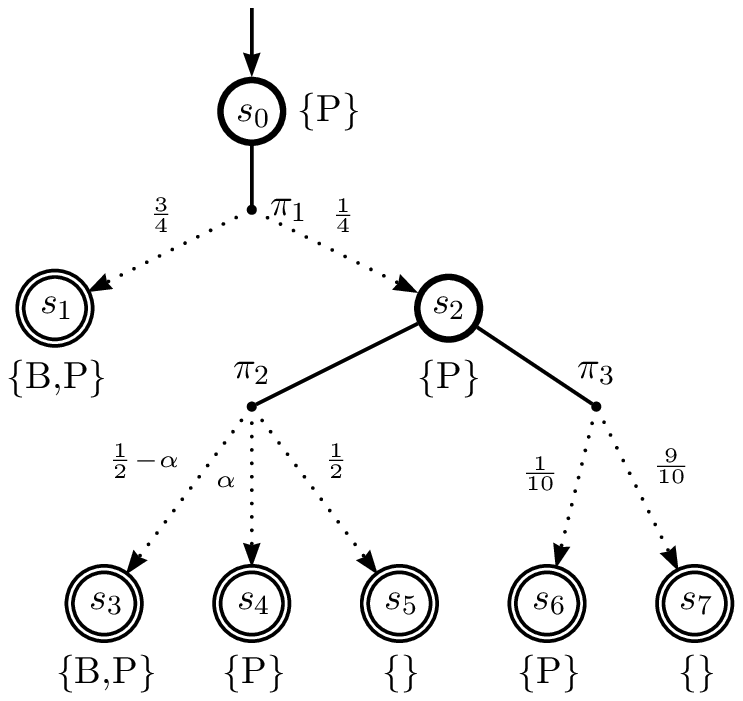}
    \end{center}
  \vspace{-0.4cm}
    \caption{MDP} \label{fig:notMaxNod}
\end{wrapfigure}
\noindent We define the \emph{successor} relation
$\varrho\subseteq S\times S$ by $\varrho \eqdef
\{(s,t)\mid\exists\, \pi\in\tau(s) \qdot \pi(t)>0\}$ and for each
state $s\in S$ we define the sets
$\Paths{s}  \eqdef \{s_0s_1s_2\ldots\in S^\omega \mid
            s_0=s\land\forall n\in \mathbb{N} \qdot \varrho(s_n,s_{n+1})\}$,
and $\FPaths{s} \eqdef \{s_0s_1\ldots s_n\in S^\star \mid
                 s_0=s\land\forall\, 0\leq i < n \qdot \varrho(s_n,s_{n+1})\}$
  of paths and finite paths respectively beginning at $s$.
  Sometimes we will use $\Paths{\Pi}$ to denote $\Paths{s_0}$,
  i.e. the set of paths of $\Pi$. For $\omega \in\Paths{s}$, we
  write the $n$-th state of $\omega$ as $\omega_n$. In addition,
  we write $\sigma_1\sqsubseteq\sigma_2$ if $\sigma_2$ is an extension of
  $\sigma_1$, i.e. $\sigma_2=\sigma_1 \sigma'$ for some $\sigma'$.
  We define the basic cylinder of a finite path $\sigma$ as the
  set of (infinite) paths that extend it, i.e $\Cyl{\sigma} \eqdef
  \{\omega \in \Paths{s}\mid \sigma\sqsubseteq\omega\}$. For a set
  of paths $R$ we write $\Cyl{R}$ for its set of cylinders,
  i.e. $\Cyl{R}\eqdef \bigcup_{\sigma\in R}\Cyl{\sigma}$. As usual,
  we let $\Borel_s\subseteq \wp(\Paths{s})$ be the Borel $\sigma$-algebra on the basic
  cylinders.

\medskip \exampleheader\label{exa:MDP} Figure~\ref{fig:notMaxNod}
shows a $\MDP$.  States with double lines represent absorbing
states (i.e., states $s$ with $\tau(s)=\{1_s\}$) and $\alpha$ is
any constant in the interval $[0,1]$. This $\MDP$ features a
single nondeterministic decision, to be made in state $s_2$.
\medskip

\noindent
Schedulers (also called strategies, adversaries, or policies) resolve the
nondeterministic choices in a
MDP~\cite{pz_1993_verification,var_1985_probabilistic,ba_1995_probabilistic}.

\begin{dfn}\label{dfn:scheduler}
  Let $\Pi=(S,s_0,\tau,L)$ be a $\MDP$ and $s\in
  S$. An \emph{$s$-scheduler}\index{scheduler} $\eta$ for $\Pi$ is a function from $\FPaths{s}$
  to $\Distr(\wp(\Distr(S)))$ such that for all $\sigma\in \FPaths{s}$
  we have $\eta(\sigma)\in \Distr(\tau(\last{\sigma}))$. We denote
  the set of all $s$-schedulers on $\Pi$ by $\Sch_s(\Pi)$. When
  $s = s_0$ we  omit it.
\end{dfn}


\noindent Note that our schedulers are randomized, i.e., in a
finite path $\sigma$ a scheduler chooses an element of
$\tau(\last{\sigma})$ probabilistically.  Under a scheduler
$\eta$, the probability that the next state reached after the path
$\sigma$ is $t$, equals $\sum_{\pi\in
  \tau(\last{\sigma}}\eta(\sigma)(\pi) \cdot \pi(t)$. In this way, a
scheduler induces a probability measure on $\Borel_s$ defined as
follows:

\begin{dfn}
  Let $\Pi$ be a $\MDP$, $s\in S$, and $\eta$ an $s$-scheduler on
  $\Pi$. The probability measure $\PP_{s,\eta}$ is the
  unique measure on $\Borel_s$ such that for all $s_0 s_1\ldots s_n\in
  \FPaths{s}$
  \[
    \PP_{s,\eta}(\Cyl{s_0s_1\ldots s_n})
    \eqdef \prod_{i=0}^{n-1} \ \sum_{\pi\in \tau(s_i)}
    \eta(s_0 s_1\ldots s_i)(\pi) \cdot \pi(s_{i+1}).
  \]
\end{dfn}
Often we will write $\PP_{\eta}(\Delta)$ instead of
$\PP_{s,\eta}(\Delta)$ when $s$ is the initial state and
$\Delta\in\Borel_s$. We now recall the notions of deterministic
and history independent schedulers.

\begin{dfn}
  Let $\Pi$ be a $\MDP$, $s\in S$, and $\eta$ an $s$-scheduler for $\Pi$. We
  say that $\eta$ is \emph{deterministic} if $\eta(\sigma)(\pi)$ is
  either $0$ or $1$ for all $\pi\in\tau(\last{\sigma})$ and all $\sigma
  \in\FPaths{s}$.
  We say that a scheduler is \emph{history independent}
  (HI) if for all finite paths $\sigma_1,\sigma_2$ of $\Pi$ with
  $\last{\sigma_1}=\last{\sigma_2}$ we have
  $\eta(\sigma_1)=\eta(\sigma_2)$.
  %
\end{dfn}

\begin{dfn}
  Let $\Pi$ be a $\MDP$, $s\in S$, and $\Delta \in \Borel_s$. Then the
  \emph{maximal and minimal probabilities of $\Delta$},
  $\PP^+_s(\Delta), \PP^-_s(\Delta)$, are defined as
  $$
    \PP^+_s(\Delta)\eqdef \sup_{\eta\in \Sch_s(\Pi)} \PP_{s,\eta}(\Delta)
    \hspace{0.5cm} \text{and} \hspace{0.5cm}
    \PP^-_s(\Delta)\eqdef \inf_{\eta\in \Sch_s(\Pi)} \PP_{s,\eta}(\Delta).
  $$
  A scheduler that attains $\PP^+_s(\Delta)$ or $\PP^-_s(\Delta)$
  is called an \emph{optimizing} scheduler.
\end{dfn}

\noindent We define the notion of (finite) convex combination of
schedulers.

\begin{dfn}
  Let $\Pi$ be a $\MDP$ and let $s\in S$. An $s$-scheduler $\eta$ is a
  \emph{convex combination} of the $s$-schedulers $\eta_1, \dots,
  \eta_n$ if there are $\alpha_1, \dots, \alpha_n \in [0,1]$ with
  $\alpha_1 + \dots + \alpha_n = 1$ such that for all $\Delta \in
  \Borel_s$, $\PP_{s,\eta}(\Delta) = \alpha_1 \PP_{s,\eta_1}(\Delta) + \dots +
  \alpha_n \PP_{s,\eta_n}(\Delta)$.
\end{dfn}

\noindent
Note that taking the convex combination $\eta$ of $\eta_1$ and
$\eta_2$ as functions, i.e., $\eta(\sigma)(\pi) =
\alpha\eta_1(\sigma)(\pi) + (1-\alpha)\eta_2(\sigma)(\pi)$, does not
imply that $\eta$ is a convex combination of $\eta_1$ and $\eta_2$ in
the sense above.

\section{Conditional Probabilities over MDPs\index{conditional probability!over MDPs}}
\label{cp}

The conditional probability $P(A \mid B\,)$ is the probability of
an event \textit{A}, given the occurrence of another event
\textit{B}. Recall that given a probability space $(\Omega, F, P)$
and two events $A, B\in F$ with $P(B) > 0$, $P(A \mid B)$ is
defined as ${P(A \cap B)}/{P(B)}.$ If $P(B) = 0$, then $P(A \mid
B)$ is undefined. \comment{Make a remark about conditional
probabilities on MCs.}
In particular, given a MDP $\Pi$, a scheduler $\eta$, and a state
$s$, consider the probabilistic space
$(\Paths{s},\Borel_s,\PP_{s,\eta})$. For two sets of paths
$\Delta_1, \Delta_2 \in \Borel_s$ with $\PP_{s,\eta}(\Delta_2) >
0$, the conditional probability of $\Delta_1$ given $\Delta_2$ is
$\PP_{s,\eta}(\Delta_1 \mid \Delta_2) = {\PP_{s,\eta}(\Delta_1
\cap \Delta_2)}/{\PP_{s,\eta}(\Delta_2)}.$ If
$\PP_{s,\eta}(\Delta_2) = 0$, then $\PP_{\eta,s}(\Delta_1 \mid
\Delta_2)$ is undefined. We define the maximum and minimum
conditional probabilities for all $\Delta_2 \in \Borel_s$ as
follows:

\begin{dfn}\label{dfn:min max cond prob} Let $\Pi$ be a $\MDP$. The
  \emph{maximal and minimal conditional probabilities}
  $\PP^+_s(\Delta_1|\Delta_2)$, $\PP^-_s(\Delta_1|\Delta_2)$ of
  sets of paths $\Delta_1,\Delta_2\in\Borel_s$ are defined by
  \begin{align*}
     \PP^+_s(\Delta_1|\Delta_2) & \eqdef \left\lbrace
                                    \begin{array}{ll}
                                      {\displaystyle\sup_{\eta\in \Sch_{\Delta_2}^{>0}}} \PP_{s,\eta}(\Delta_1|\Delta_2) & \mbox{ if } \Sch_{\Delta_2}^{>0}\not= \emptyset,\\
                                      0 & \mbox{ otherwise,}\\
                                    \end{array}
                                \right. \\
    \PP^-_s(\Delta_1|\Delta_2) & \eqdef \left\lbrace
                                    \begin{array}{ll}
                                      {\displaystyle \inf_{\eta\in \Sch_{\Delta_2}^{>0}}} \PP_{s,\eta}(\Delta_1|\Delta_2) & \mbox{ if } \Sch_{\Delta_2}^{>0}\not= \emptyset,\\
                                      1 & \mbox{ otherwise,}\\
                                    \end{array}
                                \right.
  \end{align*}
  where $\Sch_{\Delta_2}^{>0} = \{\eta\in \Sch_s(\Pi) \mid
  \PP_{s,\eta}(\Delta_2)>0\}$.
\end{dfn}


\noindent The following lemma generalizes Lemma~6 of
\cite{ba_1995_probabilistic} to conditional
probabilities\index{conditional probability}.

\begin{lem}\label{lem:relMaxMinProb}
  Given $\Delta_1,\Delta_2\in\Borel_s$, its maximal and minimal
  conditional probabilities are related by:
  $\PP^+_s(\Delta_1|\Delta_2) = 1-
  \PP^-_s(\Paths{s}-\Delta_1|\Delta_2)$.
\end{lem}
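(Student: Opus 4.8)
The key observation is that for a \emph{fixed} scheduler $\eta$ with $\PP_{s,\eta}(\Delta_2) > 0$, ordinary (conditional) probability theory gives
\[
\PP_{s,\eta}(\Delta_1 \mid \Delta_2) + \PP_{s,\eta}(\Paths{s} - \Delta_1 \mid \Delta_2) = 1,
\]
since $\Delta_1 \cap \Delta_2$ and $(\Paths{s} - \Delta_1) \cap \Delta_2$ partition $\Delta_2$ and we are dividing by the same positive number $\PP_{s,\eta}(\Delta_2)$. Hence $\PP_{s,\eta}(\Delta_1 \mid \Delta_2) = 1 - \PP_{s,\eta}(\Paths{s} - \Delta_1 \mid \Delta_2)$ for every $\eta \in \Sch_{\Delta_2}^{>0}$.

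The plan is then to take the supremum over $\eta \in \Sch_{\Delta_2}^{>0}$ on both sides. First I would dispose of the degenerate case: if $\Sch_{\Delta_2}^{>0} = \emptyset$, then by Definition~\ref{dfn:min max cond prob} the left-hand side is $\PP^+_s(\Delta_1|\Delta_2) = 0$ and, since $\Sch_{\Delta_2}^{>0}$ does not depend on $\Delta_1$, the right-hand side is $1 - \PP^-_s(\Paths{s}-\Delta_1|\Delta_2) = 1 - 1 = 0$, so the identity holds. Otherwise $\Sch_{\Delta_2}^{>0} \neq \emptyset$, and applying $\sup_{\eta \in \Sch_{\Delta_2}^{>0}}$ to the pointwise identity, together with the elementary fact that $\sup_\eta (1 - f(\eta)) = 1 - \inf_\eta f(\eta)$, yields
\[
\PP^+_s(\Delta_1|\Delta_2) = \sup_{\eta \in \Sch_{\Delta_2}^{>0}} \bigl(1 - \PP_{s,\eta}(\Paths{s}-\Delta_1 \mid \Delta_2)\bigr) = 1 - \inf_{\eta \in \Sch_{\Delta_2}^{>0}} \PP_{s,\eta}(\Paths{s}-\Delta_1 \mid \Delta_2) = 1 - \PP^-_s(\Paths{s}-\Delta_1|\Delta_2),
\]
which is the claim.

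There is really no substantial obstacle here; this is the conditional analogue of the standard duality between $\PP^+$ and $\PP^-$ (Lemma~6 of \cite{ba_1995_probabilistic}), and the only point that needs care is making sure the index set $\Sch_{\Delta_2}^{>0}$ over which the sup and inf are taken is \emph{the same} on both sides --- which holds precisely because this set is determined by $\Delta_2$ alone and is insensitive to whether we look at $\Delta_1$ or its complement. The measurability of $\Paths{s} - \Delta_1$ is immediate since $\Borel_s$ is a $\sigma$-algebra, so $\PP^-_s(\Paths{s}-\Delta_1|\Delta_2)$ is well-defined.
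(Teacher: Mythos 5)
Your proposal is correct and follows essentially the same route as the paper: fix a scheduler $\eta\in\Sch_{\Delta_2}^{>0}$, use $\PP_{s,\eta}(\Delta_1\mid\Delta_2)=1-\PP_{s,\eta}(\Paths{s}-\Delta_1\mid\Delta_2)$, and pass to the supremum via $\sup_\eta(1-f(\eta))=1-\inf_\eta f(\eta)$. Your explicit handling of the degenerate case $\Sch_{\Delta_2}^{>0}=\emptyset$ is a small addition the paper leaves implicit, but otherwise the arguments coincide.
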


\section{Conditional Probabilistic Temporal Logic\index{temporal logic!pCTL}\index{temporal logic!cpCTL}}
\label{sec:cpCTL}

The logic $\cpCTL$ extends $\pCTL$ with formulas of the form
$\CP{}{\bowtie a}{\phi}{\psi}$ where
$\bowtie\in\{<,\leq,>,\geq\}$.  Intuitively, $\CP{}{\leq
  a}{\phi}{\psi}$ holds if the probability of $\phi$ given $\psi$ is at most
  $a$. Similarly for the other comparison operators.

\paragraph{Syntax:}
  The $\cpCTL$ logic is defined as a set of state and path formulas,
  i.e., $\cpCTL\eqdef \Stat\cup \Path$, where $\Stat$ and $\Path$ are
  defined inductively:
\[
   \begin{array}{rcl}
     {\cal P}&\subseteq&\Stat,\\
     \phi,\psi\in \Stat&\Rightarrow&\phi\land\psi,\lnot \phi\in \Stat,\\
     \phi, \psi\in \Path&\Rightarrow&\Prob{}{\bowtie a}{\phi},\CP{}{\bowtie a} {\phi}{\psi} \in \Stat,\\
     \phi,\psi \in \Stat&\Rightarrow& \phi\;\U\psi,\Fin\phi,\G\;\phi\in \Path. \\
   \end{array}
\]
 \noindent Here $\bowtie\in\{<,\leq,>,\geq\}$ and $a\in[0,1]$.

\paragraph{Semantics:}

The satisfiability of state-formulas ($s \models \phi$ for a state
$s$) and path-formulas ($\omega \models \psi$ for a path $\omega$)
is defined as an extension of the satisfiability for \pCTL. Hence,
the satisfiability of the logical, temporal, and \pCTL operators
is defined in the usual way. For the conditional probabilistic
operators we define
 $$
\begin{array}{rcl}
%
%
%
%
s\models \CP{}{\leq a} {\phi}{\psi}&\Leftrightarrow&\PP_{s}^+(\{\omega\in\Paths{s}\mid \omega\models \phi\}|\{\omega\in\Paths{s}\mid \omega\models \psi\})\leq a,\\
s\models \CP{}{\geq a} {\phi}{\psi}&\Leftrightarrow&\PP_{s}^-(\{\omega\in\Paths{s}\mid \omega\models \phi\}|\{\omega\in\Paths{s}\mid \omega\models \psi\})\geq a,\\
\end{array}
 $$
 and similarly for $s\models \CP{}{< a} {\phi}{\psi}$ and $s\models \CP{}{> a} {\phi}{\psi}$. We say that a model $\mdp$ satisfy $\phi$, denoted by $\mdp\models\phi$ if $s_0\models \phi$.

In the following we fix some notation that we will use in the rest of the chapter,
\begin{align*}
  \Prob{+}{s}{\phi} &
  \triangleq\PP^+_{s}(\{\omega\in\Paths{s}\mid \omega\models \phi\}), \\
  \CP{+}{s}{\phi}{\psi} &
  \triangleq \PP^+_s(\{\omega\in\Paths{s}\mid \omega\models \phi\}|
                     \{\omega\in\Paths{s}\mid \omega\models \psi\}), \\
  \CP{}{s,\eta}{\phi}{\psi} &
  \triangleq\PP_{s,\eta}(\{\omega\in\Paths{s}\mid
  \omega\models \phi\}|\{\omega\in\Paths{s}\mid \omega\models
  \psi\}),
\end{align*}
$\CP{-}{s}{\phi}{\psi}$ and $\Prob{-}{s}{\phi}$ are defined
analogously.

\begin{observation}\label{lem:modelchecker}
  As usual, for checking if $s\models\CP{}{\bowtie
    a}{\phi}{\psi}$, we only need to consider the cases
  where $\phi = \Until{\phi_1}{\phi_2}$ and where
  $\psi$ is either $\Until{\psi_1}{\psi_2}$ or $\Glo \psi_1$.  This follows
  from $\Fin \phi\leftrightarrow \Until{\true\,}{\phi}$, $\Glo \phi\leftrightarrow \lnot\Fin \lnot\phi$ and the relations
  $$
    \CP{+}{s}{\lnot\phi}{\psi} =
    1-\CP{-}{s}{\phi}{\psi} \qquad \text{ and } \qquad
    \CP{-}{s}{\lnot\phi}{\psi} = 1-\CP{+}{s}{\phi}{\psi}
  $$
  derived from Lemma~\ref{lem:relMaxMinProb}.
  Since there is no way to relate $\CP{+}{}{\phi}{\psi}$ and
  $\CP{+}{}{\phi}{\lnot\psi}$, we have to provide algorithms
  to compute both $\CP{+}{}{\phi}{\Until{\psi_1}{\psi_2}}$ and $\CP{+}{}{\phi}{\Glo   \psi_1}$. The same remark holds for the minimal conditional
  probabilities $\CP{-}{}{\phi}{\Until{\psi_1}{\psi_2}}$ and $\CP{-}{}{\phi}{\Glo   \psi_1}$. In this chapter we will only focus on the former
  problem, i.e., computing maximum conditional probabilities, the
  minimal case follows using similar techniques.
\end{observation}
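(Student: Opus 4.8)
The statement is a syntactic normalisation result, so the plan is a short case analysis on the shapes of the path formulas $\phi$ and $\psi$: being members of $\Path$, each of them is of the form $\Until{\chi_1}{\chi_2}$, $\Fin\chi$, or $\Glo\chi$ for state formulas $\chi,\chi_1,\chi_2$. The one fact that makes everything go through cleanly is that the equivalences we rely on, $\Fin\chi\leftrightarrow\Until{\true}{\chi}$ and $\Glo\chi\leftrightarrow\lnot\Fin\lnot\chi$, hold at the level of path satisfaction; hence the corresponding subsets of $\Paths{s}$ are literally equal sets of paths (not merely equal up to a null set), so replacing a path formula by an equivalent one changes neither the set $\Sch_{\Delta_2}^{>0}$ of Definition~\ref{dfn:min max cond prob} nor the values $\CP{+}{s}{\cdot}{\cdot}$ and $\CP{-}{s}{\cdot}{\cdot}$.

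First I would normalise the denominator: if $\psi=\Fin\psi'$ then $\psi\leftrightarrow\Until{\true}{\psi'}$, and if $\psi$ is already $\Until{\psi_1}{\psi_2}$ or $\Glo\psi_1$ there is nothing to do, so $\psi$ may always be taken to be one of the two listed shapes. Next the numerator: if $\phi=\Fin\phi'$ or $\phi=\Until{\phi_1}{\phi_2}$ it is (equivalent to) an until formula and we are done; if $\phi=\Glo\phi'$, then from $\omega\models\Glo\phi'\iff\omega\not\models\Until{\true}{\lnot\phi'}$ the path set of $\phi$ equals $\Paths{s}$ minus the path set of $\Until{\true}{\lnot\phi'}$, so Lemma~\ref{lem:relMaxMinProb}, applied with $\Delta_2$ the path set of $\psi$ and $\Delta_1$ equal to the path set of $\Until{\true}{\lnot\phi'}$, respectively of $\Glo\phi'$, yields the two relations $\CP{+}{s}{\Glo\phi'}{\psi}=1-\CP{-}{s}{\Until{\true}{\lnot\phi'}}{\psi}$ and $\CP{-}{s}{\Glo\phi'}{\psi}=1-\CP{+}{s}{\Until{\true}{\lnot\phi'}}{\psi}$. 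Finally, unfolding the semantics of $\CP{}{\bowtie a}{\cdot}{\cdot}$, deciding $s\models\CP{}{\bowtie a}{\phi}{\psi}$ amounts to comparing $\CP{+}{s}{\phi}{\psi}$ (for $\bowtie\in\{<,\leq\}$) or $\CP{-}{s}{\phi}{\psi}$ (for $\bowtie\in\{>,\geq\}$) with $a$; composing this with the two reductions, the whole problem reduces to computing $\CP{+}{s}{\Until{\phi_1}{\phi_2}}{\psi}$ and $\CP{-}{s}{\Until{\phi_1}{\phi_2}}{\psi}$ with $\psi\in\{\Until{\psi_1}{\psi_2},\Glo\psi_1\}$. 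In particular the $\Glo$ numerator is absorbed by swapping the maximal and minimal conditional probabilities, which is precisely why one needs both a maximising and a minimising procedure even though the chapter develops only the former.

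I do not expect a genuine obstacle in this argument; the only point that needs care --- and the content of the second half of the statement --- is that the analogous move is \emph{not} available for the denominator. There is no identity linking $\CP{+}{s}{\phi}{\psi}$ with $\CP{+}{s}{\phi}{\lnot\psi}$, since once one optimises over schedulers, conditioning on a set and on its complement become unrelated; hence $\CP{+}{s}{\phi}{\Glo\psi_1}$ cannot be re-expressed through $\CP{+}{s}{\phi}{\Until{\true}{\lnot\psi_1}}$ or its minimal counterpart, and the two denominator shapes $\Until{\psi_1}{\psi_2}$ and $\Glo\psi_1$ genuinely require separate algorithms. To make this precise I would simply exhibit a tiny Markov chain --- where, there being a unique scheduler $\eta$, the maximal and minimal conditional probabilities coincide with $\CP{}{s,\eta}{\phi}{\psi}$ --- in which $\CP{}{s,\eta}{\phi}{\psi}$ and $\CP{}{s,\eta}{\phi}{\lnot\psi}$ take unrelated values, confirming that no reduction between the two cases can exist.
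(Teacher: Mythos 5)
Your proposal is correct and follows essentially the same route as the paper, which justifies the observation exactly by the equivalences $\Fin\phi\leftrightarrow\Until{\true}{\phi}$, $\Glo\phi\leftrightarrow\lnot\Fin\lnot\phi$ and the complementation relations from Lemma~\ref{lem:relMaxMinProb}, together with the matching of $\CP{+}{s}{\cdot}{\cdot}$ to $\bowtie\in\{<,\leq\}$ and $\CP{-}{s}{\cdot}{\cdot}$ to $\bowtie\in\{>,\geq\}$ in the semantics. Your extra care about the path sets being literally equal (so the scheduler set $\Sch_{\Delta_2}^{>0}$ is unchanged) and your suggestion to witness the non-reducibility of the two denominator shapes by a small Markov chain are sound refinements of what the paper merely asserts.
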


\subsection{Expressiveness}

We now show that \cpCTL is strictly more expressive than pCTL. The
notion of \emph{expressiveness} of a temporal logic is based on
the notion of \emph{formula equivalence}. Two temporal logic
formulas $\phi$ and $\psi$ are \emph{equivalent} with respect to a
set $\mc$ of models (denoted by $\phi\equiv_{\mc}\psi$) if for any
model $m\in\mc$ we have $m\models\phi$ if and only if
$m\models\psi$. A temporal logic $\cal L$ is said to be \emph{at
least as expressive as} a temporal logic ${\cal L}^\prime$, over a
set of models $\mc$, if for any formula $\phi\in{\cal L}^\prime$
there is a formula $\psi\in{\cal L}$ that is equivalent to $\phi$
over $\mc$. Two temporal logics are \emph{equally expressive} when
each of them is at least as expressive as the other. Formally:


\begin{dfn} Two temporal logics\index{temporal logic!expressiveness} ${\cal L}$ and ${\cal L}^\prime$ are equally expressive with respect to $\mc$ if
 \begin{align*}
 \forall \phi\in {\cal L}. \left(\exists \psi\in{\cal L}^\prime. \phi\equiv_{\mc}\psi\right) \land \forall \psi\in {\cal L}^\prime. \left(\exists \phi\in{\cal L}. \phi\equiv_{\mc}\psi\right).
 \end{align*}
\end{dfn}

\begin{thm} $\cpCTL$ is more expressive than $\pCTL$ with respect to \MCs and \MDPs.
\end{thm}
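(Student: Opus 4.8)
The plan is to prove the two halves of ``more expressive'' separately. The direction ``\cpCTL{} is at least as expressive as \pCTL'' is immediate: by the definition of the syntax every \pCTL{} formula is already a \cpCTL{} formula, and a formula is trivially equivalent to itself over any class of models. So the whole content of the theorem is \emph{strictness} — exhibiting a \cpCTL{} formula that has no \pCTL{} equivalent. Moreover, since every Markov Chain is an MDP (the case $|\tau(s)|=1$), it suffices to produce such a formula that is already inequivalent to all \pCTL{} formulas over the class of \MCs; the same witness then works over \MDPs, because a disagreeing Markov Chain is in particular a disagreeing MDP.

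For the witness I would take a conditional reachability formula $\Phi \eqdef \CP{}{\bowtie a}{\Fin \alpha}{\Fin \beta}$ — by Observation~\ref{lem:modelchecker} this is the essential case — together with a one-parameter family of Markov Chains $\{M_c\}_{c\in(0,1)}$ designed so that (i) the value of $\CP{}{}{\Fin\alpha}{\Fin\beta}$ in $M_c$ is a transparent rational function of $c$, and (ii) the unconditional reachability probabilities that \pCTL{} can observe in $M_c$ — including those of \emph{internal} states, reached via nested $\Prob{}{\bowtie\cdot}{\cdot}$ operators inside an until — are kept coarse. A natural shape is a chain that first performs a $c/(1{-}c)$ branch into two gadgets, one of which makes $\beta$ reachable while the other does not, with $\alpha$ reached inside each gadget with fixed probabilities; then $\PP_{M_c}(\Fin\beta)$, $\PP_{M_c}(\Fin\alpha)$ and the conditional $\CP{}{}{\Fin\alpha}{\Fin\beta}$ are all explicit. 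The point to extract is that for any \emph{fixed} \pCTL{} formula $\phi$ the map $c\mapsto[M_c\models\phi]$ flips its verdict only at finitely many values of $c$, and these are roots of polynomials whose coefficients come from the (finitely many) thresholds occurring in $\phi$ together with the transition data of $M_c$ — an instance of Tarski--Seidenberg, since \pCTL{} model checking over $M_c$ reduces to evaluating polynomial (in)equalities in $c$. One then tunes $a$ and the gadgets so that the threshold at which $M_c$ flips its verdict on $\Phi$ cannot be a root of any such polynomial built from $\phi$'s data — for instance by placing it at a value the unconditional quantities of the family never attain — so that no finite union of $\phi$-intervals can coincide with $\{c : M_c\models\Phi\}$.

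An alternative, possibly more robust, route for the strictness half is to exhibit two \emph{concrete} finite Markov Chains $M_1,M_2$ and a \cpCTL{} formula $\Phi$ with $M_1\models\Phi$, $M_2\not\models\Phi$, while $M_1$ and $M_2$ satisfy exactly the same \pCTL{} formulas; one can hunt for such a pair among the Markov Chains already used to separate \pCTL{} from \pCTLstar{} (since \cpCTL, like \pCTLstar, can express correlations between path events — e.g.\ reaching $\alpha$ \emph{and} $\beta$, and comparing probabilities to each other rather than to a constant — that \pCTL{} cannot), and then merely verify that a suitable conditional formula tells them apart.

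I expect the real difficulty to be precisely the inexpressibility argument: showing that \emph{no} \pCTL{} formula matches $\Phi$. The obstacle is that \pCTL{} is stronger than it looks — nesting $\Prob{}{\bowtie\cdot}{\cdot}$ inside an until operator lets it probe the reachability probabilities of individual states, so a carelessly built family or pair gets pulled apart by \pCTL{} after all. Making the argument airtight therefore needs either a clean normal-form/invariant lemma characterising exactly which subsets of the parameter interval are \pCTL-definable over the chosen family (and then a check that $\{c:M_c\models\Phi\}$ is not among them), or, in the two-model approach, a careful state-by-state verification that each state of $M_1$ is \pCTL-indistinguishable from its counterpart in $M_2$. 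Everything else — the easy direction, the reduction via Observation~\ref{lem:modelchecker} to conditional reachability, and the explicit computation of the probabilities in $M_c$ — is routine.
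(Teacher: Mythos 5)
The easy half (syntactic inclusion) and the reduction to a conditional reachability witness are fine, but the heart of the theorem — that \emph{no} \pCTL{} formula is equivalent to your chosen $\CP{}{\bowtie a}{\Fin \alpha}{\Fin\beta}$ — is exactly the part you leave open, and both routes you sketch for it are flawed. For the parametric-family route, the claim that you can ``place the flip point at a value that cannot be a root of any polynomial built from $\phi$'s data'' cannot be made to work: \pCTL{} thresholds range over all of $[0,1]$, so for any critical parameter $c^\ast$ the formula $\Prob{}{\leq b}{\Fin\beta}$ with $b$ chosen as the value of $\PP_{M_{c^\ast}}(\Fin\beta)$ flips at exactly $c^\ast$; and over a family of the ``natural shape'' you describe, where the relevant probabilities are monotone in $c$, the truth-set $\{c: M_c\models\Phi\}$ is itself a single threshold set and is therefore matched \emph{on the whole family} by one such \pCTL{} formula (the four comparison operators let you fix the endpoint convention). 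So the missing ``normal-form/invariant lemma'' is not a technicality to be supplied later — it is the entire content of the theorem, and the Tarski--Seidenberg observation does not substitute for it. Your fallback route is worse: over finite \MCs{}, \pCTL-equivalence implies probabilistic bisimilarity, and bisimilar chains agree on all \pCTLstar{} probabilities, hence on all quotients $\Prob{}{}{\phi\land\psi}/\Prob{}{}{\psi}$, hence on all of \cpCTL{}. So a pair $M_1, M_2$ that are \pCTL-equivalent yet \cpCTL-distinguishable does not exist, and for the same reason there are no such pairs ``already used to separate \pCTL{} from \pCTLstar'' — that separation also cannot be witnessed by a \pCTL-equivalent pair of finite chains.

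The paper's proof has the shape your argument would ultimately need: it fixes the single witness $\CP{}{\leq 0.5}{\Fin A}{\Fin B}$ and then argues by case analysis on the \emph{structure of the candidate} \pCTL{} formula $\Prob{}{\leq b}{\Fin\psi_1}$, $\Prob{}{\leq b}{\psi_1\U\psi_2}$, $\Prob{}{\leq b}{\G\psi_1}$, exhibiting for each case one of three small concrete Markov chains on which the candidate and the conditional formula disagree; crucially, the distinguishing model is allowed to depend on the candidate formula. Your proposal never performs this per-candidate analysis, and the two shortcuts offered in its place do not close the gap.
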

\begin{proof} Obviously \cpCTL is at least as expressive as \pCTL,
hence we only need to show that the reverse does not hold. The
result is rather intuitive since the semantics of the conditional
operator for \cpCTL logic is provided by a non-linear equation
whereas there is no \pCTL formula with non-linear semantics.

The following is a formal proof. We plan to show that there is no
\pCTL formula $\psi$ equivalent to $\phi=\CP{}{\leq 0.5}{\Fin
A}{\Fin B}$\rev{, with $A$ and $B$ atomic propositions}. The proof is by cases on the structure of the \pCTL
formula $\psi$. The most interesting case is when $\psi$ is of the
form $\Prob{}{\leq b}{\psi}$, so we will only prove this case. In
addition we restrict our attention to $b$'s such that $0<b<1$ (the
cases $b=0$ and $b=1$ are easy). In Figure \ref{fig:expressCount}
we depict the Markov Chains involved in the proof. We use
$\lnot\psi_1$ to mark the states with an assignment of truth
values (to propositional variables) falsifying $\psi_1$.
\begin{itemize}
\item[Case $\psi=\Prob{}{\leq b}{\Fin\psi_1}$:] $\text{}$\\ If $\psi_1$ is
$\true$ or $\false$ the proof is obvious, so we assume otherwise.
We first note that we either have $\lnot \rev{\psi_1} \Rightarrow
\lnot(B\land\lnot A)$ or $\lnot \rev{\psi_1} \Rightarrow (B\land\lnot A)$.
In the former case, it is easy to see (using $\lnot B\Rightarrow
\psi_1 $) that we have $m_2\models\phi$ and $m_2\not\models\psi$.
In the second case we have $m_1\not\models \phi$ and $m_1\models
\psi$.
\item[Case $\psi=\Prob{}{\leq b}{\psi_1\U\psi_2}$:]$\text{}$\\ We
assume $\psi_1\not=\true$, otherwise we fall into the previous
case. We can easily see that we have $m_3\models\psi$ but
$m_3\not\models\phi$.
\item[Case $\psi=\Prob{}{\leq b}{\G\psi_1}$:] $\text{}$\\ The case
when $\psi_1=\true$ is easy, so we assume $\psi_1\not=\true$. We
can easily see that we have $m_3\models\psi$ but
$m_3\not\models\phi$.
\end{itemize}
\vspace{-0.8cm}
\end{proof}
Note that, since \MCs are a special case of \MDPs, the proof also
holds for the latter class.
{\def\thesubfigure{\thefigure(\alph{subfigure})}
\begin{figure}[h]
 \centering
  \subfigure{\includegraphics[width=4.25cm]{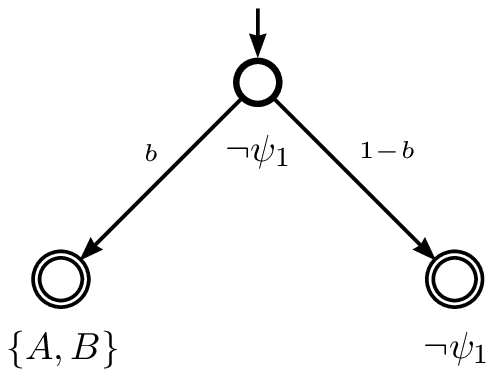}}\quad
  \subfigure{\includegraphics[width=4cm]{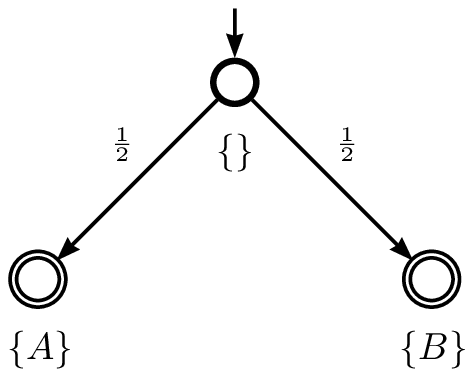}}\qquad
  \subfigure{\includegraphics[width=1.34cm]{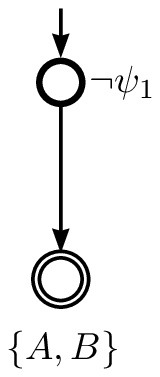}}
\caption{Markov Chains $m_1$, $m_2$, and $m_3$
respectively.}\label{fig:expressCount}
  \vspace{-0.1cm}
\end{figure}}

We note that, in spite of the fact that a \cpCTL formula of the
form $\CP{}{\leq a}{\phi}{\psi}$ cannot be expressed as a \pCTL
formula, if dealing with fully probabilistic systems (i.e. systems
without nondeterministic choices) it is still possible to verify
such conditional probabilities formulas as the quotient of two
\pCTL${\!\!}^\star$ formulas:
$\CP{}{}{\phi}{\psi}=\frac{\Prob{}{}{\phi\land\psi}}{\Prob{}{}{\psi}}$.
However, this observation does not carry over to systems where
probabilistic choices are combined with nondeterministic ones (as
it is the case of Markov Decision Processes). This is due to the
fact that, in general, \rev{it is not the case that}
$\CP{+}{}{\phi}{\psi}\rev{=}\frac{\Prob{+}{}{\phi\land\psi}}{\Prob{+}{}{\psi}}$.

\section{Semi History-Independent and Deterministic Schedulers}
\label{edHIS}

Recall that there exist optimizing (i.e. maximizing and
minimizing) schedulers on $\pCTL$ that are $\HI$ and deterministic
\cite{ba_1995_probabilistic}. We show that, for \cpCTL,
deterministic schedulers still suffice to reach the optimal
conditional probabilities. Because we now have to solve a
non-linear optimization problem, the proof differs from the \pCTL
case in an essential way.
We also show that $\HI$ schedulers do not suffice to attain
optimal conditional probability and introduce the family of semi
history-independent schedulers that do attain it.

\subsection{Semi History-Independent Schedulers\index{scheduler!semi history independent}}

The following example shows that maximizing schedulers are not
necessarily $\HI$.

\begin{wrapfigure}{r}{2.8cm}
\vspace{-0.1cm} \centering
\includegraphics[width=3cm]{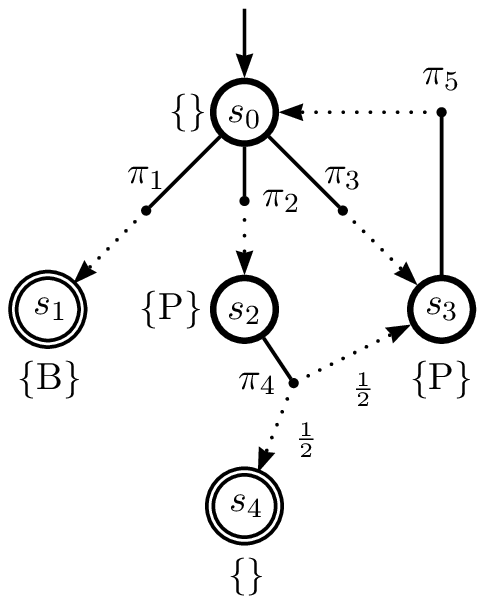}
\caption{\!\!MDP} \label{fig:HI} \vspace{-0.4cm}
\end{wrapfigure}
\medskip\exampleheader
\label{ex:notHI} Let $\Pi$ be the $\MDP$ of Figure~\ref{fig:HI}
and the conditional probability $\CP{}{s_0,\eta}{\Fin B}{\Fin P}$.
There are only three deterministic history independent schedulers,
choosing $\pi_1$, $\pi_2$, or $\pi_3$ in $s_0$. For the first one,
the conditional probability is undefined and for the second and
third it is 0. The scheduler $\eta$ that maximizes
$\CP{}{s_0,\eta}{\Fin B}{\Fin P}$ satisfies $\eta(s_0) = \pi_3$,
$\eta(s_0s_3) = \pi_5$, and $\eta(s_0s_3s_0) = \pi_1$.
Since $\eta$ chooses on $s_0$ first $\pi_2$ and later $\pi_1$, $\eta$
is not history independent.

\medskip\noindent
Fortunately, as we show in Theorem \ref{thm:shiSch}, there exists
a \emph{nearly $\HI$} scheduler that attain optimal conditional
probability. We say that such schedulers are nearly HI because
they always take the same decision \emph{before} the system
reaches a certain condition $\varphi$ and also always take the
same decision \emph{after} $\varphi$. This family of schedulers is
called $\varphi$-semi history independent ($\varphi$-$\nHI$ for
short) and the condition $\varphi$ is called \emph{stopping
condition}. For a pCTL path formula $\phi$ the stopping condition
is a boolean proposition either validating or contradicting
$\phi$. So, the (validating) stopping condition of $\Fin \phi$ is
$\phi$ whereas the (contradicting) stopping condition of $\Glo
\phi$ is $\lnot \phi$. Formally:

\[    \StopCond{\phi} \eqdef
        \begin{cases}
            \lnot\psi_1\lor\psi_2 & \text{if $\phi=\psi_1\U\psi_2$}, \\
            \lnot\psi  & \text{if $\phi=\G\psi$}.
        \end{cases}
\]

Similarly, for a \cpCTL formula $\CP{}{\bowtie a}{\phi}{\psi}$,
the stopping condition is a condition either validating or
contradicting any of its pCTL formulas ($\phi$, $\psi$), i.e.,
$\StopCond{\CP{}{\bowtie
a}{\phi}{\psi}}=\StopCond{\phi}\lor\StopCond{\psi}$.

We now proceed with the formalization of semi history independent
schedulers.
\begin{dfn}[Semi History-Independent Schedulers]
  Let $\Pi$ be a $\MDP$, $\eta$ a scheduler\index{scheduler!semi history independent} for $\Pi$, and
  $\phi\lor\psi \in\Stat$. We say that $\eta$ is a
  \emph{$(\phi\lor\psi)$ semi history-independent scheduler}
  ($(\phi\lor\psi)$-$\nHI$ scheduler for short) if for all
  $\sigma_1,\sigma_2\in\FPaths{s}$ such that
  $\last{\sigma_1}=\last{\sigma_2}$ we have
\begin{align*}
  \sigma_1,\sigma_2\not\models \Fin (\phi\lor\psi)\Rightarrow\eta(\sigma_1)=\eta(\sigma_2), and & \qquad \text{\{HI before stopping
  condition\}}\\
  \sigma_1,\sigma_2\models \Fin\phi\Rightarrow\eta(\sigma_1)=\eta(\sigma_2), and & \qquad \text{\{HI after stopping
  condition\}}\\
  \sigma_1,\sigma_2\models \Fin  \psi\Rightarrow\eta(\sigma_1)=\eta(\sigma_2). & \qquad \text{\{HI after stopping
  condition\}}
\end{align*}
We denote the set of all $\varphi$-$\nHI$ schedulers of
$\Pi$ by $\Sch^\varphi(\Pi)$.
\end{dfn}

We now prove that semi history-independent schedulers suffice to
attain the optimal conditional probabilities for cpCTL formula.
%
%

\begin{thm}\label{thm:shiSch}
  Let $\Pi$ be a $\MDP$, $\phi, \psi \in\Path$, and $\varphi=\StopCond{\phi}\lor
\StopCond{\psi}$. Assume that there exists a scheduler $\eta$ such
that $\Prob{}{\eta}{\psi}>0$. Then:


\[\CP{+}{}{\phi}{\psi}=\sup_{\eta\in
\Sch^\varphi(\Pi)}\CP{}{\eta}{\phi}{\psi}.\]

\end{thm}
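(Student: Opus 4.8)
The inequality ``$\ge$'' is immediate, since $\Sch^\varphi(\Pi)\subseteq\Sch(\Pi)$. The content is the reverse inequality, which the plan is to prove by transforming an arbitrary scheduler into a $\varphi$-$\nHI$ one without decreasing the conditional probability: given any $\eta$ with $\Prob{}{\eta}{\psi}>0$ and any $\epsilon>0$, I would build $\eta'\in\Sch^\varphi(\Pi)$ with $\CP{}{\eta'}{\phi}{\psi}\ge\CP{}{\eta}{\phi}{\psi}-\epsilon$. By Observation~\ref{lem:modelchecker} I may assume $\phi=\Until{\phi_1}{\phi_2}$ and $\psi=\Until{\psi_1}{\psi_2}$, so that $\varphi=\StopCond{\phi}\lor\StopCond{\psi}$ is a plain state proposition (the case $\psi=\G\psi_1$, with $\StopCond{\G\psi_1}=\lnot\psi_1$, being analogous); note that a path that never reaches $\varphi$ stays forever in states where $\phi_1,\psi_1$ hold and $\phi_2,\psi_2$ fail, hence satisfies neither $\phi$ nor $\psi$.

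The first step is a \emph{decomposition along the stopping condition}. Splitting each path at the first state $s$ in which $\varphi$ holds and using that along the prefix both until-formulas are still ``in progress'', the law of total probability gives, for any scheduler $\eta$, that $\CP{}{\eta}{\phi}{\psi}=\frac{\sum_{s\models\varphi}p_s\,n_s}{\sum_{s\models\varphi}p_s\,d_s}$, where $p_s$ is the probability of first reaching $\varphi$ in $s$ and $n_s,d_s$ are the probabilities of $\phi\land\psi$, resp. $\psi$, from $s$ onwards; here $p_s$ depends only on the behaviour of $\eta$ \emph{before} $\varphi$, the pair $(n_s,d_s)$ only on the residual behaviour \emph{after} $\varphi$, and $0\le n_s\le d_s\le1$ always. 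Moreover, when $s$ already decides $\phi$ or $\psi$ the pair $(n_s,d_s)$ degenerates to an ordinary (un)reachability probability.

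Next I would optimize the ``after $\varphi$'' behaviour, one boundary state at a time. At a boundary state $s$ I replace the residual scheduler by a \emph{history-independent deterministic} one chosen optimal for the relevant reachability objective: if $\phi$ is already true in $s$ then $n_s=d_s=\Prob{}{}{\Fin\psi_2}$ from $s$ and I take one maximizing it; if $\phi$ is already false I take one minimizing it; if $\psi$ is already true ($d_s=1$) I take one maximizing $\Prob{}{}{\Fin\phi_2}$ from $s$; if $\psi$ is already false, or both are decided, the term is irrelevant to numerator and denominator and any fixed HI scheduler works. Such uniformly optimal HI deterministic schedulers exist by~\cite{ba_1995_probabilistic} (with Lemma~\ref{lem:relMaxMinProb} for the minimizing case). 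Since the $p_s$ are untouched and $n_s\le d_s$ holds, each replacement moves the ratio $\frac{N+p_sn_s}{D+p_sd_s}$ in the favourable direction (here $N\le D$, being sums of the remaining $p_tn_t\le p_td_t$), so the conditional probability does not decrease, and the result is HI after $\StopCond{\phi}$ and after $\StopCond{\psi}$. The subtle point is that $\varphi$-$\nHI$ forces two finite paths ending in the same state, both past the stopping condition, to make the same move even if they entered the ``after'' region through different boundary states; I would reconcile this by using the \emph{uniformly} optimal HI schedulers (optimal from every state at once) and by letting the ``already false'' branches, which do not affect $\CP{}{}{\phi}{\psi}$, copy the decisions of the relevant branches.

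Finally, with all $n_s,d_s$ fixed, I would show HI schedulers suffice for the ``before $\varphi$'' part, i.e. for maximizing the linear-fractional functional $\eta\mapsto\frac{\sum_sp^\eta_sn_s}{\sum_sp^\eta_sd_s}$, which I would linearise by parametrisation: for $\lambda\in[0,1]$ set $h(\lambda)=\sup_\eta\sum_sp^\eta_s(n_s-\lambda d_s)$, a \emph{linear} objective (a difference of two reachability probabilities) for which HI deterministic schedulers are again optimal by the standard expected-reward argument for MDPs extending~\cite{ba_1995_probabilistic}. One checks that $h$ is non-increasing, that $h(\lambda)>0$ for every $\lambda<\CP{+}{}{\phi}{\psi}$ and $h(\lambda)\le0$ for $\lambda\ge\CP{+}{}{\phi}{\psi}$, so $\CP{+}{}{\phi}{\psi}=\sup\{\lambda\mid h(\lambda)>0\}$; an HI scheduler witnessing $h(\lambda)>0$, glued with the HI post-schedulers above, is $\varphi$-$\nHI$ and achieves conditional probability $>\lambda$, and letting $\lambda\uparrow\CP{+}{}{\phi}{\psi}$ finishes the proof. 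I expect the genuine obstacle to be exactly the interplay between the non-linear (ratio) objective and the rigidity of $\varphi$-$\nHI$ schedulers: making precise that the residual optimal schedulers can be chosen \emph{coherently} across boundary states, and that the parametrisation argument really yields a \emph{single} $\varphi$-$\nHI$ scheduler (per $\epsilon$) rather than a family indexed by $\lambda$.
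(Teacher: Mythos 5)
Your proposal is correct in outline, but it follows a genuinely different route from the paper on the hard half of the theorem. The easy inequality and your treatment of the region \emph{after} the stopping condition essentially coincide with the paper's Lemma~\ref{lem:HIafter}: there, too, one substitutes uniformly optimal HI residual schedulers, chosen coherently across boundary states, and concludes via the monotonicity of $x\mapsto\frac{a+x}{b+x}$ (Remark~\ref{rem:incr}). For the region \emph{before} the stopping condition, however, the paper neither decomposes nor linearises: it first builds, for each $\epsilon$, a near-optimal scheduler that is HI beyond a sufficiently deep level $N$ (Lemma~\ref{lem:initialCut}), and then enlarges the HI region one boundary path at a time by an exchange argument --- given a path $\rho$ and the already-treated paths $K_s$ ending in the same state, one of the two local modifications $\eta_1,\eta_2$ is shown not to decrease $\CP{}{}{\phi}{\psi}$ via explicit $2\times2$/$3\times3$ determinant computations (Lemma~\ref{lem:extendingCut}). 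Your Dinkelbach-style parametrisation replaces all of this by a single appeal to memoryless-deterministic optimality for the objective $\sum_s p^\eta_s(n_s-\lambda d_s)$. That is the one load-bearing step you have not discharged: since the weights $n_s-\lambda d_s$ have mixed signs, this is a signed weighted-reachability (terminal-reward) problem, not plain reachability, and it does not follow verbatim from~\cite{ba_1995_probabilistic}; the claim is true, but it needs its own proof (e.g.\ an occupation-measure/LP argument, or a state-by-state derandomisation in the spirit of the paper's convex-combination lemma, taking care of the degenerate case of almost-sure return to a state). If you supply that lemma, your route is arguably shorter and even yields a scheduler that is memoryless and deterministic before the stopping condition, anticipating Theorem~\ref{theor:niceSchedulers}; the paper's longer route never needs anything beyond ordinary reachability optimality.

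Two smaller points should be made precise. First, the displayed identity $\CP{}{\eta}{\phi}{\psi}=\frac{\sum_s p_s n_s}{\sum_s p_s d_s}$ is not valid for an arbitrary history-dependent $\eta$, because the residual probabilities depend on the whole prefix, not only on the boundary state; it becomes valid only after your second step has made the post-boundary behaviour uniform, so the order of your steps is essential and should be stated as such. Second, the case $\psi=\G\psi_1$ is not merely ``analogous'': paths that never reach the stopping condition then \emph{do} satisfy $\psi$, adding a term $1-\sum_s p_s$ to the denominator; this still fits the linear-fractional scheme (so your argument survives), but it must be carried through explicitly --- the paper handles it by a separate case analysis inside Lemma~\ref{lem:initialCut}.
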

\noindent (If there exists no scheduler $\!\eta$ such that
$\Prob{}{\eta}{\psi}\!>\!0$, then the supremum is $0$.)

The proof of this theorem is rather complex. The first step is to
prove that there exists a scheduler $\eta$ HI before the stopping
condition and such that $\CP{}{\eta}{\phi}{\psi}$ is `close' (i.e.
not further than a small value $\epsilon$) to the optimal
conditional probability $\CP{+}{}{\phi}{\psi}$. For this purpose
we introduce some definitions and prove this property first for
long paths (Lemma \ref{lem:initialCut}) and then, step-by-step, in
general (Lemma \ref{lem:extendingCut} and Corollary
\ref{cor:HIbefore}). After that, we create a scheduler that is
also HI after the stopping condition and whose conditional
probability is still close to the optimal one
(Lemma~\ref{lem:HIafter}). From the above results, the theorem
readily follows.

We now introduce some definitions and notation that we will need for the proof.

\begin{dfn}[Cuts] Given a $\MDP$ $\Pi$ we say that a set $K\subseteq
\FPaths{\Pi}$ is a \emph{cut} of $\Pi$ if $K$ is a downward-closed
set of finite paths such that every infinite path passes through it, i.e. \comment{complete path or infinite path???}
\begin{itemize}
    \item $\forall \,\sigma_1 \in K \,.\, \forall\, \sigma_2 \in \FPaths{\Pi}\,.\, \sigma_1 \sqsubseteq \sigma_2 \implies \sigma_2 \in K$, and
    \item $\forall \,\omega \in \Paths{\Pi} \,.\,\exists\, \sigma \in K\,.\, \sigma \sqsubset \omega.$
\end{itemize}

\noindent where $\sigma_1\sqsubseteq\sigma_2$ means that
$\sigma_2$ is an ``extension'' of $\sigma_1$, i.e.
$\sigma_2=\sigma_1 \sigma'$ for some path $\sigma'$. We denote the
set of all cuts of $\Pi$ by $K(\Pi)$.
\end{dfn}

\noindent For $R \subseteq \FPaths{s}$, we say that $\eta$ is
history independent in $R$ if for all $\sigma_1, \sigma_2 \in R$
such that $\last{\sigma_1} = \last{\sigma_2}$ we have that
$\eta(\sigma_1) = \eta(\sigma_2)$. We also define the sets $\Phi$
and $\Psi$ as the set of finite paths validating $\phi$ and $\psi$
respectively, i.e. $\Phi\eqdef\{\sigma\in\FPaths{\Pi}\mid
\sigma\models\phi\}$ and $\Psi\eqdef\{\sigma\in\FPaths{\Pi}\mid
\sigma\models\psi\}$. Finally, given a $\MDP$ $\Pi$, two path
formulas $\phi$, $\psi$, and $\hat{\epsilon}>0$ we define the set
\begin{align*}
{\cal K}\eqdef\lbrace(K,\eta)\in K(\Pi)\times\Sch(\Pi)\mid
\Phi\cup\Psi\subseteq K&\text{ and }\, \eta\text { is HI in
}K\setminus(\Phi\cup\Psi)\\
&\text{ and } \CP{+}{}{\phi}{\psi}-
\CP{}{\eta}{\phi}{\psi}<\hat{\epsilon}\rbrace
\end{align*}
%
\noindent If a scheduler $\eta$ is HI in
$K\setminus(\Phi\cup\Psi)$ then we say that $\eta$ is HI before the stopping
condition.


\begin{lem}[non emptiness of ${\cal K}$]\label{lem:initialCut} There exists $(K,\eta)$ such that $(K,\eta)\in{\cal K}$
and that its complement $K^c\eqdef \FPaths{\Pi}\setminus K$ is finite.
\end{lem}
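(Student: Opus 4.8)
I would prove this by starting from a near-optimal scheduler, cutting the process at a finite depth beyond which essentially all the relevant probability mass is already decided, and overwriting the scheduler on the (infinitely many, long) still-undecided paths by a history-independent one chosen so as not to perturb the relevant reachability probabilities.

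By Observation~\ref{lem:modelchecker} we may assume $\phi=\phi_1\U\phi_2$ and $\psi$ is either $\psi_1\U\psi_2$ or $\G\psi_1$; I treat the first case, the second being analogous and easier, since there every finite path outside $\Phi\cup\Psi$ has already falsified $\psi$, so a fixed history-independent scheduler may be substituted on all of $\FPaths{\Pi}\setminus(\Phi\cup\Psi)$ with no effect at all and one may take $K=\FPaths{\Pi}$. In the $\U/\U$ case $\Phi$ and $\Psi$ are upward closed under $\sqsubseteq$. Since ${\cal K}$ is only meaningful when some scheduler has positive $\psi$-probability (otherwise $\CP{}{\eta}{\phi}{\psi}$ is undefined for every $\eta$), fix such a scheduler; by the definition of $\CP{+}{}{\phi}{\psi}$ as a supremum, pick $\bar\eta$ with $\Prob{}{\bar\eta}{\psi}>0$ and $\CP{+}{}{\phi}{\psi}-\CP{}{\bar\eta}{\phi}{\psi}<\hat\epsilon/2$, and write $a=\PP_{\bar\eta}(\{\omega\in\Paths{s_0}\mid\omega\models\phi\land\psi\})$ and $b=\PP_{\bar\eta}(\{\omega\in\Paths{s_0}\mid\omega\models\psi\})>0$.

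The events $\{\omega\models\psi\}$ and $\{\omega\models\phi\land\psi\}$ are Borel sets that are increasing unions of the cylinder-sets ``validated within the first $n$ steps''; by continuity from below of $\PP_{\bar\eta}$ there is an $n$ such that the $\PP_{\bar\eta}$-mass of the paths validating $\psi$ (respectively $\phi\land\psi$) only after step $n$ is below a threshold $\delta>0$, to be fixed later small relative to $b$ and $\hat\epsilon$. Put $K\eqdef\{\sigma\in\FPaths{\Pi}\mid\sigma\text{ has length at least }n\}\cup\Phi\cup\Psi$. Each of these three sets is closed under $\sqsubseteq$, so $K$ is; every infinite path has its length-$n$ prefix in $K$, so $K\in K(\Pi)$; and $\Phi\cup\Psi\subseteq K$ while $K^c=\FPaths{\Pi}\setminus K$ is contained in the (finite) set of paths of length less than $n$, hence finite. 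It then remains to define $\eta$ agreeing with $\bar\eta$ on $K^c\cup\Phi\cup\Psi$, history independent on $L\eqdef K\setminus(\Phi\cup\Psi)$, and with $\CP{}{\eta}{\phi}{\psi}>\CP{}{\bar\eta}{\phi}{\psi}-\hat\epsilon/2$. Each $\sigma\in L$ has length at least $n$ and validates neither $\phi$ nor $\psi$, so along it $\phi$ is pending or refuted and likewise $\psi$. On the part of $L$ where $\psi$ is already refuted the scheduler influences neither $\PP_\cdot(\{\omega\models\psi\})$ nor $\PP_\cdot(\{\omega\models\phi\land\psi\})$, so let $\eta$ be any fixed history-independent scheduler there. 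On the rest of $L$ — where $\psi$ is still pending — I define $\eta$ history-independently by redistributing, at each state, the expected $\bar\eta$-flow into that state from $L$; this is the flow/normalization construction in the style of Lemma~6 of~\cite{ba_1995_probabilistic}, carried out on the finite structure obtained by annotating each state with the current status (pending/confirmed/refuted) of $\phi$ and of $\psi$, on which $\{\omega\models\psi\}$ and $\{\omega\models\phi\land\psi\}$ become reachability events. Since $\eta$ agrees with $\bar\eta$ on all paths of length less than $n$, the two assign the same probability to every cylinder of length at most $n$, hence to every event decided within $n$ steps; the flow-preserving overwrite on $L$ then moves $\PP_\eta(\{\omega\models\psi\})$ and $\PP_\eta(\{\omega\models\phi\land\psi\})$ away from $b$ and $a$ only within the undecided-at-$n$ mass, i.e.\ by $O(\delta)$. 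As $b>0$, choosing $\delta$ small enough yields $|\CP{}{\eta}{\phi}{\psi}-\CP{}{\bar\eta}{\phi}{\psi}|<\hat\epsilon/2$, so $\CP{+}{}{\phi}{\psi}-\CP{}{\eta}{\phi}{\psi}<\hat\epsilon$, and $(K,\eta)\in{\cal K}$ with $K^c$ finite.

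I expect the crux to be the last step, and I am not fully confident the naive version works: a careless history-independent overwrite on $L$ can make $\PP_\eta(\{\omega\models\psi\})$ jump, because the mass of paths still pending in $\psi$ at depth $n$ need not be small — only the mass that $\bar\eta$ eventually steers into validating $\psi$ from there is small. Hence the substitute scheduler on $L$ must be built in a flow-preserving way, and the delicate points are (a) that this flow bookkeeping really preserves the two reachability probabilities up to the $O(\delta)$ error, and (b) that it yields genuine history-independence on $L$ rather than only on the finer status-annotated structure (the two $\phi$-statuses ``pending'' and ``refuted'' that occur at the same state while $\psi$ is pending want different continuations, and merging them must be shown to cost at most $O(\delta)$). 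The fact that the paper follows this lemma with ``cut-extension'' results (Lemma~\ref{lem:extendingCut}, Corollary~\ref{cor:HIbefore}) suggests that only a crude ``long-paths'' version is needed here and the merging is refined iteratively afterwards.
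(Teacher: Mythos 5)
Your skeleton coincides with the paper's up to the definition of the cut: both arguments take a scheduler $\eta$ within $\epsilon$ of the supremum, a depth $N$ beyond which only an $\epsilon'$-fraction of the relevant probability mass is still undecided, and $K=\Phi\cup\Psi\cup\{\sigma\mid |\sigma|\geq N\}$. The divergence --- and the gap --- is in how the scheduler is replaced beyond level $N$. The paper makes no attempt to reproduce $\eta$'s tail behaviour by flow-preservation: it substitutes a single \emph{consistent} family of HI schedulers that \emph{minimizes} $\Prob{}{}{\Fin\psi}$ after level $N$ (such HI optimizing schedulers exist for plain reachability by \cite{ba_1995_probabilistic}, and consistency of the family at shared states is what gives history-independence on $K\setminus(\Phi\cup\Psi)$). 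With this choice the denominator can only decrease, and by at most $\epsilon'$, because the $\psi$-mass already realized within $N$ steps is untouched; and the numerator can increase by at most $\epsilon\cdot\Prob{}{\eta}{\Fin\psi}$, since a larger increase would make the new conditional probability exceed $\CP{+}{}{\phi}{\psi}$, contradicting near-optimality of $\eta$. That contradiction-with-the-supremum bound on the numerator is exactly the ingredient your argument is missing; it is what controls the perturbation without having to preserve anything about $\eta$'s behaviour after the cut.

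Your substitute step --- flow-averaging $\bar\eta$ on a status-annotated structure and claiming an $O(\delta)$ perturbation --- is the step you yourself flag, and it does not go through as stated. The pending-$\psi$ mass at depth $N$ need not be small (only the part that $\bar\eta$ eventually steers into $\psi$ is), so everything rests on exact flow preservation; but normalized-occupancy constructions turn a history-dependent scheduler into a memoryless one with the same absorption probabilities only under extra hypotheses (finite expected visit counts, uniqueness of the solution of the flow equations), which you do not verify and which can fail here, and collapsing the two $\phi$-statuses at a common state is precisely where joint rather than aggregate occupancies matter for the numerator, so ``costs $O(\delta)$'' is an unsupported claim rather than a consequence of the construction. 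Separately, your dismissal of the case $\psi=\Glo\psi_1$ is incorrect: a finite path outside $\Phi\cup\Psi$ has \emph{not} falsified $\Glo\psi_1$, and whether it satisfies it depends entirely on the scheduler's continuation, so one cannot take $K=\FPaths{\Pi}$ with an arbitrary HI scheduler; the paper needs a dual argument there, maximizing $\Prob{}{}{\Glo\psi}$ after level $N$ and running the same interval computation.
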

%
%
%

\begin{proof}

We show that, given formulas $\phi$, $\psi$ and
$\hat{\epsilon}>0$, there exists a cut $K$ and a scheduler $\eta^\star$
such that $K^c$ is finite, $\Phi\cup\Psi\subseteq K$, $\eta^\star$ is HI
in $K\setminus(\Phi\cup\Psi)$, and
$\CP{+}{}{\phi}{\psi}-\CP{}{\eta^\star}{ \phi}{\psi}<\hat{\epsilon}$ .

The proof is by case analysis on the structure of $\phi$ and
$\psi$. We will consider the cases where $\phi$ and $\psi$ are
either ``eventually operators'' ($\diamondsuit$) or ``globally
operators'' ($\Box$), the proof for the until case follows along
the same lines.

\noindent $\bullet$ \textbf{Case $\phi$ is of the form $\F\phi$ and $\psi$ is of the form $\Fin\psi$:\
} \text{}\\ Let us start by defining the the probability of
reaching $\phi$ in at most $N$ steps, as $\Prob{}{\eta}{\leq\!
N,\F\phi}\eqdef
\Prob{}{\eta}{\Cyl{\{\sigma\in\FPaths{\Pi}\mid\allowbreak\sigma\models
\F\phi\land|\sigma|\leq N \}}}$. Note that for all \pCTL
reachability properties $\F\phi$ and schedulers $\eta$ we have
\[\lim_{N\rightarrow \infty}\Prob{}{\eta}{\leq\! N, \F\phi}=\Prob{}{\eta}{\F\phi}.
\]

\noindent We also note that this result also holds for
$\pCTL\!^\star$ formulas of the form $\Fin\phi\land \Fin\psi$.

 Let us now take a scheduler $\eta$ and a number $N$ such
that

\begin{align}
\label{eq:cut1}\CP{+}{}{\Fin\phi}{\Fin\psi}-\CP{}{\eta}{\Fin\phi}{\Fin\psi}&<\epsilon\eqdef \hat{\epsilon}/3, \text{ and}\\
\label{eq:cut2}\Prob{}{\eta}{\Fin\phi\land\Fin\psi}-\Prob{}{\eta}{\leq\!N,\F\phi\land\F\psi}&<\epsilon',\text{ and}\\
\label{eq:cut3}
\Prob{}{\eta}{\Fin\psi}-\Prob{}{\eta}{\leq\!N,\F\psi}&<\epsilon'.
\end{align}

\noindent where $\epsilon'$ is such that
$\epsilon'<\min\left(2\cdot\epsilon\cdot\Prob{}{\eta}{\Fin\psi},\frac{\epsilon
\cdot \Prob{}{\eta}{\Fin\psi}^2}{\Prob{}{\eta}{\Fin\phi \land
\Fin\psi}+2\cdot\epsilon\cdot \Prob{}{\eta}{\Fin\psi}}\right)$.
The reasons for this particular choice for the bound of
$\epsilon'$ will become clear later on in the proof.

\rev{We define $K$ as $\Phi\cup\Psi \cup \FPaths{\leq\!N,\Pi}$, where the latter set is defined as the set of paths with length larger than
$N$, i.e. $\FPaths{\leq\!N,\Pi}\eqdef \{\sigma\in\FPaths{\Pi}\mid N\leq |\sigma|\}$. In addition, we define $\eta^\star$ as a scheduler HI in $\FPaths{\leq\!N,\Pi}$ behaving like
$\eta$ for paths of length less than or equal to $N$ which
additionally minimizes $\Prob{}{}{\Fin\psi}$ after level $N$.} In
order to formally define such a scheduler we let $S_N$ to be the
set of states that can be reached in exactly $N$ steps, i.e.,
$S_N\eqdef\{s\in S\mid \exists\,\sigma\in\FPaths{\Pi}\,:\,
|\sigma|=N\land \last{\sigma}=s\}$. Now for each $s\in S$ we let
$\xi_s$ to be a HI s-scheduler such that
$\Prob{}{s,\xi_s}{\Fin\psi}=\Prob{-}{s}{\Fin\psi}$. Note that such
a scheduler exists, i.e., it is always possible to find a HI
scheduler minimizing a reachability pCTL formula
\cite{ba_1995_probabilistic}.

We now define $\eta^\star$ as
\[
    \eta^\star(\sigma) \eqdef
        \begin{cases}
            \xi_s(\sigma_{|\alpha|} \sigma_{|\alpha|\!+\!1}\cdots\sigma_{|\sigma|}) & \text{if $\alpha\sqsubseteq\sigma$ for some $\alpha\in \FPaths{=\!N,\Pi}$}\\
                    &\qquad\qquad \text{such that $\last{\alpha}=s$}, \\
            \eta(\sigma)  & \text{otherwise}.
        \end{cases}
\]

\noindent where $\FPaths{=\!N,\Pi}$ denotes the set of paths of
$\Pi$ of length $N$. It is easy to see that $\eta^\star$
minimizes $\Prob{}{}{\Fin\psi}$ after level $N$. As for the
history independency of $\eta^\star$ in $K$ there is still one
more technical detail to consider: note there may still be paths
$\alpha_1 s_1 \sigma_1 t$ and $\alpha_2 s_2 \sigma_2 t$ such that
$\alpha_1 s_1,\,\alpha_2 s_2 \in \FPaths{=\!N,\Pi}$ and
$\xi_{s_1}(s_1 \sigma_1 t)\not=\xi_{s_2}(s_2 \sigma_2 t)$. This is
the case when there is more than one distribution in $\tau(t)$
minimizing $\Prob{}{t}{\Fin\psi}$, and $\xi_{s_1}$ happens to
choose a different (minimizing) distribution than $\xi_{s_2}$ for
the state $t$. Thus, the selection of the family of schedulers
$\{\xi_s\}_{s\in {S_N}}$ must be made in such a way that: for all
\rev{$s_1, s_2\in S_N$} we have
$\Prob{}{s_1,\xi_{s_1}}{\Fin\psi}=\Prob{-}{s_1}{\Fin\psi}$,
$\Prob{}{s_2,\xi_{s_2}}{\Fin\psi}=\Prob{-}{s_2}{\Fin\psi}$, and
for all $\sigma_1 t\in\FPaths{s_1},\sigma_2
t\in\FPaths{s_2}\,:\,\xi_{s_1}(\sigma_1 t)=\xi_{s_2}(\sigma_2 t)$.
It is easy to check that such family exists. \rev{We conclude that
$\eta^\star$ is HI in $\FPaths{\leq\!N,\Pi}$ and thus HI in
$K\setminus(\Phi\cup\Psi)$.}

We note that $\Prob{}{\eta^\star}{\F\psi}>0$, this follows from
$0<\Prob{}{\eta}{\F\psi}$, (\ref{eq:cut1}), (\ref{eq:cut3}), and
the definition of $\eta^\star$.

Having defined $\eta^\star$ we proceed to prove that such
scheduler satisfies $\CP{+}{}{\phi}{\psi}-\CP{}{\eta}{
\phi}{\psi}<\hat{\epsilon}$. It is possible to show that:

\begin{align}
\label{eq:interval1}
\Prob{}{\eta}{\leq\! N,\Fin\psi}&\leq&\Prob{}{\eta^\star}{\Fin\psi}& \leq &\Prob{}{\eta}{\Fin\psi},&\\
\label{eq:interval2} \Prob{}{\eta}{\leq \! N, \F\phi\land
\F\psi}&\leq& \Prob{}{\eta^\star}{\Fin\phi\land \Fin\psi} &<&
\Prob{}{\eta}{\Fin\phi\land \Fin\psi} + \epsilon\cdot
\Prob{}{\eta}{\Fin\psi}.&
\end{align}
\comment{how do i modify the alignment within the environment ???}


\noindent (\ref{eq:interval1}) and the first inequality of
(\ref{eq:interval2}) follow straightforwardly from the definition
of $\eta^\star$. For the second inequality of (\ref{eq:interval2})
suppose by contradiction that $\Prob{}{\eta^\star}{\Fin\phi\land
\Fin\psi} \geq \Prob{}{\eta}{\Fin\phi\land \Fin\psi} +
\epsilon\cdot \Prob{}{\eta}{\Fin\psi}$. Then

\[\frac{\Prob{}{\eta^\star}{\Fin\phi \land \Fin\psi}}{\Prob{}{\eta^\star}{\Fin\psi}}\geq\frac{\Prob{}{\eta}{\Fin\phi
\land \Fin\psi}+ \epsilon \cdot
\Prob{}{\eta}{\Fin\psi}}{\Prob{}{\eta}{\Fin\psi}}=\CP{}{\eta}{\Fin\phi}{\Fin\psi}+\epsilon\]
\noindent contradicting (\ref{eq:cut1}).

Now we have all the necessary ingredients to show that
\begin{equation}\label{eq:etsstarisclose}
|\CP{}{\eta}{\Fin\phi}{\Fin\psi}-\CP{}{\eta^\star}{\Fin\phi}{\Fin\psi}|<
2\cdot\epsilon.
\end{equation}

Note that

\[\frac{\Prob{}{\eta}{\Fin\phi\land
\Fin\psi}\!-\!\epsilon'}{\Prob{}{\eta}{\Fin\psi}} \!<\!
\CP{}{\eta^\star}{\Fin\phi}{\Fin\psi} \text{  and  }
\CP{}{\eta^\star}{\Fin\phi}{\Fin\psi}\!<\!
\frac{\Prob{}{\eta}{\Fin\phi\land \Fin\psi}\!+\!\epsilon
\cdot\Prob{}{\eta}{\Fin\psi}}{\Prob{}{\eta}{\Fin\psi}-\epsilon'}.\]

\noindent The first inequality holds because
$\Prob{}{\eta^\star}{\Fin\psi}\leq\Prob{}{\eta}{\Fin\psi}$ and
(combining (\ref{eq:interval2}) and (\ref{eq:cut2}))
$\Prob{}{\eta^\star}{\Fin\phi\land \Fin\psi}>
\Prob{}{\eta}{\Fin\phi\land \Fin\psi}-\epsilon'$. The second
inequality holds because $\Prob{}{\eta^\star}{\Fin\phi\land
\Fin\psi} < \Prob{}{\eta}{\Fin\phi\land \Fin\psi} + \epsilon\cdot
\Prob{}{\eta}{\Fin\psi}$ and (combining (\ref{eq:interval1}) and
(\ref{eq:cut3}))
$\Prob{}{\eta^\star}{\Fin\psi}>\Prob{}{\eta}{\Fin\psi}-\epsilon'$.
It is easy to see that $\CP{}{\eta}{\Fin\phi}{\Fin\psi}$ falls in
the same interval, i.e., both $\CP{}{\eta}{\Fin\phi}{\Fin\psi}$
and $\CP{}{\eta^\star}{\Fin\phi}{\Fin\psi}$ are in the interval

\[\left(\ \frac{\Prob{}{\eta}{\Fin\phi\land \Fin\psi}-\epsilon'}{\Prob{}{\eta}{\Fin\psi}}\quad
,\quad \frac{\Prob{}{\eta}{\Fin\phi\land \Fin\psi}+\epsilon\cdot
\Prob{}{\eta}{\Fin\psi}}{\Prob{}{\eta}{\Fin\psi}-\epsilon'}\right).\]

Thus, we can prove (\ref{eq:etsstarisclose}) by proving

\begin{align*}
\frac{\Prob{}{\eta}{\Fin\phi\land
\Fin\psi}}{\Prob{}{\eta}{\Fin\psi}}-\frac{\Prob{}{\eta}{\Fin\phi\land
\Fin\psi}-\epsilon'}{\Prob{}{\eta}{\Fin\psi}}   &< 2\cdot
\epsilon\text{, and}\\
\frac{\Prob{}{\eta}{\Fin\phi\land
\Fin\psi}+\epsilon\cdot\Prob{}{\eta}{\Fin\psi}}{\Prob{}{\eta}{\Fin\psi}-\epsilon'}-\frac{\Prob{}{\eta}{\Fin\phi\land
\Fin\psi}}{\Prob{}{\eta}{\Fin\psi}} & < 2\cdot \epsilon.
\end{align*}

\noindent The first inequality holds if
and only if $\epsilon' <
2\,\cdot\,\epsilon\,\cdot\,\Prob{}{\eta}{\Fin\psi}$. As for the
second inequality, we have
\[\begin{array}{ll}
&\frac{\Prob{}{\eta}{\Fin\phi\land
\Fin\psi}+\epsilon\cdot\Prob{}{\eta}{\Fin\psi}}{\Prob{}{\eta}{\Fin\psi}-\epsilon'}-\frac{\Prob{}{\eta}{\Fin\phi\land
\Fin\psi}}{\Prob{}{\eta}{\Fin\psi}} < 2\cdot \epsilon \medskip \\
\iff& \Prob{}{\eta}{\Fin\psi}^2 \cdot
\epsilon+\Prob{}{\eta}{\Fin\phi\land \Fin\psi}\cdot \epsilon' <
2\cdot \epsilon \cdot
(\Prob{}{\eta}{\Fin\psi}-\epsilon')\cdot \Prob{}{\eta}{\Fin\psi}\medskip\\

\iff& \Prob{}{\eta}{\Fin\psi}^2 \cdot
\epsilon+\Prob{}{\eta}{\Fin\phi\land \Fin\psi}\cdot \epsilon' <
2\cdot \epsilon \cdot \Prob{}{\eta}{\Fin\psi}^2 -2\cdot \epsilon
\cdot\epsilon'\cdot
\Prob{}{\eta}{\Fin\psi}\medskip\\
\iff& \epsilon'<\frac{\epsilon \cdot
\Prob{}{\eta}{\Fin\psi}^2}{\Prob{}{\eta}{\Fin\phi\land\Fin\psi}+2\cdot\epsilon\cdot
\Prob{}{\eta}{\Fin\psi}}.
\end{array}\]

We conclude, by definition of $\epsilon'$, that both inequalities
hold.

Now, putting (\ref{eq:cut1}) and (\ref{eq:etsstarisclose})
together, we have
$\CP{+}{}{\Fin\phi}{\Fin\psi}-\CP{}{\eta^\star}{\Fin\phi}{\Fin\psi}<
3\cdot \epsilon = \hat{\epsilon}$, which concludes the proof for
this case.

\noindent $\bullet$ \textbf{Case $\phi$ is of the form $\F\phi$ and $\psi$ is of the form $\Glo
\psi$:\ } \text{}\\ We now construct a cut $K$ and a scheduler
$\eta^\star$ such that $K^c$ is finite, $\Phi\cup\Psi\subseteq K$,
$\eta^\star$ is HI in $K\setminus(\Phi\cup\Psi)$, and
$\CP{}{\eta^\star}{\G\lnot\phi}{\G\psi}-\CP{-}{}{
\G\lnot\phi}{\G\psi}<\hat{\epsilon}$. Note that such a cut and
scheduler also satisfy $\CP{+}{}{\F\phi}{\G\psi}-\CP{}{\eta^\star}{
\F\phi}{\G\psi}<\hat{\epsilon}$.


The proof goes similarly to the previous case. We start by
defining the probability of paths of length $N$ always satisfying
$\phi$ as $\Prob{}{\eta}{=\!N,\Glo\phi}\eqdef
\Prob{}{\eta}{\Cyl{\{\sigma\in\FPaths{\Pi}\mid\allowbreak\sigma\models
\G\phi\land|\sigma|= N \}}}$. Note that for all \pCTL formula of
the form $\G\phi$ and schedulers $\eta$ we have
\[\lim_{N\rightarrow \infty}\Prob{}{\eta}{=\! N, \G\phi}=\Prob{}{\eta}{\G\phi}.
\]
\noindent The same result holds for the $\pCTL\!^\star$ formula
$\Glo (\phi\land \psi)$. It is easy to check that for all $N$ and
$\phi$ we have
$\Prob{}{\eta}{=\!N,\G\phi}\geq\Prob{}{\eta}{\G\phi}$. 

Now we take a scheduler $\eta$ and a number $N$ such that:
\begin{align*}
0\leq\CP{}{\eta}{\Glo \lnot\phi}{\Glo \psi}-\CP{-}{}{\Glo \lnot\phi}{\Glo \psi}&<\epsilon\eqdef \hat{\epsilon}/ 3, \text{ and}\\
0\leq\Prob{}{\eta}{=\!N,\Glo(\lnot\phi\land\psi)}-\Prob{}{\eta}{\Glo(\lnot\phi\land\psi)}&<\epsilon',\text{
and}\\
0\leq\Prob{}{\eta}{=\!N,\Glo \psi}-\Prob{}{\eta}{\Glo
\psi}&<\epsilon'.
\end{align*}

\noindent where $\epsilon'$ is such that
$\allowbreak\epsilon'\allowbreak<\allowbreak\min\left(\epsilon\cdot
\Prob{}{\eta}{\Glo \psi},\allowbreak\frac{\epsilon \cdot
\Prob{}{\eta}{\Glo \psi}^2}{\Prob{}{\eta}{\Glo(\lnot \phi \land
\psi)}}\right)$.


\rev{We define $K$ as before, i.e., $K\eqdef \Phi\cup\Psi\cup \FPaths{\leq\!N,\Pi}$. In addition, we can construct (as we did in the previous case) a scheduler $\eta^\star$ behaving as $\eta$ for paths of length at most $N$ and maximizing (instead of minimizing as in the previous case)
$\Prob{}{}{\Glo \psi}$ afterwards. Again, it is easy to check that $\eta^\star$ is HI in $K\setminus(\Phi\cup\Psi)$.}

Then we have
\[\begin{array}{rcccl}
\Prob{}{\eta}{\Glo \psi}&\leq& \Prob{}{\eta^\star}{\Glo \psi}&\leq&\Prob{}{\eta}{=\! N,\Glo \psi},\\
\Prob{}{\eta}{\Glo (\lnot\phi\land \psi)} - \epsilon\cdot
\Prob{}{\eta}{\Glo \psi}&<& \Prob{}{\eta^\star}{\Glo(\lnot
\phi\land \psi)} &\leq&\Prob{}{\eta}{= \! N, \Glo(\lnot\phi\land
\psi)}.
\end{array}\]

In addition, it is easy to check that

\[\begin{array}{rcccl}
-\Prob{}{\eta}{\Glo \psi}\cdot \epsilon &<& \Prob{}{\eta^\star}{\Glo(\lnot\phi\land\psi)} - \Prob{}{\eta}{\Glo(\lnot\phi\land\psi)} &<& \epsilon'\\
0 &\leq& \Prob{}{\eta^\star}{\Glo \psi} - \Prob{}{\eta}{\Glo \psi} &<& \epsilon'.\\
\end{array}\]

Similarly to the previous case we now show that
\begin{equation}\label{eq:baseCaseGlob}
|\CP{}{\eta}{\Glo \lnot\phi}{\Glo \psi}-\CP{}{\eta^\star}{\Glo
\lnot\phi}{\Glo \psi}|< 2\cdot\epsilon.
\end{equation}
 \noindent which together with $\CP{}{\eta}{\Glo \lnot\phi}{\Glo \psi}-\CP{-}{}{\Glo \lnot\phi}{\Glo
 \psi}<\epsilon$ concludes the proof.

 In order to prove (\ref{eq:baseCaseGlob}) we show that

 \[-2\cdot\epsilon < \CP{}{\eta^\star}{\Glo \lnot\phi}{\Glo \psi} - \CP{}{\eta}{\Glo \lnot\phi}{\Glo \psi} < \epsilon\]

 or, equivalently

 \begin{itemize}
 \item[a)] $\Prob{}{\eta^\star}{\Glo(\lnot\phi\land\psi)} \cdot
 \Prob{}{\eta}{\Glo \psi} -
 \Prob{}{\eta}{\Glo(\lnot\phi\land\psi)} \cdot
 \Prob{}{\eta^\star}{\Glo \psi}< \\
 \hspace{2cm} \Prob{}{\eta}{\Glo \psi} \cdot
 \Prob{}{\eta^\star}{\Glo \psi} \cdot \epsilon $, and
 \item[b)] $ 2 \cdot \Prob{}{\eta}{\Glo \psi} \cdot
 \Prob{}{\eta^\star}{\Glo \psi} \cdot \epsilon  <\\
 \hspace{2cm}\Prob{}{\eta^\star}{\Glo(\lnot\phi\land\psi)} \cdot
 \Prob{}{\eta}{\Glo \psi} -
 \Prob{}{\eta}{\Glo(\lnot\phi\land\psi)} \cdot
 \Prob{}{\eta^\star}{\Glo \psi}$.
 \end{itemize}

It is possible to verify that a) is equivalent to
$\epsilon^\prime<\epsilon \cdot \Prob{}{\eta}{\Glo \psi}$ and that
b) is equivalent to $\epsilon^\prime<\frac{\epsilon \cdot
\Prob{}{\eta}{\Glo \psi}^2}{\Prob{}{\eta}{\Glo(\lnot \phi \land
\psi)}}$. The desired result follows by definition of
$\epsilon^\prime$.
\end{proof}

In the proof of the following lemma we step-by-step find pairs
$(K,\eta)$ in ${\cal K}$ with larger $K$ and $\eta$ still close to
the optimal until finally $K$ is equal to the whole of
$\FPaths{\Pi}$.

\begin{lem}[completeness of ${\cal K}$]\label{lem:extendingCut}
There exists a scheduler $\eta$ such that $(\FPaths{\Pi},\eta)\in{\cal K}$.
\end{lem}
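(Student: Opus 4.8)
The plan is to promote the "co-finite" pair supplied by Lemma~\ref{lem:initialCut} to one whose cut is all of $\FPaths{\Pi}$, enlarging the cut one finite path at a time. Since for $(K,\eta)\in{\cal K}$ the complement $K^c$ is finite, I would argue by induction on $|K^c|$: the base case $K^c=\emptyset$ is exactly the assertion, and Lemma~\ref{lem:initialCut} provides a starting pair, so everything reduces to a single-step claim: if $(K,\eta)\in{\cal K}$ with $K^c$ finite and non-empty, then there is $(K',\eta')\in{\cal K}$ with $|{K'}^c|=|K^c|-1$ (and still finite).

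For the single step, choose $\sigma\in K^c$ of maximal length. Because $K$ is closed under extensions, $K^c$ is closed under prefixes, so every one-step extension $\sigma t$ already lies in $K$; hence $\sigma$ is a ``leaf'' of $K^c$ and $K':=K\cup\{\sigma\}$ is again a cut with ${K'}^c=K^c\setminus\{\sigma\}$ finite. Put $s:=\last{\sigma}$. If $\sigma\in\Phi\cup\Psi$, then $K'\setminus(\Phi\cup\Psi)=K\setminus(\Phi\cup\Psi)$ and $\eta':=\eta$ inherits all three defining properties of ${\cal K}$; likewise, if no path of $K\setminus(\Phi\cup\Psi)$ ends in $s$, history independence on $K'\setminus(\Phi\cup\Psi)$ imposes no new constraint and again $\eta':=\eta$ works. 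The remaining case is the substantive one: $\eta$, being HI on $K\setminus(\Phi\cup\Psi)$, uses one fixed distribution $d_0\in\Distr(\tau(s))$ at every path of $K\setminus(\Phi\cup\Psi)$ ending in $s$, and to keep $\eta'$ HI on $K'\setminus(\Phi\cup\Psi)$ we are forced to set $\eta'$ equal to $\eta$ everywhere except $\eta'(\sigma):=d_0$. Everything strictly below $\sigma$ is in $K$ and is untouched, so history independence is preserved there and restored at $\sigma$; the only thing left is to check that this forced re-choice keeps us in ${\cal K}$, i.e. that $\CP{}{\eta'}{\phi}{\psi}\ge\CP{}{\eta}{\phi}{\psi}$.

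This is where the argument parts ways with the $\pCTL$ proof of~\cite{ba_1995_probabilistic}: $\CP{}{}{\phi}{\psi}$ is a ratio, not a linear functional, of the scheduler. The observation I would build on is that, with $\eta$ held fixed off $\sigma$, both $\Prob{}{\eta_x}{\phi\land\psi}$ and $\Prob{}{\eta_x}{\psi}$ are \emph{affine} in the distribution $x\in\Distr(\tau(s))$ chosen at $\sigma$ — the choice at the single finite path $\sigma$ is consulted exactly once along any path — so $x\mapsto\CP{}{\eta_x}{\phi}{\psi}$ is linear-fractional and hence attains its maximum over the simplex $\Distr(\tau(s))$ at a vertex, i.e. at a deterministic choice. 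The plan is then to use this together with the extremal way $d_0$ is produced in the proof of Lemma~\ref{lem:initialCut} (there $\eta$'s post-stopping part was taken to \emph{minimise} $\Prob{}{}{\psi}$, and symmetrically \emph{maximise} it in the $\Glo$ case) and the containment $\{\omega\models\phi\land\psi\}\subseteq\{\omega\models\psi\}$ (so that shrinking the $\psi$-mass below $\sigma$ cannot shrink the $\phi\land\psi$-mass faster in relative terms) to conclude $\CP{}{\eta_{d_0}}{\phi}{\psi}\ge\CP{}{\eta}{\phi}{\psi}$; if necessary one also re-picks $\eta$'s common choice at $s$ on $K\setminus(\Phi\cup\Psi)$ to an optimising vertex, which keeps $\eta'$ HI by construction. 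The until cases run along the same lines, and after at most $|K^c|$ steps the cut exhausts $\FPaths{\Pi}$.

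The step I expect to be the crux is precisely this monotonicity $\CP{}{\eta'}{\phi}{\psi}\ge\CP{}{\eta}{\phi}{\psi}$ under the forced re-choice at the leaf: for a genuinely non-linear (ratio) objective one cannot simply invoke ``move to a vertex of the polytope'', so one has to combine the linear-fractional structure with the particular extremal form of $d_0$ inherited from the construction of the initial cut and with the set-inclusion between the $\phi\land\psi$-event and the $\psi$-event. The remaining points — that $K'$ is a cut, that ${K'}^c$ is finite and strictly smaller, and that history independence is preserved below $\sigma$ and restored at $\sigma$ — are routine bookkeeping.
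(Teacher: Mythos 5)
The skeleton of your argument (start from the co-finite pair of Lemma~\ref{lem:initialCut} and absorb $K^c$ one leaf at a time) is the same as the paper's, which phrases the same induction as a minimal-$|K^c|$ contradiction; the gap is in the single extension step. First, you are not ``forced'' to set $\eta'(\sigma):=d_0$: the other admissible repair is to change the \emph{common} decision of $\eta$ on the whole set $K_s$ of $K$-paths ending in $s$ to the decision $\eta(\sigma)$, and the correct argument needs both candidates. Your central claim, that re-setting the choice at the single new path $\sigma$ to $d_0$ never decreases $\CP{}{\eta}{\phi}{\psi}$, is false in general: from $s$ onwards $d_0$ may put large mass on $\psi\land\lnot\phi$ while $\eta(\sigma)$ avoids $\psi$, so the quotient strictly drops. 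The ingredients you invoke do not rescue it: the inclusion of the $\phi\land\psi$-event in the $\psi$-event gives no monotonicity of a ratio (this is exactly the failure of local exchange/Bellman reasoning illustrated in Example~\ref{ex:decision}), and the extremal shape of the scheduler built in Lemma~\ref{lem:initialCut} is not available here, since Lemma~\ref{lem:extendingCut} must hold for an \emph{arbitrary} element of ${\cal K}$ and the scheduler changes at every inductive step. Your fallback of ``re-picking the common choice at $s$ to an optimising vertex'' is also not covered by your linear-fractional observation: unlike the choice at the single path $\sigma$, the common choice on $K_s$ may be consulted repeatedly along one path (the state $s$ can be re-entered), so the relevant probabilities are rational, not affine, functions of that distribution, and vertex-optimality for the conditional objective is precisely what is in question.

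What the paper does instead: writing $\pi_1$ for $\eta$'s decision at the new path and $\pi_2$ for its common decision on $K_s$, it forms $\eta_2$ (use $\pi_2$ at the new path as well) and $\eta_1$ (use $\pi_1$ on all of $K_s$), both HI on the enlarged cut, and proves the exchange property $\CP{}{\eta_1}{\phi}{\psi}\leq\CP{}{\eta_2}{\phi}{\psi}\iff\CP{}{\eta}{\phi}{\psi}\leq\CP{}{\eta_2}{\phi}{\psi}$ (and symmetrically with $\eta_1$), by partitioning $\Psi$ and $\Phi\cap\Psi$ according to passage through the new path and through $K_s$ (with the factor $1/(1-y_s)$ accounting for re-entries into $K_s$) and reducing both inequalities to the sign of a single $3\times 3$ determinant. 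Hence the better of $\eta_1,\eta_2$ is at least as good as $\eta$, which is the monotonicity your step needs; without this two-sided exchange argument (or a substitute for it) your induction does not go through.
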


\begin{proof}
We prove that if we take a $(K,\eta)\in {\cal K}$ such that
$|K^c|$ is minimal then $K^c=\emptyset$ or, equivalently,
$K=\FPaths{\Pi}$. Note that a pair $(K,\eta)$ with minimal $|K^c|$ exists because, by
the previous lemma, ${\cal K}$ is not empty.

The proof is by contradiction: we suppose $K^c\not=\emptyset$ and
arrive to a contradiction on the minimality of $|K^c|$. Formally,
we show that for all $(K,\eta)\in{\cal{K}}$ such that
$K^c\not=\emptyset$, there exists a cut $K^\star\supset K$ and a
scheduler $\eta^\star$ such that $(K^\star,\eta^\star)\in{\cal
K}$, i.e. such that $\eta^\star$ is HI in
$K^\star\setminus(\Phi\cup\Psi)$ and $\CP{}{\eta}{
\phi}{\psi}\leq\CP{}{\eta^\star}{\phi}{\psi}$.

To improve readability, we prove this result for the case
$\phi$ is of the form $\F\phi$ and $\psi$ is of the form $\F\psi$. However, all the technical details of the proof hold for arbitrary $\phi$ and $\psi$.

Let us start defining the \emph{boundary} of a cut $K$ as
\begin{equation*}
    \delta K \eqdef \{ \sigma_1 \in K\, |\, \forall \, \sigma_2 \in \FPaths{M}\,.\, \sigma_2 \sqsubset \sigma_1 \implies \sigma_2 \notin K \}.
\end{equation*}

Let $\rho$ be a path in $K^c$ such that $\rho t\in \delta K$. Note
that by assumption of $K^c\not=\emptyset$ such $\rho$ exists. Now,
if for all paths $\alpha\in K$ we have
$\last{\alpha}=\last{\rho}\Longrightarrow\eta(\alpha)=\eta(\rho)$
then $\eta$ is also HI in $(K\cup\{\rho\})\setminus(\Phi\cup\Psi)$
so we have $(K\cup\{\rho\},\eta)\in{\cal K}$ as we wanted to show.
Now let us assume otherwise, i.e. that there exists a path
$\alpha\in K\setminus(\Phi\cup\Psi)$ such that
$\last{\alpha}=\last{\rho}$ and $\eta(\alpha)\not=\eta(\rho)$. We
let $s\eqdef \last{\rho}$, $\pi_1\eqdef\eta(\rho)$,
$\pi_2\eqdef\eta(\alpha)$, and $K_s\eqdef \{\sigma\in K\mid
\last{\sigma}=s\}\setminus(\Phi\cup\Psi)$. Note that for all
$\alpha'\in K_s$ we have $\eta(\alpha')=\pi_2$, this follows from
the fact that $\eta$ is HI in $K\setminus(\Phi\cup\Psi)$.

Figure \ref{fig:sHIchart} provides a graphic representation of
this description. The figure shows the set $\FPaths{\Pi}$ of all
finite paths of $\Pi$, the cut $K$ of $\FPaths{\Pi}$, the path
$\rho$ reaching $s$ (in red and dotted border line style), a path
$\alpha$ reaching $s$ in $K$ (in blue and continuous border line
style). The fact that $\eta$ takes different decisions $\rho$ and
$\alpha$ is represented by the different colors and line style of
their respective last states $s$.

\setlength{\abovecaptionskip}{-0pt plus 1pt minus 1pt}
\begin{figure}[h]
\centering
\includegraphics[width=8cm]{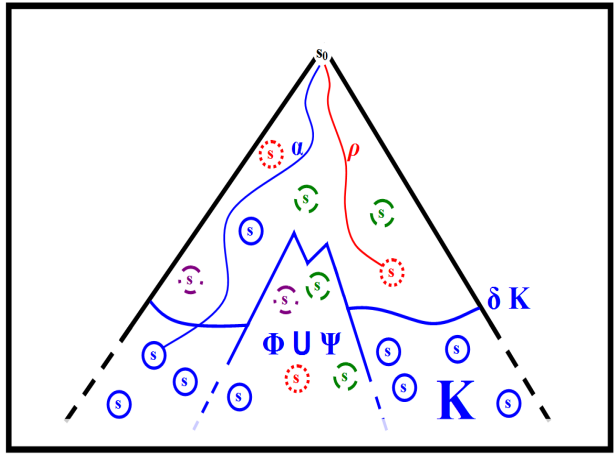}
  \caption{Graphic representation of $\FPaths{\Pi}$, $\Phi\cup\Psi$, $K$, $\delta K$, $\rho$, and $\alpha$.}\label{fig:sHIchart}
\end{figure}
\setlength{\abovecaptionskip}{-4pt plus 1pt minus 1pt}

We now define two schedulers $\eta_1$ and $\eta_2$ such that they
are HI in $(K\cup\{\rho\})\setminus(\Phi\cup \Psi)$. Both $\eta_1$
and $\eta_2$ are the same than $\eta$ everywhere but in $K_s$ and
$\rho$, respectively. The first one selects $\pi_1$ for all
$\alpha\in K_s$ (instead of $\pi_2$ as $\eta$ does), and the
second scheduler selects $\pi_2$ in $\rho$ (instead of $\pi_1$):

\[
    \eta_1(\sigma) =
        \begin{cases}
            \pi_1 & \text{if $\sigma \in K_s$} \\
            \eta(\sigma)  & \text{otherwise}
        \end{cases}
     \qquad \text{and} \qquad
    \eta_2(\sigma) =
        \begin{cases}
            \pi_2 & \text{if $\sigma = \rho$} \\
            \eta(\sigma)  & \text{otherwise}.
        \end{cases}\]

Now we plan to prove that either $\eta_1$ is ``better'' than $\eta$
or $\eta_2$ is ``better'' than $\eta$. In order to prove this result, we will show that:

\begin{equation}\label{eq:betterScheduler}\CP{}{\eta_1}{\F\phi}{\F{\psi}} \leq
\CP{}{\eta_2}{\F\phi}{\F{\psi}}\!\Longleftrightarrow\!
\CP{}{\eta}{\F\phi}{\F{\psi}} \leq\CP{}{\eta_2}{\F\phi}{\F{\psi}}
\end{equation}
and
\begin{equation}\label{eq:betterScheduler2}\CP{}{\eta_2}{\F\phi}{\F{\psi}} \leq
\CP{}{\eta_1}{\F\phi}{\F{\psi}}\!\Longleftrightarrow\!
\CP{}{\eta}{\F\phi}{\F{\psi}} \leq\CP{}{\eta_1}{\F\phi}{\F{\psi}}
\end{equation}

\noindent Therefore, if $\CP{}{\eta_1}{\F\phi}{\F{\psi}} \leq
\CP{}{\eta_2}{\F\phi}{\F{\psi}}$ then we have $(K\cup\{\rho\},
\eta_2)\in {\cal K}$, and otherwise $(K\cup\{\rho\}, \eta_1)\in
{\cal K}$. So, the desired result follows from
(\ref{eq:betterScheduler}) and (\ref{eq:betterScheduler2}). We
will prove (\ref{eq:betterScheduler}), the other case follows the
same way.

In order to prove (\ref{eq:betterScheduler}) we need to analyze
more closely the conditional probability
$\CP{}{}{\Fin\phi}{\Fin\psi}\eqdef \CP{}{}{\Phi}{\Psi}$ for each
of the schedulers $\eta,$  $\eta_1$, and $\eta_2$. For that
purpose we partition the sets $\Phi\cap\Psi$ and $\Psi$ into four \emph{parts}, i.e. disjoint sets. The plan is to partition $\Phi\cap\Psi$ and $\Psi$ in such way that
we can make use of the fact that $\eta$, $\eta_1$, and $\eta_2$
are similar to each other (they only differ in the decision taken
in $K_s$ or $\rho$) obtaining, in this way, that the probabilities
of the parts are the same under these schedulers or differ
only by a factor (this intuition will become clearer later on in the proof), such condition is the key element of our proof of (\ref{eq:betterScheduler}). Let us start by partitioning $\Psi$:

\begin{itemize}
\item[i)]{ We define $\Psi_{\overline{\rho},{\overline{k}_s}}$ as the set of
paths in $\Psi$ neither passing through $K_s$ nor $\rho$, formally
\[\Psi_{\overline{\rho},{\overline{k}_s}}\eqdef
\Psi\setminus(\Cyl{K_s} \cup \Cyl{\rho})\] }
\item[ii)]{We define $\Psi_{\rho,{\overline{k}_s}}$ as the set of paths
in $\Psi$ passing through $\rho$ but not through $K_s$, i.e.:
\[\Psi_{\rho,{\overline{k}_s}}\eqdef \Psi\cap (\Cyl{\rho}\setminus\Cyl{K_s}).\] }
\item[iii)]{We define $\Psi_{\rho,k_s}$ as the set of paths in $\Psi$ passing through $\rho$ and $K_s$, i.e.:
\[\Psi_{\rho,k_s}\eqdef\Psi\cap\Cyl{\rho}\cap\Cyl{K_s}.\] }
\item[iv)]{We define $\Psi_{\overline{\rho},k_s}$ as the set of paths in $\Psi$ passing through $K_s$ but not through $\rho$, i.e.:
\[\Psi_{\overline{\rho},k_s}\eqdef\Psi\cap(\Cyl{K_s}\setminus\Cyl{\rho}).\] }
\end{itemize}

Note that $\Psi=\Psi_{\rho , s}\cup\Psi_{\rho,
\overline{k}_s}\cup\Psi_{\overline{\rho}
,k_s}\cup\Psi_{\overline{\rho} ,\overline{k}_s}.$

Similarly, we can partition the set of paths $\Phi\cap\Psi$ into
four parts obtaining
$\Phi\cap\Psi=(\Phi\cap\Psi)_{\rho , s}\cup(\Phi\cap\Psi)_{\rho,
\overline{k}_s}\cup(\Phi\cap\Psi)_{\overline{\rho}
,k_s}(\Phi\cap\Psi)_{\overline{\rho} ,\overline{k}_s}.$



In the following we analyze the probabilities (under $\eta$) of each part separately.

\begin{itemize}
\item{ The probability of $\Psi_{\rho,\overline{k}_s}$ can be written as $p_\rho\cdot x_{\psi}$, where
$p_{\rho}$ is the probability of $\rho$ and $x_{\psi}$ is the
probability of reaching $\psi$ without passing through $K_s$ given
$\rho$. More formally,  $\Prob{}{\eta}{\Psi_{\rho,\overline{k}_s}}=
\Prob{}{\eta}{\Psi\cap(\Cyl{\rho}\setminus\Cyl{K_s})}\allowbreak=
\Prob{}{\eta}{\Cyl{\rho}}\cdot\CP{}{\eta}{\Psi\cap
(\Cyl{\rho}\setminus\Cyl{K_s})}{\Cyl{\rho}}\eqdef p_\rho\cdot
x_{\psi}. $}
\comment{What if $y_s=1$???}
\item{ The probability of $\Psi_{\rho,k_s}$ can be written as $p_\rho\cdot x_s\cdot\frac{y_{\psi}}{1-y_s}$, where
$x_s$ is the probability of passing through $K_s$ given $\rho$,
$y_{\psi}$ is the probability of, given $\alpha$, reaching $\psi$
without passing through $K_s$ after $\alpha$; and $y_s$ is the
probability of, given $\alpha$, passing through $K_s$ again.
Remember that $\alpha$ is any path in $K_s$. Formally, we have 
\[\begin{array}{lcl} \Prob{}{\eta}{\Psi_{\rho,k_s}}&=
&\Prob{}{\eta}{\Psi\cap\Cyl{\rho}\cap\Cyl{K_s}}\\
&=&\Prob{}{\eta}{\cyl{\rho}}\cdot\CP{}{\eta}{\cyl{K_s}}{\cyl{\rho}}\cdot \CP{}{\eta}{\Psi}{\cyl{K_s}\cap\cyl{\rho}}\\
&=&p_\rho\cdot x_s\cdot \CP{}{\eta}{\Psi}{\cyl{\alpha}}.\\
\end{array}\] 
\noindent Furthermore, 
\[\begin{array}{lcl}
\CP{}{\eta}{\Psi}{\Cyl{\alpha}}&=&\CP{}{\eta}{\Psi}{\overline{K}_{s\,\text{again}}\cap\Cyl{\alpha}} \\
&=&\frac{\CP{}{\eta}{\Psi\cap\overline{K}_{s\,\text{again}}}{\Cyl{\alpha}}}{\CP{}{\eta}{\overline{K}_{s\,\text{again}}}{\Cyl{\alpha}}}\\
&=&\frac{y_\psi}{1-y_s}.\\
\end{array}\]
\noindent where $\overline{K}_{s\,\text{again}}\eqdef \cyl{\alpha} \setminus \{\omega \in \cyl{\alpha \sigma}\mid \alpha \sigma \in K_s\}$. }
\item{ The probability of $\Psi_{\overline{\rho},k_s}$ can be written as $p_{k_s}\cdot\frac{y_{\psi}}{1-y_s}$, where
$p_{k_s}$ is the probability of passing though $K_s$ without
passing through $\rho$. Formally, $\Prob{}{\eta}{\Psi_{\overline{\rho},k_s}}=\Prob{}{\eta}{\Psi\cap(\Cyl{K_s}
\setminus\Cyl{\rho})}=\Prob{}{\eta}{\Cyl{K_s}\setminus\Cyl{\rho}}\cdot
\CP{}{\eta}{\Psi}{\Cyl{\alpha}}\eqdef
p_{s_k}\cdot\frac{y_\psi}{1-y_s} $}
\item{Finally, we write the probability of $\Psi_{\overline{\rho}\overline{k}_s}$ as $p_{\psi}$.}
\end{itemize}

A similar reasoning can be used to analyze the probabilities associated to the parts of $\Phi\cap\Psi$. In this way we obtain that 
(1) $\Prob{}{\eta}{(\Phi\cap\Psi)_{\rho,\overline{k}_s}}=p_\rho\cdot x_{\phi\psi}$, where $x_{\phi\psi}$ is the
probability of reaching $\phi$ and $\psi$ without passing through
$K_s$ given $\rho$, (2) $\Prob{}{\eta}{(\Phi\cap\Psi)_{\rho,k_s}}=p_\rho\cdot x_s\cdot\frac{y_{\phi\psi}}{1-y_s}$, where
$y_{\phi\psi}$ is the probability of reaching $\phi$ and $\psi$
without passing through $K_s$ afterwards given $\alpha$, (3) $\Prob{}{\eta}{(\Phi\cap\Psi)_{\overline{\rho},k_s}}=p_{k_s}\cdot\frac{y_{\phi\psi}}{1-y_s}$, and
(4) $\Prob{}{\eta}{(\Phi\cap\Psi)_{\overline{\rho},\overline{k}_s}}=p_{\phi\psi}$.

In order to help the intuition of the reader, we now provide a graphical representation of the probability (under $\eta$) of the sets $\Phi\cap\Psi$ and $\Psi$ by means of a Markov chain (see 
Figure \ref{fig:MarkovChainEta}). The missing values are defined as $p_{\overline{\phi}\psi}\eqdef
p_{\psi}-p_{\phi\psi}$, $p_\emptyset\eqdef
1-p_{s_k}-p_{\rho}-p_{\psi}$; and similarly for \comment{This
paragraph needs to be improved} $x_{\overline{\phi} \psi}$,
$x_\emptyset$, $y_{\overline{\phi} \psi}$, and $y_\emptyset$.
Furthermore, absorbing states $\phi\psi$ denote states where
$\phi\land\psi$ holds, absorbing states $\overline{\phi}\psi$
denote states where $\lnot\phi\land\psi$ holds, and
$\overline{\psi}$ denote a state where $\lnot\psi$ holds. Finally, the state $\rho$ represents the state of the model where
$\rho$ has been just reached and $\alpha$ a state where any of the
paths $\alpha$ in $K_s$ as been just reached. To see how this Markov Chain is related to the probabilities of
$\Phi\cap\Psi$ and $\Psi$ on the original MDP consider, for example, the probabilities of the set
$\Phi\cap\Psi$. It is easy to show that
\[\begin{array}{lcl}
\Prob{}{\eta}{\Phi\cap\Psi}&=&\Prob{}{\eta}{\Phi_{\rho,k_s}} +
\Prob{}{\eta}{\Phi_{\overline{\rho},k_s}} +
\Prob{}{\eta}{\Phi_{\rho,\overline{k}_s}}+
\Prob{}{\eta}{\Phi_{\overline{\rho},\overline{k}_s}}\\
&=& p_{\phi\psi} + p_\rho\cdot x_{\phi\psi} + p_\rho \cdot
x_s\cdot
\frac{y_{\phi\psi}}{1-y_s}+p_{s_k}\cdot \frac{y_{\phi\psi}}{1-y_s}=\Prob{}{M}{\Fin\phi\psi}.
\end{array}\]

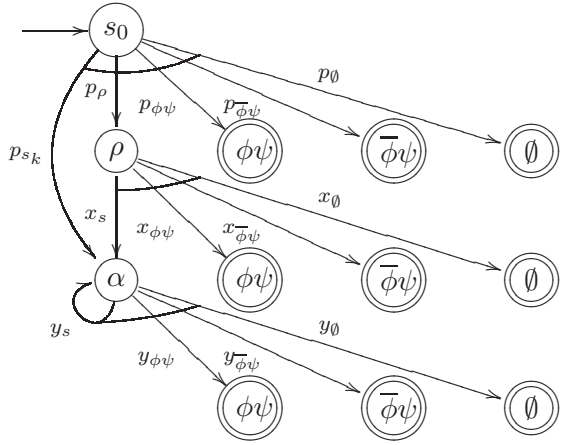
\begin{figure}[!h]
\begin{equation*}
    \xymatrix{%
        {} \ar[r]
        &*++[o][F]{s_0}
            \ar@/_2 pc/[dd]_{p_{s_k}}|(0.15){}="l1"
            \ar[d]_{p_\rho}
            \ar[dr]_{p_{\phi\psi}}
            \ar[drr]_{p_{\overline{\phi}\psi}}
            \ar[drrr]^{p_\emptyset}|(0.2){}="r1"
            \POS {"l1"} \ar@{-}@/_0.3pc/{"r1"}
        \\
        &*++[o][F]{\rho}
            \ar[d]_{x_s}|(0.3){}="l2"
            \ar[dr]_{x_{\phi\psi}}
            \ar[drr]_{x_{\overline{\phi}\psi}}
            \ar[drrr]^{x_\emptyset}|(0.2){}="r2"
            \POS {"l2"} \ar@{-}@/_0.1pc/{"r2"}
        &
        *++[o][F=]{\ \phi\psi}
        &
        *++[o][F=]{\ \overline{\phi}\psi}
        &
        *++[o][F=]{\ \emptyset\ }
        \\
        &
        *++[o][F]{\alpha}
            \ar @(d,l)[]^{y_s}|(0.3){}="l3"
            \ar[dr]_{y_{\phi\psi}}
            \ar[drr]_{y_{\overline{\phi}\psi}}
            \ar[drrr]^{y_\emptyset}|(0.2){}="r3"
            \POS {"l3"} \ar@{-}@/_0.1pc/{"r3"}
        &
        *++[o][F=]{\ \phi\psi}
        &
        *++[o][F=]{\ \overline{\phi}\psi}
        &
        *++[o][F=]{\ \emptyset\ }
        \\
        &
        &
        *++[o][F=]{\ \phi\psi}
        &
        *++[o][F=]{\ \overline{\phi}\psi}
        &
        *++[o][F=]{\ \emptyset\ }
    }
\end{equation*}
 \caption{Graphical representation of how we write the probability of each partition: $M_\eta$.}\label{fig:MarkovChainEta}
 \end{figure}

We note that the values $p_{s_k}$, $p_\rho$, $p_{\phi\psi}$,
$p_{\overline{\phi}\psi}$, and $p_\emptyset$ coincide for $\eta$,
$\eta_1$, and $\eta_2$. Whereas the values $\vec{x} \eqdef (x_s,
x_\psi, x_{\phi\psi}, x_\emptyset)$ coincide for $\eta$ and
$\eta_1$ and the values $\vec{y} \eqdef (y_s, y_\psi,
y_{\phi\psi}, y_\emptyset)$ coincide for $\eta$ and $\eta_2$.
Thus, the variant of $M_{\eta}$ in which $\vec{y}$ is replaced by
$\vec{x}$ describes the probability of each partition under the
scheduler $\eta_1$ instead of $\eta$. Similarly, the variant on
which $\vec{x}$ is replaced by $\vec{y}$ represents the
probability of each partition under the scheduler $\eta_2$. 

Now we have all the ingredients needed to prove
(\ref{eq:betterScheduler}). Our plan is to show that:

\begin{itemize}
\item[1)] $\CP{}{\eta_1}{\Phi}{\Psi} \leq \CP{}{\eta_2}{\Phi}{\Psi}
         \iff (p_\rho + p_{s_k})\cdot d \leq 0 \iff d \leq 0$, and\\
\item[2)] $\CP{}{\eta}{\Phi}{\Psi} \leq \CP{}{\eta_2}{\Phi}{\Psi}\iff (1-y_s)\cdot p_\rho \cdot d \leq 0 \iff d\leq 0.$ \\
\end{itemize}
\noindent where $d$ is the following determinant
    \begin{equation*}
        d =
        \begin{vmatrix}
            p_s+p_{s_k} \ &\  x_s - 1\ & \ y_s - 1 \\
            p_{\phi\psi} & x_{\phi\psi} & y_{\phi\psi} \\
            p_{\psi} & x_{\psi} & y_\psi
        \end{vmatrix}.
    \end{equation*}

We now proceed to prove 1)
\[\begin{array}{c}
\CP{}{\eta_1}{\Phi}{\Psi}-\CP{}{\eta_2}{\Phi}{\Psi}\leq 0\\
\iff\\
\frac{p_{\phi\psi}+ p_\rho\cdot x_{\phi\psi}+ (p_\rho\cdot x_s +
p_{s_k}) \cdot \frac{x_{\phi\psi}}{1-x_s}}{p_\psi + p_\rho\cdot
x_\psi + (p_\rho \cdot x_s + p_{s_k}) \cdot
\frac{x_{\psi}}{1-x_s}}
 -
\frac{p_{\phi\psi}+ p_\rho\cdot y_{\phi\psi}+ (p_\rho\cdot y_s +
p_{s_k}) \cdot \frac{y_{\phi\psi}}{1-y_s}}{p_\psi + p_\rho\cdot
y_\psi + (p_\rho \cdot y_s + p_{s_k}) \cdot
\frac{y_{\psi}}{1-y_s}} \leq 0\\
\iff\\
\frac{p_{\phi\psi} (1-x_s) + p_{\rho} x_{\phi\psi}+ p_{s_k}
x_{\phi\psi}}{p_\psi  (1-x_s) + p_\rho x_\psi + p_{s_k} x_{\psi}}
 -
\frac{p_{\phi\psi} (1-y_s) + p_s y_{\phi\psi}+ p_{s_k}
y_{\phi\psi}}{p_\psi  (1-y_s) + p_\rho y_\psi + p_{s_k} y_{\psi}}
\leq 0\\
\iff\\
\begin{vmatrix}
            p_{\phi\psi} (1-x_s) + p_{\rho} x_{\phi\psi}+ p_{s_k} x_{\phi\psi}\ & \  p_{\phi\psi} (1-y_s)+ p_\rho y_{\phi\psi}+ p_{s_k}  y_{\phi\psi} \\
            p_\psi  (1-x_s) + p_\rho x_\psi + p_{s_k}  x_{\psi} \ &\  p_\psi  (1-y_s) + p_\rho y_\psi + p_{s_k}  y_{\psi}
\end{vmatrix}\leq 0.\\
\end{array}\]
A long but straightforward computation shows that the 2x2 determinant
in the line above is equal to $(p_\rho + p_{s_k}) d$.

The proof of 2) proceeds along the same lines.

\[\begin{array}{c}
\CP{}{\eta}{\Phi}{\Psi}-\CP{}{\eta_2}{\Phi}{\Psi}\leq 0\\
\iff \\
\frac{p_{\phi\psi}+ p_\rho\cdot x_{\phi\psi}+ (p_\rho\cdot x_s +
p_{s_k}) \cdot \frac{y_{\phi\psi}}{1-y_s}}{p_\psi + p_\rho\cdot
x_\psi + (p_\rho \cdot x_s + p_{s_k}) \cdot
\frac{y_{\psi}}{1-y_s}}
 -
\frac{p_{\phi\psi}+ p_\rho\cdot y_{\phi\psi}+ (p_\rho\cdot y_s +
p_{s_k}) \cdot \frac{y_{\phi\psi}}{1-y_s}}{p_\psi + p_\rho\cdot
y_\psi + (p_\rho \cdot y_s + p_{s_k}) \cdot
\frac{y_{\psi}}{1-y_s}} \leq 0\\
\iff \\
\frac{p_{\phi\psi} (1-y_s) + p_{\rho} x_{\phi\psi} (1-y_s)+
(p_\rho x_s + p_{s_k}) y_{\phi\psi}}{p_\psi  (1-y_s) + p_\rho
x_\psi (1-y_s) + (p_\rho x_s + p_{s_k}) y_{\psi}}
 -
\frac{p_{\phi\psi} (1-y_s) + p_\rho y_{\phi\psi}+ p_{s_k}
y_{\phi\psi}}{p_\psi  (1-y_s) + p_\rho y_\psi + p_{s_k} y_{\psi}}
\leq 0\\
\iff\\
\end{array}\]

\[\begin{array}{c}
\begin{vmatrix}
            p_{\phi\psi} (1\!-\!y_s) \!+\! p_{\rho} x_{\phi\psi} (1\!-\!y_s)\!+\! (p_\rho x_s \!+\! p_{s_k}) y_{\phi\psi} \ &\ p_{\phi\psi} (1\!-\!y_s)
\!+\! p_\rho y_{\phi\psi}\!+\! p_{s_k}  y_{\phi\psi} \\
            p_\psi  (1\!-\!y_s) \!+\! p_\rho x_\psi (1\!-\!y_s) \!+\! (p_\rho x_s \!+\! p_{s_k}) y_{\psi} & p_\psi (1\!-\!y_s) \!+\! p_\rho y_\psi \!+\! p_{s_k}  y_{\psi}
\end{vmatrix} \medskip\\
\leq 0
\end{array}\]
and also here a long computation shows that this last 2x2 determinant is
equal to $(1 - y_s)\cdot p_\rho\cdot d$.
\end{proof}

Finally, we have all the ingredients needed to prove that there
exists a scheduler close to the supremum which is HI before the
stopping condition.

\begin{corollary}\label{cor:HIbefore}[HI before stopping condition]
Let $\Pi$ be a $\MDP$, $\phi, \psi \in\Path$. Then for all
$\hat{\epsilon}>0$, there exists a scheduler $\eta^\star$ such
that $\CP{+}{}{\phi}{\psi}-\CP{}{\eta^\star}{\phi}{\psi} <
\hat{\epsilon}$ and $\eta^\star$ is history independent before the stopping condition.
\end{corollary}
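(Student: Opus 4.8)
The corollary follows by unwinding the definition of the set ${\cal K}$ and invoking Lemma~\ref{lem:extendingCut}. Fix $\hat{\epsilon}>0$ and $\phi,\psi\in\Path$. If no scheduler $\eta$ satisfies $\Prob{}{\eta}{\psi}>0$, then $\CP{+}{}{\phi}{\psi}=0$ by Definition~\ref{dfn:min max cond prob} and there is nothing to prove; so assume some scheduler puts positive probability on $\psi$, and form the set ${\cal K}$ associated with $\phi$, $\psi$, and this particular $\hat{\epsilon}$, exactly as in Section~\ref{edHIS}.

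By Lemma~\ref{lem:extendingCut} there is a scheduler $\eta^\star$ with $(\FPaths{\Pi},\eta^\star)\in{\cal K}$. Reading off the three conjuncts defining membership in ${\cal K}$ for the cut $K=\FPaths{\Pi}$ gives: (i) $\Phi\cup\Psi\subseteq\FPaths{\Pi}$, which holds trivially; (ii) $\eta^\star$ is HI in $\FPaths{\Pi}\setminus(\Phi\cup\Psi)$, which by the terminology fixed just before Lemma~\ref{lem:initialCut} is exactly the statement that $\eta^\star$ is history independent before the stopping condition $\StopCond{\phi}\lor\StopCond{\psi}$; and (iii) $\CP{+}{}{\phi}{\psi}-\CP{}{\eta^\star}{\phi}{\psi}<\hat{\epsilon}$. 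Conjuncts (ii) and (iii) are precisely the two assertions of the corollary, so $\eta^\star$ witnesses the claim.

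At the level of the corollary there is essentially no obstacle: all of the genuine content lives in Lemmas~\ref{lem:initialCut} (which starts the induction by producing a pair $(K,\eta)\in{\cal K}$ with finite complement) and~\ref{lem:extendingCut} (which enlarges the cut step by step until it exhausts $\FPaths{\Pi}$). The only care needed here is bookkeeping: one must instantiate ${\cal K}$ with the very same $\hat{\epsilon}$ fixed in the corollary before applying Lemma~\ref{lem:extendingCut}, and one must remember that ``HI in $K\setminus(\Phi\cup\Psi)$'' in the definition of ${\cal K}$ is, by definition, synonymous with ``HI before the stopping condition''. With these identifications the corollary is immediate.
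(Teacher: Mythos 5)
Your proposal is correct and takes essentially the same route as the paper, whose proof is the one-liner ``follows directly from Lemma~\ref{lem:initialCut} and Lemma~\ref{lem:extendingCut}''; you merely spell out the bookkeeping of instantiating ${\cal K}$ with the given $\hat{\epsilon}$ and reading off the membership conditions at $K=\FPaths{\Pi}$.
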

\begin{proof} Follows directly from Lemma \ref{lem:initialCut} and Lemma \ref{lem:extendingCut}.
\end{proof}

We now proceed with the construction of a maximizing scheduler and
HI after the stopping condition.

\begin{lem}\label{lem:HIafter}[HI after stopping condition]
Let $\Pi$ be a $\MDP$, $\phi, \psi \in\Path$, and $\varphi=
\StopCond{\phi}\lor \StopCond{\psi}$. Then for all schedulers
$\eta$ there exists a scheduler $\eta^\star$ such that
\begin{itemize}
\item[1)] $\eta^\star$ behaves like $\eta$ before the stopping
condition,
\item[2)] $\eta^\star$ is HI after the stopping condition
$\varphi$, and
\item[3)] $\CP{}{\eta}{\phi}{\psi}\leq\CP{}{\eta^\star}{\phi}{\psi}$.
\end{itemize}
\end{lem}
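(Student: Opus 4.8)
The plan is to build $\eta^\star$ by leaving $\eta$ untouched until the stopping condition $\varphi=\StopCond{\phi}\lor\StopCond{\psi}$ is met for the first time, and from that point on to run a \emph{fixed} history-independent scheduler chosen according to what the stopping state looks like. Concretely, put $\eta^\star(\sigma)=\eta(\sigma)$ for every $\sigma\in\FPaths{\Pi}$ with $\sigma\not\models\Fin\varphi$; this already gives part~1). By Observation~\ref{lem:modelchecker} we may assume $\phi=\Until{\phi_1}{\phi_2}$ and $\psi\in\{\Until{\psi_1}{\psi_2},\Glo\psi_1\}$, so at the first state of a path in $\varphi$ exactly one of the following holds: (i)~$\psi$ is validated while $\phi$ is still undecided; (ii)~$\psi$ is contradicted while $\phi$ is still undecided; (iii)~$\phi$ is validated while $\psi$ is still undecided; (iv)~$\phi$ is contradicted while $\psi$ is still undecided; (v)~both $\phi$ and $\psi$ are already decided. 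Using the standard results for reachability-style \pCTL\ formulas~\cite{ba_1995_probabilistic}, fix uniform history-independent schedulers $\Xi^{+}_{\phi}$, $\Xi^{+}_{\psi}$, $\Xi^{-}_{\psi}$ attaining $\Prob{+}{s}{\phi}$, $\Prob{+}{s}{\psi}$, $\Prob{-}{s}{\psi}$ respectively from \emph{every} state $s$, and an arbitrary fixed history-independent $\Xi^{0}$. After the stopping state, let $\eta^\star$ run $\Xi^{+}_{\phi}$ in cases (i) and (ii), $\Xi^{+}_{\psi}$ in case (iii), $\Xi^{-}_{\psi}$ in case (iv), and $\Xi^{0}$ in case (v). Part~1) holds by construction.

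For part~3), set $N=\Prob{}{\eta}{\phi\land\psi}$ and $D=\Prob{}{\eta}{\psi}$, so $\CP{}{\eta}{\phi}{\psi}=N/D$ (if $D=0$ there is nothing to prove). Partition $\Paths{\Pi}$ by the unique first-hitting path $\sigma$ of $\varphi$ and its type (i)--(v), the paths never entering $\varphi$ being counted with type~(v). Since $\eta$ and $\eta^\star$ agree before $\varphi$, the first-hitting probabilities $c_\sigma=\PP_{\eta}(\Cyl{\sigma})$ coincide for the two schedulers, and the contribution of $\Cyl{\sigma}$ to $N$ and to $D$ equals $c_\sigma$ times, respectively, a quantity among $1,0,\Prob{}{\last{\sigma}}{\phi},\Prob{}{\last{\sigma}}{\psi}$ evaluated under the continuation of the scheduler after $\sigma$: a type-(i) cylinder lies entirely inside the set of $\psi$-paths, so it adds $c_\sigma$ to $D$ and $c_\sigma\cdot\Prob{}{\last{\sigma}}{\phi}$ to $N$; a type-(iii) cylinder lies inside the set of $\phi$-paths and adds $c_\sigma\cdot\Prob{}{\last{\sigma}}{\psi}$ to \emph{both} $N$ and $D$; a type-(iv) cylinder adds $0$ to $N$ and $c_\sigma\cdot\Prob{}{\last{\sigma}}{\psi}$ to $D$; type-(ii) cylinders and the never-entering paths add $0$ to both; type-(v) cylinders contribute constants. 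Hence $N=a+\sum_i c_iz_i+\sum_j c_jw_j$ and $D=b+\sum_j c_jw_j+\sum_k c_kw_k$, with $a\le b$ the fixed parts, $z_i$ the type-(i) continuation probabilities $\Prob{}{}{\phi}$, $w_j$ the type-(iii) and $w_k$ the type-(iv) continuation probabilities $\Prob{}{}{\psi}$. Passing to $\eta^\star$ replaces each $z_i$ by $\Prob{+}{}{\phi}\ge z_i$, each $w_j$ by $\Prob{+}{}{\psi}\ge w_j$, and each $w_k$ by $\Prob{-}{}{\psi}\le w_k$; since $\partial_{z_i}(N/D)=c_i/D\ge0$, $\partial_{w_j}(N/D)=c_j(D-N)/D^{2}\ge0$ (using $N\le D$), and $\partial_{w_k}(N/D)=-c_kN/D^{2}\le0$, none of these moves decreases $N/D$, so $\CP{}{\eta^\star}{\phi}{\psi}\ge\CP{}{\eta}{\phi}{\psi}$. (The only delicate point is that $\Prob{}{\eta^\star}{\psi}>0$: if it failed, then necessarily $a=b=0$ and there are no type-(i) paths, forcing $N=0$, so it suffices to keep $\Prob{}{\eta^\star}{\psi}>0$, which one arranges by running $\Xi^{+}_{\psi}$ rather than $\Xi^{-}_{\psi}$ on the type-(iv) paths in that case.)

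Part~2), that $\eta^\star\in\Sch^\varphi(\Pi)$, is the main obstacle. One must verify the three clauses of the $(\StopCond{\phi}\lor\StopCond{\psi})$-$\nHI$ definition. The ``before'' clause is trivial, as $\eta^\star=\eta$ there (and in the intended use $\eta$ is already history-independent before $\varphi$, via Corollary~\ref{cor:HIbefore}). For the two ``after'' clauses, note that a path's type is determined by which of $\StopCond{\phi},\StopCond{\psi}$ it has reached (types (i),(ii): only $\StopCond{\psi}$; types (iii),(iv): only $\StopCond{\phi}$; type (v): both), so the clauses amount to: all paths of type (i) or (ii) with equal last state move alike (true, all use $\Xi^{+}_{\phi}$); all paths of type (iii) or (iv) with equal last state move alike; and every type-(v) path moves like the type-(i)/(ii) and type-(iii)/(iv) paths it shares a last state with. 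Enforcing these identities requires choosing $\Xi^{+}_{\phi},\Xi^{+}_{\psi},\Xi^{-}_{\psi},\Xi^{0}$ \emph{coherently}: on a state reached after the stopping point in two incompatible regimes — e.g.\ a state reached both by a validated-$\phi$ and by a contradicted-$\phi$ path, which would force $\Xi^{+}_{\psi}$ and $\Xi^{-}_{\psi}$ to agree there — a single value must be picked, and one has to argue the resulting scheduler still satisfies the estimate of part~3). Carrying out this coordination is the technical heart of the proof and the step I expect to be hardest; it is eased by the reduction of Observation~\ref{lem:modelchecker} (for instance, when $\phi=\Fin\phi_2$ there is no type (iv) at all, so $\Xi^{+}_{\psi}$ and $\Xi^{-}_{\psi}$ never clash) and by the fact that along \emph{every} path through a potential clash state at least one of $\phi,\psi$ is already decided, which is what keeps the bookkeeping for $N/D$ tractable while the common values are imposed. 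Once part~2) is in place, parts 1)--3) together yield the lemma.

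The shape of the overall argument, then, mirrors the ``before stopping'' development (Lemmas~\ref{lem:initialCut}--\ref{lem:extendingCut} and Corollary~\ref{cor:HIbefore}): first fix the cheap structural modification of $\eta$ after $\varphi$, then prove monotonicity of $N/D$ under that modification by the elementary calculus above, and finally pay the combinatorial price of making the modified scheduler genuinely $\varphi$-$\nHI$.
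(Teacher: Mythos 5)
Your overall strategy coincides with the paper's: keep $\eta$ untouched before the stopping condition, switch afterwards to fixed history-independent optimizers chosen according to which disjunct of $\varphi$ was triggered, and conclude by monotonicity of the ratio. Your sign analysis of $N/D$ plays exactly the role of Remark~\ref{rem:incr} and the chain of inequalities in the paper (and your observation that a ``numerator-and-denominator'' term should be maximized while a ``denominator-only'' term should be minimized is the paper's observation that, after a $\StopCond{\psi}$-state, maximizing $\Prob{}{}{\Fin\phi}$ does both at once, since the two pieces partition a fixed cylinder mass). However, you stop exactly where the paper's proof does its real work: part 2) is declared ``the technical heart'' and left open, so as it stands this is not yet a proof of the lemma but a reduction of it to the unproved claim that the after-stopping schedulers can be coordinated.

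For comparison, here is how the paper discharges that step (it writes out only the case $\phi=\Fin\phi$, $\psi=\Fin\psi$). There are just two after-regimes: after a $\phi$-state it runs a family $\{\xi_s\}_{s\in S_\phi}$ of maximizers of $\Prob{}{}{\Fin\psi}$, after a $(\psi\land\lnot\phi)$-state a family $\{\zeta_s\}_{s\in S_\psi}$ of maximizers of $\Prob{}{}{\Fin\phi}$, and each family is chosen \emph{internally} coherent ($\xi_{s_1}(\sigma_1 t)=\xi_{s_2}(\sigma_2 t)$ for all continuations ending in the same $t$, and likewise for the $\zeta$'s), so that within each regime $\eta^\star$ is literally a single HI scheduler; this is what the two ``after'' clauses require inside a regime. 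The clash you fear between a $\psi$-maximizer and a $\psi$-minimizer cannot arise in this case, because $\Fin$-formulas have no contradicting stopping branch; and the only cross-regime agreements the $\nHI$ definition can force pass through histories that have already triggered \emph{both} $\StopCond{\phi}$ and $\StopCond{\psi}$, where membership in the events $\phi\land\psi$ and $\psi$ is already settled, so the common decision there is probabilistically irrelevant and can be imposed without disturbing the three inequalities used for 3). The Until/Globally combinations, where your cases (ii) and (iv) and hence the $\Xi^{+}_{\psi}$/$\Xi^{-}_{\psi}$ tension genuinely appear, are not written out in the paper either (``the proof for the remaining cases follows in the same way''), so you are not behind the paper there; but to have a complete argument you would at least need to verify 2) for the coherent-family construction in the case above, which your text only gestures at, and note that your patch for the degenerate case $\Prob{}{\eta^\star}{\psi}=0$ is handled in the paper implicitly by the positivity proviso accompanying Theorem~\ref{thm:shiSch}.
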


\begin{proof}
We will prove this result for the case in which $\phi$ is of the form $\F\phi$  and
$\psi$ is of the form $\F\psi$, the proof for the remaining cases follows in the
same way.

Let us start by introducing some notation. We define,
respectively, the set of paths reaching $\phi$, the set of paths not reaching $\phi$,  the set of paths
reaching $\phi$ without reaching $\psi$ before, and the set of
paths reaching $\psi\land\lnot\phi$ without reaching $\phi$ before
as follows
\begin{align*}
\Delta_{\phi}\eqdef&\{\omega\in\Paths{\Pi}\mid \omega\models
\Fin\phi\},\\
\rev{\Delta_{\lnot\phi}\eqdef}&\rev{\{\omega\in\Paths{\Pi}\mid \omega\models
\Glo\lnot\phi\},}\\
\Delta_{\overline{\psi}\phi}\eqdef&\{\omega\in\Paths{\Pi}\mid
\omega\models
\lnot \psi \U \phi\},\\
\Delta_{\overline{\phi}\psi}\eqdef&\{\omega\in\Paths{\Pi}\mid
\omega\models \lnot \phi \U (\psi\land\lnot\phi)\}.
\end{align*}

\noindent Note that the last two sets are disjoint. It is easy to
check that

\[
\Delta_\phi\cap\Delta_\psi=(\Delta_{\overline{\psi}\phi}\cap\Delta_{\psi})\cup(\Delta_{\overline{\phi}\psi}\cap\Delta_\phi),\]
\[\Delta_\psi=\Delta_{\overline{\phi}\psi}\cup
(\Delta_{\overline{\psi}\phi}\cap\Delta_\psi)=[(\Delta_{\overline{\phi}\psi}\cap\Delta_{\phi})\cup(\Delta_{\overline{\phi}\psi}\cap\Delta_{\lnot\phi})]
\cup(\Delta_{\overline{\psi}\phi}\cap\Delta_\psi).\]

Let us now define the minimal set of finite paths ``generating''
(by their basic cylinders) $\Delta_{\overline{\psi}\phi}$ and
$\Delta_{\overline{\phi}\psi}$: $K_{\overline{\psi}\phi}\eqdef
\{\sigma\in\FPaths{\Pi}\mid \last{\sigma}\models \phi\land
\forall\,i<|\sigma|\,:\,\sigma_i\rev{\models\lnot\phi\land \lnot\psi}\}$ and
similarly $K_{\overline{\phi}\psi}\eqdef
\{\sigma\in\FPaths{\Pi}\mid \last{\sigma}\models
(\psi\land\lnot\phi) \land
\forall\,i<|\sigma|\,:\,\sigma_i\rev{\models\lnot\phi\land\lnot\psi\}}$. Note
that $\Delta_{\overline{\psi}\phi}=\Cyl{K_{\overline{\psi}\phi}}$
and $\Delta_{\overline{\phi}\psi}=\Cyl{K_{\overline{\phi}\psi}}$.
Now we can write
\[\CP{}{\eta}{\Fin\phi}{\Fin\psi}=\frac{\Prob{}{\eta}{\Cyl{K_{\overline{\psi}\phi}}\cap\Delta_{\psi}}
+\Prob{}{\eta}{\Cyl{K_{\overline{\phi}\psi}}\cap\Delta_{\phi}}}
{\Prob{}{\eta}{\Cyl{K_{\overline{\psi}\phi}}\cap\Delta_{\psi}}+\Prob{}{\eta}{\Cyl{K_{\overline{\phi}\psi}}\cap\Delta_{\phi}}+\Prob{}{\eta}{\Cyl{K_{\overline{\phi}\psi}}\cap\Delta_{\lnot\phi}}}.\]
The construction of the desired scheduler $\eta^\star$ is in the
spirit of the construction we proposed for the scheduler in Lemma
\ref{lem:initialCut}. We let $S_{\phi}\eqdef\{s\in S\mid
s\models\phi\}$ and $S_{\psi}\eqdef\{s\in S\mid
s\models(\psi\land\lnot\phi)\}$. Note that $S_\phi$ and $S_\psi$
are disjoint. Now we define two families of schedulers
$\{\xi_s\}_{s\in S_\phi}$ and $\{\zeta_s\}_{s\in S_\psi}$ such
that: for all $s_1, s_2\in S_\phi$  
we have
$\Prob{}{s_1,\xi_{s_1}}{\Fin\psi}=\Prob{+}{s_1}{\Fin\psi}$,
$\Prob{}{s_2,\xi_{s_2}}{\Fin\psi}=\Prob{+}{s_2}{\Fin\psi}$, and
for all $\sigma_1 t\in\FPaths{s_1},\,\sigma_2 t\in\FPaths{s_2}$ we
have $\xi_{s_1}(\sigma_1 t)=\xi_{s_2}(\sigma_2 t)$. Similarly for
$\{\zeta_s\}_{s\in S_\psi}$: for all $s_1,s_2\in S_\phi$ we have
$\Prob{}{s_1,\zeta_{s_1}}{\Fin\phi}=\Prob{+}{s_1}{\Fin\phi}$,
$\Prob{}{s_2,\zeta_{s_2}}{\Fin\phi}=\Prob{+}{s_2}{\Fin\phi}$, and
for all $\sigma_1 t\in\FPaths{s_1},\,\sigma_2 t\in\FPaths{s_2}$ we
have $\zeta_{s_1}(\sigma_1 t)=\zeta_{s_2}(\sigma_2 t)$.

We now proceed to define $\eta^\star$:
\[
    \eta^\star(\sigma) \eqdef
        \begin{cases}
            \xi_s(\sigma_{|\alpha|}\cdots\sigma_{|\sigma|}) & \text{if $\alpha\sqsubseteq\sigma$ for some $\alpha\!\in\! K_\phi$ such that $\last{\alpha}\!=\!s$}, \\
            \zeta_s(\sigma_{|\alpha|}\cdots\sigma_{|\sigma|}) & \text{if $\alpha\sqsubseteq\sigma$ for some $\alpha\!\in\! K_\psi$ such that $\last{\alpha}\!=\!s$}, \\
            \eta(\sigma)  & \text{otherwise}.
        \end{cases}
\]
\noindent where $K_\phi\eqdef \{\sigma\in\fpaths \mid \last{\sigma}\in S_{\phi}\}$, and similarly $K_\psi\eqdef \{\sigma\in\fpaths \mid \last{\sigma}\in S_{\psi}\}$.

It is easy to check that $\eta^\star$ satisfies 1) and 2). As for
3) we first note that
$\Prob{}{\eta}{\Cyl{K_{\overline{\psi}\phi}}\cap\Psi}\leq
\Prob{}{\eta^\star}{\Cyl{K_{\overline{\psi}\phi}}\cap\Psi}$,
$\Prob{}{\eta}{\Cyl{K_{\overline{\phi}\psi}}\cap\Delta_{\phi}}\leq\Prob{}{\eta^\star}{\Cyl{K_{\overline{\phi}\psi}}\cap\Delta_{\phi}}$,
and
$\allowbreak\Prob{}{\eta}{\Cyl{K_{\overline{\phi}\psi}}\cap\Delta_{\lnot\phi}}\geq\Prob{}{\eta^\star}{\Cyl{K_{\overline{\phi}\psi}}\cap\Delta_{\lnot\phi}}$.

In addition, we need the following simple remark.

\begin{remark}\label{rem:incr}
Let $f: \mathbb{R}\to\mathbb{R}$ be a function defined as
$f(x)\eqdef\frac{a+x}{b+x}$ where $a$ and $b$ are constants in the
interval $[0,1]$ such that $b\geq a$. Then $f$ is increasing.
\end{remark}

Finally, we have

\begin{align*}
\CP{}{\eta}{\Fin\phi}{\Fin\psi}&=\frac{\Prob{}{\eta}{\Cyl{K_{\overline{\psi}\phi}}\cap\Delta_{\psi}}
+\Prob{}{\eta}{\Cyl{K_{\overline{\phi}\psi}}\cap\Delta_{\phi}}}
{\Prob{}{\eta}{\Cyl{K_{\overline{\psi}\phi}}\cap\Delta_{\psi}}+\Prob{}{\eta}{\Cyl{K_{\overline{\phi}\psi}}\cap\Delta_{\phi}}+\Prob{}{\eta}{\Cyl{K_{\overline{\phi}\psi}}\cap\Delta_{\lnot\phi}}}\\
& \qquad \qquad\{\text{by Remark \ref{rem:incr}}\}\\
&\leq\frac{\Prob{}{\eta^\star}{\Cyl{K_{\overline{\psi}\phi}}\cap\Delta_{\psi}}
+\Prob{}{\eta}{\Cyl{K_{\overline{\phi}\psi}}\cap\Delta_{\phi}}}
{\Prob{}{\eta^\star}{\Cyl{K_{\overline{\psi}\phi}}\cap\Delta_{\psi}}+\Prob{}{\eta}{\Cyl{K_{\overline{\phi}\psi}}\cap\Delta_{\phi}}+\Prob{}{\eta}{\Cyl{K_{\overline{\phi}\psi}}\cap\Delta_{\lnot\phi}}}\\
& \qquad \qquad\{\text{by Remark \ref{rem:incr}}\}\\
\end{align*}

\begin{align*}
&\leq\frac{\Prob{}{\eta^\star}{\Cyl{K_{\overline{\psi}\phi}}\cap\Delta_{\psi}}
+\Prob{}{\eta^\star}{\Cyl{K_{\overline{\phi}\psi}}\cap\Delta_{\phi}}}
{\Prob{}{\eta^\star}{\Cyl{K_{\overline{\psi}\phi}}\cap\Delta_{\psi}}+\Prob{}{\eta^\star}{\Cyl{K_{\overline{\phi}\psi}}\cap\Delta_{\phi}}+\Prob{}{\eta}{\Cyl{K_{\overline{\phi}\psi}}\cap\Delta_{\lnot\phi}}}\\
&\leq\frac{\Prob{}{\eta^\star}{\Cyl{K_{\overline{\psi}\phi}}\cap\Delta_{\psi}}
+\Prob{}{\eta^\star}{\Cyl{K_{\overline{\phi}\psi}}\cap\Delta_{\phi}}}
{\Prob{}{\eta^\star}{\Cyl{K_{\overline{\psi}\phi}}\cap\Delta_{\psi}}+\Prob{}{\eta^\star}{\Cyl{K_{\overline{\phi}\psi}}\cap\Delta_{\phi}}+\Prob{}{\eta^\star}{\Cyl{K_{\overline{\phi}\psi}}\cap\Delta_{\lnot\phi}}}\\
&= \CP{}{\eta^\star}{\Fin\phi}{\Fin\psi}\qedhere
\end{align*} 
\end{proof}

\noindent{\bf Proof of Theorem \ref{thm:shiSch}.~} It follows
straightforwardly from Corollary \ref{cor:HIbefore} and Lemma
\ref{lem:HIafter}. $\qed$

\subsection{Deterministic Schedulers}

We now proceed to show that deterministic schedulers suffice to
attain optimal conditional probabilities.

The following result states that taking the convex combination of
schedulers does not increase the conditional probability
$\CP{}{}{\phi}{\psi}$.

\begin{lem}
  Let $\Pi$ be a $\MDP$, $s$ a state, and $\phi, \psi$ path formulas.
  Suppose that the $s$-scheduler $\eta$ is a convex combination of $\eta_1$
  and $\eta_2$. Then $\CP{}{s,\eta}{\phi}{\psi} \leq \max
  (\CP{}{s,\eta_1}{\phi}{\psi}, \allowbreak \CP{}{s,\eta_2}{\phi}{\psi})$.
\end{lem}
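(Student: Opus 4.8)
The plan is to reduce the statement to an elementary inequality about weighted mediants of fractions. Write $\Phi$ and $\Psi$ for the sets of paths from $s$ satisfying $\phi$ and $\psi$ respectively, and let $\alpha\in[0,1]$ be such that $\PP_{s,\eta}(\Delta)=\alpha\,\PP_{s,\eta_1}(\Delta)+(1-\alpha)\,\PP_{s,\eta_2}(\Delta)$ for all $\Delta\in\Borel_s$ (this is what it means for $\eta$ to be a convex combination of $\eta_1$ and $\eta_2$). I would set $a_i\eqdef\PP_{s,\eta_i}(\Phi\cap\Psi)$ and $b_i\eqdef\PP_{s,\eta_i}(\Psi)$ for $i=1,2$, so that $0\le a_i\le b_i$, and put $\beta_1\eqdef\alpha$, $\beta_2\eqdef 1-\alpha$. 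Instantiating the convexity hypothesis at $\Delta=\Phi\cap\Psi$ and at $\Delta=\Psi$ gives
\[
  \CP{}{s,\eta}{\phi}{\psi}=\frac{\PP_{s,\eta}(\Phi\cap\Psi)}{\PP_{s,\eta}(\Psi)}=\frac{\beta_1 a_1+\beta_2 a_2}{\beta_1 b_1+\beta_2 b_2},
\]
which is a weighted mediant of $a_1/b_1$ and $a_2/b_2$.

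The first thing to dispatch is the degenerate cases. If $\PP_{s,\eta}(\Psi)=0$ there is nothing to prove, since the left-hand side is undefined. Otherwise $\beta_1 b_1+\beta_2 b_2>0$, so at least one $\beta_i b_i>0$; if, say, $\beta_2 b_2=0$, then also $\beta_2 a_2=0$ and the displayed fraction collapses to $a_1/b_1=\CP{}{s,\eta_1}{\phi}{\psi}$, and symmetrically in the other case. Hence I may assume $b_1,b_2>0$, so that both conditional probabilities on the right-hand side are defined.

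For the main case, assume without loss of generality that $a_1/b_1\le a_2/b_2$, i.e. $a_1 b_2\le a_2 b_1$. Then
\[
  b_2(\beta_1 a_1+\beta_2 a_2)=\beta_1(a_1 b_2)+\beta_2(a_2 b_2)\le \beta_1(a_2 b_1)+\beta_2(a_2 b_2)=a_2(\beta_1 b_1+\beta_2 b_2),
\]
using $\beta_1\ge 0$ and $a_1 b_2\le a_2 b_1$. Dividing by $b_2(\beta_1 b_1+\beta_2 b_2)>0$ yields $\CP{}{s,\eta}{\phi}{\psi}\le a_2/b_2=\max\bigl(\CP{}{s,\eta_1}{\phi}{\psi},\CP{}{s,\eta_2}{\phi}{\psi}\bigr)$, as required.

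There is no serious obstacle here; the statement is essentially the observation that a mediant lies between the two fractions it is formed from. The only point that needs a moment's care is that the hypothesis gives $\eta$ as a convex combination in the \emph{measure} sense, not merely pointwise as a function (the remark following the definition of convex combination warns that these two notions differ), and it is precisely this measure-level identity, applied to the two events $\Phi\cap\Psi$ and $\Psi$, that makes the reduction above legitimate.
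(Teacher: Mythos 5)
Your proof is correct and takes essentially the same route as the paper's: both instantiate the convex-combination identity at the two events $\phi\land\psi$ and $\psi$, write $\CP{}{s,\eta}{\phi}{\psi}$ as a weighted mediant of $\CP{}{s,\eta_1}{\phi}{\psi}$ and $\CP{}{s,\eta_2}{\phi}{\psi}$, and conclude that this mediant is bounded by the larger of the two fractions. The only difference is in how that elementary fact is justified---the paper observes that $\alpha\mapsto\frac{\alpha v_1+(1-\alpha)v_2}{\alpha w_1+(1-\alpha)w_2}$ is monotone (via its derivative) and hence maximized at an endpoint, whereas you argue by direct cross-multiplication and additionally dispose of the degenerate cases where a denominator vanishes, which the paper's hypothesis $w_1,w_2\in(0,\infty)$ silently excludes.
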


\begin{proof}
To prove this lemma we need to use the following technical result:
The function $f \colon \mathbb{R} \to \mathbb{R}$ defined as below
is monotonous.

\[f(x)\eqdef \frac{x v_1 + (1-x) v_2}{x w_1 + (1-x)
w_2}\] \noindent where $v_1, v_2 \in [0,\infty)$ and $w_1, w_2 \in
(0,\infty)$. This claim follows from the fact that $f'(x) =
\frac{v_1 w_2 - v_2 w_1}{(x w_1 - (1-x) w_2)^2}$ is always $\geq
0$ or always $\leq 0$.

Now, by applying the result above to
  \begin{align*}
   [0,1] \ni \alpha \mapsto
        \frac{\alpha \Prob{}{s,\eta_1}{\phi\land\psi} +
         (1-\alpha) \Prob{}{s,\eta_2}{\phi\land\psi}}
        {\alpha \Prob{}{s,\eta_1}{\psi} + (1-\alpha) \Prob{}{s,\eta_2}{\psi}}
 \end{align*}
 we get that the maximum is reached at $\alpha = 0$ or $\alpha = 1$.
 Because $\eta$ is a convex combination of $\eta_1$ and $\eta_2$,
 $\CP{}{s,\eta}{\phi}{\psi} \leq \CP{}{s,\eta_2}{\phi}{\psi}$ (in the
 first case) or $\CP{}{s,\eta}{\phi}{\psi} \leq
 \CP{}{s,\eta_1}{\phi}{\psi}$ (in the second case).
\end{proof}

\begin{lem}
  Let $\Pi$ be a MDP, $s$ a state, and $\phi$ a path formula. Then
  every $\varphi$-sHI $s$-scheduler on $\Pi$ is a convex combination of deterministic
  $\varphi$-sHI $s$-schedulers.
\end{lem}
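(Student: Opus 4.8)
The plan is to use finiteness twice: a $\varphi$-sHI scheduler carries only a finite amount of genuine information, and a probability distribution on a finite set is a convex combination of Dirac distributions, so distributing these convex combinations over the finitely many ``choice slots'' of a $\varphi$-sHI scheduler will express it as a convex combination of deterministic $\varphi$-sHI schedulers. In detail, put $\varphi=\StopCond{\phi}$, say $\varphi=\chi_1\lor\chi_2$. The three defining clauses of ``$\varphi$-sHI'' say exactly that $\eta$ is constant on each class of the equivalence relation $\sim$ on $\FPaths{s}$ generated by declaring $\sigma_1\sim\sigma_2$ whenever $\last{\sigma_1}=\last{\sigma_2}$ and $\sigma_1,\sigma_2$ either both fail $\Fin\varphi$, or both satisfy $\Fin\chi_1$, or both satisfy $\Fin\chi_2$. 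Every $\sim$-class consists of paths with a common last state, and since $S$ is finite there are finitely many classes; index them by a finite set $I$, let $u_i\in S$ be the common last state of class $i$ and $\tau(u_i)=\{\pi_{i,1},\dots,\pi_{i,k_i}\}$ (finite, by the definition of an MDP), and let $\iota\colon\FPaths{s}\to I$ be the scheduler-independent map sending a path to its class. Then $\eta(\sigma)=d_{\iota(\sigma)}$ for a family $\{d_i\in\Distr(\tau(u_i))\}_{i\in I}$ determined by $\eta$; write $d_i=\sum_{j=1}^{k_i}\beta_{i,j}\,1_{\pi_{i,j}}$ with $\beta_{i,j}\ge0$ and $\sum_j\beta_{i,j}=1$.

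Next, for each \emph{selector} $f\in\prod_{i\in I}\{1,\dots,k_i\}$ define the scheduler $\eta_f$ by $\eta_f(\sigma)=1_{\pi_{\iota(\sigma),\,f(\iota(\sigma))}}$. Since $\eta_f$ also factors through $\iota$ and takes only Dirac values, it is a deterministic $\varphi$-sHI $s$-scheduler. Put $\alpha_f=\prod_{i\in I}\beta_{i,f(i)}\in[0,1]$; then $\sum_f\alpha_f=\prod_{i\in I}\bigl(\sum_j\beta_{i,j}\bigr)=1$. What then remains is to prove that $\eta$ equals the convex combination $\sum_f\alpha_f\eta_f$ \emph{as a measure}, i.e.\ $\PP_{s,\eta}(\Delta)=\sum_f\alpha_f\,\PP_{s,\eta_f}(\Delta)$ for every $\Delta\in\Borel_s$. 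As both sides are measures on $\Borel_s$, by uniqueness of the measure extending a premeasure on the basic cylinders it suffices to verify the identity for $\Delta=\Cyl{s_0s_1\cdots s_n}$.

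This cylinder identity is the heart of the argument and the step I expect to be the main obstacle. Unfolding $\PP_{s,\eta}(\Cyl{s_0\cdots s_n})=\prod_{m=0}^{n-1}\sum_{\pi\in\tau(s_m)}\eta(s_0\cdots s_m)(\pi)\,\pi(s_{m+1})$, writing $\iota_m=\iota(s_0\cdots s_m)$ and substituting $\eta(s_0\cdots s_m)=\sum_j\beta_{\iota_m,j}1_{\pi_{\iota_m,j}}$ (and $\eta_f(s_0\cdots s_m)=1_{\pi_{\iota_m,f(\iota_m)}}$), the identity becomes: the product over $m$ of the sums $\sum_j\beta_{\iota_m,j}\pi_{\iota_m,j}(s_{m+1})$ equals $\sum_f\alpha_f\prod_{m=0}^{n-1}\pi_{\iota_m,f(\iota_m)}(s_{m+1})$. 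I would prove this by induction on $n$ (base case $\PP_{s,\eta}(\Cyl{s_0})=1=\sum_f\alpha_f$), in the step peeling off the factor $\sum_j\beta_{\iota_0,j}\pi_{\iota_0,j}(s_1)$ and regrouping the selectors according to their value on $\iota_0$. The delicate point --- and, as the Note following the definition of convex combination already warns, the reason this does \emph{not} follow from the pointwise functional identity $\eta(\sigma)(\pi)=\sum_f\alpha_f\eta_f(\sigma)(\pi)$ --- is that a path $s_0\cdots s_n$ may revisit a state $u_i$, while each $\eta_f$ is forced to reuse the same extreme point $\pi_{i,f(i)}$ at all such visits; so the factorization of the sum over selectors against the product defining $\PP_{s,\eta}(\Cyl{\cdot})$ has to be done keeping careful track of which slots are revisited along the path, and it is exactly here that the finiteness of $I$ and of each $\tau(u_i)$ enters.
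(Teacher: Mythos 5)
There is a genuine gap, and it sits exactly at the step you yourself flag as the main obstacle; moreover it is not a gap that more careful bookkeeping of revisited slots can close. The cylinder identity with the product weights $\alpha_f=\prod_{i}\beta_{i,f(i)}$ is false as soon as a path can revisit a class, i.e.\ as soon as there is a cycle before the stopping condition. Concretely, take a state $u$ with $u\not\models\varphi$, $\tau(u)=\{\pi_1,\pi_2\}$, $\pi_1=1_u$ (a self-loop) and $\pi_2=1_w$ for some other state $w$, and let $\eta$ be the sHI scheduler that plays $\beta\,1_{\pi_1}+(1-\beta)\,1_{\pi_2}$ at every path ending in $u$. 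For the cylinder $\Cyl{uuu}$ your left-hand side is $\bigl(\beta\pi_1(u)+(1-\beta)\pi_2(u)\bigr)^2=\beta^2$, while your right-hand side is $\beta\,\pi_1(u)^2+(1-\beta)\,\pi_2(u)^2=\beta$; these differ for every $\beta\in(0,1)$, so the induction you sketch cannot be completed. Worse, the failure is not an artifact of your choice of weights: every deterministic sHI scheduler assigns $\Cyl{uu}$ and $\Cyl{uuu}$ the \emph{same} probability (either $0$ or $1$, determined by its single choice at $u$), whereas $\eta$ assigns them $\beta$ and $\beta^2$; hence $\PP_{s,\eta}$ is not a convex combination, in the measure-level sense required by the definition, of the deterministic sHI measures for \emph{any} weights.

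What your construction does establish is the pointwise identity $\eta(\sigma)(\pi)=\sum_f\alpha_f\,\eta_f(\sigma)(\pi)$, and the Note following the definition of convex combination warns precisely that this does not yield the measure-level statement --- the computation above is an instance of that warning. Your selector expansion is exact, and the proof complete, exactly when no class with a nontrivial choice can occur twice along a path (for instance in acyclic MDPs, or after the SCC reduction performed later in the chapter), because then summing out the unused slots factorizes the sum over selectors into the product of per-visit mixtures. For comparison, the paper's own proof of this lemma is a one-sentence appeal to the fact that an sHI scheduler makes only finitely many choices (at most two per state) and that each such choice is a convex combination of deterministic choices --- essentially your construction with the measure-level verification omitted. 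So you have correctly located where the real work lies, but the step you defer is exactly the one that fails in the presence of cycles; completing the development requires either restricting to the situation where classes cannot be revisited, or reading ``convex combination'' in the weaker functional sense, or replacing this lemma by a different derandomization argument for the conditional probabilities themselves.
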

\begin{proof}
\rev{The result follows from the fact that sHI schedulers have only finitely many choices to make at each state (at most two) and every choice at a particular state -- either before or after the stopping condition-- is a convex combination of deterministic choices at that state -- either before or after the stopping condition. }
\end{proof}

Finally, combining Theorem \ref{thm:shiSch} and the
previous lemma we obtain:

\begin{thm}
  \label{theor:niceSchedulers}
  Let $\Pi$ be a $\MDP$, $\phi,\psi\in\Path$, and $\varphi=\StopCond{\phi}\lor\StopCond{\psi}$. Then we have
\[\CP{+}{}{\phi}{\psi}=\sup_{\eta\in\text{Sch}^\varphi_d(\Pi)}\CP{}{\eta}{\phi}{\psi},\]
\noindent where $\text{Sch}^\varphi_d(\Pi)$ is the set of
deterministic and $\varphi$-$\nHI$ schedulers of $\Pi$.
\end{thm}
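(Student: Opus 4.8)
The plan is to combine Theorem~\ref{thm:shiSch} with the two lemmas just established, so that the statement becomes essentially a corollary. First I would dispose of the easy inequality: every deterministic $\varphi$-$\nHI$ scheduler is in particular a $\varphi$-$\nHI$ scheduler, so $\text{Sch}^\varphi_d(\Pi)\subseteq\Sch^\varphi(\Pi)$, whence
\[
  \sup_{\eta\in\text{Sch}^\varphi_d(\Pi)}\CP{}{\eta}{\phi}{\psi}
  \;\le\;
  \sup_{\eta\in\Sch^\varphi(\Pi)}\CP{}{\eta}{\phi}{\psi}
  \;=\;\CP{+}{}{\phi}{\psi},
\]
the last equality being Theorem~\ref{thm:shiSch}. (If no scheduler $\eta$ has $\Prob{}{\eta}{\psi}>0$, both sides are $0$ and there is nothing to prove; so assume such a scheduler exists.)

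For the reverse inequality I would fix an arbitrary $\varphi$-$\nHI$ scheduler $\eta$ with $\Prob{}{\eta}{\psi}>0$ and show $\CP{}{\eta}{\phi}{\psi}\le\sup_{\eta'\in\text{Sch}^\varphi_d(\Pi)}\CP{}{\eta'}{\phi}{\psi}$; taking the supremum over all such $\eta$ and invoking Theorem~\ref{thm:shiSch} once more yields $\CP{+}{}{\phi}{\psi}\le\sup_{\eta'\in\text{Sch}^\varphi_d(\Pi)}\CP{}{\eta'}{\phi}{\psi}$, and the two inequalities combine to the claimed equality. By the second lemma of this subsection, $\eta$ is a convex combination $\eta=\sum_{i=1}^{n}\alpha_i\eta_i$ of deterministic $\varphi$-$\nHI$ schedulers $\eta_1,\dots,\eta_n$. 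The first lemma of this subsection (convex combinations do not increase $\CP{}{}{\phi}{\psi}$) settles the case $n=2$; I would extend it to arbitrary $n$ by a routine induction, peeling off one summand at a time (write $\eta=\alpha_1\eta_1+(1-\alpha_1)\eta'$ with $\eta'$ a convex combination of $\eta_2,\dots,\eta_n$ when $\alpha_1<1$, and $\eta=\eta_1$ when $\alpha_1=1$). This gives $\CP{}{\eta}{\phi}{\psi}\le\max_{1\le i\le n}\CP{}{\eta_i}{\phi}{\psi}\le\sup_{\eta'\in\text{Sch}^\varphi_d(\Pi)}\CP{}{\eta'}{\phi}{\psi}$, as required.

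The only real obstacle — and it is a minor one — is the treatment of those $\eta_i$ in the combination with $\Prob{}{\eta_i}{\psi}=0$, for which $\CP{}{\eta_i}{\phi}{\psi}$ is undefined and the binary lemma (whose proof tacitly assumes positive denominators) does not directly apply. I would handle this by observing that such terms contribute $0$ to both $\Prob{}{\eta}{\phi\land\psi}$ and $\Prob{}{\eta}{\psi}$, since $\Prob{}{\eta_i}{\phi\land\psi}\le\Prob{}{\eta_i}{\psi}=0$; hence, with $J=\{i:\Prob{}{\eta_i}{\psi}>0\}\ne\emptyset$ (it is nonempty because $\sum_{i\in J}\alpha_i\Prob{}{\eta_i}{\psi}=\Prob{}{\eta}{\psi}>0$),
\[
  \CP{}{\eta}{\phi}{\psi}
  =\frac{\sum_{i\in J}\alpha_i\,\Prob{}{\eta_i}{\phi\land\psi}}
        {\sum_{i\in J}\alpha_i\,\Prob{}{\eta_i}{\psi}},
\]
which is a weighted mediant of the fractions $\CP{}{\eta_i}{\phi}{\psi}$ ($i\in J$) with strictly positive weights $\alpha_i\,\Prob{}{\eta_i}{\psi}$, and therefore lies between the smallest and the largest of them; in particular $\CP{}{\eta}{\phi}{\psi}\le\max_{i\in J}\CP{}{\eta_i}{\phi}{\psi}$. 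Equivalently, one simply runs the induction above on the combination renormalized over $J$. Everything else is bookkeeping.
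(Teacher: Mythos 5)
Your proposal is correct and follows essentially the same route as the paper, whose proof of this theorem is literally just ``combine Theorem~\ref{thm:shiSch} with the previous lemma'' (the convex-combination lemmas of this subsection). You merely spell out the details the paper leaves implicit --- the easy inclusion $\text{Sch}^\varphi_d(\Pi)\subseteq\Sch^\varphi(\Pi)$, the extension of the binary convex-combination bound to $n$ summands via the weighted-mediant argument, and the handling of components with $\Prob{}{\eta_i}{\psi}=0$ --- all of which is sound.
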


Since the number of deterministic and semi HI schedulers is finite
we know that there exists a scheduler attaining the optimal
conditional probability, i.e.
$\sup_{\eta\in\text{Sch}^\varphi_d(\Pi)}\CP{}{\eta}{\phi}{\psi}=\max_{\eta\in
\text{Sch}^\varphi_d(\Pi)} \CP{}{\eta}{\phi}{\psi}$. Note that
this implies that \cpCTL is decidable.

We conclude this section showing that there exists a deterministic
and semi HI scheduler maximizing the conditional probabilities of
Example \ref{ex:notHI}.

\begin{exa}
  Consider the $\MDP$ and $\cpCTL$ formula of Example~\ref{ex:notHI}.
  According to Theorem~\ref{theor:niceSchedulers} there
  exists a deterministic and $(B\lor P)$-$\nHI$ scheduler that
  maximizes $\CP{}{s_0,\eta}{\Fin B}{\Fin P}$.
  In this case, a maximizing scheduler will take always the
  same decision ($\pi_3$) before the system reaches $s_3$ (a state
  satisfying the until stopping condition $(B\lor P)$) and always the
  same decision ($\pi_1$) after the system reaches $s_3$.
\end{exa}

\section{Model Checking \cpCTL\index{model checking!cpCTL}}
\label{modelchecking}

Model checking \cpCTL means checking if a state $s$ satisfies a
certain state formula $\phi$. We focus on formulas of the form
$\CP{}{\leq a}{\phi}{\psi}$ and show how to compute
$\CP{+}{s}{\phi}{\psi}$. The case $\CP{-}{s}{\phi}{\psi}$ is
similar.

Recall that model checking \pCTL is based on the
Bellman-equations. For instance, $\Prob{+}{s}{\Fin B} \allowbreak
= \allowbreak \max_{\pi
  \in \tau(s)} \sum_{t \in \suc(s)} \pi(t) \cdot \Prob{+}{t}{\Fin B}$ whenever
$s \not\models B$.  So a scheduler $\eta$ that maximizes
$\smash{\Prob{}{s}{\Fin B}}$ chooses $\pi \in \tau(s)$ maximizing
$\smash{\sum_{t \in \suc(s)}} \pi(t) \cdot \allowbreak
  \Prob{+}{t}{\Fin B}$. In a successor state $t$, $\eta$ still behaves
as a scheduler that maximizes $\smash{\Prob{}{t}{\Fin B}}$.
As shown below, such a local Bellman-equation is not true for
conditional probabilities: a scheduler that maximizes a
conditional probability such as $\CP{}{s}{\Fin B}{\Glo P}$ does
not necessarily maximize $\CP{}{t}{\Fin B}{\Glo P}$ for successors
$t$ of $s$.

\begin{exa}\label{ex:decision}
  Consider the $\MDP$ and $\cpCTL$ formula $\CP{}{\leq a}{\Fin B}{\Glo P}$ of Figure~\ref{fig:notMaxNod}.  There are only two
  deterministic schedulers. The first one, $\eta_1$, chooses $\pi_2$
  when the system reaches the state $s_2$ and the second one,
  $\eta_2$, chooses $\pi_3$ when the system reaches $s_2$. For the
  first one $\CP{}{s_0,\eta_1}{\Fin B}{\Glo P}=1-\frac{2\alpha}{7}$, and
  for the second one $\CP{}{s_0,\eta_2}{\Fin B}{\Glo P}=\frac{30}{31}$.
  So $\CP{+}{s_0}{\Fin B}{\Glo P} = \max(1 - \frac{2\alpha}{7},
  \frac{30}{31})$.  Therefore, if $\alpha \geq \frac{7}{62}$ the
  scheduler that maximizes $\CP{}{s_0}{\Fin B}{\Glo P}$ is $\eta_2$
  ($\CP{}{s_0,\eta_2}{\Fin B}{\Glo P} = \CP{+}{s_0}{\Fin B}{\Glo P}$) and
  otherwise it is $\eta_1$ ($\CP{}{s_0,\eta_1}{\Fin B}{\Glo P} =
  \CP{+}{s_0}{\Fin B}{\Glo P}$).

  Furthermore, $\CP{+}{s_1}{\Fin B}{\Glo P}=1$ and $\CP{+}{s_2}{\Fin B}{\Glo     P}=1-2\alpha$; the scheduler that obtains this last maximum is the
  one that chooses $\pi_2$ in $s_2$.

  Thus, if $\alpha\geq \frac{7}{62}$ the scheduler that maximizes the
  conditional probability from $s_0$ is taking a different decision
  than the one that maximize the conditional probability from $s_2$.
  Furthermore, $\max(1 -
  \frac{2\alpha}{7},\frac{30}{31}) = \CP{+}{s_0}{\Fin B}{\Glo P} \neq
  \frac{3}{4} \CP{+}{s_1}{\Fin B}{\Glo P} + \frac{1}{4} \CP{+}{s_2}{\Fin    B}{\Glo P} = 1 - \frac{\alpha}{2}$ for all $\alpha\in(0,1]$, showing that the
  Bellman-equation from above does not generalize to \cpCTL.
\end{exa}

As consequence of this observation, it is not possible to
``locally maximize'' \cpCTL properties (i.e. to obtain the global
maximum $\CP{+}{s_0}{\phi}{\psi}$ by maximizing
$\CP{}{t}{\phi}{\psi}$ for all states $t$). This has a significant
impact in terms of model-checking complexity: as we will show in
the rest of this section, to verify a \cpCTL property it
is necessary to compute and keep track of several conditional
probabilities and the desired maximum value can only be obtained
after all these probabilities have been collected.

\subsection{Model Checking $\CP{}{\leq a}{\phi}{\psi}$}\label{sec:MCacyclic}

An obvious way to compute $\CP{+}{s}{\phi}{\psi}$ is by computing
the pairs ($\Prob{}{s,\eta}{\phi\land\psi}$,
$\Prob{}{s,\eta}{\psi}$) for all deterministic $\nHI$ schedulers $\eta$, and
then taking the maximum quotient
$\Prob{}{s,\eta}{\phi\land\psi}\allowbreak / \allowbreak
\Prob{}{s,\eta}{\psi}$. This follows from the fact that there
exist finitely many deterministic semi history-independent schedulers and
that one of them attains the maximal conditional probability; however,
the number of such schedulers grows exponentially in the size of
the \MDP so computing these pairs for all of them is
computationally expensive. Our plan is to first present the
necessary techniques to naively compute
($\Prob{}{s,\eta}{\phi\land\psi}$, $\Prob{}{s,\eta}{\psi}$) for
all deterministic $\nHI$ schedulers $\eta$ and then present an algorithm that
allows model checking $\CP{}{\leq a}{\phi}{\psi}$ without
collecting such pairs for all $\nHI$ scheduler. 

\subsubsection{1) A naive approach to compute $\CP{+}{}{\phi}{\psi}$}


The algorithm is going to keep track of a list of pairs of probabilities
of the form ($\Prob{}{t,\eta}{\phi\land\psi}$, $\Prob{}{t,\eta}{\psi}$)
for all states $t$ and $\eta$ a deterministic $\nHI$ scheduler.
We start by defining a data
structure to keep track of the these pairs of probabilities.

\begin{dfn}\label{def:expresssion}
  Let $L$ be the set of expressions of the form
  $(p_1,q_1)\lor\cdots\lor(p_n,q_n)$ where $p_i,q_i\in [0,\infty)$ and
  $q_i\geq p_i$, for all $n\in \mathbb{N}^\star$.  On $L$ we consider
  the smallest congruence relation $\equiv_1$ satisfying
  idempotence, commutativity, and associativity, i.e.:
\[\begin{array}{rcl}
   (p_1,q_1)\lor(p_1,q_1)&\equiv_1&(p_1,q_1)\\
   (p_1,q_1)\lor(p_2,q_2)&\equiv_1&(p_2,q_2)\lor(p_1,q_1)\\
   ((p_1,q_1)\lor(p_2,q_2))\lor(p_3,q_3)&\equiv_1&(p_1,q_1)\lor((p_2,q_2)\lor(p_3,q_3))\\
\end{array}\]
\noindent Note that
  $(p_1,q_1) \lor \dots \lor (p_n,q_n) \equiv_1 (p'_1,q'_1) \dots
  (p'_{n'},q'_{n'})$ if and only if $\{(p_1,q_1), \allowbreak\dots, \allowbreak(p_n,q_n)\} =
  \{(p'_1,q'_1),\dots,(p'_{n'},q'_{n'})\}$.

  We let $L_1$ be the set of equivalence classes and denote the
  projection map $L \to L_1$ that maps each expression to its
  equivalence class by $f_1$.
  On $L$ we also define \emph{maximum quotient} $\rmax:L\to
  [0,\infty)$ by

  \[\rmax \left(\bigvee_{i=1}^n (p_i,q_i)\right) \eqdef\max\left(\left\{\frac{p_i}{q_i}|q_i\not=0, i=1,\ldots,n\right\}\cup\{0\}\right)\]

  Note that $\rmax$ induces a map $\rmax_1 \colon L_1 \to
  [0,\infty)$ making the diagram in Figure~\ref{fig:diagramas-delta} (a) commute, i.e.,
  such that $\rmax_1 \circ f_1 = \rmax$.
\end{dfn}
\begin{dfn}\label{def:delta}
  Let $\Pi$ be a $\MDP$. We define the function $\delta:
  S\times\Stat\times\Path\times\allowbreak\Path\to L$ by
  \[\delta(s,\varphi,\phi,\psi) \eqdef \bigvee_{\eta\in
    \Sch_s^\varphi(\Pi)}\left(\Prob{}{s,\eta}{\phi\land
      \psi},\Prob{}{s,\eta}{\psi}\right)\] and we define $\delta_1
  \colon S \times \Stat \times \Path \times \Path \to L_1$ by
  $\delta_1 \eqdef f_1 \circ \delta$.
\end{dfn}
%
\begin{figure}[h]
     \centering
     \subfigure[Commutative diagram]{\qquad\qquad\qquad\includegraphics[width=3.5cm]{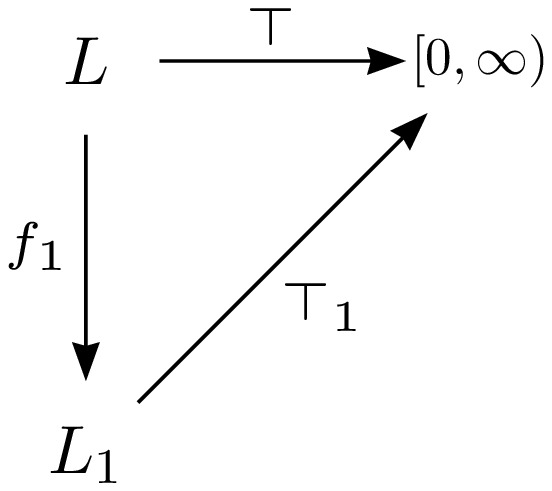}
                                     \qquad}
     \subfigure[$\delta$-values] {\includegraphics[width=4cm]{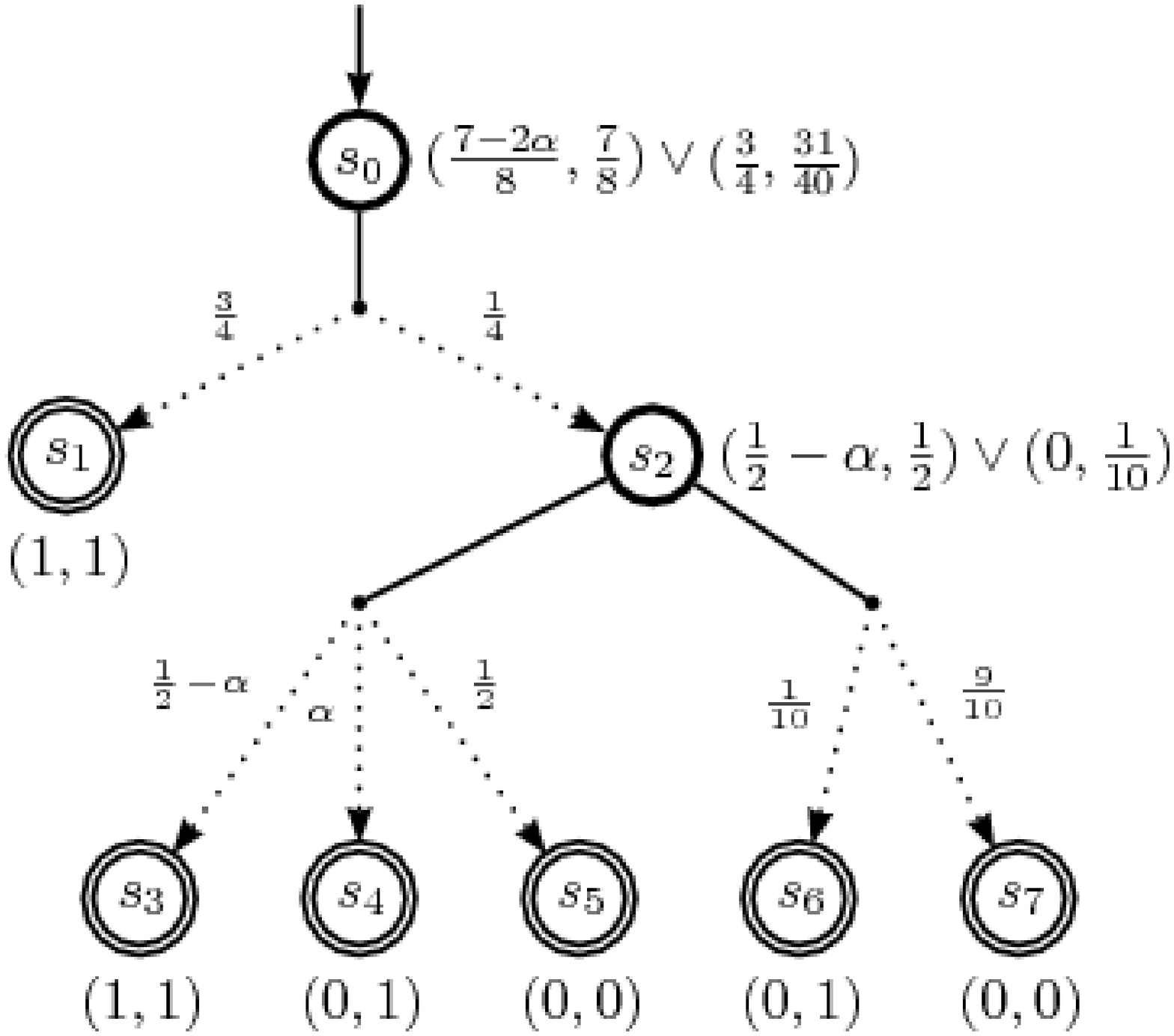}}
\label{fig:diagramas-delta}
\end{figure}

\noindent When no confusion arises, we omit the subscripts 1 and
omit the projection map $f_1$, writing
$(p_1,q_1)\lor\cdots\lor(p_n,q_n)$ for the equivalence class it
generates.

\medskip
\exampleheader In Figure \ref{fig:diagramas-delta} we show the
value $\delta(s,B\lor\lnot P,\Fin B, \Glo P)$ associated to each
state $s$ of the $\MDP$ in Figure~\ref{fig:notMaxNod}.
\medskip

\noindent The following lemma states that it is possible to obtain
maximum conditional probabilities using $\delta$.

\begin{lem} \label{theor:correctness} Given $\Pi=(S,s_0,L,\tau)$ an
  acyclic $\MDP$, and $\phi_1,\phi_2,\psi_1,\psi_2\in\Stat$. Then
\[\CP{+}{s}{\Until{\phi_1}{\phi_2}}{\Until{\psi_1}{\psi_2}}=\rmax\left(\delta^{\Until{}{}}_{s}(\phi_1\U\phi_2\mid\psi_1\U\psi_2)\right)\]
and
\[\CP{+}{s}{\Until{\phi_1}{\phi_2}}{\Glo \psi_1}=\rmax\left(\delta^\Glo_s(\phi_1\U\phi_2\mid\Glo \psi_1)\right),\]

\noindent where
$\delta^{\Until{}{}}_{s}(\phi_1\U\phi_2\mid\psi_1\U\psi_2)\eqdef\delta(s,\StopCond{\phi_1\U\phi_2}\lor\allowbreak\StopCond{\psi_1\U\psi_2},\allowbreak\Until{\phi_1}{\phi_2},\allowbreak\Until{\psi_1}{\psi_2})$
and
$\delta^\Glo_s(\phi_1\U\phi_2\mid\Glo\psi_1)\eqdef\delta(s,\StopCond{\phi_1\U\phi_2}\lor\StopCond{\G\psi_1},\Until{\phi_1}{\phi_2},\allowbreak\Glo
\psi_1)$.
\end{lem}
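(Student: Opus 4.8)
The plan is to derive the identity directly from Theorem~\ref{theor:niceSchedulers} and the two lemmas immediately preceding it, after unfolding the definitions of $\delta$ and $\rmax$; the acyclicity hypothesis enters only to guarantee that $\delta(s,\varphi,\phi,\psi)$ is a well‑formed (finite) element of $L$. Observe first that the two displayed equations are the same statement for two path formulas, since $\Glo\psi_1$ is itself a path formula with $\StopCond{\Glo\psi_1}=\lnot\psi_1$; so I would prove once and for all that $\CP{+}{s}{\phi}{\psi}=\rmax\bigl(\delta(s,\varphi,\phi,\psi)\bigr)$ for $\phi=\phi_1\U\phi_2$, $\psi\in\{\psi_1\U\psi_2,\ \Glo\psi_1\}$ and $\varphi=\StopCond{\phi}\lor\StopCond{\psi}$, and then instantiate to obtain both $\delta^{\Until{}{}}_s$ and $\delta^\Glo_s$.

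First I would unfold the right‑hand side. By Definition~\ref{def:delta}, $\delta(s,\varphi,\phi,\psi)=\bigvee_\eta\bigl(\Prob{}{s,\eta}{\phi\land\psi},\Prob{}{s,\eta}{\psi}\bigr)$ with $\eta$ ranging over $\varphi$-$\nHI$ schedulers; since $L$ consists of finite expressions and, on an acyclic $\Pi$, there are only finitely many deterministic $\varphi$-$\nHI$ schedulers, this join may be read over that finite set (by the convex‑combination lemma below, $\rmax$ of the join is in any case unchanged if one joins over all $\varphi$-$\nHI$ schedulers, whose pairs all lie in the convex hull of those of the deterministic ones). Unfolding $\rmax$ (Definition~\ref{def:expresssion}) and using $\{\omega\models\phi\land\psi\}=\{\omega\models\phi\}\cap\{\omega\models\psi\}$, so that $\Prob{}{s,\eta}{\phi\land\psi}/\Prob{}{s,\eta}{\psi}=\CP{}{s,\eta}{\phi}{\psi}$ exactly when $\Prob{}{s,\eta}{\psi}>0$, one obtains
\[\rmax\bigl(\delta(s,\varphi,\phi,\psi)\bigr)=\max\Bigl(\{\,\CP{}{s,\eta}{\phi}{\psi}\ :\ \eta\ \text{a}\ \varphi\text{-}\nHI\ \text{scheduler with}\ \Prob{}{s,\eta}{\psi}>0\,\}\cup\{0\}\Bigr),\]
where the pairs with zero denominator have been discarded and $0$ adjoined.

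It remains to identify this maximum with $\CP{+}{s}{\phi}{\psi}$. Since $\text{Sch}^\varphi_d(\Pi)\subseteq\Sch_s^\varphi(\Pi)$, Theorem~\ref{theor:niceSchedulers} (applied with $s$ as the initial state) gives $\rmax(\delta(\ldots))\ge\sup_{\eta\in\text{Sch}^\varphi_d(\Pi)}\CP{}{s,\eta}{\phi}{\psi}=\CP{+}{s}{\phi}{\psi}$, and $\rmax(\delta(\ldots))\ge0=\CP{+}{s}{\phi}{\psi}$ in the degenerate case by Theorem~\ref{thm:shiSch}. For the reverse inequality I would take an arbitrary $\varphi$-$\nHI$ scheduler $\eta$ with $\Prob{}{s,\eta}{\psi}>0$; by the lemma that every $\varphi$-$\nHI$ scheduler is a convex combination of deterministic $\varphi$-$\nHI$ schedulers $\eta_i$, and by the lemma that the conditional probability of a convex combination does not exceed the maximum of those of its components (the components with $\Prob{}{\psi}=0$ contribute nothing and may be dropped, and at least one has $\Prob{}{\psi}>0$), we get $\CP{}{s,\eta}{\phi}{\psi}\le\max_i\CP{}{s,\eta_i}{\phi}{\psi}\le\CP{+}{s}{\phi}{\psi}$, whence $\rmax(\delta(\ldots))\le\CP{+}{s}{\phi}{\psi}$. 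The two inequalities together finish the proof, for both the until/until and the until/globally cases.

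The routine‑but‑fiddly point is matching the degenerate conventions of the two sides ($\rmax$ silently drops zero‑denominator pairs and adjoins $0$, while $\CP{+}{s}{\phi}{\psi}$ is by fiat $0$ when no scheduler has $\Prob{}{\psi}>0$): one must check these coincide in the border cases. The only place where something genuinely non‑trivial could be hidden is if one insisted on proving the statement ``from scratch'' by induction on the acyclic structure of $\Pi$ together with a recursive Bellman‑type equation for $\delta$ (which is what the ensuing model‑checking algorithm needs), since semi‑history‑independence is a global constraint on the DAG of $\Pi$ — residual schedulers on subtrees that share states must agree — so $\delta$ at $s$ is not literally the combination of $\delta$ at the successors of $s$. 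The route sketched above sidesteps this entirely by quoting Theorem~\ref{theor:niceSchedulers} wholesale, which is exactly why it is short.
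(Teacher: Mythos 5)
Your proposal is correct and follows essentially the same route as the paper: the paper's own proof is a one-line appeal to the definitions of $\delta$ and $\rmax$ together with the fact (Theorem~\ref{theor:niceSchedulers}) that the optimal conditional probability is attained by a deterministic semi history-independent scheduler. You merely spell out the two inequalities, the reduction of the join to deterministic $\nHI$ schedulers via the convex-combination lemmas, and the degenerate zero-denominator conventions, all of which the paper leaves implicit.
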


\begin{proof} The lemma follows straightforwardly from the definitions of $\delta$
and $\top$ and the fact that the maximum conditional probability is indeed
reached by a deterministic $\nHI$ scheduler.
\end{proof}

 \noindent
 Remember that there are finitely many $\nHI$ schedulers. Thus, $\delta$ (and
 therefore $\CP{+}{}{-}{-}$) can in principle be computed by explicitly
 listing them all. However, this is of course an inefficient way to
 compute maximum conditional probabilities.

 We now show how to compute $\CP{+}{}{-}{-}$ in a more efficient
 way. We will first provide an algorithm to compute maximum conditional
probabilities for acyclic \MDPs. We then show how to apply this
algorithm to \MDPs with cycles by mean of a technique, based on
SCC analysis, that allows the transformation of an \MDP with
cycles to an equivalent acyclic \MDP.

\subsubsection{2) An algorithm to compute $\CP{+}{}{\phi}{\psi}$ for Acyclic $\MDPs$}

We will now present a recursive algorithm to compute
$\CP{+}{}{\phi}{\psi}$ for acyclic $\MDPs$ using a variant of
$\delta$ (changing its image). As we mentioned before, to compute
maximum conditional probabilities it is not necessary to consider
all the pairs $(\Prob{}{\eta}{\phi\land
\psi},\Prob{}{\eta}{\psi})$ (with $\eta$ a deterministic and semi
HI scheduler). \rev{In particular, we will show that it is sufficient
to consider only deterministic and semi HI schedulers (see definition of $D$ below) that
behave as an optimizing scheduler (i.e. either maximizing or minimizing a $\pCTL$ formula $\phi$)
after reaching the stopping condition (i.e. a state $s$ satisfying $\StopCond{\varphi}$)}.


We plan to compute a function $\hat{\delta}(-)\subseteq \delta(-)$
such that $\rmax(\delta)=\rmax(\hat{\delta})$. Intuitively,
$\hat{\delta}(-)$ can be thought as
 \[\hat{\delta}(s,\varphi,\phi,\psi) = \bigvee_{\eta\in
    D}\left(\Prob{}{s,\eta}{\phi\land
      \psi},\Prob{}{s,\eta}{\psi}\right)\]
\noindent where \rev{$D$ contains all deterministic and semi HI schedulers $\eta$ such that $\eta$
optimizes $\Prob{}{s,\eta}{\phi}$ for some $s\models \StopCond{\varphi}$  and $\phi\in\pCTL$  formula}.

\rev{This intuition will become evident when we present our recursive algorithm to compute conditional probabilities (see Theorem \ref{thm:algorithm} below). The states $s$ involved in the definition of $D$ correspond to the base case of the algorithm and the formula $\phi$ corresponds to the formula that the algorithm maximizes/minimizes when such $s$ is reached.}

%
%
We will present algorithms to recursively (in $s$) compute
$\hat{\delta}^{\U}_{s}$ and $\hat{\delta}^{\Glo}_{s}$ in acyclic
\MDPs. The base cases of the recursion are the states where the
stopping condition holds. In the recursive case we can express
$\hat{\delta}^\U_{s}$ (respectively $\hat{\delta}^\Glo_{s}$) in
terms of the $\hat{\delta}^\U_{t}$ (respectively
$\hat{\delta}^\Glo_{t}$) of the successors states $t$ of $s$.

We start by formalizing the notion of acyclic $\MDP$. 
\rev{We call a MDP \emph{acyclic} if it contains no cycles other than the trivial ones (i.e., other than selfloops associated to absorbing states).}

\begin{dfn}
\rev{
  A \MDP $\Pi$ is called \emph{acyclic} if for all
  states $s\in S$ and all $\pi \in \tau(s)$ we have $\pi(s)=0$ or
  $\pi(s)=1$, and, furthermore, for all paths $\omega$, if there exist $i, j$ such that $i<j$ and
  $\omega_i=\omega_j$, then we have 
  $\omega_{i}=\omega_{k}$ for all $k>i$.}
\end{dfn}

\noindent In addition, in order to formally define $\hat{\delta}$
we define a new congruence $\equiv_2$.
\begin{dfn}\label{dfn:equiv2} Consider the set of expressions $L$ defined in Definition
  \ref{def:expresssion}.  On $L$ we now consider the smallest
  congruence relation $\equiv_2$ containing $\equiv_1$ and satisfying
\begin{itemize}
\item[(1)] $(p_1,q_1)\lor(p_1,q_  2) \equiv_2
(p_1,\min(q_1,q_2))$, and
\item[(2)]$(p_1,q_1)\lor(p_2,q_1) \equiv_2 (\max(p_1,p_2),q_1)$,
and
\item[(3)]$(p_1+a,q_1+a)\lor(p_1,q_1) \equiv_2 (p_1+a,q_1+a)$,
\end{itemize}
\noindent where $a\in [0,\infty)$.
We write $L_2$ for the set of equivalence classes and denote the projection map $L \to L_2$ by $f_2$.
\end{dfn}

\noindent Since $\equiv_1\subseteq\equiv_2$, this projection maps
factors through $f_1$, say $g \colon L_1\to L_2$ is the unique map
such that $g\circ f_1=f_2.$

\begin{dfn}
  We define $\hat{\delta}: S\times\Stat\times\Path\times\Path\to L_2$ by
  $\hat{\delta} \eqdef f_2 \circ \delta$.
\end{dfn}

Now, in order to prove that $\rmax(\delta)=\rmax(\hat{\delta})$ we
need to define a \emph{scalar multiplication operator} $\rmult$
and an \emph{addition operator} $\radd$ on $L$.
\begin{dfn}
  We define $\rmult:[0,\infty)\times L \to L$ and $\radd:L\times L \to
  L$ by
  \begin{align*}
   c \rmult \bigvee_{i=1}^n (p_i,q_i)&\eqdef\bigvee_{i=1}^n (c\cdot
  p_i,c\cdot q_i)\text{ and }\\
  \bigvee_{i=1}^n (p_i,q_i) \radd \bigvee_{j=1}^m (p^\prime_j,q^\prime_j)&\eqdef\bigvee_{i=1}^n
  \bigvee_{j=1}^m (p_i+p_j^\prime,q_i+q_j^\prime).
   \end{align*}
\end{dfn}
\noindent Note that $\rmult$ and $\radd$ induce maps $\rmult_1
\colon [0,\infty) \times L_1 \to L_1$ and $\radd_1: L_1 \times L_1
\to L_1$ as shown in Figure \ref{fig:diagramas2} below.
As before, we omit the subscript 1 if that will not cause
confusion.

 \begin{figure}[ht]
      \vspace*{-0.2cm}
      \centering
      \subfigure{\includegraphics[width=4cm]{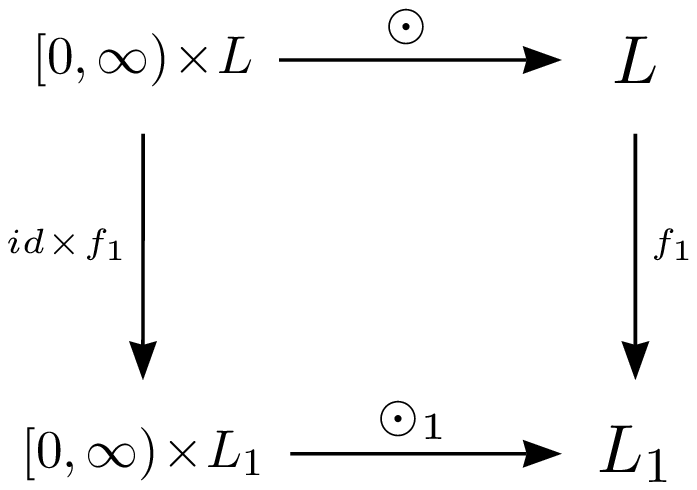}}
      \qquad
      \subfigure{\includegraphics[width=4cm]{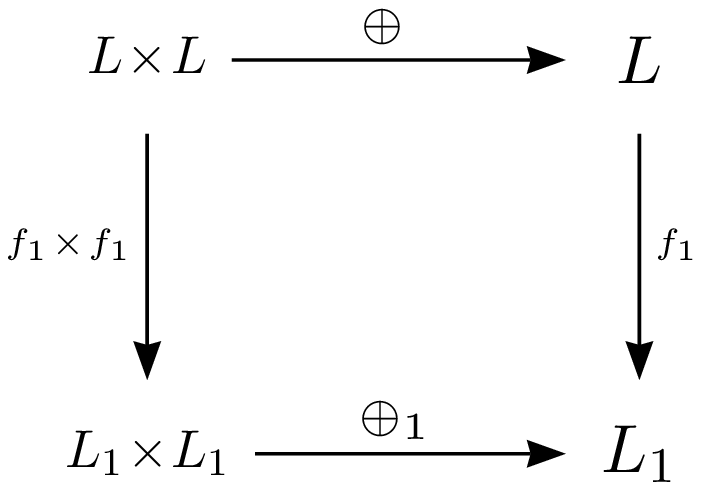}}
 \caption{Commutative diagrams}\label{fig:diagramas2}
 \vspace*{-0.3cm}
 \end{figure}

The following seemingly innocent lemma is readily proven, but it
contains the key to allow us to discard certain pairs of
probabilities. The fact that $\rmax$ induces operations on $L_2$
means that it is correct to ``simplify'' expressions using
$\equiv_2$ when we are interested in the maximum or minimum
quotient.
%

The intuition is as follows. Normally, which decision is best in a
certain state (or rather, at a certain finite path) to optimize
the conditional probability, might depend on probabilities or
choices in a totally different part of the automaton (see
Example~\ref{ex:decision}). Sometimes, however, it is possible to
decide locally what decision the scheduler should take. The
congruence $\equiv_2$ encodes three such cases, each of them
corresponding to one clause in Definition \ref{dfn:equiv2}. (1) If
from a state $t$ the scheduler $\eta$ can either take a transition
after which $\Prob{}{\eta}{\phi \land \psi} = p_1$ and
$\Prob{}{\eta}{\psi} = q_1$ or a transition after which
$\Prob{}{\eta}{\phi \land \psi} = p_1$ and $\Prob{}{\eta}{\psi} =
q_2$, then in order to maximize the conditional probability is
always best to take the decision where $\Prob{}{\eta}{\psi} =
\min(q_1,q_2)$. (2) Similarly, if the scheduler can either take a
transition after which $\Prob{}{\eta}{\phi \land \psi} = p_1$ and
$\Prob{}{\eta}{\psi} = q_1$ or one after which $\Prob{}{\eta}{\phi
\land \psi} = p_2$ and $\Prob{}{\eta}{\psi} = q_1$, then it is
always best to take the decision where $\Prob{}{\eta}{\phi \land
\psi} = \max(p_1,p_2)$. (3) Finally, if $\eta$ has the option to
either take a transition after which $\Prob{}{\eta}{\phi \land
\psi} = p_1 + a$ and $\Prob{}{\eta}{\psi} = q_1 +a$ or one after
which $\Prob{}{\eta}{\phi \land \psi} = p_1$ and
$\Prob{}{\eta}{\psi} = q_1$, for some $a
> 0$, then a maximizing scheduler should always take the first of these two options.

\begin{lem}\label{lem:operators2}
  The operators $\rmult,\ \radd$, and $\rmax$ on $L$ induce
  operators $\rmult_2,\ \radd_2$, and $\rmax_2$ on $L_2$.
\end{lem}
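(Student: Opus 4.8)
The plan is to exploit that $L$ is an algebra with a single binary operation $\lor$, that $\equiv_1$ is a congruence for it, and that by Definition~\ref{dfn:equiv2} $\equiv_2$ is the \emph{smallest} congruence for $\lor$ containing $\equiv_1$ together with the three schemes (1)--(3). The recurring tool will be the elementary fact that, for any homomorphism $h\colon(L,\lor)\to(X,\star)$ of $\lor$-algebras, the relation $\{(x,x')\mid h(x)=h(x')\}$ is automatically a congruence for $\lor$; hence if $h$ in addition identifies the two sides of each generating pair of $\equiv_2$ --- the generators of $\equiv_1$ and every instance of (1), (2), (3) --- then $h$ factors through $f_2$. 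Applied with $X=L_2$ and $h=f_2\circ g$, this shows that a $\lor$-endomorphism $g$ of $L$ descends to an operation on $L_2$ precisely when $g(u)\equiv_2 g(v)$ for every such generating pair $(u,v)$; applied with $X=[0,\infty)$ it will handle $\rmax$ directly.

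First I would treat $\rmult$: for fixed $c\in[0,\infty)$ the map $x\mapsto c\rmult x$ is a $\lor$-endomorphism of $L$, it already descends along $f_1$ (Figure~\ref{fig:diagramas2}) so it respects $\equiv_1\subseteq\equiv_2$, and it respects (1)--(3) because for $c\ge 0$ one has $c\min(q_1,q_2)=\min(cq_1,cq_2)$, $c\max(p_1,p_2)=\max(cp_1,cp_2)$, and $ca\in[0,\infty)$; hence $\rmult$ induces $\rmult_2\colon[0,\infty)\times L_2\to L_2$. Next I would treat $\radd$: for fixed $y\in L$ the map $x\mapsto x\radd y$ is a $\lor$-endomorphism (indeed $(x_1\lor x_2)\radd y=(x_1\radd y)\lor(x_2\radd y)$ literally), it descends along $f_1$ (Figure~\ref{fig:diagramas2}), and it respects (1)--(3) because each relation is stable under forming the coordinatewise sum with a fixed pair $(p'_j,q'_j)$ (using that $\min$ and $\max$ commute with translation and that the constant $a$ in (3) is unchanged), so applying this for every summand of $y$ and using that $\equiv_2$ is a congruence gives $u\radd y\equiv_2 v\radd y$ for each generator. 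Thus every $\radd_y$ descends to $L_2$; since $x\radd y\equiv_1 y\radd x$, the induced map is also independent of the $\equiv_2$-class of $y$, yielding a well-defined $\radd_2\colon L_2\times L_2\to L_2$.

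Finally I would treat $\rmax$, which is the only part that is not pure bookkeeping. It is a homomorphism $(L,\lor)\to([0,\infty),\max)$ since $\rmax(x\lor y)=\max(\rmax(x),\rmax(y))$ straight from the definition, and it descends along $f_1$ (Definition~\ref{def:expresssion}), so it is constant on $\equiv_1$-classes; it then remains to check agreement on the two sides of (1), (2), (3), after which the homomorphism property forces $\equiv_2\subseteq\ker\rmax$ and produces $\rmax_2\colon L_2\to[0,\infty)$ with $\rmax_2\circ f_2=\rmax$. Relations (1) and (2) are immediate from $\max(p_1/q_1,p_1/q_2)=p_1/\min(q_1,q_2)$ and $\max(p_1/q_1,p_2/q_1)=\max(p_1,p_2)/q_1$ (with the degenerate subcases $q_i=0$, which by the constraint $q_i\ge p_i$ force $p_i=0$ and make both sides $0$). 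The substantive case is (3), and it is exactly here that the standing hypothesis $q_i\ge p_i$ is indispensable: by Remark~\ref{rem:incr} the map $a\mapsto(p_1+a)/(q_1+a)$ is nondecreasing, so $(p_1+a)/(q_1+a)\ge p_1/q_1$ and the left-hand maximum in (3) coincides with the right-hand value. I expect this verification of (3) for $\rmax$ to be the only real obstacle; the rest is a routine invocation of the universal property of $\equiv_2$ together with the already-established facts that $\rmult$, $\radd$, $\rmax$ descend along $f_1$.
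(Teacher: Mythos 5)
Your proof is correct and takes essentially the same route as the paper: both arguments reduce to checking that each of $\rmult$, $\radd$, $\rmax$ sends the two sides of the generating clauses (1)--(3) (and the $\equiv_1$-relations) to $\equiv_2$-equivalent expressions (equal values in the case of $\rmax$), using the same key facts --- compatibility of $\min$ and $\max$ with scaling and translation, the degenerate $q_i=0$ cases, and the monotonicity of $x\mapsto\frac{p+x}{q+x}$ for clause (3). The universal-property/kernel-of-a-homomorphism wrapper, and your explicit handling of the second argument of $\radd$ via commutativity up to $\equiv_1$, merely make explicit bookkeeping that the paper leaves implicit.
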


\begin{proof}
The idempotence, commutativity and associativity cases are
trivial; we only treat the other three cases.
\begin{itemize}
\item[($\rmult$\,)]{For (1) we have
\[\begin{array}{ccl}
c\rmult((p,q)\lor(p,q^\prime))&\eqdef&(c\cdot p, c\cdot q)\lor(c\cdot p, c\cdot q^\prime)\\
&\equiv&(c\cdot p, \min(c\cdot q, c\cdot q^\prime)\\
&=&(c\cdot p, c\cdot \min(q, q^\prime))\\
&\eqdef&c\rmult(p,\min(q,q^\prime))\\
\end{array}\]
\noindent Additionally, note that since $q\geq p$ and $q^\prime\geq p$ we
have $\min(q,q^\prime)\geq p$.

For (2) the proof goes like in (1). For (3) we have the following
\[\begin{array}{ccl}
c\rmult((p+a,q+a)\lor(p,q))&\eqdef&(c\cdot p + c\cdot a, c\cdot q+ c\cdot a)\lor (c\cdot p, c\cdot q)\\
&\equiv&(c\cdot p + c\cdot a, c\cdot q+ c\cdot a)\\
&\eqdef&c\rmult(p + a, q + a)\\
\end{array}\]
}
\item[($\radd$\,)] {For (1) we have
\[\begin{array}{ccl}
&&((p,q)\lor(p,q^\prime))\radd \bigvee_{i=1}^n (p_i,q_i)\\
&\eqdef&\bigvee_{i=1}^n (p+p_i,q+q_i)\lor\bigvee_{i=1}^n (p+p_i,q^\prime+q_i)\\
&\equiv&\bigvee_{i=1}^n ((p+p_i,q+q_i)\lor(p+p_i,q^\prime+q_i))\\
&\equiv&\bigvee_{i=1}^n (p+p_i,\min(q+q_i,q^\prime+q_i))\\
&=&\bigvee_{i=1}^n (p+p_i,\min(q,q^\prime)+q_i)\\
&\eqdef&(p,\min(q,q^\prime))\radd\bigvee_{i=1}^n (p_i,q_i)\\
\end{array}\]
For (2) the proof goes like in (1). For (3) we have the following
\[\begin{array}{ccl}
&&((p+a,q+a)\lor(p,q))\radd \bigvee_{i=1}^n (p_i,q_i)\\
&\eqdef&\bigvee_{i=1}^n (p+a+p_i,q+a+q_i)\lor\bigvee_{i=1}^n (p+p_i,q+q_i)\\
&=&\bigvee_{i=1}^n (p+a+p_i,q+a+q_i)\lor(p+p_i,q^\prime+q_i)\\
&\equiv&\bigvee_{i=1}^n (p+a+p_i,q+a+q_i)\\
&\eqdef&(p+a,q+a)\radd\bigvee_{i=1}^n (p_i,+q_i)\\
\end{array}\]
}
\item[($\rmax$\,)]{ For (1) we will start by assuming that
$q,q^\prime\not=0$. Then
\[\begin{array}{ccl}
&&\rmax\left((p,q)\lor(p,q^\prime)\lor\bigvee_{i=1}^n
(p_i,q_i)\right) \\
&\eqdef&\max\left(\{\frac{p}{q}\}\cup\{\frac{p}{q^\prime}\}\cup\{\frac{p_i}{q_i}|\forall_{ 1\leq i\leq n}. q_i\not=0\}\cup\{0\}\right)\\
&=&\max\left(\{\frac{p}{\min(q,q^\prime)}\}\cup\{\frac{p_i}{q_i}|\forall_{ 1\leq i\leq n}. q_i\not=0\}\cup\{0\}\right)\\
&\eqdef&\rmax\left((p, \min(q,q^\prime))\lor \bigvee_{i=1}^n(p_i,q_i)\right)\\
\end{array}\]
\noindent Now assume that $q=0,q^\prime\not=0$ and the case
$q\not=0,q^\prime=0$ is similar. Note that we now have that $p=0$.
Then
\[\begin{array}{ccl}
&&\rmax\left((p,q)\lor(p,q^\prime)\lor\bigvee_{i=1}^n(p_i,q_i)\right)
\\
&\eqdef&\max\left(\{\frac{0}{q^\prime}\}\cup\{\frac{p_i}{q_i}|\forall_{ 1\leq i\leq n}. q_i\not=0\}\cup\{0\}\right)\\
&=&\max\left(\{\frac{p_i}{q_i}|\forall_{ 1\leq i\leq n}. q_i\not=0\}\cup\{0\}\right)\\
&\eqdef&\rmax\left((p,0)\lor\bigvee_{i=1}^n(p_i,q_i)\right)\\
&=&\rmax\left((p,\min(q,q^\prime))\lor\bigvee_{i=1}^n(p_i,q_i)\right)\\
\end{array}\]
\noindent Finally, assume that $q=q^\prime=0$, then also $p=0$, so
\[\begin{array}{ccl}
\rmax\left((p,q)\lor(p,q^\prime)\lor\bigvee_{i=1}^n(p_i,q_i)\right)&\eqdef&\max\left(\{\frac{p_i}{q_i}|\forall_{ 1\leq i\leq n}. q_i\not=0\}\cup\{0\}\right)\\
&\eqdef&\rmax\left((p,0)\lor\bigvee_{i=1}^n(p_i,q_i)\right)\\
&=&\rmax\left((p,\min(q,q^\prime))\lor\bigvee_{i=1}^n(p_i,q_i)\right)\\
\end{array}\]
For (2) the proof goes like in (1). For (3) we first need the
following.

Let $f: \mathbb{R}\to\mathbb{R}$ be a function defined as
$f(x)\eqdef\frac{a+x}{b+x}$ where $a$ and $b$ are constants in the
interval $(0,1]$. Then $f$ is increasing. Let us now assume that
$q\not=0$ or $a\not= 0$. Then
\[\begin{array}{ccl}
&&\rmax\left((p+a,q+a)\lor(p,q))\lor\bigvee_{i=1}^n
(p_i,q_i)\right)\\
&\eqdef&\max\left(\{\frac{p+a}{q+a}\}\cup\{\frac{p}{q}\}\cup\{\frac{p_i}{q_i}|\forall_{ 1\leq i\leq n}. q_i\not=0\}\cup\{0\}\right)\\
&=&\max\left(\{\frac{p+a}{q+a}\}\cup\{\frac{p_i}{q_i}|\forall_{ 1\leq i\leq n}. q_i\not=0\}\cup\{0\}\right)\hspace{0.5cm}\\
& & \{\mbox{By obervation above about } f\}\\
&\eqdef&\rmax\left((p+a,q+a)\bigvee_{i=1}^n(p_i,q_i)\right)\\
\end{array}\]
\noindent Now assume that $q=a=0$. Then
\[\begin{array}{ccl}
&&\rmax\left((p+a,q+a)\lor(p,q))\lor\bigvee_{i=1}^n
(p_i,q_i)\right)\\
&\eqdef&\max\left(\{\frac{p_i}{q_i}|\forall_{ 1\leq i\leq n}. q_i\not=0\}\cup\{0\}\right)\\
&=&\rmax\left((p+a,0)\lor\bigvee_{i=1}^n(p_i,q_i)\right)\\
&\eqdef&\rmax\left((p+a,q+a)\lor\bigvee_{i=1}^n(p_i,q_i)\right)
\end{array}\]}
\end{itemize}\vspace{-0.95cm}\hfill\qedhere
\end{proof}

The fact that $\rmax(\delta)=\rmax(\hat{\delta})$ follows from
previous lemma.

Finally, the following theorem provides recursive equations for
the values of $\hat{\delta}^\U_s$ and $\hat{\delta}^\Glo_s$. If
the \MDPs is acyclic, it can be used to compute these values.

\begin{thm}
  \label{thm:algorithm}
   Let $\Pi$ be a $\MDP$, $s\in S$, and
  $\Until{\phi_1}{\phi_2},\Until{\psi_1}{\psi_2},\Glo \psi_1\in \Path$.
  Then $\hat{\delta}^{\Until{}{}}_s(\phi_1\U\phi_2\mid\psi_1\U\psi_2) =$
  \[\left\lbrace\begin{array}{ll}
    (\Prob{+}{s}{\Until{\psi_1}{\psi_2}},\Prob{+}{s}{\Until{\psi_1}{\psi_2}})& \mbox{if } s\models \phi_2,\\
    (\Prob{+}{s}{\Until{\phi_1}{\phi_2}},1)& \mbox{if } s\models \lnot \phi_2 \!\land\! \psi_2,\\
    (0,\Prob{-}{s}{\Until{\psi_1}{\psi_2}})& \mbox{if } s\models \lnot \phi_1\!\land\! \lnot \phi_2 \!\land\! \lnot \psi_2,\\
    (0,0)& \mbox{if } s\models \phi_1\!\land\! \lnot \phi_2 \!\land\! \lnot \psi_1\!\land\! \lnot \psi_2, \\
    {\displaystyle \bigvee_{\pi\in \tau(s)}}\left(\quad{\displaystyle \probsumD{{t\in \suc(s)}}}\pi(t)\rmult \hat{\delta}^{\Until{}{}}_t(\phi_1\U\phi_2\mid\psi_1\U\psi_2)\right)\!\!\!\!&\mbox{if } s\models \phi_1 \!\land\! \lnot \phi_2 \!\land\! \psi_1\!\land\! \lnot \psi_2,
\end{array}\right.\]
and $\hat{\delta}^{\Glo}_s(\phi_1\U\phi_2\mid\G\psi_1) =$
$$\left\lbrace\begin{array}{ll}
  (\Prob{+}{s}{\Glo \psi_1},\Prob{+}{s}{\Glo \psi_1})& \mbox{if } s\models \phi_2, \\
  (0,0) & \mbox{if } s\models \lnot \phi_2\land \lnot \psi_1, \\
  (0,\Prob{-}{s}{\Glo \psi_1}) & \mbox{if } s\models \lnot \phi_1\land \lnot \phi_2\land \psi_1,\\
   {\displaystyle \bigvee_{\pi\in \tau(s)}}\left(\quad{\displaystyle \probsumD{t\in \suc(s)}}\pi(t)\rmult \hat{\delta}^{\Glo}_t(\phi_1\U\phi_2\mid\G\psi_1) \right) & \mbox{if }  s\models \phi_1\land \lnot \phi_2\land \psi_1.
\end{array}\right.$$
\end{thm}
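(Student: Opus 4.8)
The plan is to run a case analysis on the label of the state $s$ that mirrors the clauses of the statement, separating the \emph{base cases} (where $s$ satisfies the stopping condition $\varphi=\StopCond{\phi}\lor\StopCond{\psi}$) from the single \emph{recursive clause}. Two auxiliary facts are used throughout. First, for any pCTL reachability or safety formula $\chi$ the extremal values $\Prob{+}{s}{\chi}$ and $\Prob{-}{s}{\chi}$ are attained by HI deterministic schedulers \cite{ba_1995_probabilistic}, which are in particular $\varphi$-$\nHI$; hence the corresponding suprema and infima over $\Sch_s^\varphi(\Pi)$ are attained and the extremal pairs genuinely occur in the join defining $\delta$. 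Second, $\rmult$ and $\radd$ descend to $L_2$ (Lemma~\ref{lem:operators2}), so $f_2$ commutes with the $\rmult$-and-$\radd$ expression on the right-hand side, which is what lets me transfer an identity for $\delta$ into the desired identity for $\hat{\delta}=f_2\circ\delta$.

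For the base cases I would observe that the label of $s$ trivialises one of the events $\{\omega\models\phi\}$, $\{\omega\models\psi\}$ (forcing it to be all of $\Paths{s}$ or empty), after which the pairs $(\Prob{}{s,\eta}{\phi\land\psi},\Prob{}{s,\eta}{\psi})$ lie in a one-parameter family that $\equiv_2$ (Definition~\ref{dfn:equiv2}) collapses to a single pair. If $s\models\phi_2$ then $\phi_1\U\phi_2$ holds on every path, so every pair has the form $(q,q)$ with $q=\Prob{}{s,\eta}{\psi}$ and clause~(3) of $\equiv_2$ collapses the join to $(\Prob{+}{s}{\psi},\Prob{+}{s}{\psi})$. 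If $s\models\lnot\phi_2\land\psi_2$ then $\psi_1\U\psi_2$ holds on every path, every pair is $(\Prob{}{s,\eta}{\phi_1\U\phi_2},1)$, and clause~(2) collapses the join to $(\Prob{+}{s}{\phi_1\U\phi_2},1)$. If $\phi_1\U\phi_2$ is falsified at $s$ (the clauses carrying $\lnot\phi_1\land\lnot\phi_2$) the first coordinate is uniformly $0$, so every pair is $(0,\Prob{}{s,\eta}{\psi})$ and clause~(1) collapses the join to $(0,\Prob{-}{s}{\psi})$. If $\psi$ is falsified at $s$ (the clauses carrying $\lnot\psi_1\land\lnot\psi_2$, resp.\ $\lnot\psi_1$ in the $\Glo$ case) both coordinates are uniformly $0$, giving $(0,0)$. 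A short check confirms that these conditions plus the recursive one are a complete case split into $s\models\varphi$ versus $s\not\models\varphi$; the $\Glo$ clauses are entirely analogous with $\Glo\psi_1$ in the role of $\psi_1\U\psi_2$.

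For the recursive clause I would first note that $s\not\models\varphi$ there, and that since $\phi_1$ and $\psi_1$ hold at $s$ while $\phi_2$ and $\psi_2$ do not, each of $\{\omega\models\phi\}$, $\{\omega\models\psi\}$ and $\{\omega\models\phi\land\psi\}$ seen from $s$ is precisely the one-step shift of the corresponding event seen from a successor. Hence, for $\chi\in\{\phi\land\psi,\ \psi\}$ and every scheduler $\eta$, $\Prob{}{s,\eta}{\chi}=\sum_{\pi\in\tau(s)}\eta(s)(\pi)\sum_{t\in\suc(s)}\pi(t)\,\Prob{}{t,\eta^{s\to t}}{\chi}$, where $\eta^{s\to t}$ is the residual scheduler, itself $\varphi$-$\nHI$. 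Taking the join over $\eta\in\Sch_s^\varphi(\Pi)$, and using the lemma that convex combinations never strictly increase $\CP{}{}{\phi}{\psi}$ (so deterministic choices of $\pi$ at $s$ suffice) together with the lemma that $\varphi$-$\nHI$ schedulers are convex combinations of deterministic $\varphi$-$\nHI$ ones, I would obtain --- modulo the reconciliation discussed below --- $\delta(s,\varphi,\phi,\psi)\equiv_1\bigvee_{\pi\in\tau(s)}\probsumD{t\in\suc(s)}\pi(t)\rmult\delta(t,\varphi,\phi,\psi)$; applying $f_2$ and Lemma~\ref{lem:operators2} then yields the two recursive equations of the statement.

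The step I expect to be the real obstacle is the decomposition of $\varphi$-$\nHI$ schedulers along the first transition in the recursive clause. When $\Pi$ contains a ``diamond'' --- a state reachable from $s$ through two distinct successors without meeting $\varphi$ --- the residual schedulers on the successor subtrees are not independent (they must agree on shared pre-$\varphi$ states), so the right-hand join a priori ranges over more pairs than $\delta(s,\dots)$ does; randomised choices at $s$ cause a similar mismatch. The reconciliation, along the lines already sketched before the theorem, is to restrict attention to the canonical family $D$ of deterministic $\varphi$-$\nHI$ schedulers that after the stopping condition optimise a fixed pCTL formula --- for which the first-transition decomposition genuinely is a product --- and to verify that $f_2$ already identifies $\delta$ built from $D$ with $\delta$ built from all of $\Sch_s^\varphi(\Pi)$, i.e.\ that every surplus pair is $\equiv_2$-absorbed. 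That identification, together with the routine ``one-step shift'' verification in each temporal subcase, is where the bulk of the effort lies; the base cases and the purely algebraic steps are otherwise immediate.
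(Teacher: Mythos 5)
Your proposal is correct and follows essentially the same route as the paper's own proof: a case analysis mirroring the clauses, with the four base cases collapsed by clauses (1)--(3) of $\equiv_2$ exactly as you describe, and the recursive clause obtained by a one-step decomposition of the semi history-independent schedulers into a choice of $\pi\in\tau(s)$ followed by schedulers of the successors, using Lemma~\ref{lem:operators2} to push everything through $f_2$. The ``diamond'' reconciliation you single out as the real obstacle is precisely the step the paper dispatches with the one-line justification ``since $\Pi$ is acyclic'' (together with its informal remarks about the family $D$ before the theorem), so you are not missing anything the published argument actually supplies --- you are merely more explicit about where the effort lies.
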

\begin{proof} We will consider the case $\hat{\delta}^{\Until{}{}}_s$. We will use $\varphi$ to denote
$\phi_1\land\lnot\phi_2\land\psi_1\land\lnot\land\psi_2$, i.e.,
the stopping condition of \cpCTL formula under consideration.
\begin{itemize}
\item[(a)]{ Note that if $s\models \phi_2$, then semi HI schedulers are
exactly the HI schedulers, i.e.,
$\Sch_s^\varphi(\Pi)=\Sch_s^{\HI}(\Pi)$. \vspace{-0.7cm}
\begin{align*}
&\hat{\delta}^{\Until{}{}}_s(\phi_1\U\phi_2\mid\psi_1\U\psi_2)\\
=&\bigvee_{\eta\in
\Sch^{\varphi}_s(\Pi)}\left(\Prob{}{s,\eta}{\Until{\phi_1}{\phi_2}\land
\Until{\psi_1}{\psi_2}},\Prob{}{s,\eta}{\Until{\psi_1}{\psi_2}}\right)\\
&\{s\models\phi_2\}\\
=&\bigvee_{\eta\in \Sch^\varphi_s(\Pi)}\left(\Prob{}{s,\eta}{\Until{\psi_1}{\psi_2}},\Prob{}{s,\eta}{\Until{\psi_1}{\psi_2}}\right)\\
=&(\Prob{+}{s}{\Until{\psi_1}{\psi_2}},\Prob{+}{s}{\Until{\psi_1}{\psi_2}})\hspace{0.5cm}\{\mbox{Case (3)}\}\\
\end{align*}
}
\item[(b)]{\
\vspace{-0.7cm}
\begin{align*}
&\hat{\delta}^{\Until{}{}}_s(\phi_1\U\phi_2\mid\psi_1\U\psi_2)\\
=&\bigvee_{\eta\in
\Sch^{\varphi}_s(\Pi)}\left(\Prob{}{s,\eta}{\Until{\phi_1}{\phi_2}\land
\Until{\psi_1}{\psi_2}},\Prob{}{s,\eta}{\Until{\psi_1}{\psi_2}}\right)\\
&\{s\models\psi_2\}\\
=&\bigvee_{\eta\in \Sch^\varphi_s(\Pi)}\left(\Prob{}{s,\eta}{\Until{\phi_1}{\phi_2}},\Prob{}{s,\eta}{\true}\right)\\
&\{\mbox{Case (2) and definition of }\Prob{}{}{\true} \}\\
=&(\Prob{+}{s}{\Until{\phi_1}{\phi_2}},1)\\
\end{align*}}
\item[(c)]{\
\vspace{-0.7cm}
\begin{align*}
&\hat{\delta}^{\Until{}{}}_s(\phi_1\U\phi_2\mid\psi_1\U\psi_2)\\
=&\bigvee_{\eta\in\Sch^{\varphi}_s(\Pi)}\left(\Prob{}{s,\eta}{\Until{\phi_1}{\phi_2}\land
\Until{\psi_1}{\psi_2}},\Prob{}{s,\eta}{\Until{\psi_1}{\psi_2}}\right)\\
&\{s\models\lnot\phi_1\land\lnot\phi_2\land\lnot\psi_2\}\\
=&\bigvee_{\eta\in \Sch^\varphi_s(\Pi)}\left(\Prob{}{s,\eta}{\false},\Prob{}{s,\eta}{\Until{\psi_1}{\psi_2}}\right)\\
&\{\mbox{Case (1) and definition of }\Prob{}{}{\false} \}\\
=&(0,\Prob{+}{s}{\Until{\psi_1}{\psi_2}})\\
\end{align*}
}

\item[(d)]{\
\vspace{-0.7cm}
\begin{align*}
&\hat{\delta}^{\Until{}{}}_s(\phi_1\U\phi_2\mid\psi_1\U\psi_2)\\
=&\bigvee_{\eta\in
\Sch^{\varphi}_s(\Pi)}\left(\Prob{}{s,\eta}{\Until{\phi_1}{\phi_2}\land
\Until{\psi_1}{\psi_2}},\Prob{}{s,\eta}{\Until{\psi_1}{\psi_2}}\right)\\
&\{\mbox{Since } s\models\lnot\phi_1\land\lnot\phi_2\land\psi_1\land\lnot\psi_2\}\\
=&\bigvee_{\eta\in \Sch^\varphi_s(\Pi)}\left(\Prob{}{s,\eta}{\false},\Prob{}{s,\eta}{\false}\right)= (0,0)\\
%
%
\end{align*}}
\vspace{-0.95cm}
\item[(e)]{\
\vspace{-0.8cm}
\begin{align*}
&\hat{\delta}^{\Until{}{}}_s(\phi_1\U\phi_2\mid\psi_1\U\psi_2)\\
\!\!=\!\!&\bigvee_{\eta\in \Sch^{\varphi}_s(\Pi)}\left(\Prob{}{s,\eta}{\Until{\phi_1}{\phi_2}\land\Until{\psi_1}{\psi_2}},\Prob{}{s,\eta}{\Until{\psi_1}{\psi_2}}\right)\\
\!\!=\!\!&{\displaystyle\bigvee_{\eta\in \Sch^{\varphi}_s(\Pi)}}\left({\displaystyle\sum_{t\in \suc(s)}}\eta(s)(t)\rmult\left(\Prob{}{t,\eta}{\Until{\phi_1}{\phi_2}\land\Until{\psi_1}{\psi_2}},\Prob{}{t,\eta}{\Until{\psi_1}{\psi_2}}\right)\right)\smallskip\\
\!\!=\!\!&\{\mbox{Since }\Pi \mbox{ is acyclic}\}\\
&{\displaystyle\bigvee_{\pi\in \tau(s)}}\!\!\left({\quad\displaystyle\probsumD{t\in \suc(s)}}\!\!\pi(t)\rmult \!\!\!\!\!\!{\displaystyle\bigvee_{\eta_t\in \Sch^\varphi_t(\Pi)}}\!\!\!\!\!\!\!\left(\Prob{}{t,\eta_t}{\Until{\phi_1}{\phi_2}\land\Until{\psi_1}{\psi_2}},\Prob{}{t,\eta_t}{\Until{\psi_1}{\psi_2}}\right)\right)\smallskip\\
\!\!=\!\!&{\displaystyle\bigvee_{\pi\in \tau(s)}}\left(\quad\displaystyle{\probsumD{t\in \suc(s)}}\pi(t)\rmult \hat{\delta}^{\Until{}{}}_t\right)\qedhere\\
\end{align*}
}
\end{itemize}
\end{proof}

\subsubsection{From $\MDPs$ to Acyclic $\MDPs$\index{markov decision process!acyclic}\index{scc analysis}}\label{sec:extension}

Now, we show how to reduce a \MDP with cycles to an acyclic one,
thus generalizing our results to \MDPs with cycles. For that
purpose we first reduce all cycles in $\Pi$ and create a new
acyclic $\MDP$ $\Reduce{\Pi}{}$ such that the probabilities
involved in the computation of $\CP{+}{}{-}{-}$ are preserved.  We
do so by removing every strongly connected component ($\SCC$) $k$
of (the graph of) $\Pi$, keeping only input states and transitions
to output states (in the spirit of \cite{Andres:08:HVC}). We show
that $\CP{+}{}{-}{-}$ on $\Reduce{\Pi}{}$ is equal to the
corresponding value on $\Pi$. For this, we have to make sure that
states satisfying the stopping condition are ignored when removing
$\SCC$s.


\paragraph{(1) Identifying $\SCC$s.}

Our first step is to make states satisfying the stopping condition
absorbing.

\begin{dfn}
  Let $\Pi=(S,s_0, \tau,L)$ be a $\MDP$ and $\varphi\in \Stat$ a state
  formula. We define a new $\MDP$
  $\abs{\Pi}{\varphi}=(S,s_0,\abs{\tau}{\varphi}, L)$ where
  $\abs{\tau}{\varphi}(s)$ is equal to $\tau(s)$ if $s\not \models
  \varphi$ and to $1_s$ otherwise.
\end{dfn}

To recognize cycles in the $\MDP$ we define a graph associated to
it.

\begin{dfn}
  Let $\Pi=(S,s_0,\tau,L)$ be $\MDP$ and $\varphi\in \Stat$.
  We define the digraph $G = G_{\Pi,\varphi} = (S,\to)$ associated to
  $\abs{\Pi}{\varphi}=(S,s_0,\abs{\tau}{\varphi},L)$ where
  $\rightarrow$ satisfies $u\to v \Leftrightarrow
  \exists\pi\in\abs{\tau}{\varphi}(u).\pi(v)>0$.
\end{dfn}


\begin{wrapfigure}{r}{3.5cm}
\vspace{-0.5cm}
\includegraphics[width=3.25cm]{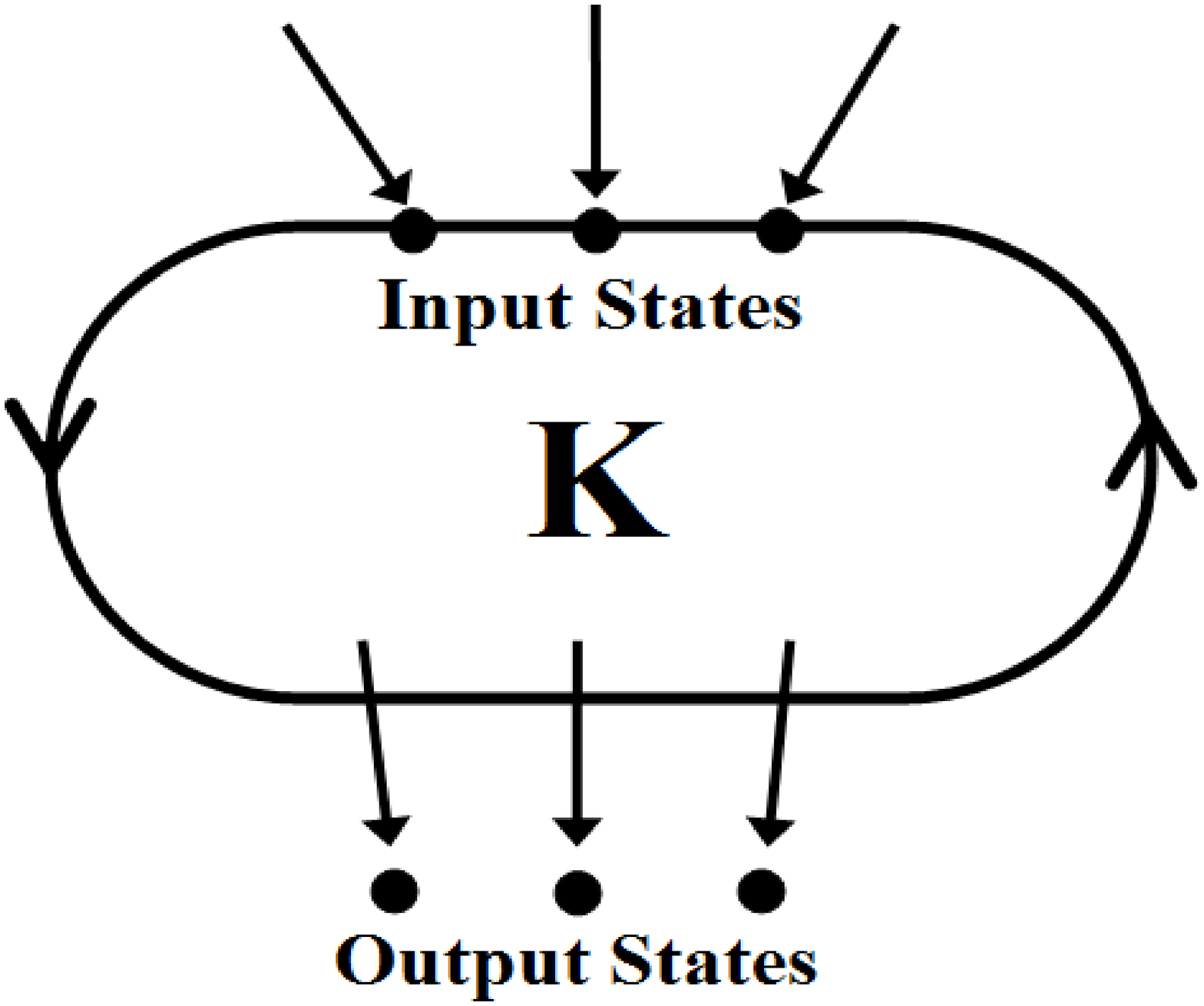}
\vspace{-0.25cm}
\end{wrapfigure}
\noindent Now we let $\SCC = \SCC_{\Pi,\varphi} \subseteq\wp(S)$
be the set of $\SCC$ of $G$. For each $\SCC$ $k$ we define the
sets $Inp_k$ of all states in $k$ that have an incoming transition
of $\Pi$ from a state outside of $k$; we also define the set
$Out_k$ of all states outside of $k$ that have an incoming
transition from a state of $k$.
Formally, for each $k \in \SCC$ we define
\begin{align*}
Inp_{k}&\eqdef\{u\in k\mid\exists\,s\in S\setminus\{k\}\text { such that } 
(s,u)\in\varrho\},\\
Out_{k}&\eqdef\{s\in S\setminus\{k\}\mid\exists\,u\in k \text{ such that } (u,s)\in\varrho\}.
\end{align*}
\noindent where $\varrho$ is the successor relation defined in Section \ref{sec:mdp}.

\noindent We then associate a $\MDP$ $\Pi_k$ to each $\SCC$ $k$ of
$G$. The space of states of $\Pi_k$ is $k\cup Out_k$ and the
transition relation is induced by the transition relation of
$\Pi$.

\begin{dfn} Let $\Pi$ be a $\MDP$ and $k\in \SCC$ be a SCC in $\Pi$.
  We pick an arbitrary element $s_k$ of $Inp_k$ and define the $\MDP$
  $\Pi_k=(S_k,s_k,\tau_k,L)$ where $S_k=\{k\}\cup Out_{k}$ and $\tau_k(s)$
  is equal to $\{1_s\}$ if $s\in Out_k$ and to $\tau(s)$ otherwise.
\end{dfn}

\paragraph{(2) Constructing an acyclic $\MDP$.}

To obtain a reduced acyclic $\MDP$ from the original one we first
define the probability of reaching one state from another
according to a given $\HI$ scheduler in the following way.

\begin{dfn} Let $\Pi=(S,s_0,\tau,L)$ be a $\MDP$, and $\eta$ be a
  $\HI$ scheduler on $\Pi$. Then for each $s,t\in S$ we define the
  function $R$ such that
  $\ReachState{\Pi}{s}{\eta}{t}\triangleq\PP_{s,\eta}(\{\omega\in\Paths{s}\mid\exists\,i.\omega_i=t\})$.
\end{dfn}

\noindent \rev{We note that such reachability values can be efficiently computed  using steady-state analysis techniques~\cite{cassandras_1993_steadystateanalysis}.}

\noindent Now we are able to define an acyclic $\MDP$
$\Reduce{\Pi}{}$ related to $\Pi$ such that
$\CP{+}{\Reduce{\Pi}{}} {-}{-}=\CP{+}{\Pi} {-}{-}$.

\begin{dfn}
  Let $\Pi=(S,s_0,\tau,L)$ be a $\MDP$. Then we define
  $\Reduce{\Pi}{}$ as $(\Reduce{S}{},s_0,\Reduce{\tau}{},L)$ where
\[
    {\Reduce{S}{} = \stackrel{S_{com}}{\overbrace{S\setminus\bigcup_{k\in \SCC}k}}
    \;\cup\; \stackrel{S_{inp}}{\overbrace{\bigcup_{k\in \SCC} Inp_k}}}
\]
  and for all $s\in \Reduce{S}{}$ the set $\Reduce{\tau}{}(s)$ of
  probabilistic distributions on $\Reduce{S}{}$ is given by
\[
    \Reduce{\tau}{}(s) = \left\lbrace
      \begin{array}{ll}
        \tau(s) & \mbox{if } s\in S_{com},\\
        \{\lambda\in\Reduce{S}{}. \ReachState{\Pi_{k_s}}{s}{\eta}{t})\mid\eta\in \Sch^{\HI}_s(\Pi_{k_s})\}&\mbox{if } s\in S_{inp}.\\
      \end{array}
    \right.
\]
  Here $k_s$ is the $\SCC$ associated to $s$.
\end{dfn}


\begin{thm}
  Let $\Pi=(S,s_0,\tau,L)$ be a $\MDP$, and $\CP{}{\leq a}
  {\phi}{\psi} \in \cpCTL$. Then $\Reduce{\Pi}{}$ is an acyclic $\MDP$ and
  $
    \CP{+}{s_0,\Pi}{\phi}{\psi} = \CP{+}{s_0,\Reduce{\Pi}{}}{\phi}{\psi},
  $
  where $\CP{+}{s,\Pi^\prime}{-}{-}$ represents $\CP{+}{s}{-}{-}$ on
  the $\MDP$ $\Pi^\prime$.
\end{thm}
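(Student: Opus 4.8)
The plan is to establish the two halves of the statement separately: that $\Reduce{\Pi}{}$ is acyclic, and that it carries the same maximal conditional probability from $s_0$. By Observation~\ref{lem:modelchecker} I may take $\phi=\Until{\phi_1}{\phi_2}$ and $\psi$ equal to $\Until{\psi_1}{\psi_2}$ or to $\Glo\psi_1$, and write $\varphi=\StopCond{\phi}\lor\StopCond{\psi}$. The whole argument will in fact show that $\Pi$ and $\Reduce{\Pi}{}$ realize, from $s_0$, the same set of pairs $\bigl(\Prob{}{\eta}{\phi\land\psi},\Prob{}{\eta}{\psi}\bigr)$ over deterministic $\varphi$-$\nHI$ schedulers; the equality of the maxima (and the trivial case in which no scheduler gives $\psi$ positive probability, where both maxima are $0$ by convention) then follows via Theorem~\ref{theor:niceSchedulers}.

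For acyclicity I would argue from the condensation of $G_{\Pi,\varphi}$, which is a DAG. Every $\varphi$-state is absorbing in $\abs{\Pi}{\varphi}$ and hence forms a self-looping component; so no state of a \emph{nontrivial} component satisfies $\varphi$, and every such state satisfies $\phi_1\land\lnot\phi_2\land\psi_1$ (and $\lnot\psi_2$ in the until case). In $\Reduce{\Pi}{}$ each nontrivial component $k$ has been replaced by its entry states $Inp_k$, whose distributions in $\Reduce{\tau}{}$ are by construction supported on $Out_k$, i.e. strictly forward in the condensation order; the surviving states of $S_{com}$ lie in trivial components, hence have $\pi(s)=0$ for every $\pi\in\tau(s)$ and keep their forward-moving $\Pi$-transitions; and the $\varphi$-states stay absorbing. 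Thus every $\pi\in\Reduce{\tau}{}(s)$ has $\pi(s)\in\{0,1\}$ and every state-repetition along a path of $\Reduce{\Pi}{}$ is confined to an absorbing state, which is precisely the definition of an acyclic $\MDP$.

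The core is a probability-preserving correspondence between schedulers. I would first record two facts. (i) Because no state of a nontrivial component $k$ satisfies $\varphi$, a sojourn of a $\Pi$-path inside $k$ cannot validate or falsify $\phi$ or $\psi$; hence the map that deletes such sojourns sends a $\Pi$-path to a $\Reduce{\Pi}{}$-path while preserving the events ``$\phi$'' and ``$\psi$''. (ii) A deterministic $\varphi$-$\nHI$ scheduler is history independent on every prefix that has not yet met $\varphi$ — in particular on all prefixes confined to a component — and, by Lemma~\ref{lem:HIafter}, may be assumed history independent after $\varphi$ too, acting there as an optimizer of the $\pCTL$ quantities in the base cases of Theorem~\ref{thm:algorithm}. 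From these: given a deterministic $\varphi$-$\nHI$ scheduler $\eta$ of $\Pi$, its restriction to each $\Pi_k$ is an $\HI$ scheduler, so from $s\in Inp_k$ its exit distribution is $\lambda t.\ReachState{\Pi_k}{s}{\eta}{t}\in\Reduce{\tau}{}(s)$; the scheduler $\eta'$ of $\Reduce{\Pi}{}$ that copies $\eta$'s (history-independent) choices on $S_{com}$ and takes these exit distributions on the $Inp_k$ is well defined and, by (i), satisfies $\Prob{}{\eta'}{\phi\land\psi}=\Prob{}{\eta}{\phi\land\psi}$ and $\Prob{}{\eta'}{\psi}=\Prob{}{\eta}{\psi}$. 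Conversely, given a deterministic $\varphi$-$\nHI$ scheduler $\eta'$ of $\Reduce{\Pi}{}$ (which is acyclic, so Theorem~\ref{theor:niceSchedulers} applies), I would \emph{expand} it to a scheduler $\eta$ of $\Pi$: outside components $\eta$ mimics $\eta'$ along the collapsed path, and on entering $k$ at $s$ it follows some $\HI$ scheduler of $\Pi_k$ realizing $\eta'(s)(\cdot)\in\Reduce{\tau}{}(s)$, remembering $s$ until $k$ is left — so that again by (i) the pair $\bigl(\Prob{}{}{\phi\land\psi},\Prob{}{}{\psi}\bigr)$ is the same under $\eta$ and $\eta'$. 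Together these show the two realizable sets of pairs coincide, giving $\CP{+}{s_0,\Pi}{\phi}{\psi}=\CP{+}{s_0,\Reduce{\Pi}{}}{\phi}{\psi}$.

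I expect the main obstacle to be discharging (i) and (ii) together with the measure-bookkeeping in both directions of the correspondence — in particular, verifying that the in-component reachability distributions $\ReachState{\Pi_k}{s}{\cdot}{\cdot}$ (computable by steady-state analysis) range over \emph{exactly} the exit behaviours available inside $k$ to schedulers that are history independent before the stopping condition, and handling the degenerate case of $\psi=\Glo\psi_1$ with a component that can never be left (all of whose states satisfy $\phi_1\land\lnot\phi_2\land\psi_1$): such a component acts as a sink contributing nothing to $\Prob{}{}{\phi\land\psi}$ and its $\Glo\psi_1$-probability to $\Prob{}{}{\psi}$, and must be folded in as an extra base case.
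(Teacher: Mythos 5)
Your proposal is correct and takes essentially the same route as the paper: the paper's own proof of this theorem is a single sentence asserting that the result ``follows straightforwardly by the construction of $\Reduce{\Pi}{}$ and Theorem~\ref{theor:niceSchedulers}'', which is exactly the scheduler-correspondence argument you spell out (collapse/expand SCC sojourns, which cannot meet the stopping condition, and then invoke optimality of deterministic $\varphi$-$\nHI$ schedulers). The extra detail you provide (acyclicity via the condensation, the two-way translation of schedulers, and the degenerate case of a never-exited component under $\Glo\psi_1$) goes beyond what the paper records, and the technical caveats you flag are likewise left unaddressed in the paper's one-line proof.
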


\begin{proof}
The proof follows straightforwardly by the construction of
$\Reduce{\Pi}{}$ and Theorem \ref{theor:niceSchedulers}.\qedhere
\end{proof}

\noindent Finally we can use the technique for acyclic \MDPs on
the reduced $\MDP$ in order to obtain $\CP{+}{s_0}{-}{-}$.
%

\subsection{Complexity}\label{sec:complexity}

As mentioned before, when computing maximum or minimum conditional
probabilities it is not possible to locally optimize. Therefore,
it is necessary to carry on, for each deterministic and HI
scheduler $\eta$, the pair of probabilities
$(\Prob{}{\eta}{\phi\land\psi},\Prob{}{\eta}{\psi})$ from the
leafs (states satisfying the stopping condition) to the initial
state. As the number of HI schedulers in a \MDP grows
exponentially on the state space, our algorithm to verify \cpCTL
formulas has exponential time complexity.

We believe that the complexity of computing optimal conditional probabilities is intrinsically exponential, i.e. computing such probabilities is an NP problem. However, a deeper study on this direction is still missing, 

\paragraph{Conditional probability bounds}

Even if computing exact conditional
probabilities is computationally expensive (exponential time), it
is still possible to efficiently compute upper and lower bounds for such probabilities (polynomial time).

\begin{observation} Let $\Pi$ be a $\MDP$ and
  $\phi, \psi$ two path $\pCTL$ formulas. Then we have
\[\frac{\Prob{-}{}{\phi\land\psi}}{1-\Prob{-}{}{\psi}}\leq\CP{+}{}{\phi}{\psi}\leq \frac{\Prob{+}{}{\phi\land\psi}}{1-\Prob{+}{}\psi}.\]
\end{observation}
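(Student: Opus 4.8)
The claim follows almost immediately from the definition of $\CP{}{\eta}{\phi}{\psi}$ as the ratio $\Prob{}{\eta}{\phi\land\psi}/\Prob{}{\eta}{\psi}$. The plan is to \emph{avoid} optimizing numerator and denominator jointly over schedulers --- this is exactly the hard, non-local problem dealt with in the rest of this section, cf.\ Example~\ref{ex:decision} --- and instead to sandwich the ratio using the cruder but \emph{scheduler-uniform} unconditional optima. Concretely, for every scheduler $\eta$ one has $\Prob{-}{}{\phi\land\psi}\le\Prob{}{\eta}{\phi\land\psi}\le\Prob{+}{}{\phi\land\psi}$ and $\Prob{-}{}{\psi}\le\Prob{}{\eta}{\psi}\le\Prob{+}{}{\psi}$, and I will combine these with the monotonicity of $x\mapsto c/x$ on $(0,\infty)$. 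I will also use the identities $\Prob{+}{}{\lnot\psi}=1-\Prob{-}{}{\psi}$ and $\Prob{-}{}{\lnot\psi}=1-\Prob{+}{}{\psi}$ --- the unconditional analogue of Lemma~\ref{lem:relMaxMinProb} (Lemma~6 of \cite{ba_1995_probabilistic}), immediate since $\Prob{}{\eta}{\lnot\psi}=1-\Prob{}{\eta}{\psi}$ for each fixed $\eta$, so that suprema become infima --- in order to put the denominators in the stated form.

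The argument itself is then one line. Assume $\Prob{+}{}{\psi}>0$ (otherwise the conditional probability is undefined and there is nothing to bound); then the class of schedulers $\eta$ with $\Prob{}{\eta}{\psi}>0$ is non-empty, and for every such $\eta$
\[
\frac{\Prob{-}{}{\phi\land\psi}}{\Prob{+}{}{\psi}}
\;\le\;\frac{\Prob{}{\eta}{\phi\land\psi}}{\Prob{}{\eta}{\psi}}
\;=\;\CP{}{\eta}{\phi}{\psi}
\;\le\;\frac{\Prob{+}{}{\phi\land\psi}}{\Prob{-}{}{\psi}},
\]
where the left inequality combines $\Prob{}{\eta}{\phi\land\psi}\ge\Prob{-}{}{\phi\land\psi}$ with $\Prob{}{\eta}{\psi}\le\Prob{+}{}{\psi}$, and the right one combines $\Prob{}{\eta}{\phi\land\psi}\le\Prob{+}{}{\phi\land\psi}$ with $\Prob{}{\eta}{\psi}\ge\Prob{-}{}{\psi}$. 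Taking the supremum over all such $\eta$ replaces the middle term by $\CP{+}{}{\phi}{\psi}$ by definition, giving the two bounds; rewriting $\Prob{+}{}{\psi}=1-\Prob{-}{}{\lnot\psi}$ and $\Prob{-}{}{\psi}=1-\Prob{+}{}{\lnot\psi}$ puts them in the displayed shape. (If $\Prob{-}{}{\psi}=0$ the upper bound is $+\infty$ and the right inequality is vacuous; this is the only degenerate case.)

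There is no genuine obstacle here --- the statement is a reformulation of the definitions --- so the only thing that needs attention is the bookkeeping around the fact that the optimizations defining $\CP{+}{}{\phi}{\psi}$ (and $\CP{-}{}{\phi}{\psi}$) range over the restricted class $\{\eta\mid\Prob{}{\eta}{\psi}>0\}$ rather than over all schedulers, which is handled above by observing this class is non-empty precisely when $\Prob{+}{}{\psi}>0$, together with the zero-denominator case just noted. One further remark worth recording, since the observation is motivated by the search for \emph{efficiently} computable bounds: $\phi\land\psi$ is a conjunction of path formulas rather than a single \pCTL\ path formula, so the quantities $\Prob{+}{}{\phi\land\psi}$ and $\Prob{-}{}{\phi\land\psi}$ --- hence the bounds --- are obtained via the usual product-automaton construction and are computable in time polynomial in the size of $\Pi$ (though not, a priori, in the size of the formula).
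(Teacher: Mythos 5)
Your argument is correct, and since the paper states this observation without proof, the elementary sandwich you give --- bounding $\Prob{}{\eta}{\phi\land\psi}$ and $\Prob{}{\eta}{\psi}$ separately, for each scheduler $\eta$ with $\Prob{}{\eta}{\psi}>0$, by their scheduler-free optima and then taking the supremum over that (non-empty) class --- is exactly the intended justification; there is no hidden difficulty beyond the degenerate cases you mention.

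One thing does need flagging, though: what you actually establish is
\[
\frac{\Prob{-}{}{\phi\land\psi}}{1-\Prob{-}{}{\lnot\psi}}\;\leq\;\CP{+}{}{\phi}{\psi}\;\leq\;\frac{\Prob{+}{}{\phi\land\psi}}{1-\Prob{+}{}{\lnot\psi}},
\]
i.e.\ with $\lnot\psi$ in the denominators (equivalently, denominators $\Prob{+}{}{\psi}$ and $\Prob{-}{}{\psi}$), whereas the observation as printed has $1-\Prob{-}{}{\psi}$ and $1-\Prob{+}{}{\psi}$. These are not the same, and as printed the claim is false: already in a Markov Chain (where minimal and maximal probabilities coincide), taking $\Prob{}{}{\psi}=\Prob{}{}{\phi\land\psi}=9/10$ gives conditional probability $1$ but printed lower bound $9$, and taking $\Prob{}{}{\psi}=\Prob{}{}{\phi\land\psi}=1/10$ gives printed upper bound $1/9<1$. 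So the printed denominators are a typo (the negations are missing --- compare Lemma~\ref{lem:counterexample}, whose denominator $1-\PP_{\eta}(\Delta_2)$ has $\Delta_2\subseteq\Delta_{\lnot\psi}$), and your final sentence, claiming the rewriting ``puts them in the displayed shape'', silently glosses over this; you should instead state the corrected inequality explicitly. Two minor further points: by the paper's definition of maximal conditional probability, $\CP{+}{}{\phi}{\psi}$ equals $0$ (it is not undefined) when no scheduler gives $\psi$ positive probability, and in that case the lower bound degenerates to $0/0$, so phrase that caveat in terms of the bound rather than the conditional probability; and your closing remark on how to compute $\Prob{+}{}{\phi\land\psi}$ and $\Prob{-}{}{\phi\land\psi}$ is sensible but tangential to the inequality itself.
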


\comment{Is this useful to prove strong anonymity????}

\section{Counterexamples for cpCTL\index{counterexample!for cpCTL}}
\label{sec:cpctl_counterexamples}

Counterexamples in model checking provide important diagnostic
information used, among others, for debugging, abstraction-refinement
\cite{clarke_2000_counterexampleguided}, and scheduler synthesis
\cite{LBB01:cav}.
For systems without probability, a counterexample typically consists
of a path violating the property under consideration.
Counterexamples in MCs are sets of paths. E.g, a counterexample
for the formula $\Prob{}{\leq a}{\phi}$ is a set $\Delta$ of
paths, none satisfying $\phi$, and such that the probability mass
of $\Delta$ is greater than
$a$~\cite{hk_2007_counterexamples,Andres:08:HVC,al_2006_search}.

In MDPs, we first have to find the scheduler achieving the optimal
probability. Both for $\pCTL$ and $\cpCTL$, this scheduler can be
derived from the algorithms computing the optimal
probabilities~\cite{Andres:08:HVC}.
Once the optimal scheduler is fixed, the $\MDP$ can be turned into
a Markov Chain \index{markov chain} and the approaches mentioned
before can be used to construct counterexamples for $\pCTL$. For
$\cpCTL$ however, the situation is slightly more complex. It
follows directly from the semantics that:
\begin{equation*}
  s \not\models \CP{}{\leq a}{\phi}{\psi}
  \hspace{0.3cm}\mbox{iff}\hspace{0.3cm}
  \exists \eta\in \Sch_s(\Pi).
  \frac{\PP_{s,\eta}(\{\omega\in\ \Paths{s}|\omega\models\phi\land\psi\})}
  {\PP_{s,\eta}(\{\omega\in\ \Paths{s}|\omega\models\psi\})}
  > a.
\end{equation*}

\begin{lem}\label{lem:counterexample}
  Let $a \in [0,1]$ and consider the formula $\CP{}{\leq
    a}{\phi}{\psi}$.  Let $\Delta_{\phi}\eqdef\{\omega\in \paths\mid
  \omega\models\phi\}$, $\Delta_1\subseteq \Delta_{\phi\land\psi}$,
  and $\Delta_2\subseteq \Delta_{\lnot\psi}$. Then $a <
  {\PP_{\eta}(\Delta_1)}/({1-\PP_{\eta}(\Delta_2)})$ implies $a <
  \CP{}{\eta}{\phi}{\psi}$.
\end{lem}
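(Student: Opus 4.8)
The plan is to reduce the claimed implication to three elementary monotonicity facts about the measure $\PP_\eta$ and then chain them. First I would unfold the definitions: by the notation fixed after the semantics of \cpCTL, $\CP{}{\eta}{\phi}{\psi}=\PP_\eta(\Delta_\phi\cap\Delta_\psi)/\PP_\eta(\Delta_\psi)$ where $\Delta_\psi\eqdef\{\omega\in\paths\mid\omega\models\psi\}$, so that $\Delta_{\phi\land\psi}=\Delta_\phi\cap\Delta_\psi$ and $\Delta_{\lnot\psi}=\paths\setminus\Delta_\psi$. Before the main argument I would dispose of the degenerate cases. If $\PP_\eta(\Delta_\psi)=0$ then $\PP_\eta(\Delta_{\phi\land\psi})=0$, hence $\PP_\eta(\Delta_1)=0$ since $\Delta_1\subseteq\Delta_{\phi\land\psi}$, so the hypothesis would read $a<0$, contradicting $a\in[0,1]$; thus in the case that actually matters $\PP_\eta(\Delta_\psi)>0$ and $\CP{}{\eta}{\phi}{\psi}$ is well defined. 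Likewise the hypothesis $a<\PP_\eta(\Delta_1)/(1-\PP_\eta(\Delta_2))$ forces $1-\PP_\eta(\Delta_2)>0$.

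Next I would establish the three inequalities. (i) Since $\Delta_1\subseteq\Delta_{\phi\land\psi}$, monotonicity of $\PP_\eta$ gives $\PP_\eta(\Delta_1)\le\PP_\eta(\Delta_{\phi\land\psi})$. (ii) Since $\Delta_2\subseteq\Delta_{\lnot\psi}=\paths\setminus\Delta_\psi$, we get $\PP_\eta(\Delta_2)\le 1-\PP_\eta(\Delta_\psi)$, i.e.\ $1-\PP_\eta(\Delta_2)\ge\PP_\eta(\Delta_\psi)>0$. (iii) Combining these, using that for a nonnegative numerator enlarging a strictly positive denominator can only shrink the fraction, $\frac{\PP_\eta(\Delta_1)}{1-\PP_\eta(\Delta_2)}\le\frac{\PP_\eta(\Delta_{\phi\land\psi})}{1-\PP_\eta(\Delta_2)}\le\frac{\PP_\eta(\Delta_{\phi\land\psi})}{\PP_\eta(\Delta_\psi)}=\CP{}{\eta}{\phi}{\psi}$.

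Finally I would chain the hypothesis with (iii) to obtain $a<\frac{\PP_\eta(\Delta_1)}{1-\PP_\eta(\Delta_2)}\le\CP{}{\eta}{\phi}{\psi}$, which is exactly the conclusion. There is no genuinely hard step here; the only thing requiring a little care — and the part I would write out most explicitly — is the bookkeeping of the degenerate cases (zero denominators) and checking that every division in (iii) is by a strictly positive quantity, so that the two monotonicity steps in that chain are valid.
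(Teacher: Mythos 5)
Your proposal is correct and follows essentially the same route as the paper's own proof: monotonicity of $\PP_\eta$ on the two inclusions $\Delta_1\subseteq\Delta_{\phi\land\psi}$ and $\Delta_2\subseteq\Delta_{\lnot\psi}$, followed by chaining the fraction inequalities to reach $\CP{}{\eta}{\phi}{\psi}$. Your explicit treatment of the degenerate (zero-denominator) cases is a small bit of extra care the paper leaves implicit, but it does not change the argument.
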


\begin{proof}
  We first note that
  $$\PP_{\eta}(\Delta_1)\leq\PP_\eta(\Delta_{\phi\land\psi}) \quad \text{    and    } \quad
  \PP_\eta(\Delta_2)\leq\PP_\eta(\Delta_{\lnot\psi}).$$ 
\noindent Then, it is easy to see that 
$$a <
  \frac{\PP_\eta(\Delta_1)}{1-\PP_\eta(\Delta_2)} \leq
  \frac{\PP_\eta(\Delta_{\phi\land\psi})}{1-\PP_\eta(\Delta_{\lnot\psi})}
  = \frac{\PP_\eta(\Delta_{\phi\land\psi})}{\PP_\eta(\Delta_{\psi})} =
  \CP{}{\eta}{\phi}{\psi}.\qedhere$$
\end{proof}

\noindent
This leads to the following notion of counterexample.

\begin{dfn}\label{dfn:counterexample}
  A \emph{counterexample}\index{counterexample!for cpCTL} for $\smash{\CP{}{\leq a}{\phi}{\psi}}$ is a pair
  $(\Delta_1,\Delta_2)$ of measurable sets of paths satisfying
  $\Delta_1 \subseteq \Delta_{\phi \land \psi}$, $\Delta_2 \subseteq
    \Delta_{\lnot\psi}$, and $a < {\PP_\eta(\Delta_1)}/({1 -
      \PP_\eta(\Delta_2)})$, for some scheduler $\eta$.
\end{dfn}

\noindent
Note that such sets $\Delta_1$ and $\Delta_2$ can be computed using
the techniques on Markov Chains mentioned above.

\medskip \exampleheader Consider the evaluation of
$\smash{s_0\models\CP{}{\leq 0.75}{\Fin B}{\Glo P}}$ on the $\MDP$
obtained by taking $\alpha=\frac{1}{10}$ in the MDP depictured in Figure~\ref{fig:notMaxNod}. The corresponding MDP is shown in Figure~2.7(a). In this case the maximizing scheduler, say
$\eta$, chooses $\pi_2$ in $s_2$. In Figure~2.7(b) we show the
Markov Chain \index{markov chain} derived from $\MDP$ using
$\eta$. In this setting we have $\CP{}{s_0, \eta}{\Fin  B}{\Glo
P}=\frac{68}{70}$ and consequently $s_0$ does not satisfy this
formula.

{\def\thesubfigure{\thefigure(\alph{subfigure})}
\begin{figure}[!h]
 \centering
  \subfigure[\MDP]{\label{fig:a}\includegraphics[width=4cm]{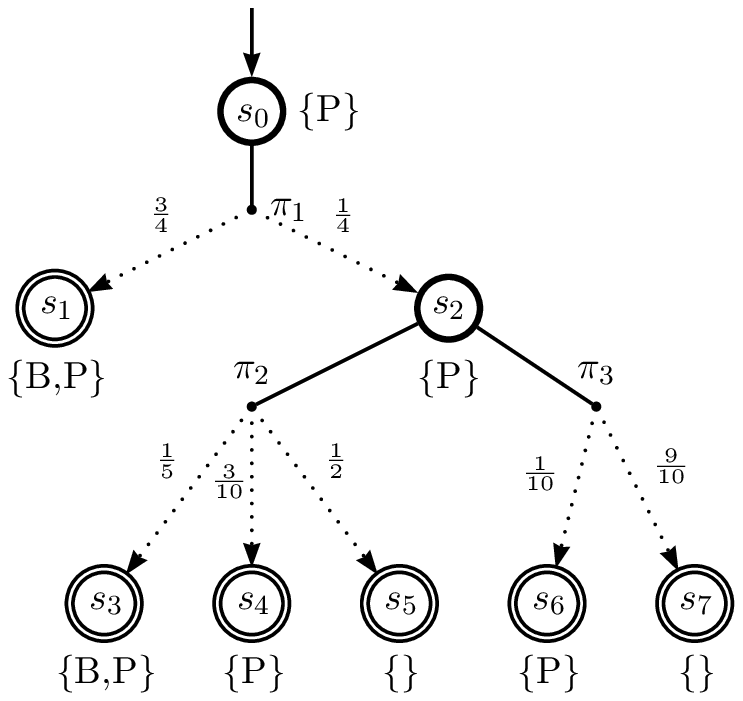}}
  \hspace{1cm} \subfigure[Markov
  Chain]{\label{fig:b}\includegraphics[width=4cm]{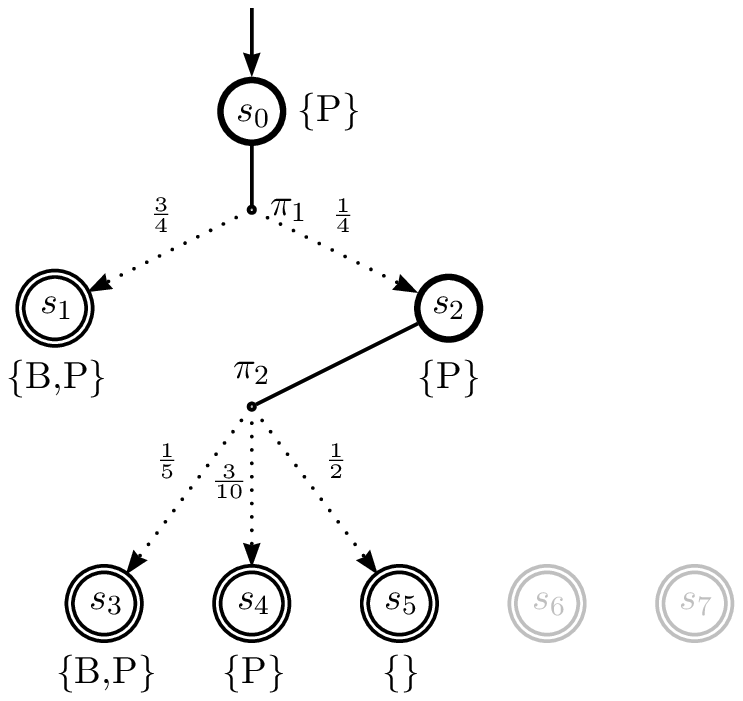}}
  \label{fig:counterexamples}
\end{figure}}

We show this fact with the notion of counterexample of Definition
\ref{dfn:counterexample}. Note that $\Delta_{\Fin B\land \Glo
P}=\Cyl{s_0
  s_1}\cup \Cyl{s_0 s_2 s_3}$ and $\Delta_{\lnot \Glo P}=\Cyl{s_0 s_2
  s_5}$. Using Lemma \ref{lem:counterexample} with $\Delta_1=\Cyl{s_0
  s_1}$ and $\Delta_2=\Cyl{s_0 s_2 s_5}$ we have
$\frac{3}{4}<\frac{\PP_{\eta}(\Delta_1)}{1-\PP_{\eta}(\Delta_2)}=\frac{3/4}{1-1/8}=\frac{6}{7}$.
Consequently $\frac{3}{4}<\CP{}{s_0, \eta}{\Fin B}{\Glo P}$, which
proves that $s_0\not\models\CP{}{\leq 3/4}{\Fin B}{\Glo P}$.

%
%
%
%


\newpage
\thispagestyle{empty}

\chapter{Computing the Leakage of Information Hiding Systems}
\begin{quote}

\textit{In this chapter we address the problem of computing the
information leakage of a system in an efficient way. We propose
two methods: one based on reducing the problem to reachability,
and the other based on techniques from quantitative counterexample
generation. The second approach can be used either for exact or
approximate computation, and provides feedback for debugging.
These methods can be applied also in the case in which the input
distribution is unknown. We then consider the interactive case and
we point out that the definition of associated channel proposed in
literature is not sound. We show however  that the leakage can
still be defined consistently, and  that our methods extend
smoothly. }\end{quote}

\section{Introduction}

By \emph{information hiding}, we refer generally to the problem of
constructing protocols or programs that protect sensitive information
from being deduced by some adversary.
In \emph{anonymity protocols} \cite{Chatzikokolakis:08:IC}, for example,
the concern is to design mechanisms to prevent an observer of network traffic
from deducing who is communicating.
In \emph{secure information flow} \cite{Sabelfeld:03:JSAC}, the concern is to prevent
programs from leaking their secret input to an observer of their
public output. Such leakage could be accidental or malicious.

Recently, there has been particular interest in approaching these issues
\emph{quantitatively}, using concepts of information theory. See for example
\cite{Moskowitz:03:WPES,Clark:05:JLC,Deng:06:FAST,Clarkson:09:JCS,Chatzikokolakis:08:IC}.
The
secret input $S$ and the observable output $O$ of
an information-hiding system  are modeled as random variables
related by a \emph{channel matrix}, whose $(s,o)$ entry specifies $P(o|s)$,
the conditional probability of observing output $o$ given input $s$.
If we define the \emph{vulnerability} of $S$ as the probability that
the adversary could correctly guess the value of $S$ in one try,
then it is natural to measure the information leakage by comparing
the \emph{a priori} vulnerability of $S$ with the \emph{a posteriori}
vulnerability of $S$ after observing   $O$.
We consider two measures of leakage: \emph{additive},
which is the difference between the \emph{a posteriori} and \emph{a priori}
vulnerabilities; and \emph{multiplicative}, which is their quotient
\cite{Smith:09:FOSSACS,Braun:09:MFPS}.

We thus view a protocol or program as a \emph{noisy channel}, and
we calculate the leakage from the channel matrix and the \emph{a
priori} distribution on $S$. But, given an operational
specification of a protocol or program, how do we calculate the
parameters of the noisy channel: the sets of inputs and outputs,
the \emph{a priori} distribution, the channel matrix, and the
associated leakage? These are the main questions we address in
this chapter. We focus on \emph{probabilistic automata}, whose
transitions are labeled with probabilities and \emph{actions},
each of which is classified as secret, observable, or internal.

We first consider the simple case
in which the secret inputs take place
at the beginning of runs, and their probability is fixed.
The interpretation in terms of noisy channel
of this kind of systems is well understood in literature.
The framework of probabilistic automata, however, allows to represent more general situations. Thanks to the nondeterministic choice, indeed,
we can model  the case in which the input distribution is unknown, or variable.
We show that the definition of channel matrix extends smoothly also to this case.
Finally, we turn our attention to the interactive scenario in which inputs can occur again after outputs.
This case has also been  considered in literature, and there has been an attempt
to define the channel matrix in terms of the probabilities of traces \cite{Desharnais:02:LICS}. However
it turns out that the notion of channel is unsound. Fortunately
the leakage is still well defined, and it can be obtained in the same way as the simple case.

We consider two different approaches to computing the channel
matrix. One uses a system of linear equations as in reachability
computations. With this system of equations one can compute the
\emph{joint matrix}, the matrix of probabilities of observing both
$s$ and $o$; the channel matrix is trivially derived from this
joint matrix. The other approach starts with a $0$ channel matrix,
which we call a \emph{partial matrix} at this point. We
iteratively add the contributions in conditional probabilities of
complete paths to this partial matrix, obtaining, in the limit,
the channel matrix itself. We then group paths with the same
secret and the same observable together using ideas from
quantitative counterexample generation\index{counterexample},
namely by using regular expressions and strongly connected
component analysis. In this way, we can add the contribution of
(infinitely) many paths at the same time to the partial matrices.
This second approach also makes it possible to identify which
parts of a protocol contribute most to the leakage, which is
useful for debugging.

Looking ahead, after reviewing some preliminaries (Section~3.2) we
present restrictions on probabilistic automata to ensure that they have
well-defined and finite channel matrices (Section~3.3).
This is followed by the techniques to calculate the channel matrix
efficiently (Section~3.4 and Section~3.5).
We then turn our attention to extensions of our information-hiding system model.
We use nondeterministic choice to model the situation where the
\emph{a priori} distribution on the secret is unknown (Section~3.6).
Finally, we consider interactive systems, in which secret actions and
observable actions can be interleaved arbitrarily (Section~3.7).

\section{Preliminaries}
%

\subsection{Probabilistic automata\index{probabilistic automata}}
This section recalls some basic notions on probabilistic automata.
More details can be found in \cite{Segala:95:PhD}.
A function $\mu \colon Q \to [0,1]$ is a \emph{discrete probability distribution}
on a set~$Q$ if the support of $\mu$ is countable and $\sum_{q \in Q} \mu(q) = 1$.
The set of all discrete probability distributions on $Q$ is denoted by $\mathcal{D}(Q)$.

A \emph{probabilistic automaton} is a quadruple $M = (Q, \Sigma, \qi, \alpha )$ where
$Q$ is a countable set of \emph{states},
$\Sigma$ a finite set of \emph{actions},
$\qi$ the \emph{initial} state,
and $\alpha$ a \emph{transition function} $\alpha: Q \to \wp_f(\distr(\Sigma \times Q))$.
Here $\wp_f(X)$ is the set of all finite subsets of $X$.
If $\alpha(q) = \emptyset$ then $q$ is a \emph{terminal} state. We write $q {\to} \mu$ for $\mu \in \alpha(q), \ q\in Q$.  Moreover, we write $q \smash{\stackrel{a}{\to}} r$ for $q, r \in Q$ whenever $q {\to} \mu$ and $\mu(a,r) > 0$.
A \emph{fully probabilistic automaton} is a probabilistic automaton satisfying $|\alpha(q)|\leq 1$ for all states.
In case $\alpha(q)\not= \emptyset$ we will overload notation and use $\alpha(q)$ to denote the distribution outgoing from $q$.

A \emph{path} in a probabilistic automaton is a sequence $\sigma =
q_0  \stackrel{a_1}{\to} q_1  \stackrel{a_2}{\to} \cdots$ where
$q_i \in Q$, $a_i \in \Sigma$ and $q_i
\smash{\stackrel{a_{i+1}}{\to}} q_{i+1}$. A path can be
\emph{finite} in which case it ends with a state. A path is
\emph{complete} if it is either infinite or finite ending in a
terminal state. Given a path $\sigma$, $\first(\sigma)$ denotes
its first state, and if $\sigma$ is finite then  $\last{\sigma}$
denotes its last state. A \emph{cycle} is a path $\sigma$ such
that $\last{\sigma} = \first(\sigma)$. We denote the set of
actions occurring in a cycle as $\CycleA(M)$. Let $\paths_q(M)$
denote the set of all paths, $\fpaths_q(M)$ the set of all finite
paths, and $\cpaths_q(M)$ the set of all complete paths of an
automaton $M$, starting from the state $q$. We will omit $q$ if
$q=\qi$. Paths are ordered by the prefix relation, which we denote
by $\leq$. The \emph{trace} of a path is the sequence of actions
in $\Sigma^{*} \cup \Sigma^{\infty}$ obtained by removing the
states, hence for the above $\sigma$ we have $\trace(\sigma) =
a_1a_2\ldots$. If $\Sigma'\subseteq \Sigma$, then
$\trace_{\Sigma'}(\sigma)$ is the projection of $\trace(\sigma)$
on the elements of $\Sigma'$. The \emph{length} of a finite path
$\sigma$, denoted by $|\sigma|$, is the number of actions in its
trace.
%

Let $M (Q, \Sigma, \qi, \alpha )$ be a (fully) probabilistic
automaton, $q\in Q$ a state, and let $\sigma \in
\fpaths_{\!\!\!q}(M)$ be a finite path starting in $q$. The
\emph{cone} generated by $\sigma$ is the set of complete paths
$\cone{\sigma} = \{ \sigma^\prime \in \cpaths_q(M) \mid \sigma
\leq \sigma^\prime\}.$ Given a fully probabilistic automaton
$M=(Q, \Sigma, \qi, \alpha)$ and a state $q$, we can calculate the
\emph{ probability value}, denoted by $\PP_q(\sigma)$, of any
finite path $\sigma$ starting in $q$ as follows: $\PP_q(q)   = 1$
and $\PP_q(\sigma \, \stackrel{a}{\to}\, q')   = \PP_q(\sigma)\
\mu(a,q'),   \text{~where~} \last{\sigma} \to \mu$.

Let $\Omega_{q} \eqdef \cpaths_q(M)$ be the sample space, and let $\mathcal F_{q}$ be the smallest $\sigma$-algebra generated by the cones.
Then $\PP$ induces a unique \emph{probability measure} on $\mathcal F_{q}$ (which we will also denote by $\PP_q$) such that $\PP_q(\cone{\sigma}) = \PP_q(\sigma)$ for every finite path $\sigma$ starting in $q$.  For $q=\qi$ we write $\PP$ instead of $\PP_{\qi}$.

Given a probability space $(\Omega, \mathcal{F}, P)$ and two
events $A, B\in F$ with $P(B) > 0$, the \emph{conditional
probability}\index{conditional probability} of $A$ given $B$, $P(A
\mid B)$, is defined as ${P(A \cap B)}/{P(B)}.$

\subsection{Noisy Channels\index{noisy channel}}

This section briefly recalls the notion of
noisy channels from Information Theory \cite{Cover:06:BOOK}.


A \emph{noisy channel}\index{noisy channel} is a tuple $\nc\eqdef(
{\cal X}, {\cal Y}, \cm(\cdot|\cdot))$ where  ${\cal X}=\lbrace
x_1,x_2,\ldots,x_n\rbrace$ is a finite set of \emph{input values},
modeling the \emph{secrets} of the channel, and ${\cal Y} =\lbrace
y_1,y_2,\ldots,y_m\rbrace$ is a finite set of \emph{output
values}, the \emph{observables} of the channel. For $x_i\in {\cal
X}$ and $y_j\in {\cal Y}$, $\cm(y_j|\,x_i)$ is the conditional
probability of obtaining the output $y_j$ given that the input is
$x_i$. These conditional probabilities constitute the so called
\emph{channel matrix}\index{channel matrix}, where $\cm(y_j|x_i)$
is the element at the intersection of the $i$-th row and the
$j$-th column. For any input distribution $P_X$ on $\cal X$, $P_X$
and the channel matrix determine a joint probability $P_\wedge$ on
${\cal X}\times {\cal Y}$, and the corresponding marginal
probability $P_Y$ on $\cal Y$ (and hence a random variable $Y$).
$P_X$ is also called \emph{a priori distribution} and it is often
denoted by $\pi$. The probability of the input given the output is
called \emph{a posteriori  distribution}.

\subsection{Information leakage\index{information leakage}}

\rev{We recall now some notions of information leakage which allow us to quantify the probability of success of a \emph{one-try attacker}, i.e. an attacker that tries to obtain the value of the secret in just one guess. In particular, we consider Smith's definition of \emph{multiplicative leakage}~\cite{Smith:09:FOSSACS}\footnote{The notion proposed by Smith in \cite{Smith:09:FOSSACS} was given in a (equivalent) logarithmic form, and called simply \emph{leakage}. For uniformity's sake we use here the terminology and formulation of \cite{Braun:09:MFPS}.}, and the \emph{additive leakage} definition from Braun et al. \cite{Braun:09:MFPS}.}
%
We assume given a noisy channel $\nc=( {\cal X}, {\cal Y}, \cm(\cdot|\cdot))$ and a random variable $X$ on $\cal X$.
The \emph{a priori vulnerability} of the secrets in $\cal X$ is the probability of guessing the right secret, defined as
$\vul(\emph{X})\eqdef\max_{x\in{\cal X}} P_X(x).$
The rationale behind this definition is that the adversary's best bet is on the secret with highest probability.

The \emph{a posteriori vulnerability} of the secrets in $\cal X$ is the probability of guessing the right secret, after the output has been observed, averaged over the probabilities of the observables. The formal definition is
$\vul(\emph{X}\,|\,\emph{Y})\eqdef\sum_{y\in{\cal Y}}P_Y(y) \max_{x\in{\cal X}} P(x\,|\,y).$
Again, this definition is based on the principle that the adversary will choose the secret with the highest a posteriori probability.

Note that, using Bayes theorem, we can write the a posteriori vulnerability in terms of the channel matrix and the a priori distribution, or in terms of the joint probability:
\begin{eqnarray}\label{eqn:postvulnerability}
 \vul(\emph{X}\, |\, \emph{Y}) \ = \  \sum_{y\in{\cal Y}}\max_{x\in{\cal X}} ( P(y\,|\,x) P_X(x))
\  = \  \sum_{y\in{\cal Y}}\max_{x\in{\cal X}}  P_\wedge(x,y).
\end{eqnarray}

\rev{
The \emph{multiplicative} leakage is then defined as the \emph{quotient} between the a posteriori and a priori vulnerabilities,
$\leakm(\nc,P_X)\eqdef \vul(\emph{X}|\emph{Y})\, / \vul(\emph{X})$. Similarly, the
\emph{additive} leakage is defined as the \emph{difference} between both vulnerabilities,
$\leaka(\nc,P_X)\eqdef\vul(\emph{X}|\emph{Y})-\vul(\emph{X})$.}

\section{Information Hiding Systems}\label{sec:sihs}


To formally analyze the information-hiding properties of protocols and programs, we propose to model them as a particular kind of probabilistic automata,  which we call \emph{Information-Hiding Systems} (IHS).
Intuitively, an IHS is a probabilistic automaton in which the actions are divided in three (disjoint) categories: those which are supposed to remain secret (to an external observer), those which are visible, and those which are internal to the protocol.

First we consider only the case in which the choice of the secret
takes place entirely at the beginning, and is based on a known
distribution.  Furthermore we focus on fully probabilistic
automata. Later in the chapter we will relax these constraints.

\begin{dfn}[Information-Hiding System]\label{def:IHS}
An information-hiding system (IHS) is a quadruple $\ihs=(M,
\Sigma_\secre, \Sigma_\obs, \Sigma_\hidd)$ where
$M=(Q, \Sigma, \qi, \alpha)$ is a fully probabilistic automaton,
$\Sigma=\Sigma_\secre\cup\Sigma_\obs\cup\Sigma_\hidd$ where
$\Sigma_\secre$, $\Sigma_\obs$, and $\Sigma_\hidd$ are pairwise
disjoint sets of secret, observable, and internal actions,
and $\alpha$ satisfies the following restrictions:
\begin{enumerate}
\item $\alpha(\qi)\in \distr(\Sigma_\secre\times Q)$,
\item $\forall s\in \Sigma_\secre \ \exists! q\ .\ \alpha(\qi)(s,q)\not=0,$
\item $\alpha(q)\in \distr(\Sigma_\obs\cup\Sigma_\hidd\times Q)$ for $q\not=\qi$,
\item \rev{$\CycleA(M)\subseteq \Sigma_\hidd$},
\item  $\PP(\cpaths(M)\cap\fpaths(M))=1$.
\end{enumerate}
\end{dfn}

The first two restrictions are on the initial state and mean that only secret actions can happen there ($1$) and each of those actions must have non null probability and occur only once ($2$), Restriction $3$ forbids secret actions to happen in the rest of the automaton, and
Restriction $4$ \rev{specifies that only internal actions can occur inside cycles, this restriction is necessary in order to make sure that the channel associated to the IHS has finitely many inputs and outputs}. Finally, Restriction $5$ means that infinite computations have probability $0$ and therefore we can ignore them.

We now show how to interpret an IHS as a noisy channel.
We call $\trace_{\Sigma_\secre}(\sigma)$ and
$\trace_{\Sigma_\obs}(\sigma)$ the \emph{secret} and
\emph{observable} traces of $\sigma$, respectively. For $s \in
\Sigma_\secre^*$, we define $[s]\eqdef\{\sigma \in \cpaths(M)\mid
\trace_{\Sigma_\secre}(\sigma)=s\}$; similarly for $o \in
\Sigma_\obs^*$, we define $[o]\eqdef\{\sigma \in \cpaths(M)\mid
\trace_{\Sigma_\obs}(\sigma)=o\}$.

\begin{dfn}
Given an IHS $\ihs=(M, \Sigma_\secre, \Sigma_\obs, \Sigma_\hidd)$,
its noisy channel is $(\secre,\obs,\cm)$, where
$\secre\eqdef\Sigma_\secre$,
$\obs\eqdef\trace_{\Sigma_\obs}(\cpaths(M))$, and $\cm(o\mid
s)\eqdef \PP([o] \mid [s])$.
The  a priori  distribution ${\pi }\in\distr(\secre)$ of $\ihs$ is
defined by ${\pi }(s)\eqdef \alpha(\qi)(s,\cdot)$. If $\nc$ is the
noisy channel of $\ihs$, the multiplicative and additive leakage
of $\ihs$ are naturally defined as
$$\leakm(\ihs)\eqdef \leakm({\nc },{\pi })
  \quad\text{and}\quad
  \leaka(\ihs)\eqdef \leaka({\nc },{\pi }).$$
\end{dfn}

\exampleheader
Crowds~\cite{Reiter:98:TISS}\index{protocols!crowds} is a
well-known anonymity protocol, in which a user (called the
\emph{initiator}) wants to send a message to a
\begin{wrapfigure}{r}{4cm}
\centering \vspace{-0.01cm}
\includegraphics[width=4.5cm]{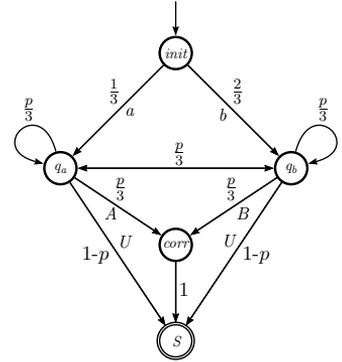}
\caption{Crowds Protocol} \label{fig:crowdsprotocol}
\vspace{-0.35cm}
\end{wrapfigure}
web server without revealing his identity. To achieve this, he
routes the message through a crowd of users participating in the
protocol. Routing is as follows. In the beginning, the initiator
randomly selects a user (called a \emph{forwarder}), possibly
himself, and forwards the request to him.  A forwarder performs a
probabilistic choice.  With probability $p$ (a parameter of the
protocol) he selects a new user and again forwards the message.
With probability $1\!-\!p$ he sends the message directly to the
server. One or more users can be \emph{corrupted} and collaborate
with each other to try to find the identity of the initiator.


We now show how to model Crowds as an $\IHS$ for $2$ honest and $1$ corrupted user.  We assume that the corrupted user  immediately forwards messages to the server, as there is no further information to be gained for him by bouncing the message back.

Figure \ref{fig:crowdsprotocol} shows the automaton\footnote{
For the sake of simplicity, we allow the initiator of the protocol
to send the message to the server also in the first step of the protocol.}.
Actions $a$ and $b$ are secret and represent who initiates the protocol;
actions $A$, $B$, and $U$
are observable; $A$ and $B$ represent who forwards the message to the corrupted user; $U$ represents the fact that the message arrives at the server undetected
by the corrupted user. We assume $U$ to be observable to represent the possibility  that
the message is made publically available at the server's site.

The channel associated to this IHS has $\secre =\{a,b\}$,
$\obs=\{A,B,U\}$, and a priori distribution
$\pi(a)=\frac{1}{3},\pi(b)=\frac{2}{3}$. Its channel matrix is
computed in the next section.


\section{Reachability analysis approach}

This section presents a method
to compute the matrix of joint probabilities $\cm_\lland$ associated to an $\IHS$, defined as
\[\cm_\lland(s,o)\eqdef \PP([s]\cap[o])\mbox{   for all } s\in\secre \mbox{ and }o\in\obs.\]

\comment{joint probability matrix -> matrix of joint
probabilities}
\noindent We omit the subscript $\land$ when no confusion arises.
From $\cm_\lland$ we can derive  the channel matrix by dividing $\cm_\lland(s,o)$ by $\pi(s)$.
The leakage can be computed directly from $\cm_\lland$, using the second form of the a posteriori vulnerability in (\ref{eqn:postvulnerability}).

We write $x_q^\lambda$ for the probability of the set of paths
with trace $\lambda\in(\Sigma_\secre\cup \Sigma_\obs)^\star$
starting from the state $q$  of $M$:
\[x_q^\lambda\eqdef \PP_q([\lambda]_q),\]

\noindent where $[\lambda]_q\eqdef\lbrace
\sigma\in\cpaths_q(M)\mid
\trace_{\Sigma_\secre\cup\Sigma_\obs}(\sigma)=\lambda\rbrace$.
The following key lemma
shows the linear relation between the $x_q^\lambda$'s.
We assume, w.l.o.g., that the $\IHS$ has a unique final state $\qf$.

\begin{lem}\label{lem:SSAeqns} Let $\ihs=(M, \Sigma_\secre, \Sigma_\obs, \Sigma_\hidd)$ be an $\IHS$. For all $\lambda\in(\Sigma_\secre\cup\Sigma_\obs)^\star$ and $q\in Q$ we have
\[
\begin{array}{l}
\smallskip
x_{\qf}^\epsilon  = \  1,\\
\smallskip
x_{\qf}^\lambda   =   \ 0 \quad  \text{for }\lambda\not=\epsilon,\\
\smallskip
x_q^\epsilon \  = \  \sum_{h\in \Sigma_\hidd}\sum_{q^\prime\in \suc(q)}\alpha(q)(h,q^\prime) \cdot x_{q^\prime}^\epsilon\ \quad \text{for } q\not=\qf,\\
x_q^\lambda \  = \  \sum_{q^\prime\in \suc(q)} \alpha(q)(\first(\lambda),q^\prime) \cdot x_{q^\prime}^{\tail{\lambda}}\\
\smallskip
\qquad \ \ \ +
\sum_{h\in \Sigma_\hidd}\alpha(q)(h,q^\prime) \cdot x_{q^\prime}^\lambda
 \qquad \qquad \quad \text{ for }\lambda\not=\epsilon \text{ and }q\not=\qf.\\
\end{array}\]
Furthermore, for $s \in \secre$ and $o \in \obs$ we have
$\PP([s]\cap[o])=x_{\qi}^{s o}$.
\end{lem}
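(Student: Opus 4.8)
The plan is to prove this lemma by analyzing the structure of the set $[\lambda]_q$ of complete paths from $q$ whose secret-observable trace equals $\lambda$, and to show that the probability of this set decomposes according to the first transition taken from $q$. The four displayed equations correspond to a case analysis on whether $q$ is the final state $\qf$ or not, and on whether the remaining trace $\lambda$ is empty or not. The last claim, $\PP([s]\cap[o]) = x_{\qi}^{so}$, should then follow immediately by observing that $[s]\cap[o] = [so]_{\qi}$, since a path is in $[s]\cap[o]$ exactly when its secret trace is $s$ and its observable trace is $o$, which (by Restriction~3, secrets only occur at the first step, so secrets and observables do not interleave in a way that causes ambiguity) is the same as saying its $(\Sigma_\secre\cup\Sigma_\obs)$-trace is the concatenation $so$.

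First I would establish the boundary cases. Since $\qf$ is terminal, $\cpaths_{\qf}(M) = \{\qf\}$ (the trivial path), whose $(\Sigma_\secre\cup\Sigma_\obs)$-trace is $\epsilon$; hence $x_{\qf}^\epsilon = \PP_{\qf}(\{\qf\}) = 1$ and $x_{\qf}^\lambda = 0$ for $\lambda \neq \epsilon$. For the recursive cases with $q \neq \qf$, I would partition $[\lambda]_q$ according to the first transition $q \stackrel{a}{\to} q'$ taken, writing $\sigma = q \stackrel{a}{\to} \sigma'$ with $\sigma' \in \cpaths_{q'}(M)$. By the definition of $\PP_q$ on cones and the fact that cones of paths extending distinct one-step prefixes are disjoint and cover $[\lambda]_q$, we get $x_q^\lambda = \sum_{a,q'} \alpha(q)(a,q')\cdot \PP_{q'}([\text{rest}]_{q'})$. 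Here the action $a$ ranges over $\Sigma_\obs \cup \Sigma_\hidd$ (by Restriction~3, no secret actions occur at $q \neq \qi$; for $q = \qi$ the argument is analogous with secret actions, but the statement as used later only needs $x_{\qi}^{so}$ which unfolds one secret step first). If $a \in \Sigma_\hidd$, it contributes nothing to the trace, so the remaining required trace is still $\lambda$, giving the term $\sum_{h\in\Sigma_\hidd}\alpha(q)(h,q')\cdot x_{q'}^\lambda$. If $a \in \Sigma_\obs$, then for $\sigma$ to have trace $\lambda$ we need $\lambda \neq \epsilon$, $a = \first(\lambda)$, and $\sigma'$ to have trace $\tail{\lambda}$; this gives $\sum_{q'}\alpha(q)(\first(\lambda),q')\cdot x_{q'}^{\tail{\lambda}}$. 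Combining these yields exactly the fourth equation when $\lambda \neq \epsilon$; when $\lambda = \epsilon$, only the hidden-action terms survive, giving the third equation. Restriction~5 ($\PP(\cpaths(M)\cap\fpaths(M)) = 1$) is what guarantees these probabilities are well-defined and that we may ignore the measure-zero set of infinite paths.

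The main obstacle I anticipate is the careful bookkeeping of the measure-theoretic decomposition: justifying that $[\lambda]_q$ is measurable and that $\PP_q([\lambda]_q) = \sum_{a,q'}\alpha(q)(a,q')\,\PP_{q'}(\{\sigma' : q\stackrel{a}{\to}\sigma' \in [\lambda]_q\})$. This requires expressing $[\lambda]_q$ as a countable (in fact, over the countable state space, potentially infinite but controlled) disjoint union over one-step extensions and invoking $\sigma$-additivity of $\PP_q$, together with the defining property $\PP_q(\cone{q\stackrel{a}{\to}\sigma'}) = \alpha(q)(a, \first(\sigma'))\cdot\PP_{q'}(\cone{\sigma'})$ extended from cones to the generated $\sigma$-algebra. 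One must also handle the fact that for fixed $q$ the successor set $\suc(q)$ is finite (since $\alpha(q) \in \wp_f(\distr(\cdot))$ and each distribution has countable support — here finite is cleanest, matching the notation $\sum_{q'\in\suc(q)}$), so the sums are genuinely finite and no convergence issue arises beyond what Restriction~5 already handles. Once the one-step decomposition is in hand, the four equations and the final identity $\PP([s]\cap[o]) = x_{\qi}^{so}$ drop out directly, the latter because $[s]\cap[o] = [so]_{\qi}$ exactly (a secret trace $s$ followed by observable trace $o$ combine, without interleaving ambiguity, into the single $(\Sigma_\secre\cup\Sigma_\obs)$-trace $so$, using Restriction~2 that the secret is a single action at the start).
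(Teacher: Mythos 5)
Your proof is correct. The paper states this lemma without giving a proof, and your argument --- settling the terminal state $\qf$ directly, then decomposing $[\lambda]_q$ according to the first transition out of $q$ (hidden actions leaving the required trace $\lambda$ unchanged, the unique visible action matching $\first(\lambda)$ consuming it, all other visible actions contributing nothing), with the one-step measure identity justified on cones and extended by uniqueness/$\sigma$-additivity --- is exactly the recursion the equations encode, and your derivation of $\PP([s]\cap[o])=x_{\qi}^{s o}$ from the fact that the single secret action occurs only at the initial step is the intended one.
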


\comment{we dont use  $\cm$ in the Lemma, do we need it?}

Using this lemma, one can compute joint probabilities by solving the system of linear equations in the variables $x_q^\lambda$'s. It is possible that the system has multiple solutions; in that case the required solution is the minimal one.

\begin{exa} Continuing with the Crowds example, we show how to compute joint probabilities. Note that $\qf=\emph{S}$. The linear equations from Lemma~\ref{lem:SSAeqns} are
$$
\begin{array}{lllllllllllllll}
\smallskip
x_{\emph{init}}^{aA}&\!\!=\!\!&\frac{1}{3} \cdot x_{\emph{q}_{a}}^{{A}}, &\!\!\!& x_{\emph{q}_{a}}^{{A}}&\!\!=\!\!&\frac{p}{3} \cdot x_{\emph{q}_{a}}^{{A}} + \frac{p}{3} \cdot x_{\emph{q}_{b}}^{{A}} + \frac{p}{3}\cdot x_{\emph{corr}}^{\epsilon},  & \!\!\! &x_{\emph{corr}}^{{A}}&\!\!=\!\!&x_{\emph{S}}^{{A}}, & \!\!\! & \\
\smallskip
x_{\emph{init}}^{{bA}}&\!\!=\!\!&\frac{2}{3} \cdot x_{\emph{q}_{b}}^{{A}}, &\!\!\!& x_{\emph{q}_{b}}^{{A}}&\!\!=\!\!&\frac{p}{3} \cdot x_{\emph{q}_{a}}^{{A}} + \frac{p}{3} \cdot x_{\emph{q}_{b}}^{{A}} + \frac{p}{3}\cdot x_{\emph{corr}}^{{A}}, && x_{\emph{S}}^{{A}}&\!\!=\!\!&0, \\
\smallskip
x_{\emph{init}}^{{aB}}&\!\!=\!\!&\frac{1}{3} \cdot x_{\emph{q}_{a}}^{{B}}, &\!\!\!& x_{\emph{q}_{a}}^{{B}}&\!\!=\!\!&\frac{p}{3} \cdot x_{\emph{q}_{a}}^{{B}} + \frac{p}{3} \cdot x_{\emph{q}_{b}}^{{B}} + \frac{p}{3}\cdot x_{\emph{corr}}^{{B}},   && x_{\emph{corr}}^{{B}}&\!\!=\!\!&x_{\emph{S}}^{{B}},\\
\smallskip
x_{\emph{init}}^{{bB}}&\!\!=\!\!&\frac{2}{3} \cdot
x_{\emph{q}_{b}}^{{B}}, &\!\!\!&
x_{\emph{q}_{b}}^{{B}}&\!\!=\!\!&\frac{p}{3} \cdot
x_{\emph{q}_\text
{a}}^{{B}} + \frac{p}{3} \cdot x_{\emph{q}_{b}}^{{B}} + \frac{p}{3}\cdot x_{\emph{corr}}^{\epsilon}, && x_{\emph{S}}^{{B}}&\!\!=\!\!&0,\\
\smallskip
x_{\emph{init}}^{{aU}}&\!\!=\!\!&\frac{1}{3} \cdot x_{\emph{q}_{a}}^{{U}}, &\!\!\!& x_{\emph{q}_{a}}^{{U}}&\!\!=\!\!&\frac{p}{3} \cdot x_{\emph{q}_{a}}^{{U}} + \frac{p}{3} \cdot x_{\emph{q}_{b}}^{{U}}+ (1\!-\!p) \cdot x_\emph{S}^{\epsilon},  && x_{\emph{corr}}^{\epsilon}&\!\!=\!\!&x_{\emph{S}}^{\epsilon},  \\
\smallskip
x_{\emph{init}}^{{bU}}&\!\!=\!\!&\frac{2}{3} \cdot
x_{\emph{q}_{b}}^{{U}}, &\!\!\!&
x_{\emph{q}_{b}}^{{U}}&\!\!=\!\!&\frac{p}{3} \cdot
x_{\emph{q}_{a}}^{{U}} + \frac{p}{3} \cdot x_{\emph{q}_b}^{U} +
(1\!-\!p) \cdot x_{\emph{S}}^{\epsilon}, &&
x_{\emph{S}}^{\epsilon}&\!\!=\!\!&1.
\end{array}$$
\end{exa}
%
%
%
%
%

\noindent Let us fix $p=0.9$. By solving the system of linear
equations we obtain

\qquad\begin{wrapfigure}{r}{5cm} \centering \vspace{-0.35cm}
 \begin{tabular}{ c | c  c  c }
    &   \ ${A}$ \ & \ ${B}$ & \ ${U}$\ \\ \hline
    \vspace{-0.45cm}
    \\
    \ ${a}$ \ & $\frac{21}{40}$ & $\frac{9}{40}$ & $\frac{1}{4}$\\
    \vspace{-0.25cm}
    \\
    \ ${b}$ \ & $\frac{9}{40}$ & $\frac{21}{40}$ & $\frac{1}{4}$\\
 \end{tabular}
\vspace{-3.85cm}
\end{wrapfigure}
\[
\begin{array}{l}
\bigskip
\hspace{-4.5cm}x_{\emph{init}}^{{aA}}=\frac{7}{40}, \hspace{0.75cm}  x_{\emph{init}}^{{aB}}=\frac{3}{40}, \hspace{0.75cm} x_{\emph{init}}^{{aU}}=\frac{1}{12},\\
\hspace{-4.5cm}x_{\emph{init}}^{{bA}}=\frac{3}{20},
\hspace{0.75cm} x_{\emph{init}}^{{bB}}=\frac{7}{20},
\hspace{0.8cm} x_{\emph{init}}^{{bU}}=\frac{1}{6}.
\end{array}
\]

\bigskip
\noindent We can now compute the channel matrix by dividing each
$x_{\emph{init}}^{s o}$ by $\pi(s)$. The result is shown in the
figure above.

\subsection{Complexity Analysis}

We now analyze the computational complexity for the computation of
the channel matrix of a simple $\IHS$.  Note that the only
variables (from the system of equations in
Lemma~\ref{lem:SSAeqns}) that are relevant for the computation of
the channel matrix are those $x^\lambda_q$ for which it is
possible to get the trace $\lambda$ starting from state $q$.  As a
rough overestimate, for each state $q$, there are at most
$|\secre| \cdot |\obs|$ $\lambda$'s possible: in the initial state
one can have every secret and every observable, in the other
states no secret is possible and only a suffix of an observable
can occur. This gives at most $|Q| \cdot |\secre| \cdot |\obs|$
variables. Therefore, we can straightforwardly obtain the desired
set of values in $O((|Q|\cdot |\secre| \cdot |\obs|)^3)$ time
(using Gaussian Elimination). Note that using Strassen's methods
the exponent reduces to $2.807$, this consideration applies to
similar results in the rest of the chapter as well.

Because secret actions can happen only at the beginning, the
system of equations has a special form. The variables of the form
$x^{so}_{\qi}$ only depend on variables of the form $x^{o}_q$
(with varying $o$ and $q \neq \qi$) and not on each other. Hence,
we can first solve for all variables of the form $x^{o}_q$ and
then compute the remaining few of the form $x^{so}_{\qi}$.
Required time for the first step is $O((|\obs|\cdot|Q|)^3)$ and
the time for the second step can be ignored.

Finally, in some cases not only do the secret actions happen only
at the beginning of the protocol, but the observable actions
happen only at the end of the protocol, i.e., after taking a
transition with an observable action, the protocol only performs
internal actions (this is, for instance, the case for our model of
Crowds). In this case, one might as well enter a unique terminal
state $\qf$ after an observable action happens. Then the only
relevant variables are of the form $x^{so}_{\qi}$, $x^o_q$, and
$x^\epsilon_{\qf}$; the $x^{so}_{\qi}$ only depends on the
$x^o_q$, the $x^o_q$ only depend on $x^o_{q'}$ (with the same $o$,
but varying $q$'s) and on $x^\epsilon_{\qf}$ and $x^\epsilon_{\qf}
= 1$. Again ignoring the variables $x^{so}_{\qi}$ for complexity
purposes, the system of equations has a block form with $|\obs|$
blocks of (at most) $|Q|$ variables each. Hence the complexity in
this case decreases to $O(|\obs| \cdot |Q|^3)$.

%

\section{The Iterative Approach}\label{sec:IterativeApproach}

We now propose a different approach to compute channel matrices and leakage.  The idea is to iteratively construct the channel matrix of a system by adding probabilities of sets of paths containing paths with the same observable trace $o$ and secret trace $s$ to the $(o|s)$ entry of the matrix.

One reason for this approach is that it allows us to borrow techniques from quantitative counterexample generation. This includes the possibility of using or extending counterexample generation tools to compute channel matrices or leakage. Another reason for this approach is the relationship with debugging. If a (specification of a) system has a high leakage, the iterative approach allows us to determine which parts of the system contribute most to the high leakage, possibly pointing out flaws of the protocol. Finally, if the system under consideration is very large, the iterative approach allows us to only approximate the leakage (by not considering all paths, but only the most relevant ones) under strict guarantees about the accuracy of the approximation. We will focus on the multiplicative leakage; similar results can be obtained for the additive case.

\subsection{Partial matrices}

We start by defining a sequence of matrices converging to the channel matrix by adding the probability of complete paths one by one. We also define partial version of the a posteriori vulnerability and the leakage.
Later, we show how to use techniques from quantitative counterexample generation to add probabilities of many (maybe infinitely many) complete paths all at once.

\begin{dfn}\label{dfn:partialMatrices}
Let $\ihs=(M, \Sigma_\secre, \Sigma_\obs,\Sigma_\hidd)$ be an
$\IHS$, $\pi$ its a priori distribution, and
$\sigma_1,\sigma_2,\ldots$ an enumeration of the set of complete
paths of $M$. We define the \emph{partial matrices}
$\cm^k:\secre\times\obs\rightarrow [0,1]$ as follows
\[\cm^0(o|s)\!\eqdef\! 0, \quad
\cm^{k+1}(o|s) \!\eqdef\! \begin{cases}
                    \cm^k(o|s) + \frac{\PP(\cone{\sigma_{k+1}})}{\pi(s)} & \text {if } \trace_{\Sigma_\obs}(\sigma_{k+1})\!=\!o \\
                    & \text{and }\trace_{\Sigma_\secre}(\sigma_{k+1})\!=\!s,\\
                    \cm^k(o|s) & \text{otherwise}.\\
               \end{cases}
\]

\noindent We define the \emph{partial vulnerability} $\vul^k_{S,O}$ as
$\sum_{o}\max_{s}\cm^k(o|s)\cdot \pi(s)$, and the \emph{partial multiplicative leakage} $\leakm^k(\ihs)$ as
${V^k_{\emph{S,O}}}/{\max_{s}\pi(s)}$.
\end{dfn}


The following lemma states that partial matrices, a posteriori vulnerability, and leakage converge to the correct values.

\begin{lem} Let $\ihs=(M, \Sigma_\secre, \Sigma_\obs,\Sigma_\hidd)$ be an $\IHS$. Then
\begin{enumerate}
\item $\cm^k(o|s)\leq \cm^{k+1}(o|s)$, and $\lim_{k\to\infty}\cm^k(o|s)=\cm(o|s),$
\smallskip
\item $V^k_{\emph{S,O}}\leq V^{k+1}_{\emph{S,O}}$, and $\lim_{k\to\infty}V^k_{\emph{S,O}}=\vul(\emph{S}\ \!|\emph{O})$,
\smallskip
\item $\leakm^{k}(\ihs)\leq\leakm^{k+1}(\ihs)$, and $\lim_{k\to\infty}\leakm^{k}(\ihs)=\leakm(\ihs)$.
\end{enumerate}
\end{lem}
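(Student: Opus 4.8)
The plan is to prove the three claims in order, since each builds on the previous one, and all of them ultimately rest on the monotone convergence of the partial path probabilities.

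First I would establish claim (1). Monotonicity $\cm^k(o|s)\leq\cm^{k+1}(o|s)$ is immediate from Definition~\ref{dfn:partialMatrices}: at each step we either leave the entry unchanged or add the non-negative quantity $\PP(\cone{\sigma_{k+1}})/\pi(s)$. For the limit, fix $s\in\secre$ and $o\in\obs$. By construction $\cm^k(o|s)=\frac{1}{\pi(s)}\sum_{i\le k,\ \trace_{\Sigma_\secre}(\sigma_i)=s,\ \trace_{\Sigma_\obs}(\sigma_i)=o}\PP(\cone{\sigma_i})$. The complete paths $\sigma_i$ are pairwise distinct and their cones are pairwise disjoint (two complete paths generate disjoint cones unless one is a prefix of the other, which for complete paths forces equality), so this is a partial sum of the series $\sum_i \PP(\cone{\sigma_i})\,[\![\trace_{\Sigma_\secre}(\sigma_i)=s]\!]\,[\![\trace_{\Sigma_\obs}(\sigma_i)=o]\!]$. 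By countable additivity of $\PP$, the full sum equals $\PP([s]\cap[o])=\cm_\lland(s,o)$, and dividing by $\pi(s)$ gives $\cm(o|s)$. Here I would invoke Restriction~5 in Definition~\ref{def:IHS} ($\PP(\cpaths(M)\cap\fpaths(M))=1$) to ensure the enumeration $\sigma_1,\sigma_2,\ldots$ of complete paths carries essentially all the probability mass, so nothing is lost in the limit.

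For claim (2), I would pass the limit through the finite sum and the max. Since $\secre$ and $\obs$ are finite, for each $o$ the function $k\mapsto\max_s \cm^k(o|s)\pi(s)$ is a maximum of finitely many non-decreasing sequences, hence non-decreasing, and $\lim_k \max_s \cm^k(o|s)\pi(s)=\max_s\big(\lim_k \cm^k(o|s)\pi(s)\big)=\max_s \cm(o|s)\pi(s)$ by continuity of $\max$ on finitely many arguments. Summing over the finitely many $o\in\obs$ preserves both monotonicity and the limit, so $V^k_{\emph{S,O}}\to\sum_o\max_s\cm(o|s)\pi(s)=\sum_o\max_s P_\lland(s,o)=\vul(\emph{S}\mid\emph{O})$ by equation~(\ref{eqn:postvulnerability}). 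Claim (3) is then purely formal: $\leakm^k(\ihs)=V^k_{\emph{S,O}}/\max_s\pi(s)$ is $V^k_{\emph{S,O}}$ scaled by the positive constant $1/\max_s\pi(s)$, so monotonicity and the limit transfer directly, yielding $\leakm^k(\ihs)\to\vul(\emph{S}\mid\emph{O})/\vul(\emph{S})=\leakm(\ihs)$.

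The only genuinely delicate point — the ``main obstacle'' — is the disjointness-of-cones argument underpinning claim (1): one must be careful that summing $\PP(\cone{\sigma_i})$ over complete paths really reconstructs $\PP([s]\cap[o])$ rather than over- or under-counting. This is where Restriction~5 is essential (infinite paths have probability zero, so the countable set of finite complete paths suffices) and where one uses that distinct complete paths have disjoint cones. Once that is nailed down, claims (2) and (3) are routine consequences of finiteness of $\secre$ and $\obs$ together with continuity of finite sums and finite maxima, so I would keep those parts brief.
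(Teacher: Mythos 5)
Your proof is correct; the paper states this lemma without an explicit proof, and your argument is exactly the standard justification it implicitly relies on: monotonicity is immediate from the definition, the limit in (1) follows from countable additivity of $\PP$ over the pairwise disjoint cones of the finite complete paths (with Restriction~5 ensuring the infinite complete paths carry no probability mass, and $\PP([s])=\pi(s)>0$ by Restriction~2), and (2)--(3) then follow because $\secre$ and $\obs$ are finite, so the limit commutes with finite maxima, finite sums, and division by the positive constant $\max_{s}\pi(s)$. No gaps to report.
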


Since rows must sum up to $1$, this technique allow us to compute matrices up to given error $\epsilon$.
We now show how to estimate the error in the approximation of the multiplicative leakage.

\begin{proposition} Let $(M, \Sigma_\secre, \Sigma_\obs,\Sigma_\hidd)$ be an $\IHS$. Then we have
\[\leakm^k(\ihs)\leq \leakm(\ihs)\leq\leakm^k(\ihs)+ \sum_{i=1}^{|\secre|} (1-p^k_i),\]

\noindent where $p^k_i$ denotes the mass probability of the $i$-th row of $\cm^k$, i.e. $p^k_i\eqdef\sum_{o}\cm^k(o|s_i)$.
\end{proposition}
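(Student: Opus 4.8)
The plan is to bound the difference $\leakm(\ihs) - \leakm^k(\ihs)$ by controlling how much the a posteriori vulnerability can still grow when we add the remaining complete paths. Recall that $\leakm(\ihs) = \vul(S\,|\,O)/\max_s\pi(s)$ and $\leakm^k(\ihs) = V^k_{S,O}/\max_s\pi(s)$, and that by the previous lemma $V^k_{S,O} \le \vul(S\,|\,O)$, which immediately gives the lower bound $\leakm^k(\ihs) \le \leakm(\ihs)$. So the real work is the upper bound, and since the denominators coincide it suffices to show
\[
  \vul(S\,|\,O) - V^k_{S,O} \;\le\; \max_s\pi(s)\cdot\sum_{i=1}^{|\secre|}(1-p^k_i).
\]

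First I would unfold the definitions: $\vul(S\,|\,O) = \sum_o \max_s \cm(o|s)\pi(s)$ and $V^k_{S,O} = \sum_o \max_s \cm^k(o|s)\pi(s)$. Using the elementary inequality $\max_s a_s - \max_s b_s \le \max_s (a_s - b_s) \le \sum_s (a_s - b_s)$ valid whenever $a_s \ge b_s$ for all $s$ (which holds here since $\cm^k(o|s)\le\cm(o|s)$ by part (1) of the lemma), I get
\[
  \vul(S\,|\,O) - V^k_{S,O} \;\le\; \sum_o \sum_s \bigl(\cm(o|s) - \cm^k(o|s)\bigr)\pi(s)
  \;=\; \sum_s \pi(s) \sum_o \bigl(\cm(o|s) - \cm^k(o|s)\bigr).
\]
Now for each fixed secret $s=s_i$, since the channel matrix rows sum to $1$ we have $\sum_o \cm(o|s_i) = 1$, while $\sum_o \cm^k(o|s_i) = p^k_i$ by definition; hence $\sum_o(\cm(o|s_i) - \cm^k(o|s_i)) = 1 - p^k_i$. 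Substituting, the bound becomes $\sum_i \pi(s_i)(1-p^k_i) \le \max_s\pi(s)\cdot\sum_i(1-p^k_i)$, which is exactly what we need after dividing by $\max_s\pi(s)$.

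The only mildly delicate point is justifying $\sum_o \cm^k(o|s_i) = p^k_i$ and $\sum_o\cm(o|s_i)=1$: the former is just the definition of $p^k_i$ in the statement, and the latter is the standard fact that a channel matrix is stochastic, which follows from the construction of the noisy channel of an IHS together with Restriction~5 ($\PP(\cpaths(M)\cap\fpaths(M))=1$), since every complete path of positive probability has some observable trace. I expect no genuine obstacle here; the main thing to be careful about is the direction of the $\max$ inequality and ensuring the monotonicity $\cm^k \le \cm$ from the previous lemma is invoked so that the termwise bound on $\max_s a_s - \max_s b_s$ is legitimate. I would also remark that the bound is typically loose (it sums a per-row deficit that is shared across secrets) but is exactly the quantity one can compute on the fly during the iterative construction, which is the practical point.
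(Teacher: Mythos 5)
Your proof is correct. The paper in fact states this proposition without giving a proof, and your argument -- using the monotonicity $\cm^k(o|s)\le \cm(o|s)$ to bound $\vul(S\,|\,O)-V^k_{S,O}$ termwise by $\sum_i \pi(s_i)\bigl(1-p^k_i\bigr)$ via the row sums ($\sum_o \cm(o|s_i)=1$, $\sum_o \cm^k(o|s_i)=p^k_i$), and then factoring out $\max_s\pi(s)$ before dividing by $\vul(S)$ -- is exactly the natural argument that fills this gap, with the stochasticity of the channel matrix justified as you say by Restrictions 4--5 of the IHS definition.
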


\subsection{On the computation of partial matrices.}

After showing how partial matrices can be used to approximate channel matrices and leakage we now turn our attention to accelerating the convergence. Adding most likely paths first is an obvious way to increase the convergence rate. However, since automata with cycles have infinitely many paths, this (still) gives an infinite amount of path to process. Processing many paths at once (all having the same observable and secret trace) tackles both issues at the same time: it increases the rate of convergence and can deal with infinitely many paths at the same time,

Interestingly enough, these issues also appear in \emph{quantitative counterexample generation}. In that area, several techniques have already been provided to meet the challenges; we show how to apply those techniques in the current context. We consider two techniques: one is to group paths together using regular expressions, the other is to group paths together using strongly connected component analysis.

\paragraph{Regular expressions.}

In~\cite{Daws:04:ICTAC}, regular expressions containing
probability values are used to reason about traces in Markov
Chains. This idea is used in~\cite{Damman:08:QEST} in the context
of counterexample\index{counterexample} generation to group
together paths with the same observable behaviour. The regular
expression there are over pairs $\langle p,q\rangle$ with $p$ a
probability value and $q$ a state, to be able to track both
probabilities and observables. We now use the same idea to group
together paths with the same secret action and the same observable
actions.

We consider regular expressions over triples of the form $\langle
a,p,q\rangle$ with $p \in [0,1]$ a probability value, $a \in
\Sigma$ an action label and $q \in Q$ a state. Regular expressions
represent sets of paths as in~\cite{Damman:08:QEST}. We also take
the probability value of such a regular expression from that
article.

\begin{dfn}\label{dfn:val} The function $\val: {\cal R}(\Sigma)\rightarrow\mathbb{R}$ evaluates regular expressions:
$$
\begin{array}{cclcccl}
\val(\epsilon) &\!\!\eqdef\!\! & 1, &\hspace{0.05cm}& \val(r\concat r^\prime) & \!\!\eqdef\!\! & \val(r) \times  \val(r^\prime),\\
\val(\lre a,p,q \rre) & \!\!\eqdef\!\! & p, &\hspace{0.05cm}&\val(r\kleene)&\!\!\eqdef\!\!&1 \qquad \quad \ \  \mbox{if } \val(r)=1, \\
\val(r\union r^\prime) & \!\!\eqdef\!\! & \val(r) +  \val(r^\prime), && \val(r\kleene)& \!\!\eqdef\!\! &\frac{1}{1-\val(r)} \quad \mbox{if } \val(r)\not=1.\\
\end{array}$$
\end{dfn}

The idea is to obtain regular expressions representing sets of paths of $M$, each regular expression will contribute in the approximation of the channel matrix and leakage. Several algorithms to translate automata into regular expressions have been proposed (see \cite{Neumann:05:TR}). Finally, each term of the regular expression obtained can be processed separately by adding the corresponding probabilities~\cite{Daws:04:ICTAC} to the partial matrix.

\comment{Maybe cite the chapter comparing the different algorithms
to translate automata into regular expressions}

As mentioned before, all paths represented by the regular expression should have the same observable and secret trace in order to be able to add its probability to a single element of the matrix. To ensure that condition we request the regular expression to be normal, i.e., of the form $r_1 + \dots + r_n$ with the $r_i$ containing no $+$'s.

We will now describe this approach by an example.
\begin{exa}\label{exa:regexp} We used JFLAP $7.0$~\cite{JFLAP} to obtain the regular expression $r\eqdef r_1+r_2+\cdots + r_{10}$ equivalent to the automaton in Figure~\ref{fig:crowdsprotocol}.
$$
\begin{array}{lll}
\smallskip
r_1 &\!\!\eqdef\!\! &\lre{b},\frac{2}{3},\emph{q}_{{b}}\rre \cdot \hat{r}^\star \cdot \lre{B},0.3,\emph{corr}\rre \cdot \lre\tau,1,\emph{S}\rre,\\
\smallskip
r_2 &\!\!\eqdef\!\! &\lre{b},\frac{2}{3},\emph{q}_{{b}}\rre \cdot \hat{r}^\star \cdot \lre\tau,0.3,\emph{q}_{{a}}\rre \cdot \lre\tau,0.3,\emph{q}_{{a}}\rre^\star \cdot \lre{A},0.3,\emph{corr}\rre \cdot \lre\tau,1,\emph{S}\rre,\\
\smallskip
r_3&\!\!\eqdef\!\!& \lre{a},\frac{1}{3},\emph{q}_{{a}}\rre \cdot \lre\tau,0.3,\emph{q}_{{a}}\rre^\star \cdot \lre{A},0.3,\emph{corr}\rre \cdot \lre\tau,1,\emph{S}\rre,\\
\smallskip
r_4&\!\!\eqdef\!\!& \lre{b},\frac{2}{3},\emph{q}_{{b}}\rre \cdot \hat{r}^\star \cdot \lre{U},0.1,\emph{S}\rre,\\
r_5 &\!\!\eqdef\!\! &\lre{a},\frac{1}{3},\emph{q}_{{a}}\rre \cdot \lre\tau,0.3,\emph{q}_{{a}}\rre^\star \cdot \lre\tau,0.3,\emph{q}_{{b}}\rre \cdot \hat{r}^\star \cdot \lre{B},0.3,\emph{corr}\rre \cdot \lre\tau,1,\emph{S}\rre,\\
\smallskip
r_6&\!\!\eqdef\!\!& \lre{b},\frac{2}{3},\emph{q}_{{b}}\rre \cdot \hat{r}^\star \cdot \lre\tau,0.3,\emph{q}_{{a}}\rre \cdot \lre\tau,0.3,\emph{q}_{{a}}\rre^\star \cdot\lre{U},0.1,\emph{S}\rre,\\
\smallskip
r_7&\!\!\eqdef\!\!& \lre{a}, \frac{1}{3},\emph{q}_{{a}}\rre \cdot \lre\tau,0.3,\emph{q}_{{a}}\rre^\star \cdot \lre{U},0.1,\emph{S}\rre,\\
\smallskip
r_8&\!\!\eqdef\!\!& \lre{a},\frac{1}{3},\emph{q}_{{a}}\rre \cdot \lre\tau,0.3,\emph{q}_{{a}}\rre^\star \cdot \lre\tau,0.3,\emph{q}_{{b}}\rre \cdot \hat{r}^\star \cdot \lre\tau,0.3,\emph{q}_{{a}}\rre \cdot \lre\tau,0.3,\emph{q}_{{a}}\rre^\star \cdot \\
&&\lre{A},0.3,\emph{corr}\rre \cdot \lre\tau,1,\emph{S}\rre,\\
\smallskip
r_9&\!\!\eqdef\!\!& \lre{a},\frac{1}{3},\emph{q}_{{a}}\rre  \cdot \lre\tau,0.3,\emph{q}_{{a}}\rre^\star \cdot \lre\tau,0.3,\emph{q}_{{b}}\rre \cdot \hat{r}^\star  \cdot \lre{U},0.1,\emph{S}\rre,\\
\smallskip
r_{10}&\!\!\eqdef\!\!& \lre{a},\frac{1}{3},\emph{q}_{{a}}\rre \! \cdot \! \lre\tau,0.3,\emph{q}_{{a}}\rre^\star \! \cdot \! \lre\tau,0.3,\emph{q}_{{b}}\rre \! \cdot \! \hat{r}^\star \! \cdot \! \lre\tau,0.3,\emph{q}_{{a}}\rre  \! \cdot \! \lre\tau,0.3,\emph{q}_{{a}}\rre^\star  \! \cdot \! \lre{U},0.1,\emph{S}\rre,\\
\end{array}
$$

\noindent where $\hat{r}\eqdef (\lre\tau,0.3,\emph{q}_{b}\rre^\star \cdot (\lre\tau,0.3,\emph{q}_{a}\rre \cdot \lre\tau,0.3,\emph{q}_{a}\rre^\star  \cdot \lre\tau,0.3,\emph{q}_{b}\rre)^\star)$. We also note
$$
\begin{array}{rclcrclcrcl}
\smallskip
\val(r_1)&\!\!\!=\!\!\!&\frac{7}{20}\ ({b},{B}), &&\val(r_2)&\!\!\!=\!\!\!&\frac{3}{20}\ ({b},{A}),&&\val(r_3)&\!\!\!=\!\!\!&\frac{1}{7}\ ({a},{A}),\\
\smallskip
\val(r_4)&\!\!\!=\!\!\!&\frac{7}{60}\ ({b},{U})&& \val(r_5)&\!\!\!=\!\!\	!&\frac{3}{40}\ ({a},{B}),&&\val(r_6)&\!\!\!=\!\!\!&\frac{1}{20}\ ({b},{U}),\\
\smallskip
\val(r_7)&\!\!\!=\!\!\!&\frac{1}{21}\
({a},{U}),&&\val(r_8)&\!\!\!=\!\!\!&\frac{9}{280}\ ({a},{A})&&\val(r_9)&\!\!\!=\!\!\!&\frac{1}{40}\ ({a},{U}),\\
\smallskip
\val(r_{10})&\!\!\!=\!\!\!&\frac{3}{280}\ ({a},{U}).
\end{array}$$
\noindent where the symbols between brackets denote the secret and observable traces of each regular expression.

\end{exa}

Now we have all the ingredients needed to define partial matrices using regular expressions.

\begin{dfn}\label{dfn:regexppartialMatrices} Let $\ihs=(M, \Sigma_\secre, \Sigma_\obs,\Sigma_\hidd)$ be an $\IHS$, $\pi$ its a priori distribution, and $r=r_1+r_2+\cdots+r_n$ a regular expression equivalent to $M$ in normal form. We define for $k=0,1,\ldots,n$ the matrices $\cm^k:\rvs\times\rvo\rightarrow [0,1]$ as follows
\vspace{-0.25cm}
\begin{eqnarray*}
\cm^{k}(o|s) &=& \begin{cases}
                    0 & \text {if } k=0,\\
                    \cm^{k-1}(o|s) + \frac{\val(r_k)}{\pi(s)} & \text {if } k\not=0 \text{ and }\trace_{\Sigma_\obs}(r_k)=o \\
                    & \text{and }\trace_{\Sigma_\secre}(r_k)=s,\\
                    \cm^{k-1}(o|s) & \text{otherwise}.\\
               \end{cases}
\end{eqnarray*}
\vspace{-0.25cm}
\end{dfn}
\comment{$r_{k+1}$???}
\noindent Note that in the context of Definition~\ref{dfn:regexppartialMatrices}, we have $P^n = P$.



\paragraph{SCC analysis approach.}
In~\cite{Andres:08:HVC}, paths that only differ in the way they traverse strongly connected components ($\SCC$'s) are grouped together. Note that in our case, such paths have the same secret and observable trace since secret and observable actions cannot occur on cycles. Following~\cite{Andres:08:HVC}, we first abstract away the $\SCC$'s, leaving only probabilistic transitions that go immediately from an entry point of the SCC to an exit point (called input and output states in~\cite{Andres:08:HVC}). This abstraction happens in such a way that the observable behaviour of the automaton does not change.

Instead of going into technical details (which also involves
translating the work~\cite{Andres:08:HVC} from Markov Chains
 to fully probabilistic automata), we describe
the technique by an example.

\begin{wrapfigure}{r}{3cm}
\vspace{-0.01cm} \centering
\includegraphics[width=3.5cm]{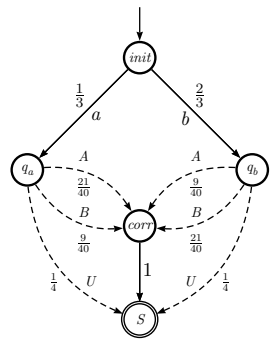}
\caption{Crowds after the $\SCC$ analysis}
\label{fig:CrowdsAcyclic} \vspace{-0.5cm}
\end{wrapfigure}
%



\begin{exa}\label{exa:PMSCC} Figure \ref{fig:CrowdsAcyclic} shows the automaton obtained after abstracting $\SCC$. In the following we show the set of complete paths of the automaton, together with their corresponding probabilities and traces
$$
\begin{array}{lllllll}
\sigma_1&\eqdef&\emph{init}\stackrel{{a}}{\longrightarrow}\emph{q}_{{a}} \stackrel{{A}}{\longrightarrow}\emph{corr} \stackrel{\tau}{\longrightarrow}\emph{S}, & \hspace{0.35cm} & \PP(\sigma_1)=\frac{7}{40}, & \hspace{0.35cm} & ({a},{A}),\\
\sigma_2&\eqdef&\emph{init}\stackrel{{b}}{\longrightarrow}\emph{q}_{{b}} \stackrel{{B}}{\longrightarrow}\emph{corr} \stackrel{\tau}{\longrightarrow}\emph{S}, & & \PP(\sigma_2)=\frac{7}{20}, &  & ({b},{B}),\\
\sigma_3&\eqdef&\emph{init}\stackrel{{a}}{\longrightarrow}\emph{q}_{{a}} \stackrel{{U}}{\longrightarrow}\emph{S}, & & \PP(\sigma_3)=\frac{1}{12}, &  & ({a},{U}),\\
\sigma_4&\eqdef&\emph{init}\stackrel{{b}}{\longrightarrow}\emph{q}_{{b}} \stackrel{{U}}{\longrightarrow}\emph{S}, & & \PP(\sigma_4)=\frac{1}{6}, &  & ({b},{U}),\\
\sigma_5&\eqdef&\emph{init}\stackrel{{a}}{\longrightarrow}\emph{q}_{{a}} \stackrel{{B}}{\longrightarrow}\emph{corr} \stackrel{\tau}{\longrightarrow}\emph{S}, & & \PP(\sigma_5)=\frac{3}{40}, &  & ({a},{B}),\\
\sigma_6&\eqdef&\emph{init}\stackrel{{b}}{\longrightarrow}\emph{q}_{{b}} \stackrel{{A}}{\longrightarrow}\emph{corr} \stackrel{\tau}{\longrightarrow}\emph{S}, & & \PP(\sigma_6)=\frac{3}{20}, &  & ({b},{A}).\\
\end{array}$$

\end{exa}

Note that the $\SCC$ analysis approach groups more paths together (for instance $\sigma_1$ group together the same paths than the regular expressions $r_3$ and $r_8$ in the examples of this section), as a result channel matrix and leakage are obtained faster. On the other hand, regular expressions are more informative providing more precise feedback.

\subsection{Identifying high-leakage sources}

We now describe how to use the techniques presented in this section to identify sources of high leakage of the system.
Remember that the a posteriori vulnerability can be expressed in terms of joint probabilities
\[
\vul(S\mid O) \ = \  \sum_{o} \max_{s} \PP([s]\cap[o]).
\]
%
This suggests that, in case we want to identify parts of the
system generating high leakage, we should look at the sets of
paths $[o_1]\cap[s_1],\ldots, [o_n]\cap[s_n]$ where $\lbrace
o_1,\ldots o_n\rbrace=\obs$ and $s_i\in\arg\left(\max_{s}
\PP([o_i]\cap[s])\right)$. In fact, the multiplicative leakage is
given dividing $\vul(S\mid O)$ by $\vul(S)$, but since $\vul(S)$
is a constant value (i.e., it does not depend on the row) it does
not play a role here. Similarly for the additive case.

The techniques presented in this section allow us to obtain such
sets and, furthermore, to partition them in a convenient way with
the purpose of identifying states/parts of the system that
contribute the most to its high probability. Indeed, this is the
aim of the counterexample generation techniques previously
presented. For further details on how to debug sets of paths and
why these techniques meet that purpose we refer
to~\cite{Aljazzar:08:QEST,Damman:08:QEST,Andres:08:HVC}.

\begin{exa} To illustrate these ideas, consider the path $\sigma_1$ of the previous example; this path has maximum probability for the observable $A$. By inspecting the path we find the transition with high probability $\emph{q}_{a}\stackrel{A}{\to}\emph{corr}$.  This suggests to the debugger that the corrupted user has an excessively high probability of intercepting a message from user $a$ in case he is the initiator.

In case the debugger requires further information on how corrupted users can intercept messages, the regular expression approach provides further/more-detailed information. For instance, we obtain further information by looking at regular expressions $r_3$ and $r_8$ instead of path $\sigma_1$ (in particular it is possible to visualize the different ways the corrupted user can intercept the message of user $a$ when he is the generator of the message).
\end{exa}


\section{Information Hiding Systems with Variable a Priori}

In Section~\ref{sec:sihs} we introduced a notion of $\IHS$ in which the distribution over secrets is fixed. However, when reasoning about security protocols this is often not the case. In general we may assume that an adversary knows the distribution over secrets in each particular instance, but the protocol should not depend on it. In such scenario we want the protocol to be secure, i.e. ensuring low enough leakage, for every possible distribution over secrets. This leads to the definition of maximum leakage.

\begin{dfn} [\cite{Smith:09:FOSSACS,Braun:09:MFPS}] Given a noisy channel $\nc=\allowbreak(\secre,\allowbreak\obs,\allowbreak\cm)$, we define the maximum multiplicative and additive leakage (respectively) as
\[\mleakm(\nc)\eqdef\max_{\pi\in\distr(\secre)}\leakm(\nc,\pi), \qquad \mbox{ and } \qquad \mleaka(\nc)\eqdef\max_{\pi\in\distr(\secre)}\leaka(\nc,\pi).\]
\end{dfn}

\noindent
In order to model this new scenario where the distribution over secrets may change, the selection of the secret is modeled as \emph{nondeterministic choice}. In this way such a distribution remains undefined in the protocol/automaton. We still assume that
the choice of the secret  happens at the beginning, and that we have only one secret per run.
We call such automaton an $\IHS$ \emph{with variable a priori}.


\begin{dfn}
\label{def:IHSvar} An $\IHS$ with variable a priori  is a
quadruple $\ihs=(M, \allowbreak\Sigma_\secre,
\allowbreak\Sigma_\obs, \allowbreak\Sigma_\hidd)$ where
$M=(Q, \Sigma, \qi, \alpha)$ is a probabilistic automaton,
$\Sigma=\Sigma_\secre\cup\Sigma_\obs\cup\Sigma_\hidd$ where
$\Sigma_\secre$, $\Sigma_\obs$, and $\Sigma_\hidd$ are pairwise
disjoint sets of secret, observable, and internal actions,
and $\alpha$ satisfies the following restrictions:
\begin{enumerate}
\item $\alpha(\qi)\subseteq \distr(\Sigma_\secre\times Q)$,
\item $|\alpha(\qi)|=|\secre|\land \forall s\in \Sigma_\secre\ .\ \exists \ q \ .\ \pi(s,q)=1,$ for some $\pi\in\alpha(\qi)$,
\item $\alpha(q)\subseteq \distr(\Sigma_\obs\cup\Sigma_\hidd \times Q)$ and $|\alpha(q)|\leq 1$, for all $q\not=\qi$,
\item $\forall a\in(\Sigma_\secre\cup\Sigma_\obs) \ .\ a\not\in\CycleA(M)$,
\item  $\forall q, s\  \forall \pi\! \in \! \alpha(\qi)\  . \  (\pi(s,q)=1 \Rightarrow \PP(\cpaths_q(M)\cap\paths^*_q(M))=1)$.
\end{enumerate}
\end{dfn}

\comment{restriction $2$ and $5$ are wrong, $\pi$ is a
distribution over $(\Sigma_\secre\times Q)$. I also think the
``unique'' symbol ($!$) is not necessary\\ Additionally, i would
write restriction 2 as $|\alpha(\qi)|=|\secre|\land \forall s\in
\Sigma_\secre\ .\ \exists\ \pi\in\alpha(\qi) .\ \pi(s,q)=1,$ for
some state $q$}

Restrictions  $1$, $2$ and $3$ imply that the secret choice is non deterministic and happens only at the beginning. Additionally, $3$ means that all the other choices are probabilistic. Restriction $4$ ensures that the channel associated to the $\IHS$ has finitely many inputs and outputs.
Finally, $5$ implies that, after we have chosen a secret, every computation terminates except for a set with null probability.



Given an $\IHS$ with variable a priori,  by fixing the a priori distribution we can obtain a standard $\IHS$ in the obvious way:
\begin{dfn} Let $\ihs=((Q,\Sigma,\qi,\alpha), \Sigma_\secre, \Sigma_\obs, \Sigma_\hidd)$ be an $\IHS$ with variable a priori
and $\pi$ a distribution over $\secre$. We define the $\IHS$
associated to $(\ihs,\pi)$ as
$\ihs_{\pi}=((Q,\Sigma,\allowbreak\qi,\alpha^\prime),
\Sigma_\secre, \Sigma_\obs, \Sigma_\hidd)$ with
$\alpha^\prime(q)=\alpha(q)$ for all $q\not=\qi$ and
$\alpha^\prime(\qi)(s,\cdot)=\pi(s)$.
\end{dfn}

\comment{In previous dfn: Define the -> We define the} The
following result says that the conditional probabilities
associated to an $\IHS$ with variable a priori are
\emph{invariant} with respect to the a priori distribution. This
is fundamental in order to interpret the $\IHS$ as a channel.

\begin{proposition} Let $\ihs$ be an $\IHS$ with variable a priori. Then for all $\pi,\pi^\prime\in\distr(\secre)$ such that $\pi(s)\not=0$ and $\pi^\prime(s)\not=0$ for all $s\in\secre$ we have that
$\cm_{\ihs_{\pi}}=\cm_{\ihs_{\pi^\prime}}$.
\end{proposition}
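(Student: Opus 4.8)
We must show that for an $\IHS$ with variable a priori, the channel matrix $\cm_{\ihs_{\pi}}$ does not depend on the choice of (fully supported) a priori distribution $\pi$. Recall that the $(o\mid s)$-entry of the channel matrix is $\PP([o]\mid[s]) = \PP([s]\cap[o])/\PP([s])$, where the probabilities are taken in the fully probabilistic automaton $\ihs_\pi$ obtained by resolving the initial nondeterministic choice according to $\pi$. The intuition is simple: once the secret $s$ has been chosen, the remaining computation is governed by a fixed sub-automaton that does not mention $\pi$ at all; conditioning on $[s]$ exactly restricts attention to that sub-automaton, and the $\pi(s)$ factor cancels.

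\textbf{Plan of the proof.} First I would unfold the definitions. By Restriction~2 in Definition~\ref{def:IHSvar}, for each secret $s\in\Sigma_\secre$ there is a \emph{unique} state $q_s$ with $1_s \in \alpha(\qi)$ sending $s$ to $q_s$ with probability $1$ (more precisely, a Dirac distribution $\pi_s\in\alpha(\qi)$ with $\pi_s(s,q_s)=1$). In $\ihs_\pi$ the initial transition becomes $\alpha'(\qi)(s,q_s)=\pi(s)$. Hence every complete path $\sigma$ with $\trace_{\Sigma_\secre}(\sigma)=s$ has the form $\qi \stackrel{s}{\to} q_s \stackrel{}{\leadsto} \cdots$, i.e.\ $\sigma = \qi\stackrel{s}{\to}\sigma'$ with $\sigma'\in\cpaths_{q_s}(M)$, and by the definition of the path probability measure,
\[
\PP_{\ihs_\pi}(\cone{\qi\stackrel{s}{\to}\sigma'}) \;=\; \pi(s)\cdot \PP_{q_s}(\cone{\sigma'}),
\]
where $\PP_{q_s}$ is the measure induced by the sub-automaton rooted at $q_s$, which is \emph{the same object} in $\ihs_\pi$ and $\ihs_{\pi'}$ (the transition function $\alpha'$ agrees with $\alpha$ on all states $q\neq\qi$). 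I would then lift this from cones to the generated $\sigma$-algebra by a standard monotone-class / $\pi$-$\lambda$ argument: the collection of measurable sets $\Delta\subseteq\cpaths_{q_s}(M)$ on which $\PP_{\ihs_\pi}(\qi\stackrel{s}{\to}\Delta) = \pi(s)\,\PP_{q_s}(\Delta)$ contains all cones, is closed under complements within $[s]$ and countable disjoint unions, hence is all of $\mathcal F_{q_s}$.

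\textbf{Key computation.} Now fix $o\in\obs$. The set $[s]\cap[o]$ consists exactly of the paths $\qi\stackrel{s}{\to}\sigma'$ with $\sigma'\in\cpaths_{q_s}(M)$ and $\trace_{\Sigma_\obs}(\sigma')=o$; call this latter set $[o]_{q_s}$. By the previous paragraph,
\[
\PP_{\ihs_\pi}([s]\cap[o]) \;=\; \pi(s)\cdot \PP_{q_s}([o]_{q_s}), \qquad \PP_{\ihs_\pi}([s]) \;=\; \pi(s)\cdot \PP_{q_s}(\cpaths_{q_s}(M)) \;=\; \pi(s),
\]
the last equality using Restriction~5, which guarantees $\PP_{q_s}(\cpaths_{q_s}(M)\cap\paths^*_{q_s}(M))=1$ and in particular $\PP_{q_s}(\cpaths_{q_s}(M))=1$. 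Since $\pi(s)\neq 0$ by hypothesis, we may divide and obtain
\[
\cm_{\ihs_\pi}(o\mid s) \;=\; \frac{\PP_{\ihs_\pi}([s]\cap[o])}{\PP_{\ihs_\pi}([s])} \;=\; \PP_{q_s}([o]_{q_s}),
\]
an expression in which $\pi$ no longer appears. The same computation with $\pi'$ in place of $\pi$ (legitimate since $\pi'(s)\neq 0$ as well) gives $\cm_{\ihs_{\pi'}}(o\mid s) = \PP_{q_s}([o]_{q_s})$, the identical quantity. As $o$ and $s$ were arbitrary, $\cm_{\ihs_\pi}=\cm_{\ihs_{\pi'}}$, which is the claim.

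\textbf{Expected main obstacle.} The computation itself is essentially bookkeeping; the only genuinely delicate point is the measure-theoretic step of passing from cones to arbitrary Borel sets in the factorization $\PP_{\ihs_\pi}(\qi\stackrel{s}{\to}\Delta)=\pi(s)\PP_{q_s}(\Delta)$, and checking that the sub-automaton measure $\PP_{q_s}$ is literally independent of the resolution of the initial choice. Both are routine given the construction of $\PP$ from cones in the Preliminaries, so I would state them carefully but not belabor them. A minor point to handle cleanly is Restriction~2's ``unique $q$'' clause: it is what ensures $[s]$ corresponds to a single sub-automaton rather than a mixture, so I would make sure to invoke it explicitly when writing $[s]$ as $\qi\stackrel{s}{\to}\cpaths_{q_s}(M)$.
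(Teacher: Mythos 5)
Your proof is correct and follows essentially the same route as the paper's: factor $\PP([s]\cap[o])=\pi(s)\cdot\PP_{q_s}([o])$ and $\PP([s])=\pi(s)$ at the unique initial secret transition, so the quotient $\PP_{q_s}([o])$ is independent of $\pi$. You simply spell out the measure-theoretic extension from cones and the use of Restrictions 2 and 5, which the paper leaves implicit.
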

\begin{proof} The secret $s$ appears only once in the tree and only at the beginning of paths,
hence $\PP([s]\cap[o])  =   \alpha^\prime(\qi)(s,\cdot) \, \PP_{\!\! q_s}([o])$ and $\PP([s])=\alpha^\prime(\qi)(s,\cdot)$. Therefore
$\PP([o]\mid [s]) =  \PP_{\!\! q_s}([o]),$
where $q_s$ is the state after performing $s$. While $\alpha^\prime(\qi)(s,\cdot)$ is different in $\ihs_{\pi}$ and $\ihs_{\pi^\prime}$, $\PP_{\!\! q_s}([o])$ is the same, because it only depends on the parts of the paths after the choice of the secret.
\end{proof}

Note that, although in the previous proposition we exclude input distributions with zeros, the concepts of vulnerability and leakage also make sense for these distributions\footnote{We assume that conditional probabilities are extended by continuity on such distributions.}.

This result implies that we can  define the channel matrix of an $\IHS$ $\ihs$ with variable a priori
as the channel matrix of $\ihs_\pi$ for any $\pi$, and we can compute it, or approximate it, using the same techniques of previous sections. 
Similarly we can compute or approximate the leakage for any given $\pi$.

We now turn the attention to the computation of the maximum leakage. The  following result from the literature is crucial for our purposes.
\begin{proposition}[\cite{Braun:09:MFPS}]
Given a channel $\nc$, we have $\arg
\max_{\pi\in\distr(\secre)}\allowbreak\leakm(\nc,\pi)$ is the
uniform distribution, and $\arg
\max_{\pi\in\distr(\secre)}\leaka(\nc,\pi)$ is a \emph{corner
point} distribution, i.e. a distribution $\pi$ such that
$\pi(s)=\frac{1}{\kappa}$ on $\kappa$ elements of $\secre$, and
$\pi(s)= 0$ on all the other elements.
\end{proposition}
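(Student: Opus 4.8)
The plan is to prove the two assertions separately: the multiplicative one follows from a one-line inequality, while the additive one needs a convexity/extreme-point argument, which is where the real work lies.

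\medskip
\noindent\textbf{Multiplicative case.} Using the joint-probability form of the a posteriori vulnerability in~(\ref{eqn:postvulnerability}), I would write $\vul(\emph{X}\mid\emph{Y})=\sum_{o\in\obs}\max_{s\in\secre}\cm(o\mid s)\,\pi(s)$. For an arbitrary prior $\pi$ and each $o$ we have $\max_{s}\cm(o\mid s)\pi(s)\le\big(\max_{s}\cm(o\mid s)\big)\big(\max_{s}\pi(s)\big)$; summing over $o$ and dividing by $\vul(\emph{X})=\max_s\pi(s)$ gives $\leakm(\nc,\pi)\le\sum_{o}\max_{s}\cm(o\mid s)$. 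Taking $\pi$ to be the uniform distribution $\pi_u$ on $\secre$ turns this bound into an equality, since then $\vul(\emph{X})=1/|\secre|$ cancels the common factor $1/|\secre|$ in $\vul(\emph{X}\mid\emph{Y})$, so $\leakm(\nc,\pi_u)=\sum_{o}\max_{s}\cm(o\mid s)$. Hence $\pi_u$ maximizes $\leakm(\nc,\cdot)$.

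\medskip
\noindent\textbf{Additive case.} Here $\pi\mapsto\leaka(\nc,\pi)=\sum_{o}\max_{s}\cm(o\mid s)\pi(s)-\max_{s}\pi(s)$ is a difference of two convex functions, so convexity cannot be exploited globally. The idea is to partition the simplex: for each $\hat s\in\secre$ let $R_{\hat s}=\{\pi\in\distr(\secre)\mid\pi(\hat s)\ge\pi(s)\text{ for all }s\in\secre\}$. These polytopes cover $\distr(\secre)$ (take $\hat s$ to be an a priori most likely secret), and on $R_{\hat s}$ the term $\max_s\pi(s)$ coincides with the linear function $\pi\mapsto\pi(\hat s)$. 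Thus $\leaka(\nc,\cdot)$ restricted to $R_{\hat s}$ is convex (a finite sum of pointwise maxima of linear functions, minus a linear function), so it attains its maximum over the compact polytope $R_{\hat s}$ at one of its vertices. It then remains to identify the vertices of $R_{\hat s}$: it lies in the hyperplane $\sum_s\pi(s)=1$ and is cut out by the inequalities $\pi(s)\ge 0$ ($s\in\secre$) and $\pi(\hat s)\ge\pi(s)$ ($s\ne\hat s$); at a vertex, $|\secre|-1$ independent such inequalities are tight, which (after a short argument disposing of the overlaps between the two families of constraints) forces every coordinate to be either $0$ or equal to the common value $\pi(\hat s)$. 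Hence the vertex is the uniform distribution on the set $T$ of coordinates not forced to $0$, with $\hat s\in T$ — precisely a corner-point distribution. Maximizing over $\hat s$ and over the vertices of each $R_{\hat s}$ shows that $\max_{\pi}\leaka(\nc,\pi)$ is attained at some corner point $\pi$ with $\pi(s)=1/|T|$ on $T$ and $\pi(s)=0$ elsewhere.

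\medskip
I would close with the remark that every quantity involved stays well defined for priors with zero entries (the corner points with $|T|<|\secre|$), because $\vul(\emph{X}\mid\emph{Y})$ is computed through joint probabilities and $\vul(\emph{X})$ is just $\max_s\pi(s)$; this is consistent with the continuity convention adopted earlier. The step I expect to need the most care is the vertex characterization of $R_{\hat s}$: one has to keep track of which non-negativity constraints and which inequalities $\pi(\hat s)\ge\pi(s)$ are simultaneously active (and of their overlaps) in order to conclude that the only vertices are the uniform-on-a-subset distributions. The rest — convexity of $\leaka(\nc,\cdot)$ on each $R_{\hat s}$, and maximization of a convex function over a polytope at a vertex — is routine.
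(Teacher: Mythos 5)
The paper does not actually prove this proposition: it is imported verbatim from Braun et al.\ \cite{Braun:09:MFPS} (the statement carries the citation in place of a proof), so there is no in-paper argument to compare yours against. Judged on its own, your proof is correct and self-contained. The multiplicative half is fine: the termwise bound $\max_s \cm(o\mid s)\pi(s)\le(\max_s\cm(o\mid s))(\max_s\pi(s))$ gives $\leakm(\nc,\pi)\le\sum_o\max_s\cm(o\mid s)$, and the uniform prior attains it; this also recovers the "sum of column maxima" formula that the paper itself invokes in a footnote when discussing Corollary~\ref{cor:maxleak}. The additive half is the substantive part, and your argument goes through: on each region $R_{\hat s}$ the vulnerability term $\max_s\pi(s)$ is the linear function $\pi(\hat s)$, so $\leaka(\nc,\cdot)$ is convex there and its maximum over the polytope is attained at a vertex; your vertex characterization is right, since at a vertex no coordinate can lie strictly between $0$ and $\pi(\hat s)$ (otherwise one can perturb, e.g.\ shifting mass between two such coordinates, or trading $\varepsilon$ against a uniform decrease of $\pi(\hat s)$ and the coordinates tied to it, while staying feasible), which forces the uniform-on-a-subset form with $\hat s$ in the support. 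Covering the simplex by the $R_{\hat s}$ and maximizing over $\hat s$ then yields a corner-point maximizer, which is exactly what the cited statement asserts. Your closing remark about priors with zero entries matches the paper's continuity convention, so no gap there either.
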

As an obvious consequence, we obtain:

\begin{corollary}\label{cor:maxleak}
Given an $\IHS$ $\ihs$ with variable a priori, we have
$\mleakm(\ihs) = \leakm(\ihs_\pi)$, where $\pi$ is the uniform distribution, and
$\mleaka(\ihs) = \leaka(\ihs_{\pi'})$, where $\pi'$ is a corner point distribution.
\end{corollary}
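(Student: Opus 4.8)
The claim is an immediate consequence of two facts already available in the excerpt, so the plan is essentially to assemble them. First I would invoke the \emph{Proposition from \cite{Braun:09:MFPS}} quoted just above the corollary: for any channel $\nc$, the a priori distribution maximizing the multiplicative leakage $\leakm(\nc,\pi)$ is the uniform distribution, while the one maximizing the additive leakage $\leaka(\nc,\pi)$ is a corner point distribution. Second I would use the preceding \emph{Proposition} of this section, which states that for an $\IHS$ $\ihs$ with variable a priori the channel matrix $\cm_{\ihs_\pi}$ is invariant over all full-support distributions $\pi$, so that there is a well-defined channel $\nc_{\ihs}$ associated with $\ihs$, and $\leakm(\ihs_\pi)=\leakm(\nc_{\ihs},\pi)$, $\leaka(\ihs_\pi)=\leaka(\nc_{\ihs},\pi)$ for every such $\pi$.

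Putting these together: by definition $\mleakm(\ihs)=\max_{\pi\in\distr(\secre)}\leakm(\nc_{\ihs},\pi)$, and by the Braun et al. result this maximum is attained at the uniform $\pi$; since $\leakm(\nc_{\ihs},\pi)=\leakm(\ihs_\pi)$, we get $\mleakm(\ihs)=\leakm(\ihs_\pi)$ for $\pi$ uniform. The additive case is identical with ``uniform'' replaced by ``corner point distribution $\pi'$'', giving $\mleaka(\ihs)=\leaka(\ihs_{\pi'})$.

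The only point that needs a word of care is that the uniform distribution has full support, whereas a corner point distribution $\pi'$ typically has zeros, so the invariance Proposition does not literally apply to $\ihs_{\pi'}$. I would handle this exactly as the paper's footnote suggests: the conditional probabilities $\cm(o\mid s)=\PP_{q_s}([o])$ depend only on the sub-automaton reached after the secret $s$, not on $\alpha'(\qi)(s,\cdot)$, so $\cm_{\ihs_{\pi'}}$ agrees with $\cm_{\ihs_\pi}$ on all secrets in the support of $\pi'$ and the vulnerabilities $\vul$ and $\leaka$ extend by continuity to $\pi'$; hence $\leaka(\nc_{\ihs},\pi')=\leaka(\ihs_{\pi'})$ still holds. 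With that remark in place the corollary follows with no further computation. There is no real obstacle here — the work was done in the two propositions — so the ``hard part'' is merely stating the continuity/support caveat precisely enough that the corner-point case is justified.
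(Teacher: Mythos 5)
Your proposal is correct and follows essentially the same route as the paper, which presents the corollary as an immediate consequence of the channel-invariance proposition for variable a priori $\IHS$'s together with the quoted result of Braun et al., relying (as you do) on the footnoted continuity extension to handle corner-point distributions with zero entries. Nothing is missing; your explicit remark about the corner-point case simply spells out what the paper leaves to the footnote.
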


\comment{Point out the multiplicative leakage for our Crowds
example} Corollary \ref{cor:maxleak} gives us a method to compute
the maxima leakages of $\ihs$. In the multiplicative case the
complexity is the same as for computing the
leakage\footnote{Actually we can compute it even faster using an
observation from \cite{Smith:09:FOSSACS} which says that the
leakage on the uniform distribution can be obtained simply by
summing up the maximum elements of each column of the channel
matrix.}. In the additive case we need to find the right corner
point, which can be done by computing the leakages for all corner
points and then comparing them. This method has exponential
complexity (in $|\secre|$) as the size of the set of corner points
is $2^{|\secre|}-1$. We conjecture that this complexity is
intrinsic, i.e. that the problem is NP-hard\footnote{Since the
publication of the article related to this chapter we have proved
that our conjecture is true. The proof will appear, together with
other results, in an extended version of the article}.

\section{Interactive Information Hiding Systems}\label{sec:IIHS}

We now consider extending the  framework to interactive systems,
namely to IHS's in which the secrets and the observables can
alternate in an arbitrary way. The secret part of a run is then an
element of  $\Sigma_\secre^*$, like the observable part is an
element of $\Sigma_\obs^*$. The idea is that such system models an
interactive play between a source of secret information, and a
protocol or program that may produce, each time, some observable
in response. Since each choice is associated to one player of this
``game'', it seems natural to impose that in a choice the actions
are either secret or observable/hidden,  but not both.

The main novelty and challenge of this extension is that part of the secrets come after observable events, and may depend on them.

\begin{dfn}
Interactive $\IHS$'s are defined as $\IHS$'s (Definition \ref{def:IHS}), except that Restrictions $1$ to $3$  are replaced by
$ \alpha(q)\in \distr( \Sigma_\secre\times Q) \cup \distr(
\Sigma-\Sigma_\secre\times Q). $
\end{dfn}

\begin{exa}\label{exa:ebay} Consider an Ebay-like auction protocol with one seller and two possible buyers, one rich and one poor. The seller first publishes the item he wants to sell, which can be either cheap or expensive.
Then the two buyers start bidding. At the end,  the seller looks at the profile of the bid winner
 and decides whether  to sell the item or cancel the transaction.
Figure   \ref{fig:ebay}  illustrates the automaton representing the protocol, for certain given probability distributions.
\end{exa}
%
%

\begin{wrapfigure}{r}{4.9cm}\label{fig:ebay}
\centering
\includegraphics[width=4.9cm]{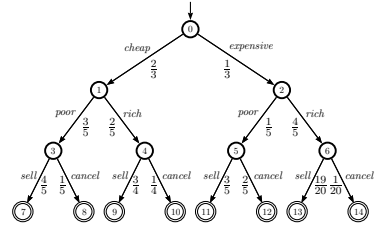}
\caption{Ebay Protocol} \label{fig:ebayprotocol} \vspace{-0.25cm}
\end{wrapfigure}
We assume that the identities of the buyers are secret, while the
price of the item and the seller's decision are observable. We
ignore for simplicity the internal actions which are performed
during the bidding phase. Hence $\Sigma_\obs=\lbrace {cheap} ,
{expensive} , {sell} , {cancel}\rbrace$, $\Sigma_\hidd=\emptyset$,
$\secre=\Sigma_\secre=\lbrace {poor}, {rich} \rbrace$, and $\obs =
\{ {cheap} , {expensive}\}\times\{{sell} , {cancel}\}$. The
distributions on $\secre$ and $\obs$ are defined as usual. For
instance
we   have $\PP([{cheap\ \ sell}])  \ =  \
\PP(\lbrace q_0 {\stackrel{cheap}{\longrightarrow}}  q_1 \stackrel{{poor}}{\longrightarrow} q_3 \stackrel{{sell}}{\longrightarrow} q_7 , q_0\stackrel{{cheap}}{\longrightarrow} q_1 \stackrel{{rich}}{\longrightarrow} q_3 \stackrel{{sell}}{\longrightarrow} q_7\rbrace )
\ =\ \frac{2}{3} \cdot \frac{3}{5} \cdot \frac{4}{5} + \frac{2}{3} \cdot \frac{2}{5} \cdot \frac{3}{4} = \frac{13}{25}.$

Let us now consider how to model the protocol in terms of a noisy
channel. It would seem natural to define the channel associated to
the protocol as the triple $(\secre, \obs, \cm)$ where $\cm(o\mid
s)=\PP([o]\,|\,[s])=\frac{\PP([s]\cap[o])}{\PP([s])}$. This is,
indeed, the approach taken in \cite{Desharnais:02:LICS}. For
instance, with the protocol of Example \ref{exa:ebay}, we would
have:
\begin{eqnarray}\label{eqn:condprob}\
\PP([{cheap\ \ sell}]\,|\,[{poor}])\!=\! \frac{\PP([{poor}]
\!\cap\! {[cheap\ \ sell}])}{\PP([{poor}])} \!=\!
\frac{\frac{2}{3}\cdot\frac{3}{5}\cdot\frac{4}{5}} {\frac{2}{3}
\cdot \frac{3}{5} \!+\! \frac{1}{3} \cdot \frac{1}{5} } \!=\!
\frac{24}{35}.
\end{eqnarray}
However, it turns out that in the interactive case (in particular
when the secrets are not in the initial phase), it does not make
sense to model the protocol in terms of a channel. At least, not a
channel with input $\secre$. In fact, the matrix of a channel is
supposed to be \emph{ invariant } with respect to the input
distribution (like in the case of the $\IHS$'s with variable a
priori considered in previous section), and this is not the case
here. The following is a counterexample.

\begin{exa}\label{exa:isFails}
Consider the same protocol as in Example \ref{exa:ebay}, but
assume now that the distribution over the choice of the buyer is
uniform, i.e.
$\alpha(q_1)({poor},q_3)$ $=
\alpha(q_1)({rich},q_4)$ $=
 \alpha(q_2)({poor},q_5)$ $=
  \alpha(q_2)({rich},q_6)$ $=
  \frac{1}{2}$.
  Then  the conditional probabilities  are different than those for  Example \ref{exa:ebay}.
In particular, in contrast to (\ref{eqn:condprob}), we have
  \[
  \PP([{cheap\ \ sell}]\,|\,[{poor}]) = \frac{\PP([{poor}]\cap{[cheap\ \ sell}])}{\PP([{poor}])} =
\frac{\frac{2}{3}\cdot\frac{1}{2}\cdot\frac{4}{5}} {\frac{2}{3} \cdot \frac{1}{2} + \frac{1}{3} \cdot \frac{1}{2} }
= \frac{8}{15}.
  \]

  \end{exa}

The above observation, i.e. the fact that the conditional probabilities depend on the input distribution, makes it unsound to reason about certain information-theoretic concepts in the standard way. For instance, the \emph{capacity} is defined as the maximum mutual information over all possible input distributions, and the traditional algorithms to compute it are based on the assumption that the channel matrix remains the same while the input distribution variates. This does not make sense anymore in the interactive setting.

However, when the input distribution is fixed, the matrix of the joint probabilities is well defined as $\cm_\lland(s,o) = \PP([s]\cap [o])$,
and  can be computed or approximated using the same methods as for simple $\IHS$'s.
The a priori probability and the channel matrix can then be derived in the standard way:
\vspace{-0.3cm}
\begin{eqnarray*}
\pi(s) = \sum_{o}\cm_\lland(s,o), \qquad \qquad \cm(o\mid s) = \frac{\cm_\lland(s,o)}{\pi(s)}.
\end{eqnarray*}
\vspace{-0.3cm}

Thanks to the formulation (\ref{eqn:postvulnerability}) of the a posteriori vulnerability, the leakage can be computed directly using the joint probabilities.
\begin{exa}
Consider the Ebay protocol $\cal I$ presented in Example \ref{exa:ebay}.
The matrix of the joint probabilities $\cm_\lland(s,o)$ is:
\vspace{-0.5cm}
\begin{center}
\begin{tabular}{   c | c  c  c  c }
    &\ $cheap\ sell$ \ & \ $cheap\ cancel$ \ & \ $expensive \ sell$ \ & $\ expensive \ cancel$\ \\ \hline
    \vspace{-0.25cm}
    \\
    \ $poor$ \ & $\frac{8}{25}$ & $\frac{2}{25}$ & $\frac{1}{25}$ & $\frac{2}{75}$ \\
    \vspace{-0.25cm}
    \\
    \ $rich$ \ & $\frac{1}{5}$ & $\frac{1}{15}$ & $\frac{19}{75}$ & $\frac{1}{75}$ \\
\end{tabular}
\end{center}
Furthermore $\pi({poor})=\frac{7}{15}$ and
$\pi({rich})=\frac{8}{15}$. Hence we have
$\leakm(\ihs)=\frac{51}{40}$ and
$\leaka(\ihs)=\frac{11}{75}$.\comment{In the previous matrix
actions are not italics}

\end{exa}

We note that our techniques to compute channel matrices and
leakage extend smoothly to the case where secrets are not required
to happen at the beginning. However, no assumptions can be made
about the occurrences of secrets (they do not need to occur at the
beginning anymore). This increases the complexity of the
reachability technique to $O((|\secre|\cdot|\obs|\cdot|Q|)^3)$. On
the other hand, complexity bounds for the iterative approach
remain the same.

\section{Related Work}


To the best of our knowledge, this is the first work dealing with
the efficient computation of channel matrices and leakage.
However, for the simple scenario, channel matrices can be computed
using standard model checking techniques. Chatzikokolakis et
al.~\cite{Chatzikokolakis:08:IC} have used Prism~\cite{PRISM} to
model Crowds as a Markov Chain and compute its channel matrix.
Each conditional probability $\cm(o|s)$ is computed as the
probability of reaching a state where $o$ holds starting from
\emph{the} state where $s$ holds. Since for the simple version of
$\IHS$'s secrets occur only once and before observables (as in
Crowds), such a reachability probability equals $\cm(o|s)$. This
procedure leads to $O(|\secre|\cdot |\obs| \cdot
|\overline{Q}|^3)$ time complexity to compute the channel matrix,
where $\overline{Q}$ is the space state of the Markov Chain.

Note that the complexity is expressed in terms of the space state
of a Markov Chain instead of automaton. Since Markov Chains
 do not carry information in transitions they
have a larger state space than an equivalent
\begin{wrapfigure}{r}{4cm}
\vspace{-0.4cm}
\centering
\includegraphics[width=4cm]{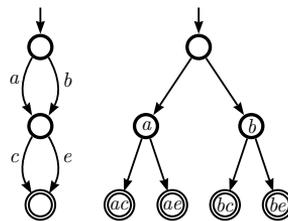}
\label{fig:AutomatavcMC}\caption{Automaton vs Markov Chain}\label{fig:autVSmc}
\vspace{-0.4cm}
\end{wrapfigure}\comment{Change name to states of the Markov Chain to $ac$, $ae, \ldots$}
automaton. Figure~\ref{fig:autVSmc} illustrates this: to model the automaton (left hand side) we need to encode the information in its transitions into states of the Markov Chain (right hand side). Therefore, the probability of seeing observation $a$ and then $c$ in the automaton can be computed as the probability of reaching the state $ac$. The Markov Chain  used for modeling Crowds (in our two honest and one corrupted user configuration) has $27$ states. 

For this reason we conjecture that our complexity $O(|\obs| \cdot
|Q|^3)$ is a considerable improvement over the one on Markov
Chains  $O(|\secre|\cdot |\obs| \cdot |\overline{Q}|^3)$.

With respect to the interactive scenario, standard model checking techniques do not extend because multiple occurrences of the same secret are allowed (for instance in our Ebay example, $\cm({cheap\ \ sell} | {rich})$ cannot be derived from reachability probabilities from the two different states of the automaton where ${rich}$ holds).

\newpage
\thispagestyle{empty}

\chapter{Information Hiding in Probabilistic Concurrent Systems}
\label{ch.ihpcs}

\begin{quote}
 \textit{In this chapter we study the problem of information hiding in systems
 characterized by the coexistence of randomization and
 concurrency. Anonymity and Information Flow are examples of this
 notion.  It is well known that the presence of
 nondeterminism, due to the possible interleavings and
 interactions of the parallel components, can cause unintended
 information leaks. The most established approach to solve this
 problem is to fix the strategy of the scheduler beforehand. In
 this work, we propose a milder restriction on the schedulers, and
 we define the notion of strong (probabilistic) information hiding
 under various notions of observables. Furthermore, we propose a
 method, based on the notion of automorphism, to verify that a
 system satisfies the property of strong information hiding,
 namely strong anonymity or \rev{non}-interference, depending on the
 context. Through the chapter, we use the canonical example of the
 Dining Cryptographers to illustrate our ideas and techniques.}
\end{quote}


\section{Introduction}
The problem of information hiding consists in
trying to prevent the adversary to infer
confidential information from the observables.
Instances of this issue are Anonymity and Information Flow.
In both fields there is a growing interest in the quantitative aspects of the problem, see for instance
\cite{Halpern:05:JCS,Bhargava:05:CONCUR,Zhu:05:ICDCS,Clark:05:QAPL,Clark:05:JLC, Malacaria:07:POPL,Malacaria:08:PLAS,Braun:08:FOSSACS,Clarkson:09:JCS,Chatzikokolakis:08:IC,Chatzikokolakis:08:JCS,Smith:09:FOSSACS}.
This is justified by the fact that often we have some a priori knowledge about the likelihood of the various secrets (which we can usually  express in terms of a probability distribution), and by the fact that protocols often  use randomized actions to obfuscate the link between secret and observable, like in the case of the anonymity protocols of DC Nets \cite{Chaum:88:JC}, Crowds \cite{Reiter:98:TISS}, Onion Routing
\cite{Syverson:97:SSP}, and Freenet \cite{Clarke:00:DIAU}.

In a concurrent setting, like in the case of multi-agent systems, there is also another source of uncertainty,
which derives from the fact that the various entities may interleave and interact in ways that are usually unpredictable,
either because they depend on factors that are too complex to analyze, or because (in the case of specifications) they are
implementation-dependent.

The formal analysis of  systems which exhibit probabilistic and nondeterministic behavior usually involves the use of so-called  \emph{schedulers},
which are functions that, for each path, select only one possible (probabilistic) transition, thus delivering a purely probabilistic
execution tree, where each event has a precise probability.

In the area of security, there is the problem that secret choices, 
like all choices, give rise to different paths. On the other hand, the decision of the scheduler may influence  the observable behavior of the system.
Therefore the security properties are usually violated if we admit as schedulers all possible functions of the paths: 
certain schedulers induce a dependence of the observables on the secrets, and
protocols which would not leak secret information when running in ``real'' systems 
(where the scheduling devices cannot ``see'' the internal secrets of the components and therefore cannot depend on them),
do leak secret information under this more permissive notion of scheduler.
This is a well known problem for which various solutions have already been proposed
\cite{Canetti:06:WODES,Canetti:06:DISC,Chatzikokolakis:10:IC,Chatzikokolakis:09:FOSSACS}.
We will come back to these in the ``Related work'' section.

\subsection{Contribution}
We now list the main contribution of this chapter:

\begin{itemize}
\item
We define a class of partial-information schedulers (which we call \emph{admissible}), schedulers in this class are a restricted version of standard (full-information) schedulers. The restriction is rather flexible and has strong structural properties, thus facilitating the reasoning about security properties. In short, our systems consist of parallel components with certain restrictions on the secret choices and nondeterministic choices. The scheduler selects the next component (or components, in case of synchronization) for the subsequent step independently of the secret choices. We then formalize the notion of quantitative information flow, or degree of anonymity, using this restricted notion of scheduler.

\item
We propose alternative definitions to the property of strong anonymity defined in~\cite{Bhargava:05:CONCUR}. Our proposal differs from the original definition in two aspects: (1) the system should be strongly anonymous for all admissible schedulers instead of  all schedulers (which is a very strong condition, never satisfied in practice), (2) we consider several variants of adversaries, namely (in increasing level of power): external adversaries, internal adversaries, and adversaries in collusion with the scheduler (in a Dolev-Yao fashion). Additionally, we use admissible schedulers to extend the notions of multiplicative and additive leakage (proposed in~\cite{Smith:09:FOSSACS} and~\cite{Braun:09:MFPS}  respectively) to the case of a concurrent system.
\item
We propose a sufficient technique to prove probabilistic strong anonymity, and probabilistic  noninterference,
based on automorphisms.
The idea is the following:
In the purely nondeterministic setting, the strong anonymity of a system is often
proved (or defined) as follows: take two users $A$ and $B$ and a trace
in which user $A$ is `the culprit'. Now find a trace that looks the same
to the adversary, but in which user $B$ is `the culprit'
\cite{Halpern:05:JCS,Garcia:05:FMSE,Mauw:04:ESORICS,Hasuo:07:ESOP}.
This new trace is often most easily obtained by \emph{switching the
behavior of $A$ and $B$}.
Non-interference can be proved in the same way
(where $A$ and $B$ are high information and the trace is the low information).

In this work, we make this technique explicit for anonymity in systems
where probability and nondeterminism coexist, and we need to cope with the restrictions on the
schedulers. We formalize the notion of
\emph{switching behaviors} by using automorphism (it is possible to
switch the behavior of $A$ and $B$ if there exist an automorphism
between them) and then show that the existence of an automorphism implies
strong anonymity.
\item
We illustrate the problem with full-information schedulers in security, our solution providing admissible schedulers, and the application of our prove technique by means of the well known Dining Cryptographers anonymity protocol.
\end{itemize}


\section{Preliminaries}

In this section we gather preliminary notions and results related to
probabilistic automata~\cite{Segala:95:NJC,Segala:95:PhD}, information theory~\cite{Cover:06:BOOK}, and information leakage~\cite{Smith:09:FOSSACS, Braun:09:MFPS}.

\subsection{Probabilistic automata}

A function $\mu \colon Q \to [0,1]$ is a \emph{discrete probability distribution}
on a set~$Q$ if $\sum_{q \in Q} \mu(q) = 1$.
The set of all discrete probability distributions on $Q$ is denoted by $\mathcal{D}(Q)$.

A \emph{probabilistic automaton} is a quadruple $M = (Q, \Sigma, \qi, \theta )$ where
 $Q$ is a countable set of \emph{states},
 $\Sigma$ a finite set of \emph{actions},
 $\qi$ the \emph{initial} state, and
 $\theta$ a \emph{transition function} $\theta: Q \to \mathcal{P}(\distr(\Sigma \times Q))$.
 Here $\pow(X)$ is the set of all subsets of $X$.

If $\theta(q) = \emptyset$, then $q$ is a \emph{terminal} state. We write
$q {\to} \mu$ for $\mu \in \theta(q), \ q\in Q$.
Moreover, we write $q \smash{\stackrel{a}{\to}} r$ for $q, r \in Q$ whenever $q {\to} \mu$ and $\mu(a,r) > 0$.
A \emph{fully probabilistic automaton} is a probabilistic automaton
satisfying $|\theta(q)|\leq 1$ for all states. In case $\theta(q)\not=
\emptyset$ in a fully probabilistic automaton, we will overload notation
and use $\theta(q)$ to denote the distribution outgoing from $q$.
A \emph{path} in a probabilistic automaton
is a sequence $\sigma = q_0  \stackrel{a_1}{\to} q_1 \stackrel{a_2}{\to}
\cdots$ where $q_i \in Q$, $a_i \in \Sigma$ and $q_i
\smash{\stackrel{a_{i+1}}{\to}} q_{i+1}$. A path can be \emph{finite} in
which case it ends with a state. A path is \emph{complete} if it is
either infinite or finite ending in a terminal state. Given a path
$\sigma$, $\first(\sigma)$ denotes its first state, and if $\sigma$ is
finite then  $\last{\sigma}$ denotes its last state. A \emph{cycle} is a
path $\sigma$ such that $\last{\sigma} = \first(\sigma)$. Let $\paths_q(M)$
denote the set of all paths, $\fpaths_{\!\!q}(M)$ the set of all finite paths,
and $\cpaths_q(M)$ the set of all complete paths of an automaton $M$,
starting from the state $q$. We will omit $q$ if $q=\qi$. Paths are
ordered by the prefix relation, which we denote by $\leq$. The
\emph{trace} of a path is the sequence of actions  in $\Sigma^{*} \cup
\Sigma^{\infty}$ obtained by removing the states, hence for the above
path $\sigma$ we have $\trace(\sigma) = a_1a_2\ldots$. If
$\Sigma'\subseteq \Sigma$, then $\trace_{\Sigma'}(\sigma)$ is the
projection of $\trace(\sigma)$ on the elements of $\Sigma'$.
%

Let $M = (Q, \Sigma, \qi, \theta )$ be a (fully) probabilistic automaton, $q\in Q$ a state, and let $\sigma \in \fpaths_{\!\!\!q}(M)$ be a finite path starting in $q$. The \emph{cone} generated by $\sigma$ is the set of complete paths
$\cone{\sigma} = \{ \sigma^\prime \in \cpaths_q(M) \mid \sigma \leq \sigma^\prime\}.$
Given a fully probabilistic automaton $M=(Q, \Sigma, \qi, \theta)$ and a state $q$, we can calculate the \emph{ probability value}, denoted by $\psp_q(\sigma)$, of any finite path $\sigma$ starting in $q$ as follows:
$\psp_q(q)   =   1$ and
$\psp_q(\sigma \, \stackrel{a}{\to}\, q')   =   \psp_q(\sigma)\ \mu(a,q'),   \text{~where~} \last{\sigma} \to \mu$.

Let $\Omega_{q} \eqdef \cpaths_q(M)$ be the sample space, and let
$\mathcal F_{q}$ be the smallest $\sigma$-algebra generated by the
cones. Then $\psp_q$ induces a unique \emph{probability measure} on
$\mathcal F_{q}$ (which we will also denote by $\psp_q$) such that
$\psp_q(\cone{\sigma}) = \psp_q(\sigma)$ for every finite path $\sigma$
starting in $q$.  For $q=\qi$ we write $\psp$ instead of $\psp_{\qi}$.

A (full-information) scheduler for a probabilistic automaton $M$ is a function $\zeta
\colon \fpaths(M) \to (\distr(\Sigma \times Q)\cup
\{\bot\})$
such that for all finite paths $\sigma$, if $\theta(\last{\sigma})\not=\emptyset$ then $\zeta(\sigma)\in\theta(\last{\sigma})$, and $\zeta(\sigma)=\bot$ otherwise.
\noindent Hence, a scheduler $\zeta$ selects one of the available transitions in each state, and determines therefore a fully probabilistic automaton, obtained by pruning from $M$ the alternatives that are not chosen by $\zeta$. Note that a scheduler is history dependent since it can take different decisions for the same state $s$ according to the past evolution of the system.

\subsection{Noisy Channels}

This section briefly recalls the notion of noisy channels from Information Theory \cite{Cover:06:BOOK}.


A \emph{noisy channel} is a tuple ${\cal C}\eqdef( {\cal X}, {\cal Y}, P(\cdot|\cdot))$
where  ${\cal X}=\lbrace x_1,x_2,\ldots,x_n\rbrace$ is a finite set of \emph{input values}, modeling the \emph{secrets} of the channel,
and ${\cal Y} =\lbrace y_1,y_2,\ldots,y_m\rbrace$ is a finite set of \emph{output values}, the \emph{observables} of the channel.
For $x_i\in {\cal X}$ and $y_j\in {\cal Y}$, $\cm(y_j|\,x_i)$ is the conditional probability of obtaining the output $y_j$ given that the input is $x_i$.
These conditional probabilities constitute the so called \emph{channel matrix}, where $\cm(y_j|x_i)$ is the element at the intersection of the
$i$-th row and the $j$-th column.
For any  input distribution $P_X$ on $\cal X$, $P_X$ and  the channel matrix
determine a joint probability $P_\wedge$ on ${\cal X}\times {\cal Y}$, and the corresponding marginal probability $P_Y$ on $\cal Y$ (and hence a random variable $Y$). $P_X$ is also called \emph{a priori distribution} and it is often denoted by $\pi$. The probability of the input given the output is called \emph{a posteriori  distribution}.

\subsection{Information leakage}

We recall here the definitions of \emph{multiplicative leakage} proposed in \cite{Smith:09:FOSSACS}, and of
\emph{additive leakage} proposed in \cite{Braun:09:MFPS}\footnote{The notion proposed by Smith in \cite{Smith:09:FOSSACS} was given in a (equivalent) logarithmic form, and called simply \emph{leakage}. For uniformity sake we use here the terminology and formulation of \cite{Braun:09:MFPS}.}.
We assume given a noisy channel ${\cal{C}}=( {\cal X}, {\cal Y}, P(\cdot|\cdot))$ and a random variable $X$ on $\cal X$.
The \emph{a priori vulnerability} of the secrets in $\cal X$ is the probability of guessing the right secret, defined as
$\vul(\emph{X})\eqdef\max_{x\in{\cal X}} P_X(x).$
The rationale behind this definition is that the adversary's best bet is on the secret with highest probability.
The \emph{a posteriori vulnerability} of the secrets in $\cal X$ is the probability of guessing the right secret, after the output has been observed, averaged over the probabilities of the observables. The formal definition is
$\vul(\emph{X}\,|\,\emph{Y})\eqdef\sum_{y\in{\cal Y}}P_Y(y) \max_{x\in{\cal X}} P(x\,|\,y).$
Again, this definition is based on the principle that the adversary will choose the secret with the highest a posteriori probability.

Note that, using Bayes theorem, we can write the a posteriori vulnerability in terms of the channel matrix and the a priori distribution, or in terms of the joint probability:
\begin{eqnarray*}\label{eqn:postvulnerability}
 \vul(\emph{X}\, |\, \emph{Y}) \! = \!  \sum_{y\in{\cal Y}}\max_{x\in{\cal X}} ( P(y\,|\,x) P_X(x))
\!  = \! \sum_{y\in{\cal Y}}\max_{x\in{\cal X}}  P_\wedge(x,y).
\end{eqnarray*}

The \emph{multiplicative} leakage is
$\leakm(\nc,P_X)\eqdef\frac{\vul(\emph{X}|\emph{Y})}{\vul(\emph{X})}$
whereas the \emph{additive} leakage is
$\leaka(\nc,P_X)\eqdef\vul(\emph{X}|\emph{Y})-\vul(\emph{X})$.

\subsection{Dining Cryptographers}\label{sec:dc}

This problem, described by Chaum in~\cite{Chaum:88:JC}, involves a situation in which three cryptographers
are dining together. At the end of the dinner, each of them is secretly informed by
a central agency (master) whether he should pay the bill, or not. So, either the master
will pay, or one of the cryptographers will be asked to pay. The cryptographers (or some
external observer) would like to find out whether the payer is one of them or the master.
However, if the payer is one of them, they also wish to maintain anonymity over the
identity of the payer.

A possible solution to this problem, described in~\cite{Chaum:88:JC}, is that each cryptographer
tosses a coin, which is visible to himself and his neighbor to the left. Each
cryptographer observes the two coins that he can see and announces \emph{agree} or \emph{disagree}. If a
cryptographer is not paying, he will announce \emph{agree} if the two sides are the same and
\emph{disagree} if they are not. The paying cryptographer will say the opposite.
It can be proved that if the number of disagrees is even, then the master is paying; otherwise,
one of the cryptographers is paying. Furthermore, in case one of the cryptographers is
paying, neither an external observer nor the other two cryptographers can identify, from their individual information, who exactly is paying (provided that the coins are fair).
The Dining Cryptographers (DC) will be a running example through the chapter.
\begin{figure}[h]
\centering
\includegraphics[width=6.5cm]{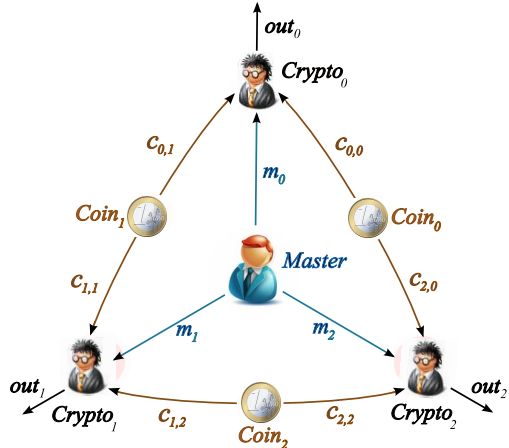}\caption{Chaum's system for the Dining Cryptographers (\cite{Chaum:88:JC})}\label{fig:dincrypt}
\vspace{-0.25cm}
\end{figure}

\section{Systems}\label{sec:systems}


In this section we describe the kind of systems we are dealing
with. We start by introducing a variant of probabilistic automata,
that we call \emph{ tagged probabilistic automata} (TPA). These
systems are parallel compositions of purely probabilistic
processes, that we call \emph{components}. They are equipped with
a unique identifier, that we call \emph{tag}, or {\emph{label}, of
the component. Note that, because of the restriction that the
components are fully deterministic, nondeterminism is generated
only from the interleaving of the parallel components.
Furthermore, because of the uniqueness of the tags, each
transition from a node is associated to a different tag / pair of
two tags (one in case only one component makes a step, and two in
case of a synchronization step among two components).

\subsection{Tagged Probabilistic Automata}

We now formalize the notion of TPA.

\begin{definition}
  A \emph{tagged probabilistic automaton} (TPA) is a tuple $(Q,\allowbreak L,\allowbreak \Sigma,\qi,\theta)$, where
  \begin{itemize}
  \item $Q$ is a set of \emph{states},
  \item $L$ is a set of \emph{tags}, or \emph{labels},
  \item $\Sigma$ is a set of \emph{actions},
  \item $\qi \in Q$ is the \emph{initial state},
  \item $\theta \colon Q \to \mathcal{P}(L \times D(\Sigma \times Q))$ is a \emph{transition function}.
  \end{itemize}
  \noindent with the additional requirement that for every $q \in Q$ and every $\ell \in L$ there is at most one $\mu \in D(\Sigma \times Q)$
  such that $(\ell,\mu) \in \theta(q)$.
\end{definition}

A path for a TPA is a sequence $\sigma=q_0 \stackrel{l_1,a_1}{\longrightarrow}q_1 \stackrel{l_2,a_2}{\longrightarrow}q_2 \cdots$. In this way, the process with identifier $l_{i}$ induces the system to move from $q_{i-1}$ to $q_{i}$ performing  the action  $a_{i}$, and it does so with probability $\mu_{l_{i}}(a_{i},q_{i})$, where $\mu_{l_{i}}$ is \emph{the} distribution associated to the choice made by the component $l_{i}$. Finite paths and complete paths are defined in a similar manner.

In a TPA, the  scheduler's choice  is determined by the choice of the tag.
We will use $\avail(q)$ to denote the tags of the components that are enabled to make a transition. Namely,
\begin{equation}\label{eq:enab}
\avail(q) \eqdef \{\ell\in L\mid \exists\, \mu \!\in\! D(\Sigma \times Q)\,:\, (\ell,\mu) \in \theta(q)\}
\end{equation}

We  assume that the scheduler is forced to select a component among those which are enabled, i.e., that the execution does not stop unless all components are blocked (suspended or terminated). This is in line with the spirit of process algebra, and also with the tradition of Markov Decision Processes, but contrasts with that of the Probabilistic Automata of Lynch and Segala~\cite{Segala:95:NJC}. However, the results in this chapter do not depend on this assumption; we could as well allow schedulers which decide to terminate the execution even though there are transitions which are possible from the last state.

\begin{definition}
  A \emph{scheduler} for a TPA $M = (Q,L,\Sigma,\qi,\theta)$ is a function $\zeta \colon \allowbreak\fpaths(M) \allowbreak\to (L\cup\{\bot\})$ such that for all finite paths $\sigma$, $\zeta(\sigma)\in\avail(\last{\sigma})$ if $\avail(\last{\sigma})\not=\emptyset$ and $\zeta(\sigma)=\bot$ otherwise.
\end{definition}

\subsection{Components}

To specify the components we use a sort of probabilistic version
of CCS~\cite{Milner:89:BOOK,Milner:99:BOOK}. We assume a set of
\emph{secret actions}  $\secr$ with elements $s,s_1,s_2,\cdots$,
and a disjoint set of \emph{observable actions}  $\obse$ with
elements $a, a_1,a_2, \cdots$. Furthermore we have
\emph{communication actions} of the form $c(x)$ (receive $x$ on
channel $c$, where $x$ is a formal parameter), or $\bar{c}\langle
v\rangle$ (send $v$ on channel ${c}$, where $v$ is a value on some
domain $V$). Sometimes we need only to synchronize without
transmitting any value, in which case we will use simply $c$ and
$\bar{c}$. We denote the set of channel names by $C$.

A component $q$ is specified by the following grammar:

\begin{flushleft}
\textbf{Components}
\[
\begin{array}{lrcll}
 &q &\mbox{::=}& 0   \qquad & \text{termination}\\
 &&\mid &a.q  & \text{observable prefix}\\
&&\mid &\sum_{i} p_i: q_i & \text{blind choice}\\
&& \mid &\sum_{i} p_i: s_i . q_i  \qquad & \text{secret choice}\\
&&\mid & \mathit{if}\ x = v \  \mathit{then} \ q_1\  \mathit{else}\  q_2  \  \  \ &\text{conditional}\\
&& \mid &A  \qquad & \text{process call}\\[5mm]
\end{array}
\]
\textbf{Observables}
\[
\begin{array}{lrcll}
  &a &\mbox{::=}& c \mid \bar{c}& \text{simple synchronization}\\
&&\mid &  c(x) \mid \bar{c}\langle v \rangle& \text{synchronization and communication}\\[4mm]
\end{array}
\]
\end{flushleft}

The $p_i$, in the blind and secret choices, represents the probability of the $i$-th branch and must satisfy $0\leq p_i\leq 1$ and
$\sum_{i} p_i=1$.
When no confusion arises, we use simply $+$ for a binary choice.
The process call $A$ is a simple process identifier. For each of them, we assume a
corresponding unique process declaration of the form $A \stackrel{\rm def}{=}q$.
The idea is that, whenever $A$ is  executed, it triggers the execution of $q$.
Note that  $q$ can contain $A$ or another process identifier,
which means that our language allows (mutual) recursion.

Note that each component contains only probabilistic and
sequential constructs. In particular, there is no internal
parallelism. Hence each component corresponds to a purely
probabilistic automaton (apart from the input nondeterminism,
which disappears in the definition of a system), as described by
the operational semantics below. \rev{The main reason to dismiss
the use of internal parallelism is verification: as mentioned in
the Introduction we will present a proof technique for the
different definitions of anonymity proposed in this work. This
result would not be possible without such restriction on the
components (see Example \ref{exa:internal-parallelism}).}

\rev{For an extension of this framework allowing the use of
internal parallelism we refer to \cite{Alvim:10:IFIP-TCS}. There,
the authors combine \emph{global nondeterminism} (arising from the
interleaving of the components) and \emph{local nondeterminism}
(arising from the internal parallelism of the components). The
authors use such (extended) framework for a different purpose than
ours, namely to define a notion of equivalence suitable for
security analysis. No verification mechanisms are provided in
\cite{Alvim:10:IFIP-TCS}.}

\medskip \noindent \emph{Components' semantics:} The operational semantics consists of probabilistic transitions of the form $q{\rightarrow}\mu$
where $q\in Q$ is a process, and $\mu\in\distr(\Sigma\times Q)$ is a distribution on actions and processes.
They are specified by the following rules:

\begin{align*}
\text{PRF1 }\quad&
\begin{tabular}{ c } $ v\in V$
  \\ \vspace{-0.35cm}\\
  \hline \vspace{-0.35cm}
  \\
  $c(x).q\rightarrow \delta{(c(v),q[v/x])}$
\end{tabular}\\
\text{PRF2 } \quad&
\begin{tabular}{ c }
  \\ \vspace{-0.35cm}\\
  \hline \vspace{-0.35cm}
  \\
  $a.q\rightarrow \delta{(a,q)}$
\end{tabular} \quad \text{if } a \neq c(x)\\
\text{INT }\quad &
\begin{tabular}{ c }
  \\ \vspace{-0.35cm}\\
  \hline \vspace{-0.35cm}
  \\
  $\sum_{i}p_i: q_i\rightarrow \probsum{i} p_i \cdot \delta{(\tau,q_i)}$
\end{tabular}\\
\text{SECR}\ \quad&
\begin{tabular}{ c }
  \\ \vspace{-0.35cm}\\
  \hline \vspace{-0.35cm}
  \\
  $\sum_{i}p_i: s_i . q_i\rightarrow \probsum{i}p_i \cdot \delta{(s_i,q_i)}$
\end{tabular}
\end{align*}

\begin{align*}
\text{CND1}\quad&
\begin{tabular}{ l }
  \\ \vspace{-0.35cm}\\
  \hline \vspace{-0.35cm}
  \\
  $\mathit{if}\ v = v \ \mathit{then} \ q_1\ \mathit{else} \ q_2 \rightarrow \delta{(\tau,q_1)}$
\end{tabular}\\
\break \text{CND2 }\ \quad&
\begin{tabular}{ c }
  $v\neq v'$
  \\ \vspace{-0.35cm}\\
  \hline \vspace{-0.35cm}
  \\
 $\mathit{if}\ v = v' \ \mathit{then} \ q_1\ \mathit{else} \ q_2 \rightarrow \delta{(\tau,q_2)}$
\end{tabular}\\
\text{CALL }\quad&
\begin{tabular}{ c }
  $q\rightarrow\mu$
  \\ \vspace{-0.35cm}\\
  \hline \vspace{-0.35cm}
  \\
  $A\rightarrow \mu$
\end{tabular} \quad \text{if } A\eqdef q
\\
\end{align*}

\noindent
$\probsum{i}p_i \cdot \mu_i$ is the distribution $\mu$ such that $\mu(x)=\sum_{i}p_i \mu_i{(x)}$.
We use $\delta(x)$ to represent the delta of Dirac, which assigns probability $1$ to $x$.
The silent action, $\tau$, is a special action different from all the observable and the secret actions.
$q[v/x]$ stands for the process $q$ in which any occurrence of $x$ has been replaced by $v$.
To shorten the notation, in the examples throughout the chapter, we omit writing
explicit termination, i.e., we omit the symbol 0 at the end of a term.

\subsection{Systems} A system consists of $n$ processes (components) in
parallel, restricted at the top-level on the set of channel names $C$: \[ (C) \ q_1 \parallel q_2 \parallel \cdots \parallel q_n.\]
The restriction on $C$ enforces synchronization (and possibly communication)  on the channel names belonging to $C$, in accordance with the CCS spirit.
Since $C$ is the set of all channels, all of them are forced to synchronize. This is to eliminate, at the level of systems,  the nondeterminism
generated by the rule for the receive prefix, PRF1.

\medskip \noindent \emph{Systems' semantics:} The semantics of a system gives rise to a TPA, where the states are
terms representing systems during their evolution.
A transition now is of the form $q\stackrel{\ell}{\rightarrow} \mu$ where $\mu\in(\distr(\Sigma\times Q))$ and $\ell\in L$ is either
the identifier of the component which makes the move, or a two-element set of
identifiers representing the two partners of a synchronization. The following
two rules provide the operational semantics rules in the
case of interleaving and synchronisation/communication, respectively.\\
\paragraph{Interleaving} If  $a_j\not\in C$
\[\begin{tabular}{ c }
  $q_i \rightarrow \probsum{j} p_j \cdot \delta{(a_j,q_{ij})}$
  \\ \vspace{-0.35cm}\\
  \hline \vspace{-0.35cm}
  \\
  $(C)\ q_1\parallel \cdots \parallel q_i\parallel \cdots \parallel q_n\stackrel{  i  }{ \rightarrow}\probsum{j}p_j \cdot \delta{(a_j,(C)\ q_1\parallel \cdots \parallel q_{ij}\parallel \cdots \parallel q_n)}$
\end{tabular}\]

\noindent where $i$ indicates the tag of the component making the step.\\
\paragraph{Synchronization/Communication}

\[\begin{tabular}{ c }
  $q_i \rightarrow \delta{(\bar{c}\langle v \rangle,q_i^\prime)}\qquad q_j \rightarrow \delta{( {c}(v),q_j^\prime)}$
  \\ \vspace{-0.35cm}\\
  \hline \vspace{-0.35cm}
  \\
  $(C)\ q_1\parallel \cdots \parallel q_i\parallel \cdots \parallel q_n \stackrel{\{i,j\}}{\longrightarrow}\delta{(\tau,(C)\ q_1\parallel \cdots \parallel q_{i}^\prime \parallel\cdots\parallel q_j^\prime \parallel \cdots \parallel q_n)}$
\end{tabular} \]\\

\noindent here $\{i,j\}$ is the tag indicating that the components making the step are $i$ and $j$.
For simplicity we write $\stackrel{i,j}{\longrightarrow}$ instead of $\stackrel{\{i,j\}}{\longrightarrow}$.
The rule for synchronization without communication is similar, the only difference is that we do not have $\langle v\rangle$ and $(v)$ in the actions.
Note that $c$ can only be an observable action (neither a secret nor $\tau$), by the assumption  that channel names can only be observable actions. 

We note that both interleaving and synchronization rules generate nondeterminism. The only other source of nondeterminism is PRF1,
 the rule for a  receive prefix $c(x)$.
However the latter is not real  nondeterminism: it is introduced in the semantics of the components but it disappears in the semantics of the systems,
 given that  the channel $c$ is restricted at the top-level. In fact the restriction enforces communication, and when communication takes place, only the branch corresponding to the actual value $v$ transmitted by the corresponding send action is maintained, all the others disappear.





\begin{proposition}\label{prop:TPAcharacter} The operational semantics of a system is a TPA with the following characteristics:
\begin{enumerate}
 \item[(a)] Every step $q\stackrel{\ell}{\rightarrow}\mu$ is either
    \begin{itemize}
     \item [{\rm a blind choice:}] $\mu= \probsum{i} p_i \cdot \delta{(\tau,q_i)}$, or\\[-2mm]
     \item [{\rm a secret choice:}]   $\mu=\probsum{i}p_i \cdot \delta{(s_i,q_i)}$, or\\[-2mm]
     \item [{\rm a delta of Dirac:}]  $\mu=\delta{(\alpha,q^\prime)}$ with $\alpha\in \obse$ or $\alpha=\tau$.
     \end{itemize}
%
%
 \item[(b)]  If $q\stackrel{\ell}\rightarrow \mu$ and $q\stackrel{\ell}\rightarrow \mu^\prime$  then $\mu=\mu^\prime$.
 \end{enumerate}
\end{proposition}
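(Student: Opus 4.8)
The statement to prove is Proposition~\ref{prop:TPAcharacter}, characterizing the operational semantics of a system as a TPA satisfying two properties: (a) every transition is of one of three specific shapes (blind choice, secret choice, or a Dirac delta on an observable or $\tau$), and (b) the transition relation is deterministic once the tag is fixed.

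My plan is to proceed by structural induction on the syntax of systems, but first doing the groundwork at the level of components. The key preliminary observation is a companion fact about components: for any reachable component term $q$, if $q \to \mu$ then $\mu$ has one of the three shapes listed in (a), and moreover if $q \to \mu$ and $q \to \mu'$ then $\mu = \mu'$ (components are ``fully probabilistic'' in the sense that each has at most one outgoing distribution, with the caveat about the input prefix $c(x)$ which introduces a family of transitions indexed by $v \in V$). I would establish this by induction on the derivation of $q \to \mu$ using the component rules PRF1, PRF2, INT, SECR, CND1, CND2, CALL. Each rule either directly produces one of the three shapes ($\delta(a,q)$ for prefixes and conditionals, a convex combination of Diracs on $\tau$ for blind choice INT, a convex combination of Diracs on secrets for secret choice SECR) or, in the case of CALL, inherits the shape from the premise. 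The only subtlety is PRF1, which yields $\delta(c(v), q[v/x])$ --- still a Dirac on an observable --- but produces one such transition for each $v$, so determinism-given-label fails at the component level precisely here; this is exactly the nondeterminism that the top-level restriction on $C$ will remove.

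Next I would turn to systems, $(C)\ q_1 \parallel \cdots \parallel q_n$, and analyze the two system rules (Interleaving and Synchronization/Communication). For part (a): in the interleaving rule, the resulting distribution is $\probsum{j} p_j \cdot \delta(a_j, (C)\ q_1 \parallel \cdots \parallel q_{ij} \parallel \cdots \parallel q_n)$ where $q_i \to \probsum{j} p_j \cdot \delta(a_j, q_{ij})$; by the component-level fact this premise distribution is a blind choice, a secret choice, or a Dirac, and since the side condition $a_j \notin C$ forbids the $a_j$ from being communication actions, in each case the system transition inherits the corresponding shape (note $\tau$ and secret actions are trivially not in $C$ since $C$ contains only channel names, and observable non-channel actions are allowed). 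In the synchronization rule the result is $\delta(\tau, \ldots)$, a Dirac on $\tau$, which is the third shape. So (a) holds. For part (b): suppose $q \stackrel{\ell}{\to} \mu$ and $q \stackrel{\ell}{\to} \mu'$. If $\ell$ is a single tag $i$, both must come from the interleaving rule acting on component $i$; the premises are $q_i \to \nu$ and $q_i \to \nu'$. Here is where I must handle the $c(x)$ case: if $q_i$'s outgoing transition is via PRF1, it is only enabled inside a system when channel $c$ is restricted, and the restriction forces synchronization --- so a pure interleaving step with a receive action as its visible label never occurs at system level (the action $c(v)$ would be in $C$, violating the side condition $a_j \notin C$). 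Hence the only component transitions that survive into interleaving steps are the genuinely deterministic ones, giving $\nu = \nu'$ and thus $\mu = \mu'$. If $\ell = \{i,j\}$, both transitions come from the synchronization rule with the same pair of components; the premises fix $q_i \to \delta(\bar c \langle v\rangle, q_i')$ and $q_j \to \delta(c(v), q_j')$, and since the send side $q_i \to \delta(\bar c\langle v\rangle, q_i')$ is a Dirac and hence (by the component fact) the unique transition of $q_i$, the value $v$ is determined, which then pins down the matching receive branch $q_j \to \delta(c(v), q_j')$ uniquely; therefore $\mu = \mu'$.

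The main obstacle I anticipate is getting the treatment of the receive prefix $c(x)$ exactly right: it is the sole place where components are not deterministic, and the whole argument for (b) hinges on showing that this nondeterminism is genuinely eliminated by the top-level restriction $(C)$ with $C$ being \emph{all} channel names. I would need to be careful that ``reachable system states'' are always of the restricted form $(C)\ q_1 \parallel \cdots \parallel q_n$ (this is preserved by both system rules, as a small lemma), so that the side conditions in the rules always apply, and that no derivation of a system transition can use PRF1 except as a premise feeding into the synchronization rule. Once that is pinned down, the rest is a routine rule-by-rule check, and I would also remark that property (b) is precisely the ``at most one $\mu$ per tag'' requirement in the definition of a TPA, so establishing (b) simultaneously confirms that the operational semantics does land in the class of TPAs as claimed.
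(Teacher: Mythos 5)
Your proposal is correct and follows essentially the same route as the paper's own proof: a rule-by-rule check of the component and system rules for part (a), and for part (b) the observation that PRF1 (the receive prefix) is the only source of component-level nondeterminism, which the top-level restriction eliminates by forcing synchronization, so that the tag determines the sender, hence the transmitted value, hence the unique receive branch. Your version merely spells out more explicitly (via a component-level lemma and the case split on single versus pair tags) what the paper states in compressed form.
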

\begin{proof} For (a) , we have that the rules for the components and the rule for synchronization / communication can only produce blind choices, secret choices, or deltas of Dirac. Furthermore, because of the restriction on all channels, the transitions at the system level cannot contain communication actions. Finally, observe that the interleaving rule maintains these properties.

As for (b), we know that at the component level, the only source of nondeterminism is PRF1,
 the rule for a  receive prefix $c(x)$. At the system level, this action is forced to synchronize with a corresponding send action, and, in a component, there can be only one such action available at a time.
Hence the tag determines the  value to be sent, which in turn determines the selection of exactly one branch in the receiving process.
The only other sources of nondeterminism are the interleaving  and the synchronization/communication rules, and they induce a different tag for each alternative.
\end{proof}

\exampleheader \label{exa:DCcomponents}
We now present the components for the Dining Cryptographers using the introduced
syntax. They correspond to Figure~\ref{fig:dincrypt} and to the
automata depicted in Figure~\ref{fig:DCautomata}. As announced before, we
omit the symbol $0$ for explicit termination at the end of each term. The secret
actions $s_i$ represent the choice of the payer.
The operators $\oplus, \ominus$ represent the sum modulo $2$ and the
difference modulo $2$, respectively.
The test $i==n$ returns $1$ (true) if $i= n$, and $0$ otherwise.
The set of restricted channel names is $\mathit{C}\!=\!\lbrace
c_{0,0},c_{0,1}, c_{1,1}, c_{1,2}, c_{2,0},c_{2,2}, m_0,m_1,m_2\rbrace$.

\begin{figure}[!h]
\vspace{-0.25cm}
\begin{eqnarray*}
    \text{Master} &\eqdef & p:
        \outp{m}_0\ang{0} \, .\,
         \outp{m}_1\ang{0} \, .\,
         \outp{m}_2\ang{0} +  (1-p): \smallsumb{i=0}{2} p_i:
        s_{i}\, .\,\\
    && \outp{m}_{0}\ang{i==0} \, .\,
         \outp{m}_{1}\ang{i==1} \, .\,
         \outp{m}_{2}\ang{i==2}\\
    \text{Crypt}_i &\eqdef&
        m_i(\mathit{pay})\, .\,
        c_{i,i}(coin_1)\, .\,
        c_{i,i\oplus 1}(coin_2)\, .\, \outp{out}_i\ang{\mathit{pay}\oplus coin_1\oplus coin_2}\\[2pt]
    \text{Coin}_i &\eqdef & 0.5:
          \bar{c}_{i,i}\ang{0}\, .\,
          \bar{c}_{i\ominus 1,i}\ang{0}\ +\
          0.5: \bar{c}_{i,i}\ang{1}\, . \,
          \bar{c}_{i\ominus 1,i}\ang{1}
          \\[2pt]
    \text{System} & \eqdef & (\mathit{C})\
        \text{Master} \parallel \smallprodb{i=0}{2} \text{Crypt}_i \parallel
            \smallprodb{i=0}{2} \text{Coin}_i
\end{eqnarray*}
    \caption{Dining Cryptographers CCS}  \label{fig:dining}
\end{figure}

The operation $ \mathit{pay}\oplus coin_1\oplus coin_2$ in
Figure~\ref{fig:dining}  is  syntactic sugar, it can be defined using the {\it
if-then-else} operator. Note that, in this way, if a  cryptographer is not paying (pay = 0), then he announces $0$ if the two coins are the same (agree) and $1$ if they are not (disagree).
\begin{figure*}[!h]
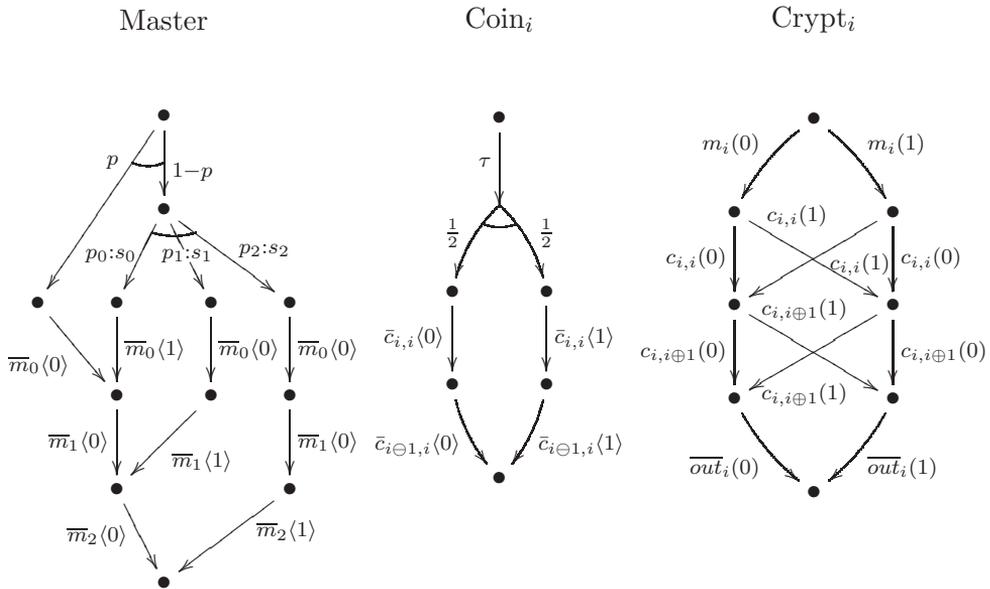

\begin{equation*}
\hspace{-0.4cm}  \input{AdmissibleSchedulers/masterccs.tex}
  \input{AdmissibleSchedulers/coin.tex}
  \input{AdmissibleSchedulers/crypt-new.tex}
\end{equation*}
\centering
\caption{Dining Cryptographers Automata}\label{fig:DCautomata}
\end{figure*} 


\section{Admissible Schedulers}\label{sec:AdmSch}

We now introduce the class of admissible schedulers.

%
%
%
%

Standard (full-information) schedulers have access to all the information about the system and its components, and in particular the secret choices. Hence, such schedulers can leak secrets by making their decisions depend on the secret choice of the system. This is the case with the Dining Cryptographers protocol of Section~\ref{sec:dc}: among all possible schedulers for the protocol, there are several that leak the identity of the payer. In fact the scheduler has the freedom to decide the order of the announcements of the cryptographers (interleaving), so a scheduler could choose to let the payer announce lastly. In this way, the attacker learns the identity of the payer simply by looking at the interleaving of the announcements.

\subsection{The screens intuition}

Let us first describe admissible schedulers informally. As mentioned in the introduction, admissible schedulers can base their decisions only on partial information about the evolution of the system, in particular admissible schedulers cannot base their decisions on information concerned with the internal behavior of components (such as secret choices).

We follow the subsequent intuition: admissible schedulers are entities that have access to a screen with buttons, where each button represents one (current) available option. At each point of the execution the scheduler decides the next step among the available options (by pressing the corresponding button). Then the output (if any) of the selected component becomes available to the scheduler and the screen is refreshed with the new available options (the ones corresponding to the system after making the selected step). We impose that the scheduler can base its decisions only on such information, namely: the screens and outputs he has seen up to that point of the execution (and, of course, the decisions he has made).

\medskip
\exampleheader Consider  $S \eqdef (\{c_1,c_2\}) \ \rev{r}\!
\parallel\! \rev{q} \!\parallel\! \rev{t}$, where
\[\begin{array}{c}\rev{r} \eqdef 0.5 : s_1 . \overline{c}_1 . \overline{c}_2 + 0.5 : s_2 . \overline{c}_1 . \overline{c}_2, \\
\rev{q} \eqdef c_1 . (0.5 : a_1 + 0.5 : b_1),  \quad  \rev{t}
\eqdef c_2 . (0.5 : a_2 + 0.5 : b_2).
\end{array} \]

Figure \ref{fig:screens} shows the sequence of screens corresponding to a particular sequence of choices taken by the scheduler\footnote{The transitions from screens $4$ and $5$ represent $2$ steps each (for simplicity we omit the $\tau$-steps generated by blind choices)}. Interleaving and communication options are represented by yellow and red buttons, respectively. An arrow between two screens represents the transition from one to the other (produced by the scheduler pressing a button), additionally, the decision taken by the scheduler and corresponding outputs are depicted above each arrow.


\begin{figure}[h!]
\centering
\includegraphics[width=8cm]{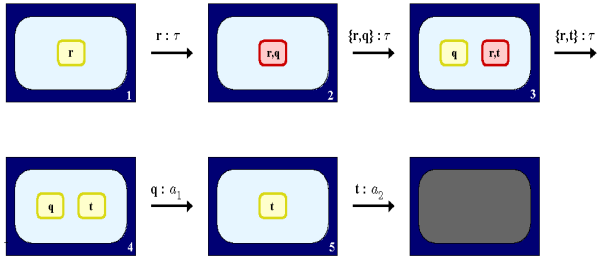}\caption{\rev{Screens intuition}}\label{fig:screens}
\end{figure}

Note that this system has exactly the same problem as the DC
protocol: a full-information scheduler could reveal the secret by
basing the interleaving order (\rev{$q$} first or \rev{$t$} first)
on the secret choice of the component \rev{$r$}. However, the same
does not hold anymore for admissible schedulers (the scheduler
cannot deduce the secret choice by just looking at the screens and
outputs). This is also the case for the DC protocol, i.e.,
admissible schedulers cannot leak the secret of the protocol.

\subsection{The formalization}
Before formally defining admissible schedulers we need to
formalize the ingredients of the screens intuition. The buttons on
the screen (available options) are the enabled options given by
the function $\avail$ (see (\ref{eq:enab}) \rev{in Section 4.3.1}),
the decision made by the scheduler is the tag of the selected
enabled option, observable actions are obtained by sifting the
secret actions to the schedulers by means of the following
function:

\begin{equation*}
\hide(\alpha) \eqdef \left\{\begin{array}{ll}
                \alpha & \mbox{if } \alpha\in\obse \cup \{\tau\},\\[2mm]
                \tau    &\mbox{if  }\alpha \in \secr .
                \end{array}
             \right.
\end{equation*}

\noindent The partial information of a certain evolution of the system is given by the map $t$ defined as follows.

\begin{definition} Let $\qi\stackrel{\ell_1,\alpha_1}{\longrightarrow} \cdots \stackrel{\ell_{n},\alpha_{n}}{\longrightarrow}q_{n+1}$ be a finite path of the system, then we define $t$ as:
$$\begin{array}{l}
\smallskip
t\left(\qi\stackrel{\ell_1,\alpha_1}{\longrightarrow} \cdots
\stackrel{\ell_{n},\alpha_{n}}{\longrightarrow}q_{n+1}\right)
\eqdef \\
\quad \quad \qquad \qquad(\avail(\qi),\ell_1,\hide(\alpha_1))
\cdots
(\avail(q_n),\ell_n,\hide(\alpha_n)) \cdot\avail(q_{n+1}).\\
\end{array}$$
\end{definition}

Finally, we have all the ingredients needed to define admissible schedulers.


\begin{definition}[Admissible schedulers]
A scheduler $\zeta$ is admissible if for all $\sigma,\sigma^\prime\in \fpaths$
\[t(\sigma)=t(\sigma^\prime)\quad \mbox{implies}\quad \zeta(\sigma)=\zeta(\sigma^\prime).\]
\end{definition}

In this way, admissible schedulers are forced to take the same decisions on paths that they cannot tell apart. Note that this is a restriction on the original definition of (full-information) schedulers where $t$ is the identity map over finite paths (and consequently the scheduler is free to choose differently).

In the kind of systems we consider (the TPAs) the only source of
nondeterminism are the interleaving and interactions of the
parallel components. Consequently, in a TPA the notion of
scheduler is quite simple: its role, indeed, is to select, at each
step,  the component or pair of components which will perform the
next transition. In addition, the  TPA model allows us to express
in a simple way the notion of admissibility:  in fact the
transitions available in the last state of $\sigma$ are determined
by the set of components enabled in the last state of $\sigma$,
and  $t(\sigma)$ gives (among other information) such set.
Therefore $t(\sigma)=t(\sigma^\prime)$ implies that the last
states of $\sigma$ and $\sigma'$ have the same possible
transitions, hence it is possible to require that
$\zeta(\sigma)=\zeta(\sigma^\prime)$ without being too restrictive
or too permissive. In more general systems, where the sources of
nondeterminism can be arbitrary, it is difficult to impose that
the scheduler``does not depend on the secret choices'', because
different secret choices in general may give rise to states with
different sets of transitions, and it is unclear whether such
difference should be ruled out as ``inadmissible'', or should be
considered as part of what a ``real'' scheduler can detect.


%

%
%
%
%
%
%
%
%
%
%
%
%
%
%


\section{Information-hiding properties in presence of nondeterminism}

In this section we revise the standard definition of information flow and anonymity in our framework of controlled nondeterminism.

We first consider the notion of adversary. We consider three possible notions of adversaries, increasingly more powerful.

\subsection{Adversaries}
\noindent\emph{External adversaries:}
Clearly, an adversary should be able, by definition, to see at least the observable actions.
For an adversary external to the system $S$, it is natural to assume that these are also the only actions that he is supposed to see.
Therefore, we define the  observation domain, for an external adversary, as the set of the (finite) sequences of observable actions, namely:
\[
\observables_{\rm e} \eqdef  \Sigma^*_O.
\]
Correspondingly, we need a function $t_{\rm e} : \fpaths(S) \rightarrow \observables_{\rm e}$ that extracts the observables from the executions:

\[t_{\rm e}\left(q_0\stackrel{\ell_1,\alpha_1}{\longrightarrow}\cdots \stackrel{\ell_{n},\alpha_n}{\longrightarrow}q_{n+\! 1}\right)\!\eqdef\!\sieve(\alpha_1) \cdots \sieve(\alpha_{n})\]
\noindent where
\[
\sieve(\alpha) \eqdef \left\{\begin{array}{ll}
               \alpha & \mbox{if } \alpha\in\obse,\\[2mm]
               \epsilon    &\mbox{if }  \alpha \in \secr \cup\{ \tau\}.
               \end{array}
            \right.
\]

\noindent\emph{Internal adversaries:} An internal adversary may be able to see, besides the observables, also the intearleaving and synchronizations of the various components, i.e. which component(s)  are active, at each step of the execution. Hence it is natural to define the observation domain, for an internal adversary, as the sequence of pairs of observable action  and tag  (i.e. the identifier(s) of the active component(s)), namely:
\[
\observables_{\rm i} \eqdef ( L \times  (\obse \cup \{\tau\}))^*.
\]
Correspondingly, we need a function $t_{\rm i} : \fpaths(S) \rightarrow \observables_{\rm i}
$ that extracts the  observables from the executions:
\[t_{\rm i}\left(q_0\stackrel{\ell_1,\alpha_1}{\longrightarrow}\cdots \stackrel{\ell_{n},\alpha_n}{\longrightarrow} q_{n+\!1}\right)\eqdef(\ell_1,\sieve(\alpha_1)) \cdots (\ell_{n},\sieve(\alpha_{n})).\]
Note that in this definition we could have equivalently used $\hide$ instead than $\sieve$.

\medskip\noindent\emph{Adversaries in collusion with the
scheduler:} Finally, we consider the case in which the adversary
is in collusion with the scheduler, or possibly the adversary
\emph{is} the scheduler. \rev{To illustrate the difference
between this kind of adversaries and internal adversaries,
consider the scheduler of an operating system. In such scenario an
internal adversary is able to see which process has been scheduled
to run next (process in the ``running state'') whereas an
adversary in collusion with the scheduler can see as much as the
scheduler, thus being able to see (in addition) which processes
are in the ``ready state'' and which processes are in the
``waiting / blocked'' state. We will show later that such
additional information does not help the adversary to leak
information (see Proposition~\ref{prop:internal-adversaries}). The
observation domain of adversaries in collusion with the scheduler
coincides with the one of the scheduler:}
\[
\observables_{\rm s} \eqdef ({\cal P}(L) \times L \times (\obse \cup \{\tau\}))^*.
\]
The corresponding function
\[
t_{\rm s} : \fpaths(S) \rightarrow \observables_{\rm s}
\]
is defined as the one of the scheduler, i.e. $t_{\rm s}= t$.

\subsection{Information leakage}

In Information Flow and Anonymity there is a converging consensus
for formalizing the notion of leakage as the difference or the
ratio between the a priori uncertainty that the adversary has
about the secret, and the a posteriori uncertainty, that is, the
residual uncertainty of the adversary once it has seen the outcome
of the computation. The uncertainty can be measured in different
ways. One popular approach is the information-theoretic one,
according to which  the system is seen as a noisy channel between
the secret inputs and the observable output, and uncertainty
corresponds to the Shannon entropy of the system (see
preliminaries \rev{-- Section 4.2}). In this approach, the leakage
is represented by the so-called mutual information, which
expresses the correlation between the input and the output.

The above approach, however, has been recently criticized by Smith
\cite{Smith:09:FOSSACS}, who has argued that Shannon entropy is
not suitable to represent the security threats in the typical case
in which the adversary is interested in figuring out the secret in
one-try attempt, 
\\ \ }  \commentcat{I reformulated it a bit} and he has proposed
to use R\'enyi's min entropy instead, or equivalently, the average
probability of succeeding. This leads to interpret the uncertainty
in terms of the notion of \emph{vulnerability} defined in the
preliminaries \rev{(Section 4.2)}. The corresponding notion of
leakage, in the pure probabilistic case, has been investigated in
\cite{Smith:09:FOSSACS} (multiplicative case) and in
\cite{Braun:09:MFPS} (additive case).

Here we adopt the vulnerability-based approach to define the
notion of leakage in our probabilistic and nondeterministic
context. The Shannon-entropy-based approach could be extended to
our context as well, because in both cases we only need to specify
how to determine the conditional probabilities which constitute
the channel matrix, and the marginal probabilities that constitute
the input and the output distribution.

We will denote by $S$ the random variable associated to the set of
secrets $\secrets = \secr$, and by $O_{{\rm x}}$ the random variables
associated to the set of observables $\observables_{\rm x}$, where
${\rm x}\in\{{\rm e},{\rm i}, {\rm s}\}$. So,  $\observables_{\rm x}$
represents the observation domains for the various kinds of
adversaries defined above.

As mentioned before, our results require some structural
properties for the system: we assume that there is a single
component in the system containing a secret choice and this
component contains a single secret choice. This hypothesis is
general enough to allow expressing protocols like the Dining
Cryptographers, Crowds, voting protocols, etc., where the secret
is chosen only once.



\begin{assumption}\label{ass:oneprobchoice}
A system contains exactly one component with a syntactic occurrence of a
secret choice, and such a choice does not occur in the scope of a
recursive call.
\end{assumption}

Note that the assumption implies that the choice appears exactly once in the operational semantics of the component.
It would be possible to relax the assumption and allow more than one secret choice in a component, as long as there are no observable actions between the secret choices. For the sake of simplicity in this paper we impose the more restrictive requirement.
As a consequence, we have that the operational semantics of systems satisfies the following property:
\begin{proposition}\label{prop:uniquesecrets}
If $q\stackrel{\ell}\rightarrow \mu$ and $q'\stackrel{\ell'}\rightarrow \mu^\prime$ are both secret choices, then $\ell = \ell'$ and  there exist $p_i$'s, $q_i$'s and $q'_i$'s  such that: \[\mu=\probsum{i} \ p_i \cdot \delta{(s_i,q_i)}\quad \mbox{and}\quad \mu^\prime=\probsum{i} \ p_i \cdot \delta{(s_i,q_i^\prime)}\]
    \noindent i.e., $\mu$ and $\mu^\prime$ differ only for the continuation
    states.
\end{proposition}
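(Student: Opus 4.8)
The plan is to prove Proposition~\ref{prop:uniquesecrets} by a careful induction on the structure of the operational semantics, relying crucially on Assumption~\ref{ass:oneprobchoice}. The statement has two parts: (1) any two secret-choice transitions from the same state $q$ carry the same tag $\ell$, and (2) the two resulting distributions $\mu,\mu^\prime$ share the same branch probabilities $p_i$ and the same secret actions $s_i$, differing only in the continuation states $q_i$ versus $q_i^\prime$. Since a state $q$ of the system TPA is a term $(C)\ q_1 \parallel \cdots \parallel q_n$, the analysis reduces to understanding when a secret choice can be triggered at the system level.

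First I would establish a lemma at the \emph{component} level: by Assumption~\ref{ass:oneprobchoice}, the (unique) component containing a secret choice has that choice occurring exactly once in its syntax, outside any recursive call. Tracking the operational semantics rules for components (PRF1, PRF2, INT, SECR, CND1, CND2, CALL), I would argue that along any execution of that component, there is a unique ``moment'' at which the SECR rule fires, and at that moment the term has the shape $\sum_i p_i : s_i . q_i$, so the transition is forced to be $\sum_i p_i : s_i.q_i \rightarrow \probsum{i} p_i \cdot \delta(s_i,q_i)$. Because the choice does not occur under a recursive call and there is only one syntactic occurrence, the list of pairs $(p_i,s_i)$ reachable as ``the secret choice'' is fixed; only the continuations $q_i$ can vary depending on what substitutions (via PRF1's $q[v/x]$) or which if-then-else branches were taken \emph{before} reaching the secret choice. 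This gives the component-level analogue of the proposition.

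Next I would lift this to the system level. A system transition $q \stackrel{\ell}{\rightarrow} \mu$ that is a secret choice must, by Proposition~\ref{prop:TPAcharacter}(a), have $\mu = \probsum{i} p_i \cdot \delta(s_i, q_i)$, and inspecting the Interleaving and Synchronization/Communication rules, such a $\mu$ can only be produced by the Interleaving rule applied to the distinguished component making a SECR step (synchronization produces only a Dirac delta on $\tau$, and the secret actions are in $\secr$, disjoint from the channel names in $C$, so they cannot synchronize). Hence $\ell$ is the tag (index) of that unique component, which is the same regardless of which secret transition we picked — giving part (1), $\ell = \ell'$. For part (2), applying the component-level lemma to that component shows the $p_i$'s and $s_i$'s coincide, and the interleaving rule simply wraps each continuation $q_i$ back into the parallel context $(C)\ q_1 \parallel \cdots \parallel q_i \parallel \cdots \parallel q_n$, so $\mu$ and $\mu^\prime$ differ only in these continuation states.

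The main obstacle I anticipate is making the component-level uniqueness argument fully rigorous: one must show that ``the same syntactic secret choice'' is really what fires in both $q \stackrel{\ell}{\rightarrow}\mu$ and $q' \stackrel{\ell'}{\rightarrow}\mu'$ when $q$ and $q'$ are possibly different reachable states. This requires a notion of ``residual'' of the distinguished component and an invariant, maintained along all transitions, stating that the secret choice is still present unevaluated (with its $(p_i,s_i)$ data intact) until it fires, and absent afterwards — and that recursion (the CALL rule) cannot regenerate it because the choice is not in the scope of a recursive call. Once this invariant is set up, the rest is a routine case analysis over the semantic rules. I would also note in passing the minor point flagged by the margin comment: Proposition~\ref{prop:TPAcharacter}(b) (determinism of transitions per tag) is what guarantees there is at most one such $\mu$ for a given $\ell$, which is implicitly used when we speak of ``the'' secret choice transition.
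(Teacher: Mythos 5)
Your proposal is correct and follows essentially the same route as the paper's own (much briefer) proof: Assumption~\ref{ass:oneprobchoice} pins down a unique component with a unique syntactic secret choice, so any secret-choice transition at the system level arises from that one choice via the interleaving rule, which preserves the probabilities and secret actions and only changes the continuation states. Your component-level lemma and residual invariant simply make rigorous what the paper asserts in two sentences.
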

\begin{proof}
Because of Assumption \ref{ass:oneprobchoice}, there is only one component that can generate a secret choice, and it generates only one such choice. Due to the different possible interleavings, this choice can appear as an outgoing transition in more than one state of the TPA, but the probabilities are always the same, because the interleaving rule does not change them.
\end{proof}

Given a system, each scheduler $\zeta$ determines a fully probabilistic automaton, and, as a consequence, the probabilities
\[\meas{\zeta}{s,o} \eqdef \meas{\zeta}{\bigcup\left\{\cyl{\sigma}\mid \begin{array}[t]{l}\sigma\in \fpaths(S),  t_{\rm x} (\sigma) = o, {\it secr}(\sigma)=s\end{array}\right\}}
\]
\noindent for each secret $s\in\secrets$ and observable $o\in \observables_{\rm x}$, where ${\rm x}\in\{{\rm e},{\rm i}, {\rm s}\}$. Here ${\it secr}$ is the map from paths to their secret action. From these we can derive,  in standard ways, the marginal probabilities $\meas{\zeta}{s}$, $\meas{\zeta}{o}$,  and the conditional probabilities $\meas{\zeta}{o\mid s}$.


%

Every scheduler leads to a (generally different) noisy channel, whose matrix is determined by the conditional probabilities as follows:
\begin{definition}
Let $x\in\{{\rm e},{\rm i}, {\rm s}\} $.
Given a system and a  scheduler $\zeta$, the  corresponding channel matrix $\nc^{\rm x}_\zeta$ has rows indexed by $s \in \secrets$ and columns indexed by $o\in\observables_{\rm x}$. The value in $(s,o)$ is given by
\[\meas{\zeta}{o\mid s} \eqdef \frac{\meas{\zeta}{s,o}}{\meas{\zeta}{s}}. \] 
\end{definition}

\noindent Given a scheduler $\zeta$, the multiplicative leakage can  be
defined as $\leakm(\nc^{\rm x}_\zeta,P_\zeta)$, while the additive
leakage can be defined as $\leaka(\nc^{\rm x}_\zeta,P_\zeta)$
where $P_\zeta$ is the a priori distribution on the set of secrets
(see preliminaries, \rev{Section 4.2}). However, we want a notion of
leakage independent from the scheduler, and therefore it is
natural to consider the worst case over all possible admissible
schedulers.

\begin{definition}[${\rm x}$-leakage]
Let ${\rm x}\in\{{\rm e},{\rm i}, {\rm s}\} $. Given a system,
the multiplicative leakage is defined as \[{\cal M}\leakm^{\rm x}\eqdef\max_{\zeta\in{\it Adm}}\leakm(\nc^{\rm x}_\zeta,P_\zeta),\] while the additive leakage is defined as
\[{\cal M}\leaka^{\rm x}\eqdef\max_{\zeta\in{\it Adm}}\leaka(\nc^{\rm x}_\zeta,P_\zeta),\]
where {\it Adm} is the class of admissible schedulers defined in the previous section.
\end{definition}

We have that the  classes of observables {\rm e}, {\rm i}, and  {\rm s} determine an increasing degree of leakage:

\begin{proposition}\label{prop:leakage}
Given a system,
for  the multiplicative leakage we have
\begin{enumerate}
\item For every scheduler $\zeta$,  $\leakm(\nc^{\rm e}_\zeta,P_\zeta) \leq \leakm(\nc^{\rm i}_\zeta,P_\zeta) \leq \leakm(\nc^{\rm s}_\zeta,P_\zeta)$
\item ${\cal M}\leakm^{\rm e} \leq {\cal M}\leakm ^{\rm i} \leq{\cal M}\leakm ^{\rm s}$
\end{enumerate}
Similarly for the additive leakage.
\end{proposition}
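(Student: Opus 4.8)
The plan is to exploit the fact that the three observation maps form a chain of deterministic refinements: the external observation is a function of the internal one, which in turn is a function of the scheduler's observation. Concretely, I would define a map $h\colon\observables_{\rm i}\to\observables_{\rm e}$ that forgets the tags and concatenates the remaining (possibly empty) action symbols, and a map $g\colon\observables_{\rm s}\to\observables_{\rm i}$ that forgets the set‑of‑available‑components component of each triple (and translates $\tau$ to $\epsilon$). A straightforward induction on the length of a finite path $\sigma$ then gives $t_{\rm e}=h\circ t_{\rm i}$ and $t_{\rm i}=g\circ t_{\rm s}$, so the external trace of any path is completely determined by its internal trace, and the internal trace by the scheduler's trace.

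First I would fix an arbitrary scheduler $\zeta$ (not necessarily admissible, for this part), which turns the system into a fully probabilistic automaton with path measure $\meas{\zeta}{\cdot}$. The families $\{h^{-1}(o_{\rm e})\}_{o_{\rm e}\in\observables_{\rm e}}$ and $\{g^{-1}(o_{\rm i})\}_{o_{\rm i}\in\observables_{\rm i}}$ partition $\observables_{\rm i}$ and $\observables_{\rm s}$ respectively (each relevant set of paths being a countable union of cones, hence measurable), and the identity $t_{\rm e}=h\circ t_{\rm i}$ yields, for every secret $s$ and every $o_{\rm e}$,
\[\meas{\zeta}{s,o_{\rm e}}\;=\;\sum_{o_{\rm i}\in h^{-1}(o_{\rm e})}\meas{\zeta}{s,o_{\rm i}},\]
and analogously $\meas{\zeta}{s,o_{\rm i}}=\sum_{o_{\rm s}\in g^{-1}(o_{\rm i})}\meas{\zeta}{s,o_{\rm s}}$. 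Summing over all observables shows that the marginal $\meas{\zeta}{s}$ — and hence the prior $P_\zeta$ on secrets and the a priori vulnerability $\vul(S)$ — is the \emph{same} in all three channels $\nc^{\rm e}_\zeta,\nc^{\rm i}_\zeta,\nc^{\rm s}_\zeta$. This is precisely the point that makes it legitimate to compare the leakage \emph{ratios} (and differences) rather than just the a posteriori vulnerabilities.

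Next I would carry out the data‑processing step. Using the joint‑probability form (\ref{eqn:postvulnerability}) of the a posteriori vulnerability together with the displayed marginalization identity,
\[\vul(S\mid O_{\rm e})\;=\;\sum_{o_{\rm e}}\max_{s}\meas{\zeta}{s,o_{\rm e}}\;=\;\sum_{o_{\rm e}}\max_{s}\sum_{o_{\rm i}\in h^{-1}(o_{\rm e})}\meas{\zeta}{s,o_{\rm i}}\;\leq\;\sum_{o_{\rm e}}\sum_{o_{\rm i}\in h^{-1}(o_{\rm e})}\max_{s}\meas{\zeta}{s,o_{\rm i}}\;=\;\vul(S\mid O_{\rm i}),\]
where the inequality is the elementary ``max of a sum $\le$ sum of maxes'' applied column by column, and the last equality uses that $h^{-1}$ partitions $\observables_{\rm i}$. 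The same argument with $g$ gives $\vul(S\mid O_{\rm i})\le\vul(S\mid O_{\rm s})$. Since $\vul(S)$ is common to the three channels, dividing by it yields $\leakm(\nc^{\rm e}_\zeta,P_\zeta)\le\leakm(\nc^{\rm i}_\zeta,P_\zeta)\le\leakm(\nc^{\rm s}_\zeta,P_\zeta)$, and subtracting it yields the corresponding chain for $\leaka$; this establishes item~1. Item~2 then follows immediately by taking $\max_{\zeta\in{\it Adm}}$ on both sides of the inequalities of item~1, since $a(\zeta)\le b(\zeta)$ for all $\zeta$ implies ${\cal M}\leakm^{\rm e}=\max_\zeta a(\zeta)\le\max_\zeta b(\zeta)={\cal M}\leakm^{\rm i}$, and likewise for the remaining inequalities.

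I do not anticipate any genuinely hard step: the proposition is essentially the data‑processing inequality for min‑entropy leakage, transported to our path‑measure setting. The only points requiring care are bookkeeping ones — checking that $h$ and $g$ are well defined on the observation domains (in particular handling the harmless $\tau$/$\epsilon$ discrepancy between $\hide$ and $\sieve$), verifying measurability of the sets $\{\sigma\mid t_{\rm x}(\sigma)=o,\ {\it secr}(\sigma)=s\}$, and, most importantly, confirming that the prior on secrets is invariant across the three observation models, which is exactly what the marginalization identity secures.
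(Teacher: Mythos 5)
Your proof is correct and is essentially the paper's argument: the paper disposes of item 1 in one line by observing that $\observables_{\rm e}$ is an abstraction of $\observables_{\rm i}$, which in turn is an abstraction of $\observables_{\rm s}$, and your refinement maps $h,g$ together with the ``max of a sum $\leq$ sum of maxes'' (data-processing) step are precisely the content behind that remark, with item 2 following, as in the paper, by taking the maximum over admissible schedulers. The extra bookkeeping you do (invariance of the prior, measurability, the $\tau$/$\epsilon$ discrepancy) only makes explicit what the paper treats as immediate.
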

\begin{proof}\ \ \ \ \
\begin{enumerate}
\item The property follows immediately from the fact that the domain $\observables_{\rm e}$ is an abstraction of $\observables_{\rm i}$, and $\observables_{\rm i}$ is an abstraction of
$\observables_{\rm s}$.
\item Immediate from previous point and from the definition of ${\cal M}\leakm^{\rm x}$ and ${\cal M}\leaka^{\rm x}$. \qedhere
\end{enumerate}
\end{proof}

\subsection{Strong anonymity (revised)}
We consider now the situation in which the  leakage is the minimum for all possible admissible schedules.
In the purely probabilistic case, we know that the minimum possible multiplicative leakage is $1$, and the minimum possible additive one is $0$.
We also know that  this is the case for all possible input distributions if and only if the capacity of the channel matrix is $0$, which corresponds to the case in which
the rows of the matrix are all the same. This corresponds to the notion of strong probabilistic anonymity defined in \cite{Bhargava:05:CONCUR}. In the framework of information flow,
it would correspond to probabilistic non-interference.
Still in \cite{Bhargava:05:CONCUR},
the authors considered also the extension of this notion in presence of nondeterminism, and required the condition to hold under all possible schedulers.
This is too strong in practice, as we have argued in the introduction: in most cases we can build a scheduler that leaks the secret by changing the interleaving order.
 We therefore tune this notion by requiring the condition to hold only under the admissible schedulers.

\begin{definition}[$x$-strongly anonymous]
Let $x\in\{{\rm e},{\rm i}, {\rm s}\} $.
We say that a system is $x$-\emph{strongly-anonymous}
if for all admissible schedulers $\zeta$ we have
$$\meas{\zeta}{o\mid s_1}=\meas{\zeta}{o\mid s_2}$$
\noindent for all $s_1,s_2\in\secr$, and $o\in \observables_{\rm x}$.
\end{definition}

The following corollary is an immediate consequence of Proposition~\ref{prop:leakage}.

\begin{corollary}\label{cor:x-s-a}$\empty$
 \begin{enumerate}
\item If a system is s-strongly-anonymous, then it is also i-strongly-anonymous.
\item If a system is i-strongly-anonymous, then it is also e-strongly-anonymous.
\end{enumerate}
\end{corollary}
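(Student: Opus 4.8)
The plan is to derive the corollary directly from the fact that the observation domain $\observables_{\rm e}$ is a coarsening (abstraction) of $\observables_{\rm i}$, which in turn is a coarsening of $\observables_{\rm s}$ --- this is exactly the observation that underlies part~1 of Proposition~\ref{prop:leakage}. Concretely, for a fixed system I would first exhibit two ``forgetful'' maps
\[
g\colon \observables_{\rm s}\to\observables_{\rm i}
\qquad\text{and}\qquad
h\colon \observables_{\rm i}\to\observables_{\rm e},
\]
where $g$ deletes the ${\cal P}(L)$-component from every triple of an $\observables_{\rm s}$-word, and $h$ deletes the tags and concatenates the action components (so that the occurrences of $\epsilon$ vanish). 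The key point to verify is that these maps are compatible with the trace-extraction functions, i.e. $t_{\rm i}=g\circ t_{\rm s}$ and $t_{\rm e}=h\circ t_{\rm i}$ on $\fpaths(S)$. This is a short syntactic check: $\sieve(\alpha)$ is recoverable from $\hide(\alpha)$ --- it equals $\hide(\alpha)$ when the latter lies in $\obse$ and equals $\epsilon$ when it is $\tau$ --- and $t_{\rm e}(\sigma)$ is by definition the concatenation of the $\sieve(\alpha_i)$, hence a function of $t_{\rm i}(\sigma)$.

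Next I would use these maps to relate the channel matrices. Fix an admissible scheduler $\zeta$ and a secret $s$ with $\meas{\zeta}{s}>0$. For each $o\in\observables_{\rm i}$ the event $\{\sigma\mid t_{\rm i}(\sigma)=o,\ {\it secr}(\sigma)=s\}$ is the disjoint union, over $o'\in g^{-1}(o)$, of the events $\{\sigma\mid t_{\rm s}(\sigma)=o',\ {\it secr}(\sigma)=s\}$, so countable additivity of $\meas{\zeta}{\cdot}$ gives $\meas{\zeta}{s,o}=\sum_{o'\in g^{-1}(o)}\meas{\zeta}{s,o'}$; dividing by $\meas{\zeta}{s}$ yields
\[
\meas{\zeta}{o\mid s}=\sum_{o'\in g^{-1}(o)}\meas{\zeta}{o'\mid s},
\]
and similarly with $h$ in place of $g$. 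In other words, each row of $\nc^{\rm i}_\zeta$ is obtained from the corresponding row of $\nc^{\rm s}_\zeta$ by the fixed, $s$-independent operation of summing entries over the fibres of $g$, and likewise $\nc^{\rm e}_\zeta$ from $\nc^{\rm i}_\zeta$ via $h$.

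Finally I would conclude. If the system is ${\rm s}$-strongly-anonymous, then for every admissible $\zeta$ and all $s_1,s_2\in\secr$ we have $\meas{\zeta}{o'\mid s_1}=\meas{\zeta}{o'\mid s_2}$ for every $o'\in\observables_{\rm s}$; summing over $g^{-1}(o)$ gives $\meas{\zeta}{o\mid s_1}=\meas{\zeta}{o\mid s_2}$ for every $o\in\observables_{\rm i}$, i.e. the system is ${\rm i}$-strongly-anonymous. The same step with $h$ gives ${\rm i}$-strong-anonymity $\Rightarrow$ ${\rm e}$-strong-anonymity. (Secrets $s$ with $\meas{\zeta}{s}=0$ are vacuous, the corresponding row being irrelevant; by Proposition~\ref{prop:uniquesecrets} this probability equals $p_i$ and is independent of $\zeta$ anyway.) I expect the only real obstacle to be the bookkeeping in the first step --- making fully precise that dropping tags and $\tau$'s is a well-defined function on words of the coarser domain rather than merely on paths; everything afterwards (the fibre-sum identity and the two implications) is routine. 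A one-line alternative is simply to note that this factorisation is precisely what is established inside the proof of Proposition~\ref{prop:leakage}(1), so that the corollary becomes immediate once ``${\rm x}$-strongly-anonymous'' is read as ``all rows of $\nc^{\rm x}_\zeta$ coincide, for every admissible $\zeta$''.
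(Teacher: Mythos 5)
Your proof is correct, and it rests on the same underlying observation the paper uses, namely that $\observables_{\rm e}$ is an abstraction of $\observables_{\rm i}$, which is an abstraction of $\observables_{\rm s}$; but you develop it along a different and in fact more careful route. The paper dispatches the corollary in one line, as ``an immediate consequence of Proposition~\ref{prop:leakage}'' (whose own proof is exactly the abstraction remark). You instead bypass the leakage quantities entirely: you make the coarsening maps $g,h$ explicit, check compatibility with the trace-extraction functions ($t_{\rm i}=g\circ t_{\rm s}$, $t_{\rm e}=h\circ t_{\rm i}$), and show by countable additivity that each row of the coarser channel matrix is obtained from the corresponding row of the finer one by summing over fibres, so equality of rows is preserved. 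This buys you something real: strong anonymity is a statement about \emph{rows of the matrix coinciding}, and for a fixed a priori distribution $P_\zeta$, minimal leakage does not in general imply identical rows, so deducing the corollary literally from the leakage inequalities of Proposition~\ref{prop:leakage} would require extra argument; your fibre-sum derivation is the direct justification that the paper's terse citation implicitly leans on. The only small caveats are the ones you already flag yourself (well-definedness of the coarsening on words, and the vacuity of secrets with $\meas{\zeta}{s}=0$, which by Proposition~\ref{prop:uniquesecrets} does not actually arise since every secret has fixed positive probability $p_i$), and both are handled correctly.
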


The converse of point $(2)$, in the previous corollary, does not hold, as shown by the following example:

\begin{example}
Consider the system $S \eqdef (\{c_1,c_2\})\ P\, ||\, Q\, ||\, T$ where
\[P\eqdef(0.5 : s_1 \,.\, \overline{c}_1) + (0.5 : s_2 \,.\, \overline{c}_2)   \quad Q \eqdef c_1 \,.\, o   \quad T \eqdef c_2 \,.\, o\]

It is easy to check that $S$ is $e$-strongly anonymous but not $i$-strongly anonymous, showing that (as expected) internal adversaries can ``distinguish more'' than external adversaries.
\end{example}

On the contrary, for point $(1)$ of Corollary~\ref{cor:x-s-a},
also the other direction holds:

\begin{proposition}\label{prop:internal-adversaries}
A system is s-strongly-anonymous if and only if it is
i-strongly-anonymous.
\end{proposition}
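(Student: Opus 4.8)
Since an $s$-strongly-anonymous system is always $i$-strongly-anonymous by Corollary~\ref{cor:x-s-a}(1), the plan is to prove only the converse, and I would do so by contraposition: assuming the system is \emph{not} $s$-strongly-anonymous, I would exhibit an admissible scheduler and an $i$-observation witnessing the failure of $i$-strong-anonymity. First some bookkeeping. Let $\mathrm{pr}\colon\observables_{\rm s}\to\observables_{\rm i}$ be the map that deletes from each triple its $\avail$-component (and deletes the trailing $\avail$-set), so that $t_{\rm i}=\mathrm{pr}\circ t_{\rm s}$ on $\fpaths(S)$. Because $\mathrm{pr}$ preserves the number of triples, no two $s$-observations in the same fibre are comparable as traces, so the corresponding cone-unions are disjoint; hence, for every admissible scheduler $\zeta$ and every secret $s$ with $\meas{\zeta}{s}>0$,
\[
\meas{\zeta}{o'\mid s}=\textstyle\sum_{\{o\in\observables_{\rm s}\,:\,\mathrm{pr}(o)=o'\}}\meas{\zeta}{o\mid s}\qquad\text{for every }o'\in\observables_{\rm i}.
\]

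Now suppose the system is not $s$-strongly-anonymous: there are an admissible $\zeta$, secrets $s_1,s_2$ and an $s$-observation $o$ with $\meas{\zeta}{o\mid s_1}\neq\meas{\zeta}{o\mid s_2}$. If already $\meas{\zeta}{\mathrm{pr}(o)\mid s_1}\neq\meas{\zeta}{\mathrm{pr}(o)\mid s_2}$, then $\zeta$ itself witnesses failure of $i$-strong-anonymity and we are done. Otherwise, by the fibre identity the imbalance on $o$ is cancelled inside the fibre over $\mathrm{pr}(o)$, which forces the $\avail$-data recorded in the $s$-observation to carry secret-correlated information: concretely, following $\mathrm{pr}(o)$ there is a first step $j$ and a tag $\ell$ such that, conditioned on $s_1$ and conditioned on $s_2$, the event ``after the $i$-prefix of length $j-1$ the system is in a state with $\ell\in\avail$'' has different probabilities. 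Here Assumption~\ref{ass:oneprobchoice} and Proposition~\ref{prop:uniquesecrets} are what make this phrasing meaningful: there is a single secret choice, outside recursion, contributing the same factors on all branches, so any such $\avail$-splitting is produced purely by the post-secret behaviour of the one secret component, which is exactly the kind of information an admissible scheduler is allowed to exploit.

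I would then define a new scheduler $\zeta'$ that follows $\zeta$ up to the divergence point and from step $j$ onward switches to a deterministic \emph{exposing} policy: whenever it observes $\ell\in\avail$ it schedules $\ell$, otherwise it schedules a fixed enabled alternative, and thereafter it follows any fixed residual policy (possibly halting early, which is permitted). Since $\zeta'$ still depends only on $t_{\rm s}$, it is admissible. But now the presence or absence of $\ell$ in $\avail$ at step $j$, which differs in probability between $s_1$ and $s_2$, is turned into a difference in the \emph{tag} (and action) emitted at step $j+1$, hence into the $i$-observation itself. Therefore some $i$-observation $o''$ has $\meas{\zeta'}{o''\mid s_1}\neq\meas{\zeta'}{o''\mid s_2}$, so the system is not $i$-strongly-anonymous, completing the contrapositive. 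The main obstacle is precisely this last construction: one must check that redirecting $\zeta$ at the divergence step genuinely preserves a probability gap between $s_1$ and $s_2$ (rather than destroying it or merely shifting it into yet another hidden fibre), and that the exposing policy is well defined as a function of $t_{\rm s}$ alone; discharging this through a case analysis on the form of the step at position $j$ (blind choice, secret choice, Dirac transition, synchronisation), using Proposition~\ref{prop:uniquesecrets}, is where the real work lies.
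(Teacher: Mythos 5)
Your proposal follows essentially the same route as the paper's proof: project the $s$-observation to its $i$-observation, locate the first position where the recorded $\avail$-set carries secret-dependent probability, and construct an admissible scheduler that acts like $\zeta$ up to that point and then schedules the distinguishing tag exactly when it is available, thereby turning the availability gap into a visible difference of $i$-observations. The verification you defer as ``the real work'' is exactly what the paper discharges at that point, by noting that the availability of each label given the trace prefix is governed by the probabilistic choices of its own component (hence independent of the other labels in the set), and that at the first point of divergence the conditioning on the $s$-prefix can be replaced by conditioning on its $i$-projection.
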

\begin{proof}
Corollary~\ref{cor:x-s-a} ensures the only-if part. For the if part, we proceed by contradiction. Assume that
the system is i-strongly-anonymous but that
$\meas{\zeta}{o\mid s_1}\neq \meas{\zeta}{o\mid s_2}$ for some admissible
 scheduler $\zeta$ and
observable $o\in \observables_{\rm s}$.
Let $o = (\avail(\qi),\ell_1,\hide(\alpha_1)) \cdots (\avail(q_n),\ell_n,\hide(\alpha_n))$ and let $o'$ be the
projection of $o$ on $\observables_{\rm i}$, i.e.  $o' = (\ell_1,\hide(\alpha_1)) \cdots (\ell_n,\hide(\alpha_n))$.
Since the system is i-strongly-anonymous, $\meas{\zeta}{o'\mid s_1}= \meas{\zeta}{o'\mid s_2}$,
which means that the difference in  probability with respect to $o$ must be due to at least one of the
sets of enabled processes. Let us consider the first set $L$ in $o$ which exhibits a difference in the probabilities,
and let $o''$ be the prefix of $o$ up to the tuple containing $L$. Since the probabilities are determined by the distributions on the probabilistic choices which occur in the individual components,
the probability of each $\ell\in L$ to be available (given the trace  $o''$) is independent of the other labels in $L$.
At least one such $\ell$ must therefore have a different probability, given the trace $o''$, depending on whether the secret choice was $s_1$ or $s_2$. And, because of the assumption on $L$, we can replace the conditioning on trace $o''$ with the conditioning on the projection $o'''$ of $o''$ on $\observables_{\rm i}$.
Consider now an admissible scheduler $\zeta'$ that acts like $\zeta$ up to $o''$, and then selects $\ell$ if and only if it is available.
Since the probability that $\ell$ is not available depends on the choice of $s_1$ or $s_2$, we have
$\meas{\zeta}{o'''\mid s_1}\neq \meas{\zeta}{o'''\mid s_2}$, which contradicts the hypothesis that the system is i-strongly-anonymous.
\end{proof}

\rev{Intuitively, this result means that an $s$-adversary can leak
information if and only if an $i$-adversary can leak information
or, in other words, $s$-adversaries are as powerful as
$i$-adversaries (even when the former can observe more
information).}


\section{Verifying strong anonymity: a proof technique based on automorphisms}
\sectionmark{Verifying strong anonymity}

As mentioned in the introduction, several problems involving
restricted schedulers have been shown undecidable (including
computing maximum / minimum probabilities for the case of standard
model checking~\cite{Giro:07:FORMATS,Giro:09:SBMF}). These
results are discouraging in the aim to find algorithms for
verifying strong anonymity/non-interference using our notion of
admissible schedulers (and most definitions based on restricted
schedulers). Despite the fact that the problem seems to be
undecidable in general, in this section we present a sufficient
(but not necessary) anonymity proof technique: we show that the
existence of automorphisms between each pair of secrets implies
strong anonymity. \rev{We conclude this section illustrating the
applicability of our proof technique by means of the DC
protocol, i.e., we prove that the protocol does not leak
information by constructing automorphisms between pairs of
cryptographers. It is worth mentioning that our proof technique
is general enough to be used for the analysis of information leakage 
of a broad family of protocols, namely any protocol
that can be modeled in our framework.}

\subsection{The proof technique}

In practice proving anonymity often happens in the following way.
Given a trace in which user $A$ is the `culprit', we construct an
observationally equivalent trace in which user $B$ is the
`culprit'~\cite{Halpern:05:JCS,Garcia:05:FMSE,Mauw:04:ESORICS,Hasuo:07:ESOP}.
This new trace is typically obtained by `switching' the behavior
of users $A$ and $B$. We formalize this idea by using the notion
of automorphism, cf. e.g.~\cite{Rutten:00:TCS}.

\begin{definition}[Automorphism] Given a TPA $(Q,L,\Sigma,\qi,\theta)$ we say that a bijection $f:Q\rightarrow Q$ is an \emph{automorphism} if it satisfies $f(\qi)=\qi$ and
\[  q\stackrel{\ell}{\rightarrow} \probsum{i} p_i \cdot \delta{(\alpha_i,q_i)}\Longleftrightarrow f(q)\stackrel{\ell}{\rightarrow}\probsum{i} p_i \cdot \delta{(\alpha_i,f(q_i))}.\]
\end{definition}


%

In order to prove anonymity it is sufficient to prove that the
behaviors of any two 'culprits' can be exchanged without the
adversary noticing. We will express this
by means of the existence of automorphisms that exchange a given
pair of secret $s_i$ and $s_j$.

Before presenting the main theorem of this section we need to
introduce one last definition. Let $S=(C)\ q_1 \allowbreak||
\cdots ||\allowbreak \, q_n$ be a system and $M$ its corresponding
TPA.  We define  $M_\tau$ as the automaton obtained after
``hiding'' all the secret actions of $M$. The idea is to replace
every occurrence of a secret $s$ in $M$ by the silent action
$\tau$. Note that this can be formalized by replacing the secret
choice by a blind choice in the corresponding component $q_i$ of
the system $S$.

%
%


\rev{We now formalize the relation between automorphisms and
strong anonymity. We will first show that the existence of
automorphisms exchanging pairs of secrets implies $s$-strong
anonymity (Theorem \ref{theo:automorphism}). Then, we will show
that the converse does not hold, i.e. $s$-strongly-anonymous
systems are not necessarily automorphic (Example
\ref{exa:no-converse}).}

%
%

%

\begin{theorem}\label{theo:automorphism}
Let $S$ be a system satisfying Assumption~\ref{ass:oneprobchoice} and $M$ its tagged probabilistic automaton. If for every pair of secrets $s_i, s_j \in \secr$ there exists an automorphism $f$ of $M_\tau$ such that for any state $q$ we have
\begin{equation}\label{eq:automorphism} 
q\stackrel{\ell,s_i}{\longrightarrow}_M q^\prime \Longrightarrow
f(q)\stackrel{\ell,s_j}{\longrightarrow}_M f(q^\prime),
\end{equation}

\noindent  then $S$ is $s$-strongly-anonymous.
\end{theorem}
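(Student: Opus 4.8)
The plan is to fix an arbitrary admissible scheduler $\zeta$ and an arbitrary pair $s_i,s_j\in\secr$, and to prove $\meas{\zeta}{o\mid s_i}=\meas{\zeta}{o\mid s_j}$ for every $o\in\observables_{\rm s}$; since this is precisely $s$-strong anonymity, and since the hypothesis furnishes an automorphism for \emph{every} ordered pair of secrets, it suffices to establish the inequality $\meas{\zeta}{o\mid s_i}\le\meas{\zeta}{o\mid s_j}$ from the automorphism $f$ associated to $(s_i,s_j)$, and then invoke the automorphism associated to $(s_j,s_i)$ for the reverse inequality. The central construction is the lift of $f$ to finite paths: given $\sigma=\qi\stackrel{\ell_1,\alpha_1}{\longrightarrow}q_1\cdots\stackrel{\ell_n,\alpha_n}{\longrightarrow}q_{n+1}$, set $\hat f(\sigma)\eqdef\qi\stackrel{\ell_1,\beta_1}{\longrightarrow}f(q_1)\cdots\stackrel{\ell_n,\beta_n}{\longrightarrow}f(q_{n+1})$ where $\beta_k\eqdef s_j$ if $\alpha_k=s_i$ and $\beta_k\eqdef\alpha_k$ otherwise. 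By Assumption~\ref{ass:oneprobchoice} a path contains at most one secret step, so $\hat f$ is well defined; and since $f$ is a bijection fixing $\qi$, $\hat f$ is injective on $\fpaths(S)$.

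First I would verify that $\hat f(\sigma)$ is an actual path of $M$ carrying a controlled amount of probability under $\zeta$. For a step $q_{k-1}\stackrel{\ell_k,\alpha_k}{\longrightarrow}q_k$ with $\alpha_k$ not a secret, the same transition occurs in $M_\tau$, so $f(q_{k-1})\stackrel{\ell_k,\alpha_k}{\longrightarrow}f(q_k)$ occurs in $M_\tau$ by the automorphism property, hence in $M$ (non-secret transitions coincide in the two automata), and the push-forward along $f$ preserves its probability. For the unique secret step $q_{m-1}\stackrel{\ell_m,s_i}{\longrightarrow}q_m$, hypothesis~(\ref{eq:automorphism}) gives exactly $f(q_{m-1})\stackrel{\ell_m,s_j}{\longrightarrow}_M f(q_m)$. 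Next I would bring in admissibility: $f$ fixes $\qi$, preserves tags, and preserves $\avail$ (because $\avail(q)$ is the set of labels enabled at $q$, which is the same in $M$ and $M_\tau$ and is respected by the automorphism), while $\hide(s_i)=\tau=\hide(s_j)$ and $\hide$ is the identity on all other actions; consequently $t(\hat f(\rho))=t(\rho)$ for every prefix $\rho$ of $\sigma$. Admissibility of $\zeta$ then forces $\zeta(\hat f(\rho))=\zeta(\rho)$, so $\hat f(\sigma)$ is consistent with $\zeta$ iff $\sigma$ is, and a step-by-step comparison of the transition probabilities shows that the cone probabilities satisfy $p_i\cdot\meas{\zeta}{\cyl{\hat f(\sigma)}}=p_j\cdot\meas{\zeta}{\cyl{\sigma}}$, the only discrepancy being the branch probability $p_i$ of the secret step in $\sigma$ against $p_j$ in $\hat f(\sigma)$.

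To finish I would unfold $\meas{\zeta}{s,o}=\meas{\zeta}{\bigcup\{\cyl\sigma\mid \sigma\in\fpaths(S),\ t(\sigma)=o,\ \mathit{secr}(\sigma)=s\}}$. Two distinct finite paths with the same $t$-image have equal length, hence are prefix-incomparable, so this union is over an antichain and $\meas{\zeta}{s,o}=\sum_\sigma \meas{\zeta}{\cyl\sigma}$ over that set. Because $t(\hat f(\sigma))=t(\sigma)=o$ and $\mathit{secr}(\hat f(\sigma))=s_j$, the map $\hat f$ injects $A_i\eqdef\{\sigma\mid t(\sigma)=o,\ \mathit{secr}(\sigma)=s_i\}$ into $A_j\eqdef\{\sigma'\mid t(\sigma')=o,\ \mathit{secr}(\sigma')=s_j\}$; using $\meas{\zeta}{s_k}=p_k$ (Proposition~\ref{prop:uniquesecrets}), rewriting the probability identity of the previous paragraph as $\meas{\zeta}{\cyl{\hat f(\sigma)}}/p_j=\meas{\zeta}{\cyl{\sigma}}/p_i$, and summing over $A_i$, I obtain $\meas{\zeta}{o\mid s_i}=\meas{\zeta}{s_i,o}/p_i=\sum_{\sigma\in A_i}\meas{\zeta}{\cyl{\hat f(\sigma)}}/p_j\le\sum_{\sigma'\in A_j}\meas{\zeta}{\cyl{\sigma'}}/p_j=\meas{\zeta}{o\mid s_j}$. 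Running the same argument with the automorphism associated to $(s_j,s_i)$ gives the opposite inequality, hence $\meas{\zeta}{o\mid s_i}=\meas{\zeta}{o\mid s_j}$ for all admissible $\zeta$ and all $o\in\observables_{\rm s}$, which is $s$-strong anonymity.

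The step I expect to be the main obstacle is the interaction with the scheduler: one must show not merely that $\hat f$ is a probability-preserving isomorphism of the graph of $M_\tau$, but that it is compatible with the pointwise choices $\zeta$ makes along an entire path — this is exactly the place where admissibility of $\zeta$ is indispensable (a full-information scheduler would in general make different choices on $\sigma$ and $\hat f(\sigma)$, and the claim would fail), and it also requires the careful accounting of the $p_i$-versus-$p_j$ factor so that it cancels when one passes from joint to conditional probabilities. A subsidiary technicality to spell out carefully is the measure-theoretic reduction of $\meas{\zeta}{s,o}$ to a sum over an antichain of finite paths, which is what licenses treating $\hat f$ as a probability-preserving reindexing.
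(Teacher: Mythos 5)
Your proposal is correct and is essentially the paper's own argument: you lift the automorphism to finite paths, use admissibility (via $t(\hat f(\rho))=t(\rho)$) to force the scheduler to make the same choices on a path and on its image, so that probabilities agree up to the factor $p_i$ versus $p_j$ on the unique secret step, and this factor cancels when passing to conditional probabilities — the paper does the same thing, merely splitting each path into the prefix before and the suffix after the secret step and asserting a bijection of such pairs where you use an injection plus the symmetric automorphism for $(s_j,s_i)$. The only inaccuracy is the claim $\meas{\zeta}{s_k}=p_k$: in fact $\meas{\zeta}{s_k}=p_k\cdot W$ where $W$ is the probability that the secret choice is ever scheduled under $\zeta$ (not necessarily $1$), but since $W$ is independent of the secret it cancels throughout your final chain, so the argument stands as written once the denominators are adjusted.
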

\begin{proof} Assume that for every pair of secrets $s_i$, $s_j$ we have an automorphism $f$ satisfying the hypothesis of the theorem. We have to show that, for every admissible scheduler $\zeta$ we have:
\[\forall\, o\!\in\!\observables_s:\ \meas{\zeta}{o\mid s_1}=\meas{\zeta}{o\mid s_2}.\]

We start by observing that for $s_i$, by Proposition~\ref{prop:uniquesecrets},
there exists a unique $p_i$ such that, for all transitions
\smash{$q\stackrel{l}{\rightarrow}\mu$}, if $\mu$ is a (probabilistic) secret
choice, then $\mu(s_i,-)=p_i$. Similarly for $s_j$, there exists a unique
$p_j$ such that $\mu(s_j,-)=p_j$ for all secret choices $\mu$.

Let us now recall the definition of $\meas{\zeta}{o\mid s}$:
\[\meas{\zeta}{o\mid s}\eqdef\frac{\meas{\zeta}{o\land
s}}{\meas{\zeta}{s}} \] \noindent where
$\meas{\zeta}{o \land s}\eqdef \meas{\zeta}{\{ \pi\!\in\!\cpaths \mid
t_s(\pi)\!=\!o \land {\it secr}(\pi)=s\}}$
\noindent with ${\it secr}(\pi)$ being the (either empty or singleton) sequence of
secret actions of $\pi$, and $\meas{\zeta}{s}\eqdef \meas{\zeta}{\{
\pi\!\in\!\cpaths \mid {\it secr}(\pi)=s\}}.$

\noindent Note that, since a secret appears at most once on a complete path, we have:
\begin{eqnarray*}
\meas{\zeta}{s_i}&=& \meas{\zeta}{\{\pi \stackrel{\ell,s_i}{\longrightarrow} \sigma\in \cpaths \mid \pi,\sigma\}}\\
&=& \hspace{-1.6cm}\sum_{\hspace{1.5cm}\pi \stackrel{\ell,s_i}{\longrightarrow} q_i\in \fpaths}\hspace{-1.25cm}\meas{\zeta}{\pi \stackrel{\ell,s_i}{\longrightarrow} q_i}= \hspace{-1.2cm}\sum_{\hspace{1.15cm}\begin{subarray}{l}\last{\pi} \stackrel{\ell}{\rightarrow} \mu \\ \mu\text{ secret choice} \end{subarray}}\hspace{-1cm}\meas{\zeta}{\pi}\cdot p_i\\
\end{eqnarray*}
\noindent and analogously
\begin{eqnarray*}
\meas{\zeta}{s_j}&=& \meas{\zeta}{\{\pi \stackrel{\ell,s_j}{\longrightarrow} \sigma\in \cpaths\mid \pi,\sigma\}}\\
&=& \hspace{-1.6cm}\sum_{\hspace{1.5cm}\pi \stackrel{\ell,s_j}{\longrightarrow} q_j\in \fpaths}\hspace{-1.25cm}\meas{\zeta}{\pi \stackrel{\ell,s_j}{\longrightarrow} q_j}= \hspace{-1.2cm}\sum_{\hspace{1.15cm}\begin{subarray}{l}\last{\pi} \stackrel{\ell}{\rightarrow} \mu \\ \mu\text{ secret choice} \end{subarray}}\hspace{-1cm}\meas{\zeta}{\pi}\cdot p_j\\
\end{eqnarray*}
\noindent Let us now consider $\meas{\zeta}{o\mid s_i}$ and
$\meas{\zeta}{o\mid s_j}$. We have:
\begin{eqnarray*}
\meas{\zeta}{o \land s_i}&=& \hspace{-0.2cm}\meas{\zeta}{\left
\lbrace\pi\stackrel{\ell,s_i}{\longrightarrow} \sigma\in \cpaths \mid
t_s(\pi \stackrel{\ell,s_i}{\longrightarrow} \sigma)=o\right\rbrace}\\
&=& \hspace{-1cm}\sum_{\hspace{0.5cm}\begin{subarray}{l}\hspace{0.5cm}\pi\\
\last{\pi}\stackrel{\ell}{\rightarrow}\mu \\ \mu\text{ secret
choice}\end{subarray}}\hspace{-0.7cm}\meas{\zeta}{\pi}\cdot p_i \cdot
\hspace{-2.5cm}\sum_{\hspace{2.25cm}\begin{subarray}{l}\hspace{0.6cm}\sigma\\
\pi\stackrel{\ell,s_i}{\longrightarrow}\,\sigma\in
\fpaths \\t_s(\pi\stackrel{\ell,s_i}{\longrightarrow}\,\sigma)=o\land
\last{t_e(\sigma)}\not=\tau \end{subarray}}\hspace{-2.5cm}\meas{\zeta}{\sigma} \\
\end{eqnarray*}
again using that a secret appears at most once on a complete path. Moreover,
note that we have overloaded the notation $\mathbb{P}_{\zeta}$ by using it for
different measures when writing $\meas{\zeta}{\sigma}$, since $\sigma$ need
not start in the initial state $\hat q$.

\noindent
Analogously we have:
\begin{eqnarray*}
\meas{\zeta}{o \land s_j}&=& \hspace{-0.2cm}\meas{\zeta}{\left \lbrace\pi\stackrel{\ell,s_j}{\longrightarrow} \sigma\in \cpaths \mid  t_s(\pi \stackrel{\ell,s_j}{\longrightarrow} \sigma)=o\right\rbrace}\\
&=& \hspace{-1cm}\sum_{\hspace{0.5cm}\begin{subarray}{l}\hspace{0.5cm}\pi\\ \last{\pi}\stackrel{\ell}{\rightarrow}\mu \\ \mu\text{ secret choice}\end{subarray}}\hspace{-0.7cm}\meas{\zeta}{\pi}\cdot p_j \cdot \hspace{-2.5cm}\sum_{\hspace{2.25cm}\begin{subarray}{l}\hspace{0.6cm}\sigma\\ \pi\stackrel{\ell,s_j}{\longrightarrow}\,\sigma\in \fpaths\\ t_s(\pi\stackrel{\ell,s_j}{\longrightarrow}\,\sigma)=o\land \last{t_e(\sigma)}\not=\tau \end{subarray}}\hspace{-2.5cm}\meas{\zeta}{\sigma} \\
\end{eqnarray*}

\vspace{-0.5cm}
\noindent Therefore, we derive
\begin{equation}\label{eq:cond-i}
\meas{\zeta}{o \mid s_i}=\frac{\hspace{-0.25cm}{\displaystyle\sum_{\hspace{0.5cm}\begin{subarray}{l}\hspace{0.5cm}\pi\\ \last{\pi}\stackrel{\ell}{\rightarrow}\mu \\ \mu\text{ secret choice}\end{subarray}}\hspace{-2cm}\sum_{\hspace{2.25cm}\begin{subarray}{l}\hspace{0.6cm}\sigma\\ \pi\stackrel{\ell,s_i}{\longrightarrow}\,\sigma\in \fpaths\\ t_s(\pi\stackrel{\ell,s_i}{\longrightarrow}\,\sigma)=o\land \last{t_e(\sigma)}\not=\tau \end{subarray}}\hspace{-2.5cm}\meas{\zeta}{\pi}  \cdot \meas{\zeta}{\sigma}}}
{{\displaystyle\hspace{-1cm}\sum_{\hspace{1.15cm}\begin{subarray}{l}\last{\pi} \stackrel{\ell}{\rightarrow} \mu \\ \mu\text{ secret choice} \end{subarray}}\hspace{-1cm}\meas{\zeta}{\pi}}}
\end{equation}

\begin{equation}\label{eq:cond-j}
\meas{\zeta}{o \mid s_j}=\frac{\hspace{-0.25cm}{\displaystyle\sum_{\hspace{0.5cm}\begin{subarray}{l}\hspace{0.5cm}\pi\\ \last{\pi}\stackrel{\ell}{\rightarrow}\mu \\ \mu\text{ secret choice}\end{subarray}}\hspace{-2cm}\sum_{\hspace{2.25cm}\begin{subarray}{l}\hspace{0.6cm}\sigma\\ \pi\stackrel{\ell,s_j}{\longrightarrow}\,\sigma\in \fpaths\\ t_s(\pi\stackrel{\ell,s_j}{\longrightarrow}\,\sigma)=o\land \last{t_e(\sigma)}\not=\tau \end{subarray}}\hspace{-2.5cm}\meas{\zeta}{\pi}  \cdot \meas{\zeta}{\sigma}}}
{{\displaystyle\hspace{-1cm}\sum_{\hspace{1.15cm}\begin{subarray}{l}\last{\pi} \stackrel{\ell}{\rightarrow} \mu \\ \mu\text{ secret choice} \end{subarray}}\hspace{-1cm}\meas{\zeta}{\pi}}}
\end{equation}

\noindent Observe that the denominators of both formulae (\ref{eq:cond-i}) and
(\ref{eq:cond-j}) are the same. Also note that, since $f$ is an automorphism,
for every path
$\pi$, $f(\pi)$ obtained by replacing each state in $\pi$ with its image under
$f$ is also a path. Moreover, since $f$ satisfies (\ref{eq:automorphism}), for
every path
$\smash{\pi\stackrel{\ell,s_i}{\longrightarrow}\sigma}$ we have that
$\smash{f(\pi)\stackrel{\ell,s_j}{\longrightarrow}f(\sigma)}$ is also a path.
Furthermore $f$ induces a bijection between the sets
\[
\{(\pi,\sigma)\mid \begin{array}[t]{l}
\last{\pi}\stackrel{\ell'}{\rightarrow}\mu\text{ s.t. }\mu \text{ secret choice},\pi\stackrel{\ell,s_i}{\longrightarrow}\sigma \in\fpaths\\
t_s(\pi\stackrel{\ell,s_i}{\longrightarrow}\sigma )=o,\last{t_e(\sigma)}\not=\tau\ \ \ \} \text{, and}
\end{array}
\]
\vspace{-0.15cm}
\[
\{(\pi,\sigma)\mid \begin{array}[t]{l}
\last{\pi}\stackrel{\ell'}{\rightarrow}\mu\text{ s.t. }\mu \text{ secret choice},\pi\stackrel{\ell,s_j}{\longrightarrow}\sigma \in\fpaths\\
t_s(\pi\stackrel{\ell,s_j}{\longrightarrow}\sigma )=o,\last{t_e(\sigma)}\not=\tau\ \ \ \}
\end{array}
\]

\noindent given by $(\pi,\sigma)\leftrightarrow(f(\pi),f(\sigma))$.

Finally, since $\zeta$ is admissible, $t_s(\pi)=t_s(f(\pi))$, and $f$ is an
automorphism, it is easy to prove by induction that
$\meas{\zeta}{\pi}=\meas{\zeta}{f(\pi)}$. Similarly,
$\meas{\zeta}{\sigma}=\meas{\zeta}{f(\sigma)}$.
Hence the numerators of (\ref{eq:cond-i}) and (\ref{eq:cond-j}) coincide which
concludes the proof.
\end{proof}

Note that, since ${\rm s}$-strong anonymity implies ${\rm i}$-strong anonymity
and ${\rm e}$-strong anonymity, the existence of such an automorphism
implies all the notions of strong anonymity presented in this
work. \rev{We now proceed to show that the converse does not hold,
i.e. strongly anonymous systems are not necessarily automorphic.}

\begin{example}\label{exa:no-converse}
Consider the following (single component) system
\[\begin{array}{c}
\hspace{-0.5cm}0.5: s_1 . (0.5 : (p:a + (1\!-\!p):b) + 0.5 : ((1\!-\!p):a +
p:b))\\ + \\
\hspace{-0.5cm}0.5: s_2 . (0.5 : (q:a + (1\!-\!q):b) + 0.5 : ((1\!-\!q):a +
q:b))\\ \end{array}\]
\noindent It is easy to see that such system is $s$-strongly-anonymous, however if $p\not=q$
and $p\not=1-q$ there does not exist an automorphism for the pair of secrets
$(s_1,s_2)$.
\end{example}

\rev{The following example demonstrates that our proof technique
does not carry over to systems whose components admit internal
parallelism.}
\begin{example}\label{exa:internal-parallelism}  Consider  $S \eqdef (\{c_1,c_2\}) \ r\!
\parallel\! q \!\parallel\! t$, where
\[\begin{array}{c}r \eqdef 0.5 : s_1 . \overline{c}_1 + 0.5 : s_2 . \overline{c}_2, \qquad
q \eqdef c_1 . (a\, |\, b),  \qquad  t \eqdef c_2 . (a\, |\, b).
\end{array} \]

\noindent \rev{where $q_1|q_2$ represents the parallel composition of
$q_1$ and $q_2$. It is easy to show that there exists an
automorphism for $s_1$ and $s_2$. However, admissible schedulers
are able to leak such secrets. This is due to the fact that
component $r$ synchronizes with $q$ and $t$ on different channels,
thus a scheduler of $S$ is not restricted to select the same
transitions on the branches associated to $s_1$ and $s_2$
(remember that schedulers can observe synchronization).}
\end{example}

We now show that the definition of $x$-strong-anonymity is
independent of the particular distribution over secrets, i.e., if
a system is $x$-strongly-anonymous for a particular distribution
over secrets, then it is $x$-strongly-anonymous for all
distributions over secrets. \rev{This result is useful because it
allows us to prove systems to be strongly anonymous even when
their distribution over secrets is not known.}

\begin{theorem} Consider a system $S = (C)\ q_1 \parallel \cdots \parallel  q_i \parallel \cdots \parallel q_n$. Let $q_i$ be the component which contains the secret choice, and assume that it is of the form $\sum_j p_j:s_j\, .\, q_j$. Consider now the system $S' = (C)\ q_1 \parallel \cdots \parallel q'_i \parallel \cdots \parallel q_n$, where $q'_i$ is identical to $q_i$ except for the secret choice, which is replaced by $\sum_j p'_j: s_j\, .\, q_j$. Then we have that:
\begin{enumerate}
\item For every $s_i$, $s_j$ there is an automorphism on $S$ satisfying the assumption of Theorem~\ref{theo:automorphism} if and only if the same holds for $S'$.
\item $S$ is $x$-strongly-anonymous if and only if $S'$ is $x$-strongly-anonymous.
\end{enumerate}
\end{theorem}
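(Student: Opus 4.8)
The plan is to exploit that $S$ and $S'$ differ \emph{only} in the probability labels $p_j$ versus $p'_j$ attached to the unique secret choice (unique by Assumption~\ref{ass:oneprobchoice}); their underlying transition skeletons --- states, tags, actions, and target states --- coincide. First I would fix the obvious bijection $\beta$ between the states of the TPA $M$ of $S$ and the states of the TPA $M'$ of $S'$: the operational-semantics rules of Section~\ref{sec:systems} never inspect the values $p_j$ when computing successors, so $M$ and $M'$ are isomorphic as labelled transition systems once probabilities are forgotten. Under $\beta$, finite paths, complete paths, the map $t$ of Section~\ref{sec:AdmSch}, and the observable-trace maps $t_{\rm e},t_{\rm i},t_{\rm s}$ all correspond; hence admissible schedulers for $S$ and for $S'$ are in bijective correspondence, which I write $\zeta\leftrightarrow\zeta'$. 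Moreover, by Proposition~\ref{prop:uniquesecrets} every secret transition of $M$ has the form $\probsum{i} p_i\cdot\delta{(s_i,q_i)}$ with one and the same vector $(p_i)_i$, and the matching transition of $M'$ is $\probsum{i} p'_i\cdot\delta{(s_i,q_i)}$, with the same target states $q_i$ and the common vector $(p'_i)_i$.

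For part~(1), let $f$ be an automorphism of $M_\tau$ satisfying condition~(\ref{eq:automorphism}) for $M$. Transporting $f$ along $\beta$ (and, abusing notation, still calling it $f$) I claim it is an automorphism of $M'_\tau$ satisfying~(\ref{eq:automorphism}) for $M'$. Indeed, $M_\tau$ and $M'_\tau$ have the same skeleton, and their transition probabilities differ only on the $\tau$-choices obtained from the secret choice: these carry the vector $(p_j)_j$ in $M_\tau$ and $(p'_j)_j$ in $M'_\tau$. Condition~(\ref{eq:automorphism}) forces $f$ to map secret transitions of $M$ to secret transitions of $M$, hence secret-derived $\tau$-choices of $M_\tau$ to secret-derived $\tau$-choices of $M_\tau$; since in $M'_\tau$ all such choices carry the common vector $(p'_j)_j$, $f$ preserves probabilities in $M'_\tau$ as well --- and on every other transition it does so because the probability is literally unchanged. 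The tags, actions, and target states of secret transitions are the same in $M$ and $M'$, so~(\ref{eq:automorphism}) persists. The converse implication is symmetric, swapping the roles of $p_j$ and $p'_j$. Thus the existence of the required automorphisms for all pairs of secrets holds for $S$ iff it holds for $S'$.

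For part~(2), I would reuse the factorisation established in the proof of Theorem~\ref{theo:automorphism}. Fix corresponding admissible schedulers $\zeta\leftrightarrow\zeta'$, a secret $s$, and an observable $o$ (of any of the three types). Since $s$ occurs at most once on a complete path and only as the distinguished secret choice, each path contributing to $\meas{\zeta}{o\land s}$ splits uniquely as $\pi\stackrel{\ell,s}{\longrightarrow}\sigma$, with $\meas{\zeta}{\pi\stackrel{\ell,s}{\longrightarrow}\sigma}=\meas{\zeta}{\pi}\cdot p_s\cdot\meas{\zeta}{\sigma}$, where neither $\meas{\zeta}{\pi}$ nor $\meas{\zeta}{\sigma}$ depends on the secret-choice probabilities (no secret action occurs in $\pi$ or $\sigma$). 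Summing, and expanding $\meas{\zeta}{s}=p_s\cdot\sum_{\pi}\meas{\zeta}{\pi}$ likewise, the factor $p_s$ cancels, giving
\[
\meas{\zeta}{o\mid s}\;=\;\frac{\sum_{(\pi,\sigma)}\meas{\zeta}{\pi}\cdot\meas{\zeta}{\sigma}}{\sum_{\pi}\meas{\zeta}{\pi}},
\]
where the pairs $(\pi,\sigma)$ range exactly as in~(\ref{eq:cond-i}). Every quantity on the right is independent of the secret-choice probabilities, hence unchanged under $\beta$ with $\zeta$ replaced by $\zeta'$; so $\meas{\zeta}{o\mid s}$ computed in $S$ equals $\meas{\zeta'}{o\mid s}$ computed in $S'$, for all $s$ and $o$. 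Consequently $\meas{\zeta}{o\mid s_1}=\meas{\zeta}{o\mid s_2}$ in $S$ iff $\meas{\zeta'}{o\mid s_1}=\meas{\zeta'}{o\mid s_2}$ in $S'$, and since $\zeta\mapsto\zeta'$ is a bijection between the admissible schedulers of the two systems, $S$ is ${\rm x}$-strongly-anonymous iff $S'$ is. (For input distributions with zero entries one appeals to the continuity convention on conditional probabilities already adopted in this chapter.)

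The only delicate part is the bookkeeping of the first paragraph: one must check that $\beta$ intertwines \emph{all} the structure that the definitions depend on --- the transition function up to probabilities, the maps $\avail$, $t$ and $t_{\rm x}$, and therefore the notion of admissibility --- so that phrases such as ``the same scheduler, path, or observable trace in $S$ and $S'$'' are genuinely justified. Once this identification is in place, both parts collapse to the single observation that the secret-choice probabilities factor out: out of the automorphism condition in part~(1) (it constrains only which transitions are secret-derived, not their weights), and out of the conditional probability $\meas{\zeta}{o\mid s}$ in part~(2) (where $p_s$ cancels between numerator and denominator).
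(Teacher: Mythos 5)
Your proposal is correct and takes essentially the same route as the paper's own (much terser) proof: since $S$ and $S'$ differ only in the probability values attached to the unique secret choice, the automorphism condition and the hypothesis of Theorem~\ref{theo:automorphism} transfer between the two systems (part 1), and the conditional probabilities $\mathbb{P}_{\zeta}(o\mid s)$ are computed exactly as in the proof of Theorem~\ref{theo:automorphism}, where the secret-choice probability $p_s$ factors out of both numerator and denominator and cancels, so anonymity is unaffected (part 2). Your explicit bijection $\beta$ between the two TPAs and the induced correspondence of admissible schedulers only spells out what the paper leaves implicit in its remark that ``the PAs generated by $S$ and $S'$ coincide except for the probability distribution on the secret choices.''
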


\noindent Note: $1)$ does not imply $2)$, because in principle neither $S$ not $S'$ may have the automorphism, and still one of the two could be strongly anonymous.
\begin{proof}
We note that the PAs generated by $S$ and $S'$ coincide except for the probability distribution on the secret choices.
Since the definition of automorphism and the assumption of Theorem ~\ref{theo:automorphism} do not depend on these probability distributions, $(1)$ is immediate.
As for $(2)$, we observe that $x$-strong anonymity only depends on the conditional probabilities $ \meas{\zeta}{o\mid s}$.
By looking at the proof of  Theorem ~\ref{theo:automorphism}, we can see that in the computation of $ \meas{\zeta}{o\mid s}$
the probabilities on the secret choices (i.e. the $p_j$'s) are eliminated. Namely $ \meas{\zeta}{o\mid s}$ does not depend on the  $p_j$'s, which means that
the value of the $p_j$'s has no influence on whether the system is $x$-strong anonymous or not.
\end{proof}


\subsection{An Application: Dining Cryptographers}

Now we show how to apply the proof technique presented in this section to the Dining Cryptographers protocol. Concretely, we show that there exists an automorphism $f$ exchanging the behavior of the Crypt${}_0$ and Crypt${}_1$; by symmetry, the same holds for the other two combinations.

Consider the automorphisms of Master and Coin${}_1$ indicated in Figure \ref{fig:DCautomorphism}. The states that are not explicitly mapped (by a dotted arrow) are mapped to themselves.

\begin{figure}[!h]
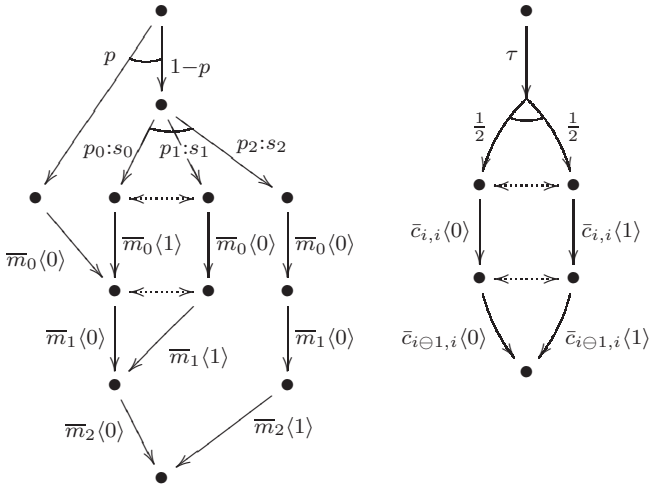

\vspace{-1.6cm}
\begin{equation*}
\input{AdmissibleSchedulers/mastermap-new.tex}
\quad
\input{AdmissibleSchedulers/coinmap-new.tex}
\end{equation*}\caption{Automorphism between Crypt${}_0$ and Crypt${}_1$}\label{fig:DCautomorphism}
\vspace{-.4cm}
\end{figure}
Also consider the identity automorphism on Crypt${}_i$ (for $i =
0, 1, 2$) and on Coin${}_i$ (for $i = 0, 2$). It is easy to check
that the product of these seven automorphisms is an automorphism
for Crypt${}_0$ and Crypt${}_1$.

\section{Related Work}

The problem of the full-information scheduler has already been extensively investigated  in literature. The works \cite{Canetti:06:WODES} and \cite{Canetti:06:DISC}
consider probabilistic automata and introduce a restriction on the scheduler to the purpose of making them suitable to applications in security. Their approach is based on dividing the actions of each component of the system in equivalence classes (\emph{tasks}). The order of execution of different tasks is decided in advance by a so-called \emph{task scheduler}. The remaining nondeterminism within a task is resolved by a second scheduler, which models the standard \emph{adversarial scheduler} of the cryptographic community. This second entity has limited knowledge about the other components: it sees only the information that they communicate during execution. Their notion of task scheduler is similar to our notion of admissible scheduler, but more restricted since the strategy of the task scheduler is decided entirely before the execution of the system.

Another work along these lines is \cite{deAlfaro:01:CONCUR}, which uses partitions on the state-space to obtain partial-information schedulers. However that work considers a synchronous parallel composition, so the setting is rather different from ours.

The works in \cite{Chatzikokolakis:10:IC,Chatzikokolakis:09:FOSSACS} are similar to ours in spirit, but in a sense  \emph{dual} from a technical point of view. Instead of defining a restriction on the class of schedulers, they provide a way to specify that a choice is transparent to the scheduler. They achieve this by introducing labels in process terms, used to represent both the states of the execution tree and the next action or step to be scheduled. They make two states indistinguishable to schedulers, and hence the choice between them private, by associating to them the same label. Furthermore, their ``equivalence classes'' (schedulable actions with the same label) can change dynamically, because the same action can be associated to different labels during the execution.

In~\cite{Alvim:10:IFIP-TCS} we have extended the framework presented in this work (by allowing internal nondeterminism and adding a second type of scheduler to resolve it) with the aim of investigating angelic vs demonic nondeterminism in equivalence-based properties.

The fact that full-information schedulers are unrealistic has also been observed in fields other than security. With the aim to cope with general properties (not only those concerning security), first attempts used restricted schedulers in order to obtain rules for compositional reasoning~\cite{deAlfaro:01:CONCUR}. The justification for those restricted schedulers is the same as for ours, namely, that not all information is available to all entities in the system. Later on, it was shown that model checking is undecidable in its general form for the kind of restricted schedulers presented in \cite{deAlfaro:01:CONCUR}. See~\cite{Giro:07:FORMATS} and, more recently, \cite{Giro:09:SBMF}.

Finally, to the best of our knowledge, this is the first work using automorphisms as a sound proof technique (in our case to prove strong anonymity and non-interference). The closest line of work we are aware of is in the field of model checking. There, isomorphisms can be used to identify symmetries in the system, and such symmetries can then be exploited to alleviate the state space explosion (see for instance~\cite{Kwiatkowska:06:CAV}).

\newpage
\thispagestyle{empty}

\chapter{Significant Diagnostic Counterexample Generation}
\label{ch.counterexamples}

\begin{quote}
\textit{
  In this chapter, we present a novel technique for counterexample generation
  in probabilistic model checking of Markov Chains and Markov Decision
  Processes. (Finite) paths in counterexamples are grouped together
  in witnesses that are likely to provide similar debugging information
  to the user. We list five properties that witnesses should satisfy
  in order to be useful as debugging aid: similarity, accuracy, originality,
  significance, and finiteness. Our witnesses contain paths that behave
  similarly outside strongly connected components.
  Then, we show how to compute these witnesses by reducing the
  problem of generating counterexamples for general properties
  over Markov Decision Processes, in several steps, to the easy
  problem of generating counterexamples for reachability properties
  over acyclic Markov Chains.}
\end{quote}

\section{Introduction}
\label{sec:intro}

%
%

Model checking\index{model checking} is an automated technique
that, given a finite-state model of a system and a property stated
in an appropriate logical formalism, systematically checks the
validity of this property. Model checking is a general approach
and is applied in areas like hardware verification and software
engineering.

Nowadays, the interaction geometry of distributed systems and
network protocols calls for probabilistic, or more generally,
quantitative estimates of, e.g., performance and cost measures.
Randomized algorithms are increasingly utilized to achieve high
performance at the cost of obtaining correct answers only with
high probability. For all this, there is a wide range of models
and applications in computer science requiring quantitative
analysis. Probabilistic model checking\index{model
checking!probabilistic} allows to check whether or not a
probabilistic property is satisfied in a given model, e.g., ``Is
every message sent successfully received with probability greater
or equal than $0.99$?''.

A major strength of model checking is the possibility of generating
diagnostic information in case the property is violated.
This diagnostic information is provided through a
\emph{counterexample}\index{counterexample} showing an execution
of the model that invalidates the property under verification.
Besides the immediate feedback in model checking, counterexamples
are also used in abstraction-refinement
techniques~\cite{clarke_2000_counterexampleguided}, and provide
the foundations for schedule derivation (see,
e.g.,~\cite{BLR_2005_optimalscheduling,Fehnker:02:PhDThesis}).

%
Although counterexample generation was studied from the very
beginning in most model checking techniques, this has not been the
case for probabilistic model checking. Only recently~\cite{ahl_2005_counterexamples,and06thesis,al_2006_search,hk_2007_counterexamples,hk_2007_counterexamplesDTMC,al_2007_counterexamplesMDP} attention was
drawn to this
subject,fifteen years after the first studies on probabilistic model
checking.
Contrarily to other model checking techniques, counterexamples in this
setting are \emph{not} given by a single execution path.  Instead,
they are \emph{sets of executions} of the
system satisfying a certain undesired property whose probability
mass is higher than a given bound. Since counterexamples are used
as a diagnostic tool, previous works on counterexamples have
presented them as sets of \emph{finite} paths with probability large enough. We refer to these sets as \emph{representative
counterexamples}. Elements of representative counterexamples with
high probability have been considered the most informative since
they contribute mostly to the property refutation.

A challenge in counterexample generation for probabilistic model
checking is that (1) representative counterexamples are very large
(often infinite), (2) many of its elements have very low
probability (which implies that they are very distant from the
counterexample), and (3) that elements can be extremely similar to
each other (consequently providing similar diagnostic
information). Even worse, (4) sometimes the finite paths with
highest probability do not indicate the most likely violation of
the property under consideration.

For example, look at the Markov Chain $\mc$ in
Figure~\ref{fig:MC related to MDP}.
The property $\sat{\mc}{\leq 0.5}{\F \psi}$ stating that execution
reaches a state satisfying $\psi$ (i.e., reaches $s_3$ or $s_4$)
with probability lower or equal than $0.5$ is violated (since the
probability of reaching $\psi$ is 1). The left hand side of table
in Figure~\ref{tbl:comparison} lists finite paths reaching $\psi$
ranked according to their probability. Note that finite paths with
highest probability take the left branch in the system, whereas
the right branch in itself has higher probability, illustrating
Problem 4. To adjust the model so that it does satisfy the
property (bug fixing), it is not sufficient to modify the left
hand side of the system alone; no matter how one changes the left
hand side, the probability of reaching $\psi$ remains at least
$0.6$. Furthermore, the first six finite paths provide similar
diagnostic information: they just make extra loops in $s_1$. This
is an example of Problem 3. Additionally, the probability of every
single finite path is far below the bound $0.5$, making it unclear
if a particular path is important; see Problem 2 above. Finally,
the (unique) counterexample for the property $\sat{\mc}{< 1}{\F
\psi}$ consists of infinitely many finite paths (namely all finite
paths of $\mc$); see Problem 1. \setlength{\abovecaptionskip}{2pt}
\begin{figure}
 \hspace{-0.5cm}
 \begin{minipage}[b]{5cm}
   \centering
   \includegraphics[width=4cm]{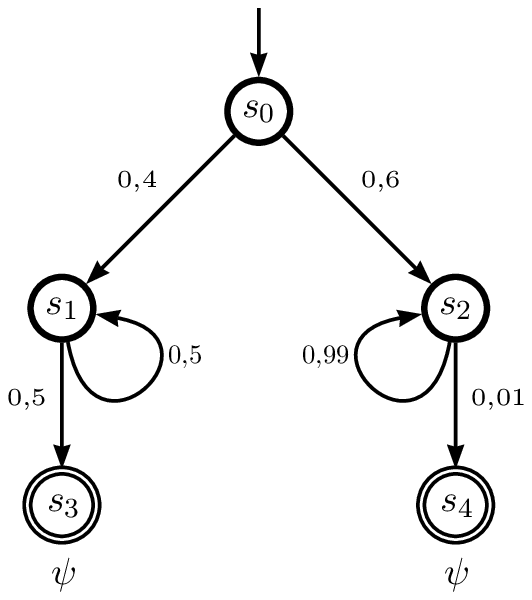}\caption{Markov Chain}\label{fig:MC related to MDP}
 \end{minipage}
 \begin{minipage}[b]{5 cm}
   \hspace{-0.5cm}
   \begin{tabular}{|c||c|c||c|c|}
   \hline \multicolumn{1}{|c||}{}&\multicolumn{2}{|c||}{\textbf{Single paths}}&\multicolumn{2}{|c|}{\textbf{Witnesses}}\\
   \hline \!\!Rank\!\!&F. Path&Prob&\!Witness\!&\!Mass\!\!\\
   \hline
   1&$s_0(s_1)^1s_3$&0.2&[$s_0 s_2 s_4$]&0.6\\
   2&$s_0(s_1)^2s_3$&0.1&[$s_0 s_1 s_3$]&0.4\\
   3&$s_0(s_1)^3s_3$&0.05&&\\
   4&$s_0(s_1)^4s_3$&0.025&&\\
   5&$s_0(s_1)^5s_3$&0.0125&&\\
   6&$s_0(s_1)^6s_3$&0.00625&&\\
   7&$s_0(s_2)^1s_4$&0.006&&\\
   8&$s_0(s_2)^2s_4$&0.0059&&\\
   9&$s_0(s_2)^3s_4$&0.0058&&\\
   \vdots&\vdots&\vdots&&\\
   \hline
   \end{tabular}
   \caption{Comparison Table}\label{tbl:comparison}
 \end{minipage}
\vspace{-0.35cm}
\end{figure}
To overcome these problems, we partition a representative
counterexample
into sets of finite paths that follow a 
similar pattern.
We call these sets \emph{witnesses}.
%
To ensure that witnesses provide valuable diagnostic information,
we desire that the set of witnesses that form a
counterexample satisfies several properties:
two different witnesses should provide different diagnostic
information (solving Problem 3) and elements of a single witness
should provide similar diagnostic information, as a consequence
witnesses have a high probability mass (solving Problems 2 and 4),
and the number of witnesses of a representative counterexample
should be finite (solving Problem 1).

In our setting, witnesses consist of paths that behave the same
outside strongly connected components. In the example of
Figure~\ref{fig:MC related to MDP}, there are two witnesses: the
set of all finite paths going right, represented by [$s_0 s_2
s_4$] whose probability (mass) is $0.6$, and the set of all finite
paths going left, represented by [$s_0 s_1 s_3$] with probability
(mass) $0.4$.





In this chapter, we show how to obtain such sets of witnesses for
bounded probabilistic LTL properties on Markov Decision Processes
($\MDP$).
In fact, we first show how to reduce this problem to finding
witnesses for upper bounded probabilistic reachability properties
on discrete time Markov Chains ($\MCs$).
The major technical matters lie on this last problem to which most
of the chapter is devoted.


In a nutshell, the process to find witnesses for the violation of
$\sat{\mc}{\leq p}{\F \psi}$, with $\mc$ being an $\MC$, is as
follows.
We first eliminate from the original $\MC$ all the
``uninteresting'' parts.  This proceeds as the first steps of the
model checking process: make absorbing all states satisfying
$\psi$, and all states that cannot reach $\psi$, obtaining a new
$\MC$ $\mc_\psi$.  Next reduce this last $\MC$ to an acyclic $\MC$
$\Acyclic{\mc_\psi}$ in which all strongly connected components
have been conveniently abstracted with a single probabilistic
transition.
The original and the acyclic $\MC$s are related by a mapping that,
to each finite path in $\Acyclic{\mc_\psi}$ (that we call \emph{rail}),
assigns a set of finite paths behaving similarly in $\mc$ (that we
call \emph{torrent}).
This map preserves the probability of reaching $\psi$ and hence
relates counterexamples in $\Acyclic{\mc_\psi}$ to counterexamples
in $\mc$.
Finally, counterexamples in $\Acyclic{\mc_\psi}$ are computed by
reducing the problem to a $k$ shortest path problem, as in
\cite{hk_2007_counterexamples}. Because $\Acyclic{\mc_\psi}$ is
acyclic, the complexity is lower than the corresponding problem in
\cite{hk_2007_counterexamples}.

It is worth mentioning that our technique can also be applied to
$\pCTL$ formulas without nested path quantifiers.


%
Looking ahead, Section~\ref{sec:preliminaries} presents the necessary background
on Markov Chains ($\MC$), Markov Decision Processes ($\MDP$), and
Linear Temporal Logic ($\LTL$).
Section~\ref{sec:counterexamples} presents the definition of
counterexamples and discusses the reduction from general $\LTL$
formulas to upper bounded probabilistic reachability properties,
and the extraction of the maximizing $\MC$ in an $\MDP$.
Section~\ref{sec:RepCount-parti-witnesses} discusses desired
properties of counterexamples.  In
Sections~\ref{sec:Torrents-Rails}
and~\ref{sec:significantdiagnosticcounterexamples} we introduce
the fundamentals on rails and torrents, the reduction of the
original $\MC$ to the acyclic one, and our notion of significant
diagnostic counterexamples.
Section~\ref{sec:coumputing-counterexamples} then presents the
techniques to actually compute counterexamples. In
Section~\ref{sec:conclusions} we discuss related work and give
final conclusions.

\section{Preliminaries}
\label{sec:preliminaries}

\rev{We now recall the notions of Markov Decision Processes, Markov Chains, and Linear Temporal Logic.}

\subsection{Markov Decision Processes \index{markov decision process}}

Markov Decision Processes ($\MDPs$) constitute a formalism that
combines nondeterministic and probabilistic choices.  They are an
important model in corporate finance, supply chain optimization,
system verification and optimization.  There are many slightly
different variants of this formalism such as action-labeled
$\MDPs$ \cite{bel_1957_mdp,FilarVrieze97}, probabilistic automata
\cite{Segala:95:NJC,sv_2004_automata}; we work with the
state-labeled $\MDPs$ from~\cite{ba_1995_probabilistic}.

\begin{dfn} Let $S$ be a finite set. A \emph{probability
distribution} on $S$ is a function $p \colon S \to [0,1]$ such
that $\sum_{s\in S}p(s)=1$. We denote the set of all probability
distributions on $S$ by $\Distr(S)$.
Additionally, we define the \emph{Dirac distribution} on an
element $s \in S$ as $1_s$, i.e., $1_s(s)=1$ and $1_s(t)=0$ for
all $t\in S\setminus\lbrace s \rbrace$.
\end{dfn}

\begin{dfn}\label{dfn:mdp}
  A \emph{Markov Decision Process} ($\MDP$) is a quadruple
  $\mdp=(S,s_0,L,\tau)$, where
\vspace{-0.25cm}
\begin{itemize}
  \item $S$ is the finite state space;
  \item $s_0 \in S$ is the initial state;
  \item $L$ is a labeling function that associates to each state $s\in
    S$ a set $L(s)$ of propositional variables that are \emph{valid} in $s$;
  \item $\tau \colon S \to \wp(\Distr(S))$ is a function that
  associates to each $s \in S$ a non-empty and finite subset of
  $\Distr(S)$ of probability distributions.
\end{itemize}
\end{dfn}

\begin{dfn} Let $\mdp=(S,s_0,\tau,L)$ be an $\MDP$. We define a \emph{successor} relation $\delta\subseteq
S\times S$ by $\delta \eqdef \{(s,t)|\exists\, \pi\!\in\!\tau(s)\! \qdot\!
\pi(t)>0\}$ and for each state $s\!\in\! S$ we define the sets
\begin{align*}
 \Paths{\mdp, s}  \eqdef \{t_0t_1t_2\ldots\in S^\omega | t_0=s\land\forall n\in \mathbb{N} \qdot
\delta(t_n,t_{n+1})\}\mbox{ and}\\
 \FPaths{\mdp, s} \eqdef \{t_0t_1\ldots t_n\in S^\star | t_0=s\land\forall\, 0\leq i < n \qdot \delta(t_n,t_{n+1})\}
\end{align*}
  of paths of $\mc$ and finite paths of $\mc$ respectively beginning at $s$. We usually omit $\mdp$ from the notation; we also abbreviate $\Paths{\mdp, s_0}$ as $\Paths{\mdp}$ and $\FPaths{\mdp, s_0}$ as $\FPaths{\mdp}$. For $\omega
  \in\Paths{s}$, we write the $(n\!+\!1)$-st state of $\omega$ as $\omega_n$.
  As usual, we let $\Borel_s\subseteq \wp(\Paths{s})$ be the Borel
  $\sigma$-algebra on the cones $\cyl{t_0 \dots t_n} \eqdef
  \{\omega \in \Paths{s} | \omega_0=t_0 \land \ldots \land
  \omega_n=t_n\}$. Additionally, for a set of finite paths $\Lambda\subseteq\FPaths{s}$, we define $\cyl{\Lambda}\eqdef
\bigcup_{\sigma\in\Lambda}\cyl{\sigma}$.
\end{dfn}
\setlength{\abovecaptionskip}{-6pt plus 1pt minus 1pt}
\begin{figure}
    \begin{center}
      \includegraphics[width=8cm]{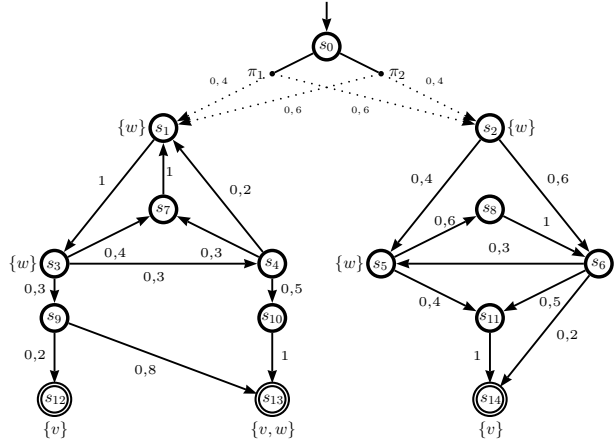}
    \end{center}
    \caption{Markov Decision Process}
    \label{fig:notMaxNodes}
\end{figure}
\label{exa:MDP} Figure~\ref{fig:notMaxNodes} shows an $\MDP$.
Absorbing states (i.e., states $s$ with $\tau(s)=\{1_s\}$) are
represented by double lines.  This $\MDP$ features a single
nondeterministic decision, to be made in state $s_0$, namely
$\pi_1$ and $\pi_2$.

\begin{dfn} Let $\mdp=(S,s_0,\tau,L)$ be an $\MDP$, $s\in S$ and $\Goal\subseteq S$. We define the sets of paths and finite paths
reaching  $\Goal$ from $s$ as
\begin{align*}
\Reach{\mdp,s,\Goal}\triangleq\lbrace \omega\in \Paths{\mdp,s}\mid
\exists_{i\geq 0}. \omega_i\in\Goal\rbrace\mbox{ and}\\
\FReach{\mdp,s,\Goal}\triangleq\lbrace\sigma\in\FPaths{\mdp,s}\mid
\last{\sigma}\in \Goal \land \forall_{i\leq
|\sigma|-1}.\sigma_i\not \in \Goal\rbrace
\end{align*}
respectively. Note that $\FReach{\mdp,s,\Goal}$ consists of those
finite paths $\sigma$ starting on $s$ reaching $\Goal$ exactly
once, at the end of the execution. It is easy to check that these
sets are \emph{prefix free}, i.e. contain finite paths such that
none of them is a prefix of another one.
\end{dfn}

\subsection{Schedulers\index{scheduler}}

Schedulers (also called strategies, adversaries, or policies)
resolve the nondeterministic choices in an
$\MDP$~\cite{pz_1993_verification,var_1985_probabilistic,ba_1995_probabilistic}.

\begin{dfn}\label{dfn:scheduler}
  Let $\mdp=(S,s_0,\tau,L)$ be an $\MDP$. A \emph{scheduler}\index{scheduler} $\eta$ on $\mdp$ is a function from $\FPaths{\mdp}$
  to $\Distr(\wp(\Distr(S)))$ such that for all $\sigma\in \FPaths{\mdp}$
  we have $\eta(\sigma)\in \Distr(\tau(\last{\sigma}))$. We denote
  the set of all schedulers on $\mdp$ by $\Sch(\mdp)$.
\end{dfn}


Note that our schedulers are randomized, i.e., in a
finite path $\sigma$ a scheduler chooses an element of
$\tau(\last{\sigma})$ probabilistically.  Under a scheduler
$\eta$, the probability that the next state reached after the path
$\sigma$ is $t$, equals $\sum_{\pi\in
  \tau(\last{\sigma})}\eta(\sigma)(\pi) \cdot \pi(t)$. In this way, a
scheduler induces a probability measure on $\Borel_s$ as usual.

\begin{dfn}
  Let $\mdp=(S,s_0,\tau,L)$ be an $\MDP$ and $\eta$ a scheduler on
  $\mdp$. We define the probability measure $\mathbb{P}_{\eta}$ as the
  unique measure on $\Borel_{s_{0}}$ such that for all $s_0 s_1\ldots s_n\in
  \FPaths{\mdp}$
\begin{align*}\measure{}{\eta}{\cyl{s_0s_1\ldots s_n}}
    = \prod_{i=0}^{n-1} \sum_{\pi\in \tau(s_i)}
    \eta(s_0 s_1\ldots s_i)(\pi) \cdot \pi(s_{i+1}).
\end{align*}
\end{dfn}
We now recall the notions of deterministic and memoryless
schedulers.

\begin{dfn}
  Let $\mdp$ be an $\MDP$ and $\eta$ a scheduler on $\mdp$. We
  say that $\eta$ is \emph{deterministic}\index{scheduler!deterministic} if $\eta(\sigma)(\pi_i)$ is
  either $0$ or $1$ for all $\pi_i\in\tau(\last{\sigma})$ and all $\sigma
  \in\FPaths{\mdp}$.
  We say that a scheduler is \emph{memoryless}\index{scheduler!memoryless} if for all finite paths $\sigma_1,\sigma_2$ of $\mdp$ with
  $\last{\sigma_1}=\last{\sigma_2}$ we have
  $\eta(\sigma_1)=\eta(\sigma_2)$.
\end{dfn}

\begin{dfn}
  Let $\mdp$ be an $\MDP$ and $\Delta \in \Borel_{s_0}$. Then the
  \emph{maximal probability} $\mathbb{P}^{+}$ and \emph{minimal probability} $\mathbb{P}^{-}$ of $\Delta$ are defined by
\begin{align*}
    \measure{+}{}{\Delta}\eqdef \sup_{\eta\in \Sch(\mdp)}
\measure{}{\eta}{\Delta}
    \hspace{0.5cm} \text{and} \hspace{0.5cm}
    \measure{-}{}{\Delta}\eqdef \inf_{\eta\in \Sch(\mdp)} \measure{}{\eta}{\Delta}.
\end{align*}
\vspace{-0.25cm}

\noindent A scheduler that attains $\measure{+}{}{\Delta}$ or
$\measure{-}{}{\Delta}$ is called a \emph{maximizing} or
\emph{minimizing} scheduler respectively.
\end{dfn}

\subsection{Markov Chains \index{markov chain}}

A (discrete time) \emph{Markov Chain} is an $\MDP$ associating
exactly one     probability distribution to each state. In this
way nondeterministic choices are no longer allowed.

\begin{dfn}[Markov Chain \index{markov chain}] Let $\mdp=(S,s_0,\tau,L)$ be an $\MDP$.
If $|\tau(s)|=1$ for all $s\in S$, then we say that $\mdp$ is a
\emph{Markov Chain} ($\MC$).
\end{dfn}

In order to simplify notation we represent probabilistic
transitions on $\MCs$ by means of a probabilistic matrix $\M$
instead of $\tau$. Additionally, we denote by $\mathbb{P}_{_{\mc,
s}}$ the probability measure induced by a $\MC$ $\mc$ with initial
state $s$ and we abbreviate $\mathbb{P}_{_{\mc, s_{0}}}$ as
$\mathbb{P}_{_{\mc}}$.

\subsection{Linear Temporal Logic\index{temporal logic!LTL}}

Linear temporal logic ($\LTL$) \cite{MP_1991_LTL} is a modal
temporal logic with modalities referring to time. In $\LTL$ is
possible to encode formulas about the future of paths: a condition
will eventually be true, a condition will be true until another
fact becomes true, etc.

\begin{dfn} $\LTL$ is built up from the set of propositional variables $\Var$, the logical connectives $\lnot$,
$\land$, and a temporal modal operator by the following grammar:
\vspace{-0.1cm}
\begin{align*}
\phi ::= \Var \mid \lnot \phi \mid \phi\land \phi \mid
\Until{\phi}{\phi}.
\end{align*}
\vspace{-0.5cm}

\noindent Using these operators we define $\lor,\rightarrow,\F{},$
and $\G{}$ in the standard way.
\end{dfn}

\begin{dfn} Let $\mdp=(S,s_0, \tau,L)$ be an $\MDP$. We define satisfiability for paths $\omega$ in $\mdp$, propositional variables $v\in\Var$, and
$\LTL$ formulas $\phi,\gamma$ inductively by
$$
\begin{array}{lclclcl}
\sat{\omega}{\mdp}{v} &\Leftrightarrow & v\in L(\omega_0)&\hspace{0.45cm}&\sat{\omega}{\mdp}{\phi\land\gamma} &\Leftrightarrow & \sat{\omega}{\mdp}{\phi} \mbox{ and }\sat{\omega}{\mdp}{\!\gamma}\\
\sat{\omega}{\mdp}{\lnot \phi} &\Leftrightarrow &
\mbox{not}(\sat{\omega}{\mdp}{\phi})&\hspace{0.45cm}&
\sat{\omega}{\mdp}{\Until{\phi}{\gamma}} & \Leftrightarrow &
\exists_{i\geq 0}. \sat{\omega_{\downarrow
i}}{\mdp}{\!\gamma}\mbox{ and }\\
&&&&&&\forall_{0\leq j <
i}.\sat{\omega_{\downarrow j}}{\mdp}{\phi}\\
\end{array}
$$
\vspace{-0.25cm}

\noindent where $\omega_{\downarrow i}$ is the $i$-th suffix of
$\omega$. When confusion is unlikely, we omit the subscript $\mdp$
on the satisfiability relation.
\end{dfn}

\begin{dfn} Let $\mdp$ be an $\MDP$. We define the language $\text{Sat}_{_{\mdp}}(\phi)$ associated to an $\LTL$ formula $\phi$
as the set of paths satisfying $\phi$, i.e.
$\text{Sat}_{_{\mdp}}(\phi)\allowbreak\triangleq \lbrace
\omega\in\Paths{\mdp}\mid \omega\models \phi\rbrace.$ Here we also
generally omit the subscript $\mdp$.
\end{dfn}

We now define satisfiability of an $\LTL$ formula $\phi$ on an
$\MDP$ $\mdp$. We say that $\mdp$ satisfies $\phi$ with
probability at most $p$ ($\sat{\mdp}{\leq p}{\phi}$) if the
probability of getting an execution satisfying $\phi$ is at most
$p$.

\begin{dfn} Let $\mdp$ be an $\MDP$, $\phi$ an $\LTL$ formula and $p\in[0,1]$. We define $\sat{}{\leq p}{}$ and $\sat{}{\geq p}{}$by
\vspace{-0.1cm}
\begin{align*}
\sat{\mdp}{\leq p}{\phi} \Leftrightarrow \measure{+}{}{\LangPa{\phi}}\leq p,\\
\sat{\mdp}{\geq p}{\phi} \Leftrightarrow
\measure{-}{}{\LangPa{\phi}}\geq p.
\end{align*}
\vspace{-0.4cm}

\noindent We define $\sat{\mdp}{< p}{\phi}$ and $\sat{\mdp}{>
p}{\phi}$ in a similar way. In case the $\MDP$ is fully
probabilistic, i.e., an $\MC$, the satisfiability problem is
reduced to $\sat{\mdp}{\bowtie p}{\phi} \Leftrightarrow
\measure{}{_{\mdp}}{\LangPa{\phi}}\bowtie p$, where $\bowtie\in
\lbrace <,\leq,>,\geq\rbrace$.

\end{dfn}

\section{Counterexamples\index{counterexample}}\label{sec:counterexamples}

In this section, we define what counterexamples are and how the
problem of finding counterexamples to a general $\LTL$ property
over Markov Decision Processes reduces to finding counterexamples
to reachability problems over Markov Chains.

\begin{dfn}[Counterexamples]\label{dfn:counterexamples}%
  Let $\mdp$ be an $\MDP$ and $\phi$ an $\LTL$ formula. A
  \emph{counterexample}\index{counterexample} to $\sat{\mdp}{\leq p}{\phi}$ is a
  measurable set $\ce\subseteq \LangPa{\phi}$ such that
  $\measure{+}{}{\ce}> p$.
  Counterexamples to $\sat{\mdp}{< p}{\phi}$ are defined similarly.
\end{dfn}

Counterexamples to $\sat{\mdp}{> p}{\phi}$ and $\sat{\mdp}{\geq
  p}{\phi}$ cannot be defined straightforwardly as it is always
possible to find a set $\ce\subseteq \LangPa{\phi}$ such that
$\measure{-}{}{\ce}\leq p$ or $\measure{-}{}{\ce}< p$, note that
the empty set trivially satisfies it.
Therefore, the best way to find counterexamples to lower bounded
probabilities is to find counterexamples to the dual properties
$\sat{\mdp}{< 1-p}{\!\!\!\neg\phi}$ and $\sat{\mdp}{\leq
  1-p}{\!\!\!\neg\phi}$. 
That is, while for upper bounded probabilities, a counterexample
is a set of paths satisfying the property with mass probability
beyond the bound, for lower bounded probabilities the
counterexample is a set of paths that \emph{does not} satisfy the
property with sufficient
probability.

\begin{wrapfigure}{r}{5.2cm}
\includegraphics[width=5cm]{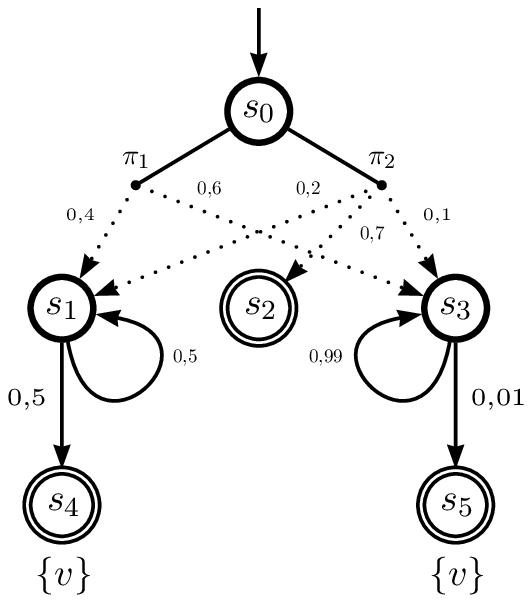}
\caption{} \label{fig:ExaModelCheckerCuant}
\end{wrapfigure}
\begin{example}\label{exa:counterexamples}\ref{exa:counterexamples}
  Consider the $\MDP$ $\mdp$ of Figure~\ref{fig:ExaModelCheckerCuant}
  and the $\LTL$ formula $\F v$. It is easy to check that
  $\nonsat{\mdp}{< 1}{\F v}$. The set $\ce=\LangPa{\F
    v}=\{\rho\!\in\! \Paths{s_0}|\exists_{i\geq 0}.\rho\!=\!s_0
  (s_1)^i (s_4)^\omega\}\cup\{\rho\!\in\!  \Paths{s_0}|\exists_{i\geq
    0}.\rho\!=\!s_0 (s_3)^i (s_5)^\omega\}$ is a counterexample. Note
  that $\measure{}{\eta}{\ce}\!=\!1$ where $\eta$ is any
  deterministic scheduler on $\mdp$ satisfying $\eta(s_0)=\pi_1$.
\end{example}

$\LTL$ formulas are actually checked by reducing the model
checking problem to a reachability
problem~\cite{dealfaro_1997_hybrid}. For checking upper bounded
probabilities, the $\LTL$ formula is translated into an equivalent
deterministic Rabin automaton and composed with the $\MDP$ under
verification. On the obtained $\MDP$, the set of states forming
accepting end components ($\SCC$ that traps accepting
conditions with probability 1) are identified.
The maximum probability of the $\LTL$ property on the original
$\MDP$ is the same as the maximum probability of reaching a state
of an accepting end component in the final $\MDP$.
Hence, from now on we will focus on counterexamples to properties
of the form $\sat{\mdp}{\leq p}{\F\psi}$ or $\sat{\mdp}{<
p}{\F\psi}$, where $\psi$ is a propositional formula, i.e., a
formula without temporal operators.

In the following, it will be useful to identify the set of states
in which a propositional property is valid.

\begin{dfn}%
  Let $\mdp$ be an $\MDP$. We define the state language
  $\text{Sat}_{\mdp}(\psi)$ associated to a propositional formula $\psi$
  as the set of states satisfying $\psi$,
  i.e., $\text{Sat}_{\mdp}(\psi)\triangleq \lbrace s\in S\mid s\models\psi\rbrace$,
  where $\models$ has the obvious satisfaction meaning for states. As
  usual, we generally omit the subscript $\mdp$.
\end{dfn}

We will show now that, in order to find a counterexample to a
property in an $\MDP$ with respect to an upper bound, it suffices
to find a counterexample for the $\MC$ \emph{induced} by the
maximizing scheduler.
The maximizing scheduler turns out to be deterministic and
memoryless~\cite{ba_1995_probabilistic}; consequently the induced
Markov Chain can be easily extracted from the $\MDP$ as follows.

\begin{dfn}%
  Let $\mdp=(S,s_0,\tau,L)$ be an $\MDP$ and $\eta$ a deterministic
  memoryless scheduler. Then we define the $\MC$ induced by $\eta$ as $\mdp_\eta=(S,s_0,\M_\eta,L)$ where
  $\M_{\eta}(s,t)=(\eta(s))(t)$ for all $s,t\in S$.
\end{dfn}

Now we state that finding counterexamples to upper bounded
probabilistic reachability $\LTL$ properties on $\MDPs$ can be
reduced to finding counterexamples to upper bounded probabilistic
reachability $\LTL$ properties on $\MCs$.

\begin{thm}%
  Let $\mdp$ be an $\MDP$, $\psi$ a propositional formula and
  $p\in[0,1]$.  Then, there is a maximizing (deterministic memoryless)
  scheduler $\eta$ such that
  $\sat{\mdp}{\leq p}{\F\psi}\Leftrightarrow \sat{\mdp_\eta}{\leq p}{\F\psi}$.
  Moreover,
  if $\ce$ is a counterexample to $\sat{\mdp_\eta}{\leq p}{\F\psi}$ then $\ce$ is also a counterexample to $\sat{\mdp}{\leq p}{\F\psi}$.
\end{thm}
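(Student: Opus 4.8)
The plan is to prove the two claims in sequence, both resting on the classical fact (cited from~\cite{ba_1995_probabilistic}) that the supremum $\measure{+}{}{\Reach{\mdp,s_0,\text{Sat}(\psi)}}$ is attained by a deterministic memoryless scheduler. First I would fix such a maximizing scheduler $\eta$, so that $\measure{+}{}{\LangPa{\F\psi}} = \measure{}{\eta}{\LangPa{\F\psi}}$; note that for the reachability formula $\F\psi$ the set $\LangPa{\F\psi}$ is exactly $\Reach{\mdp,s_0,\text{Sat}(\psi)}$, and under the induced Markov Chain $\mdp_\eta$ we have $\measure{}{\mdp_\eta}{\LangPa{\F\psi}} = \measure{}{\eta}{\LangPa{\F\psi}}$ by the definition of $\mdp_\eta$ (its transition matrix $\M_\eta$ is precisely the distribution $\eta$ selects at each state, and $\eta$ being memoryless means this is consistent along all paths). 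Combining these equalities gives $\measure{+}{\mdp}{\LangPa{\F\psi}} = \measure{}{\mdp_\eta}{\LangPa{\F\psi}}$, and since $\mdp_\eta$ is a Markov Chain, $\measure{+}{\mdp_\eta}{\LangPa{\F\psi}} = \measure{}{\mdp_\eta}{\LangPa{\F\psi}}$ as well. The biconditional $\sat{\mdp}{\leq p}{\F\psi} \Leftrightarrow \sat{\mdp_\eta}{\leq p}{\F\psi}$ then follows immediately by unfolding the definition of $\sat{}{\leq p}{}$ on both sides.

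For the second claim, suppose $\ce$ is a counterexample to $\sat{\mdp_\eta}{\leq p}{\F\psi}$, i.e.\ $\ce \subseteq \LangPa{\F\psi}$ is measurable and $\measure{}{\mdp_\eta}{\ce} > p$. Since $\mdp_\eta$ arises from the specific scheduler $\eta \in \Sch(\mdp)$, we have $\measure{}{\mdp_\eta}{\ce} = \measure{}{\eta}{\ce} \leq \sup_{\eta'\in\Sch(\mdp)} \measure{}{\eta'}{\ce} = \measure{+}{\mdp}{\ce}$. Hence $\measure{+}{\mdp}{\ce} > p$, and together with $\ce \subseteq \LangPa{\F\psi}$ this is exactly the definition (Definition~\ref{dfn:counterexamples}) of $\ce$ being a counterexample to $\sat{\mdp}{\leq p}{\F\psi}$. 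The same argument with strict inequality handles the $< p$ case.

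The main obstacle — really the only delicate point — is making precise the identification $\measure{}{\mdp_\eta}{\Delta} = \measure{}{\eta}{\Delta}$ for Borel sets $\Delta \in \Borel_{s_0}$: one must check that the probability measure induced by the scheduler $\eta$ on $\Paths{\mdp}$ agrees, on cylinder sets, with the probability measure of the Markov Chain $\mdp_\eta$, and then invoke uniqueness of the extension to the Borel $\sigma$-algebra. This is a routine consequence of the definitions of $\measure{}{\eta}{\cyl{s_0\cdots s_n}}$ and $\measure{}{\mdp_\eta}{\cyl{s_0\cdots s_n}}$ — both equal $\prod_{i=0}^{n-1}\M_\eta(s_i,s_{i+1})$ because $\eta$ is memoryless, so its choice depends only on $\last{\sigma}$ — but it deserves an explicit line. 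Everything else is pure unwinding of definitions, so I would keep the write-up short.
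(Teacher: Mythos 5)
Your proof is correct and follows exactly the route the paper intends: the theorem is stated there without a written proof, relying (as you do) on the classical result of \cite{ba_1995_probabilistic} that reachability probabilities in an MDP are attained by a deterministic memoryless scheduler, so that the induced Markov Chain $\mdp_\eta$ realizes $\mathbb{P}^{+}$ and both the biconditional and the transfer of counterexamples follow by unwinding the definitions. Your explicit check that the measure induced by $\eta$ and the measure of the Markov Chain $\mdp_\eta$ agree on cylinder sets (hence on all Borel sets) is the only detail the paper leaves implicit, and you handle it correctly.
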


Note that $\eta$ can be computed by solving a linear minimization
problem \cite{ba_1995_probabilistic}. See Section
\ref{sec:max-schedulers}.

\section{Representative Counterexamples, Partitions and
Witnesses}\label{sec:RepCount-parti-witnesses}

The notion of counterexample from Definition
\ref{dfn:counterexamples} is very broad: just an arbitrary
(measurable) set of paths with high enough mass probability. To be
useful as a debugging tool (and in fact to be able to present the
counterexample to a user), we need counterexamples with specific
properties. We will partition counterexamples (or rather,
representative counterexamples) in witnesses and list five
informal properties that we consider valuable in order to increase
the quality of witnesses as a debugging tool.

We first note that for reachability properties it is sufficient to
consider counterexamples that consist of finite paths.

\begin{dfn}[Representative counterexamples\index{counterexample!representative}] Let $\mdp$ be an $\MDP$, $\psi$
a propositional formula and $p\in[0,1]$. A \emph{representative
counterexample} to $\sat{\mdp}{\leq p}{\F \psi}$ is a set
$\ce\subseteq \FReach{\mdp,\LangSt{\psi}}$ such that
$\measure{+}{}{\cyl{\ce}}>p$. We denote the set of all
representative counterexamples to $\sat{\mdp}{\leq p}{\F \psi}$ by
$\CountSet{\mdp, p,\psi}$.
\end{dfn}

\begin{observation} Let $\mdp$ be an $\MDP$, $\psi$ a propositional formula and $p\in[0,1]$. If $\ce$ is a representative counterexample to $\sat{\mdp}{\leq p}{\F
\psi}$, then $\cyl{\ce}$ is a counterexample to $\sat{\mdp}{\leq
p}{\F\psi}$. Furthermore, there exists a counterexample to
$\sat{\mdp}{\leq p}{\F\psi}$ if and only if there exists a
representative counterexample to $\sat{\mdp}{\leq p}{\F \psi}$.
\end{observation}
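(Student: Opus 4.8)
The plan is to prove the two assertions in turn, both of which rest on a single structural identity, $\LangPa{\F\psi} = \cyl{\FReach{\mdp,\LangSt{\psi}}}$, together with the monotonicity of $\measure{+}{}{\cdot}$. So the first thing I would record is monotonicity: if $\Delta_1\subseteq\Delta_2$ are Borel sets of paths, then $\measure{+}{}{\Delta_1}\leq\measure{+}{}{\Delta_2}$. This is immediate, since for each scheduler $\eta$ the set function $\measure{}{\eta}{\cdot}$ is a probability measure, hence monotone, so $\measure{}{\eta}{\Delta_1}\leq\measure{}{\eta}{\Delta_2}\leq\measure{+}{}{\Delta_2}$, and taking the supremum over $\eta\in\Sch(\mdp)$ yields the inequality.

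For the first assertion, let $\ce$ be a representative counterexample, i.e. $\ce\subseteq\FReach{\mdp,\LangSt{\psi}}$ with $\measure{+}{}{\cyl{\ce}}>p$. I would check two things. First, $\cyl{\ce}$ is measurable: $\ce\subseteq\FPaths{\mdp}$ is countable (as $S$ is finite, the set of finite paths is countable), each cone $\cyl{\sigma}$ lies in $\Borel_{s_0}$, so $\cyl{\ce}=\bigcup_{\sigma\in\ce}\cyl{\sigma}$ is a countable union of Borel sets. Second, $\cyl{\ce}\subseteq\LangPa{\F\psi}$: every $\sigma\in\ce$ ends in a state $\last{\sigma}\in\LangSt{\psi}$, and for any $\omega\in\cyl{\sigma}$ that last state occurs as one of the states of $\omega$; since $\psi$ is propositional, this forces $\omega\models\F\psi$. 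Combining these two facts with the hypothesis $\measure{+}{}{\cyl{\ce}}>p$ shows that $\cyl{\ce}$ satisfies exactly Definition~\ref{dfn:counterexamples}, hence is a counterexample to $\sat{\mdp}{\leq p}{\F\psi}$.

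For the equivalence, the ($\Leftarrow$) direction is precisely the assertion just proved. For ($\Rightarrow$), suppose $\ce'$ is a counterexample, so $\ce'\subseteq\LangPa{\F\psi}$ and $\measure{+}{}{\ce'}>p$. I would show that $\ce := \FReach{\mdp,\LangSt{\psi}}$ is itself a representative counterexample. The key step is the identity $\LangPa{\F\psi}=\cyl{\ce}$. The inclusion $\supseteq$ is the second observation of the previous paragraph applied to all of $\ce$. For $\subseteq$, given $\omega\models\F\psi$ let $i$ be the least index with $\omega_i\models\psi$ (here one uses that $\psi$ is propositional, so $\omega_{\downarrow j}\models\psi$ iff $\omega_j\models\psi$); then the prefix $\omega_0\cdots\omega_i$ reaches $\LangSt{\psi}$ exactly once, at its last state, so it belongs to $\FReach{\mdp,\LangSt{\psi}}=\ce$, and $\omega\in\cyl{\omega_0\cdots\omega_i}\subseteq\cyl{\ce}$. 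Hence $\ce'\subseteq\LangPa{\F\psi}=\cyl{\ce}$, and by monotonicity $\measure{+}{}{\cyl{\ce}}\geq\measure{+}{}{\ce'}>p$, so $\ce\in\CountSet{\mdp,p,\psi}$.

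There is no real obstacle in this proof; the only points needing care are the measurability bookkeeping for $\cyl{\ce}$ and phrasing the ``first time $\psi$ holds'' argument so that $\LangPa{\F\psi}$ is covered exactly (not merely contained) by the cones of finite paths in $\FReach{\mdp,\LangSt{\psi}}$. If useful for the sequel, I would also note in passing that $\FReach{\mdp,\LangSt{\psi}}$ is prefix-free, so these cones are pairwise disjoint and $\measure{}{\eta}{\cyl{\ce}}=\sum_{\sigma\in\ce}\measure{}{\eta}{\cyl{\sigma}}$ for every $\eta$, a fact that will be convenient later when extracting smaller counterexamples.
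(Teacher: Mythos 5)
Your proof is correct. The paper states this result as an Observation and gives no proof at all (it is regarded as immediate from the definitions), and your argument supplies exactly the routine justification that is implicitly intended: monotonicity of $\measure{+}{}{\cdot}$, the inclusion $\cyl{\ce}\subseteq\LangPa{\F\psi}$ via the propositional nature of $\psi$, and the identity $\LangPa{\F\psi}=\cyl{\FReach{\mdp,\LangSt{\psi}}}$ obtained by truncating each path at the first state satisfying $\psi$.
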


Following \cite{hk_2007_counterexamples}, we present the notions
of \emph{minimum counterexample}, \emph{strongest evidence} and
\emph{most indicative counterexamples}.

\begin{dfn}[Minimum counterexample\index{counterexample!minimum}] Let $\mc$ be an $\MC$, $\psi$ a propositional formula and $p\in[0,1]$. We say that $\ce\in
\CountSet{\mc,p,\psi}$ is a \emph{minimum counterexample} if
$|\ce|\leq|\ce^\prime|$, for all
$\ce^\prime\in\CountSet{\mc,p,\psi}$.
\end{dfn}

\begin{dfn}[Strongest evidence] Let $\mc$ be an $\MC$, $\psi$ a propositional formula and $p\in[0,1]$. A \emph{strongest evidence} to
$\smash{\nonsat{\mc}{\leq p}{\F \psi}}$ is a finite path
$\sigma\in \FReach{\mc,\LangSt{\psi}}$ such that
$\measure{}{_{\mc}}{\cyl{\sigma}}\geq\measure{}{_{\mc}}{\cyl{\rho}}$,
for all $\rho \in\FReach{\mc,\LangSt{\psi}}$.
\end{dfn}

\begin{dfn}[Most indicative counterexample\index{counterexample!most indicative}] Let $\mc$ be an $\MC$, $\psi$ a propositional formula and
$p\in[0,1]$. We call $\ce\in \CountSet{\mc,p,\psi}$ a \emph{most
indicative counterexample} if it is minimum and
$\measure{}{_{\mc}}{\cyl{\ce}}\geq
\measure{}{_{\mc}}{\cyl{\ce^\prime}}$, for all minimum
counterexamples $\ce^\prime \in \CountSet{\mc,p,\psi}$.
\end{dfn}

Unfortunately, very often most indicative counterexamples are very
large (even infinite), many of its elements have insignificant
measure and elements can be extremely similar to each other
(consequently providing the same diagnostic information). Even
worse, sometimes the finite paths with highest probability do not
exhibit the way in which the system accumulates higher probability
to reach the undesired property (and consequently where an error
occurs with higher probability). For these reasons, we are of the
opinion that representative counterexamples are still too general
in order to be useful as feedback information. We approach this
problem by refining a representative counterexample into sets of
finite paths following a ``similarity'' criteria (introduced in
Section~\ref{sec:Torrents-Rails}). These sets are called
\emph{witnesses} of the counterexample.


Recall that a set $Y$ of nonempty sets is a partition of $X$ if
the elements of $Y$ cover $X$ and are pairwise
disjoint. We define counterexample partitions in the following
way.

\begin{dfn}[Counterexample partitions and witnesses] Let $\mdp$ be
an $\MDP$, $\psi$ a propositional formula, $p\in[0,1]$, and $\ce$
a representative counterexample to $\sat{\mdp}{\leq p}{\F \psi}$.
A \emph{counterexample partition} $W_\ce$ is a partition of $\ce$.
We call the elements of $W_\ce$ \emph{witnesses}.
\end{dfn}

Since not every partition generates useful witnesses (from the
debugging perspective), we now state five informal properties that
we consider valuable in order to improve the diagnostic
information provided by witnesses. In Section
\ref{sec:coumputing-counterexamples} we show how to partition the
representative counterexample in order to obtain witnesses satisfying
most of these properties.
\begin{description}
\item[\textbf{Similarity:}]{\noindent Elements of a witness should provide similar debugging information.}
\item[\textbf{Accuracy:}]{\noindent Witnesses with higher probability should exhibit evolutions of the system with higher probability of containing errors.}
\item[\textbf{Originality:}]{\noindent Different witnesses should provide different debugging information.}
\item[\textbf{Significance:}]{\noindent Witnesses should be as closed to the counterexample as possible (their mass probability should be as closed as possible to the bound $p$).}
\item[\textbf{Finiteness:}]{\noindent The number of witnesses of a counterexample partition should be finite.}
\end{description}
\section{Rails and Torrents\index{rail}\index{torrent}}\label{sec:Torrents-Rails}

As argued before we consider that representative counterexamples
are excessively general to be useful as feedback information.
Therefore, we group finite paths of a representative
counterexample in witnesses if they are ``similar enough''. We
will consider finite paths that behave the same outside $\SCCs$ of
the system as providing similar feedback information.

In order to formalize this idea, we first reduce the original
$\MC$ $\mc$ to an acyclic $\MC$ preserving reachability
probabilities. We do so by removing all $\SCCs$ $\scc$ of $\mc$
keeping just \emph{input states} of $\scc$. In this way, we get a
new acyclic $\MC$ denoted by $\Acyclic{\mc}$. The probability
matrix of the Markov Chain relates input states of each $\SCC$ to
its \emph{output states} with the reachability probability between
these states in $\mc$. Secondly, we establish a map between finite
paths $\sigma$ in $\Acyclic{\mc}$ (\emph{rails}) and sets of paths
$W_\sigma$ in $\mc$ (\emph{torrents}). Each torrent contains
finite paths that are similar, i.e., behave the same outside
$\SCCs$. We conclude the section showing that the probability of
$\sigma$ is equal to the mass probability of $W_\sigma$.

\subsection*{Reduction to Acyclic Markov Chains \index{markov chain!acyclic}\index{scc analysis}}

Consider an $\MC$ $\mc=(S,s_0, \M,L)$. Recall that a subset
$\scc\subseteq S$ is called \emph{strongly connected}\index{scc}
if for every $s,t\in \scc$ there is a finite path from $s$ to $t$.
Additionally $\scc$ is called a \emph{strongly connected
component} ($\SCC$) if it is a maximally (with respect to
$\subseteq$) strongly connected subset of $S$.

Note that every state is a member of exactly one $\SCC$ of $\mc$; even those states that are not involved in cycles, since the
trivial finite path $s$ connects $s$ to itself. We call \emph{trivial strongly connected components} to the $\SCCs$ containing absorbing states or states not involved in cycles (note that trivial $\SCCs$ are composed by one single state). From now on we
let $\SCC^\star$ be the set of non trivial strongly connected
components of an $\MC$.

A Markov Chain is called \emph{acyclic} if it contains only
trivial $\SCCs$. Note that an acyclic Markov Chain still has
absorbing states.

\begin{dfn}[Input and Output states]
Let $\mc=(S,s_0,\M,L)$ be an $\MC$. Then, for each $\SCC^\star$
$\scc$ of $\mc$, we define the sets $\Inp_{\scc}\subseteq S$ of
all states in $\scc$ that have an incoming transition from a state
outside of $\scc$ and $\Out_{\scc}\subseteq S$ of all states
outside of $\scc$ that have an incoming transition from a state of
$\scc$ in the following way
\begin{figure}[hbt]
\qquad\begin{minipage}[t]{6cm}
  \vspace{-2.5cm}
  \centering
  \begin{align*}
  \Inp_{\scc}\triangleq\{t\in \scc\mid\exists\,s\in
  S\setminus\scc. \M(s,t)>0\},\\
  \Out_{\scc}\triangleq\{s\in S\setminus\scc\mid\exists\,t\in \scc. \M(t,s)>0\}.\\
  \end{align*}
\end{minipage}
\begin{minipage}[t]{5cm}
  \centering
  \includegraphics[width=3cm]{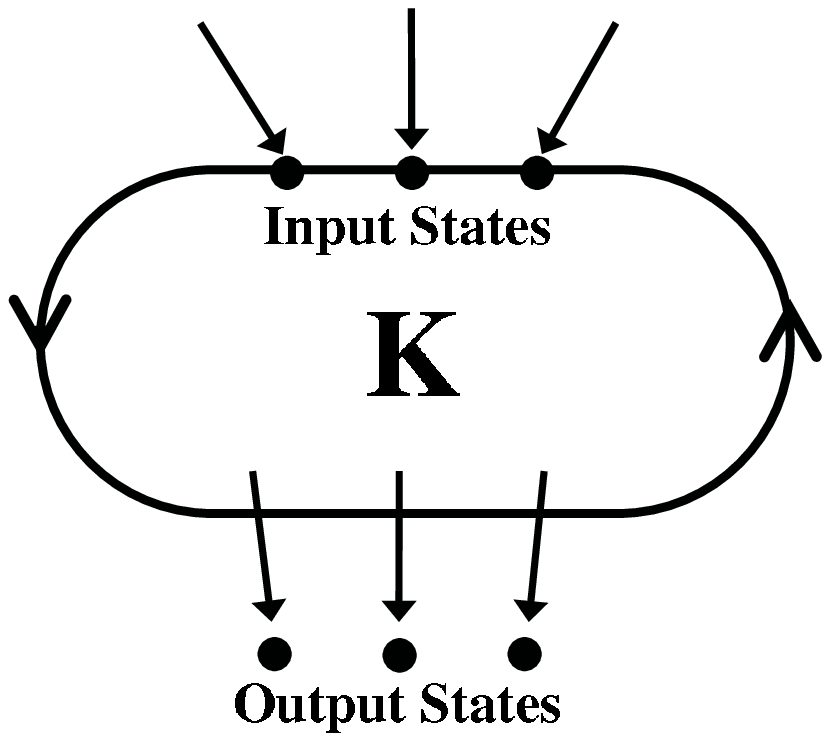}
\end{minipage}
\end{figure}

\noindent We also define for each $\SCC^\star$ $\scc$ an $\MC$
related to $\scc$ as $\mc_{\scc}\allowbreak\triangleq(\scc\cup
\Out_{\scc},\allowbreak s_{\scc},\allowbreak \M_{\scc},L_{\scc})$
where $s_{\scc}$ is any state in $\Inp_{\scc}$,
$L_{\scc}(s)\triangleq L(s)$, and $\M_{\scc}(s,t)$ is equal to
$\M(s,t)$ if $s\in \scc$ and equal to $1_s$ otherwise.
Additionally, for every state $s$ involved in non trivial $\SCCs$
we define $\SCC^+_s$ as $\mc_{\scc}$, where $\scc$ is the
$\SCC^\star$ of $\mc$ such that $s\in \scc$.
\end{dfn}

Now we are able to define an acyclic $\MC$ $\Acyclic{\mc}$ related
to $\mc$.

\begin{dfn}\label{dfn:acyclicMC}
Let $\mc=(S,s_0,\M,L)$ be a  $\MC$. We define
$\Acyclic{\mc}{}\triangleq(S^\prime,s_0,\M^\prime{},L^\prime)$
where
\vspace{-0.25cm}
\begin{itemize}
\item{$S^\prime\triangleq\stackrel{S_{\text{com}}}{\overbrace{S\setminus\bigcup_{\scc\in \SCC^\star} \scc}}\bigcup\stackrel{S_{\text{inp}}}{\overbrace{\bigcup_{\scc\in \SCC^\star} \Inp_{\scc}}},$}
\item $L^\prime\triangleq L_{|_{S^\prime}}$,
\item $ \M^\prime(s,t) \triangleq \left\lbrace
                           \begin{array}{ll}
                             \M(s,t) & \mbox{if } s\in S_{com},\\
                             \measure{}{_{\mc,s}}{\Reach{\SCC^+_s,s,\lbrace t\rbrace}} & \mbox{if } s\in S_{inp}\land t\in \Out_{\SCC^+_s},\\
                             1_s & \mbox{if } s\in S_{inp}\land \Out_{\SCC^+_s}=\emptyset,\\
                             0 & \mbox{otherwise.}
                             \end{array}
              \right.$
\end{itemize}
\end{dfn}

Note that $\Acyclic{\mc}$ is indeed acyclic.

\begin{example} Consider the $\MC$ $\mc$ of Figure \ref{fig:red-a}. The strongly connected components of $\mc$ are $\scc_1\eqdef\lbrace
s_1,s_3,s_4,s_7\rbrace$, $\scc_2\eqdef\lbrace s_5,s_6,s_8\rbrace$
and the singletons $\lbrace s_0 \rbrace$, $\lbrace s_2 \rbrace$,
$\lbrace s_9 \rbrace$, $\lbrace s_{10} \rbrace$, $\lbrace s_{11}
\rbrace$, $\lbrace s_{12} \rbrace$, $\lbrace s_{13} \rbrace$, and
$\lbrace s_{14} \rbrace$. The input states of $\scc_1$ are
$\Inp_{\scc_1}=\lbrace s_1\rbrace$ and its output states are
$\Out_{\scc_1}=\lbrace s_9,s_{10}\rbrace$. For $\scc_2$,
$\Inp_{\scc_2}=\lbrace s_5, s_6\rbrace$ and $\Out_{\scc_2}=\lbrace
s_{11},s_{14}\rbrace$. The reduced acyclic $\MC$ of $\mc$ is shown
in Figure \ref{fig:red-b}.
\end{example}

\begin{figure}[h]
 \centering
  \subfigure[Original $\MC$]{\label{fig:red-a}\includegraphics[width=6cm]{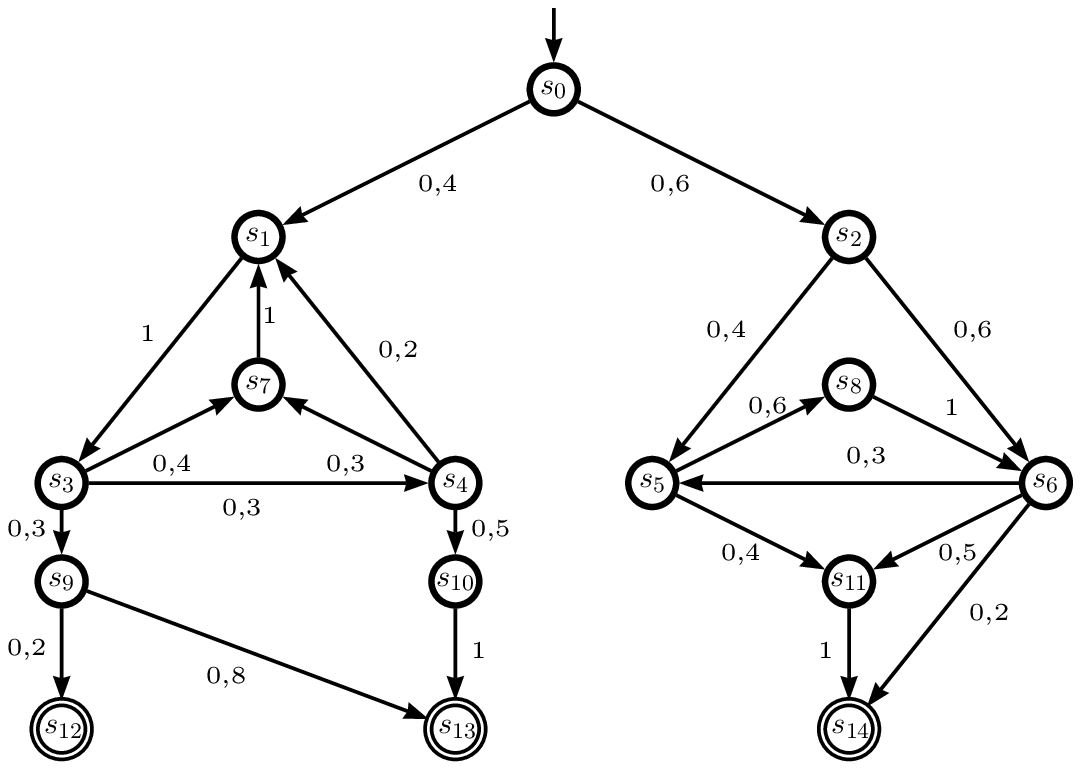}}
  \subfigure[Derived Acyclic $\MC$]{\label{fig:red-b}\includegraphics[width=6cm]{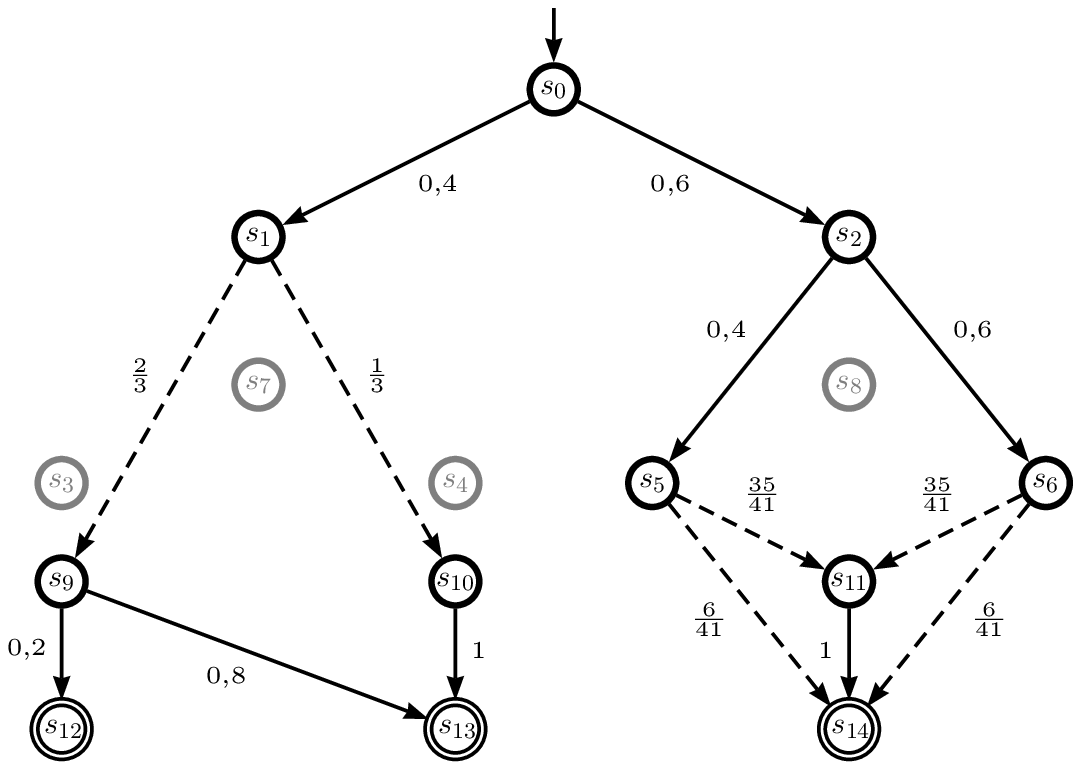}}
  \label{fig:reducingMCs}
  \caption{}
\end{figure}

\subsection*{Rails and Torrents}

We now relate (finite) paths in $\Acyclic{\mc}$ (rails) to sets of
paths in $\mc$ (torrents).

\begin{dfn}[Rails] Let $\mc$ be an $\MC$. A finite path $\sigma\in\FPaths{\Acyclic{\mc}}$ will be called
a \emph{rail of }$\mc$.
\end{dfn}

Consider a rail $\sigma$, i.e., a finite path of $\Acyclic{\mc}$.
We will use $\sigma$ to represent those paths $\omega$ of $\mc$
that behave ``similar to'' $\sigma$ outside $\SCCs$ of $\mc$.
Naively, this means that $\sigma$ is a subsequence of $\omega$.
There are two technical subtleties to deal with: every input state
in $\sigma$ must be the first state in its $\SCC$ in $\omega$
(freshness) and every $\SCC$ visited by $\omega$ must be also
visited by $\sigma$ (inertia) (see Definition~\ref{dfn:sim}). We
need these extra conditions to make sure that no path $\omega$
behaves ``similar to'' two distinct rails (see
Lemma~\ref{lem:disjoint}).

Recall that given a finite sequence $\sigma$ and a (possible
infinite) sequence $\omega$, we say that $\sigma$ is a
\emph{subsequence} of $\omega$, denoted by $\sigma \subS \omega$,
if and only if there exists a strictly increasing function
$f:\lbrace 0,1,\ldots,|\sigma|-1\rbrace\rightarrow \lbrace
0,1,\ldots,|\omega|-1\rbrace$ such that $\forall_{0\leq i <
|\sigma|}. \sigma_i=\omega_{f(i)}$. If $\omega$ is an infinite
sequence, we interpret the codomain of $f$ as $\mathbb{N}$. In
case $f$ is such a function we write $\sigma \subS_f \omega$.

\begin{dfn} Let $\mc=(S,s_0,\M, L)$ be an $\MC$. On $S$ we consider the equivalence relation $\sim_{\!\!\!\!\!_{\mc}}$ satisfying $s\sim_{\!\!\!\!\!_{\mc}} t$ if and only if $s$ and $t$ are in the same
strongly connected component. Again, we usually omit the subscript
$\mc$ from the notation.
\end{dfn}


The following definition refines the notion of subsequence, taking
care of the two technical subtleties noted above.

\begin{dfn}\label{dfn:sim} Let $\mc=(S,s_0,\M,L)$ be an $\MC$, $\omega$ a (finite) path of $\mc$, and $\sigma\in \FPaths{\Acyclic{\mc}}$ a finite path of $\Acyclic{\mc}$. Then we write $\sigma\HsubS \omega$ if there exists $f:\lbrace 0,1,\ldots,|\sigma|-1\rbrace\rightarrow \mathbb{N}$ such that $\sigma\subS_f\omega$ and
\begin{align*}
\forall_{0\leq j < f(i)}: \omega_{f(i)}\not\sim \omega_j & \mbox{; for all }i=0,1,\ldots|\sigma|-1, & \{\mbox{\emph{Freshness property}}\}\\
\forall_{f(i)< j< f(i+1)}:\omega_{f(i)}\sim \omega_{j} & \mbox{;
for all }i=0,1,\ldots|\sigma|-2. & \{\mbox{\emph{Inertia
property}}\}
\end{align*}
\noindent In case $f$ is such a function we write $\sigma\HsubS_f
\omega$.
\end{dfn}

\begin{example} Let $\mc=(S,s_0,\M,L)$ be the $\MC$ of Figure
\ref{fig:red-a} and take $\sigma=s_0 s_2 s_6 s_{14}$. Then for all
$i\in\mathbb{N}$ we have $\sigma\HsubS_{f_i} \omega_i$ where
$\omega_i=s_0 s_2 s_6 (s_5 s_8 s_6)^i s_{14}$ and $f_i(0)
\triangleq 0$, $f_i(1) \triangleq 1$, $f_i(2) \triangleq 2$, and
$f_i(3) \triangleq 3+3 i$. Additionally, $\sigma \not\HsubS s_0
s_2 s_5 s_8 s_6 s_{14}$ since for all $f$ satisfying
$\sigma\subS_f s_0 s_2 s_5 s_8 s_6 s_{14}$ we must have $f(2)=5$;
this implies that $f$ does not satisfy the freshness property.
Finally, note that $\sigma \not\HsubS s_0 s_2 s_6 s_{11} s_{14}$
since for all $f$ satisfying $\sigma\subS_f s_0 s_2 s_6 s_{11}
s_{14}$ we must have $f(2)=2$; this implies that $f$ does not
satisfy the inertia property.
\end{example}

We now give the formal definition of torrents.

\begin{dfn}[Torrents] Let $\mc=(S,s_0,\M, L)$ be an $\MC$ and $\sigma$ a sequence of states in $S$. We define the function
$\text{Torr}$ by
$$\Torrent{\mc,\sigma}\triangleq\lbrace \omega\in\Paths{\mc}\mid \sigma\HsubS \omega\rbrace.$$
\noindent We call $\Torrent{\mc,\sigma}$ the \emph{torrent}
associated to $\sigma$.
\end{dfn}

We now show that torrents are disjoint (Lemma~\ref{lem:disjoint}) and that
the probability of a rail is equal to the probability of its associated torrent (Theorem~\ref{thm:railSameProb}). For this last result, we first show that
torrents can be represented as the disjoint union of cones of finite paths. We call these finite paths \emph{generators} of the torrent (Definition~\ref{dfn:generators}).

\begin{lem}\label{lem:disjoint} Let $\mc$ be an $\MC$. For every $\sigma,\rho\in
\FPaths{\Acyclic{\mc}}$ we have
$$\sigma\not=\rho \Rightarrow\Torrent{\mc,\sigma}\cap \Torrent{\mc,\rho}=\emptyset.$$
\end{lem}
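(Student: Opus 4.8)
The plan is to show that distinct rails $\sigma,\rho\in\FPaths{\Acyclic{\mc}}$ have disjoint torrents by contradiction: suppose $\omega\in\Torrent{\mc,\sigma}\cap\Torrent{\mc,\rho}$, so that $\sigma\HsubS_f\omega$ and $\rho\HsubS_g\omega$ for suitable strictly increasing functions $f,g$. I would then argue that $f$ and $g$ must coincide on their common domain, which forces $\sigma$ and $\rho$ to agree state-by-state as far as the shorter one goes, and finally that they must in fact have the same length, contradicting $\sigma\neq\rho$.

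The key steps, in order: First, I would prove by induction on $i$ that $f(i)=g(i)$ whenever $i<\min(|\sigma|,|\rho|)$, and simultaneously that $\sigma_i=\rho_i$. For the base case $i=0$, both $\sigma$ and $\rho$ start at $s_0$ (they are finite paths of $\Acyclic{\mc}$ from the initial state) and $f(0)=g(0)=0$ since $\omega_0=s_0$ and the freshness property forces $f(0)$ to be the first index of $\omega$ in the $\SCC$ of $\omega_{f(0)}$ — but in fact $\sigma_0=\omega_{f(0)}$ and $\omega_0=s_0$, and freshness at $i=0$ gives that no earlier state is $\sim$-equivalent, which pins $f(0)=0$; similarly $g(0)=0$. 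For the inductive step, assume $f(j)=g(j)$ and $\sigma_j=\rho_j$ for all $j\le i$. The inertia property for $\sigma$ says every index strictly between $f(i)$ and $f(i+1)$ lands in the $\SCC$ of $\omega_{f(i)}$, while the freshness property for $\sigma$ says $\omega_{f(i+1)}$ is the first index of $\omega$ in its own $\SCC$; the same two properties hold for $\rho$ with $g$. Since $f(i)=g(i)$, the segment of $\omega$ after index $f(i)$ stays inside one $\SCC$ until the first moment it enters a genuinely new $\SCC$ (one not previously visited), and that moment is uniquely determined by $\omega$ alone — hence $f(i+1)=g(i+1)$, and then $\sigma_{i+1}=\omega_{f(i+1)}=\omega_{g(i+1)}=\rho_{i+1}$.

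Second, having established that $\sigma$ and $\rho$ agree on the first $\min(|\sigma|,|\rho|)$ states, I would rule out the case $|\sigma|\neq|\rho|$. Without loss of generality suppose $|\sigma|<|\rho|$. Then $\sigma$ is a proper prefix of $\rho$, so $\last{\sigma}=\sigma_{|\sigma|-1}=\rho_{|\sigma|-1}$, which is not the last state of $\rho$; in particular $\rho$ takes at least one more step in $\Acyclic{\mc}$ from $\last{\sigma}$, so $\last{\sigma}$ is not absorbing in $\Acyclic{\mc}$. On the other hand, $\sigma\HsubS_f\omega$ with $f$ mapping $|\sigma|-1$ to some index $f(|\sigma|-1)$ of $\omega$; the inertia property only constrains indices between consecutive $f(i)$'s, so there is no contradiction yet directly. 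Here I would instead use the representation of torrents via generators (Definition~\ref{dfn:generators}) and the fact that $\Torrent{\mc,\sigma}$ and $\Torrent{\mc,\rho}$ correspond to rails that are genuinely different finite paths: since $\sigma$ is a proper prefix of $\rho$, any $\omega$ with $\rho\HsubS_g\omega$ visits, after the $\SCC$ of $\last{\sigma}$, a further new $\SCC$ (the one containing $\rho_{|\sigma|}$), whereas $\sigma\HsubS_f\omega$ forces — by freshness applied to the (nonexistent) next coordinate and the structure of $\Acyclic{\mc}$ — that no such further new $\SCC$ is entered, or more simply that the construction of the generator assigns a different continuation. The cleanest route is to observe that the map $\omega\mapsto$ (the unique rail $\sigma$ with $\sigma\HsubS\omega$) is well-defined, which is exactly what steps one and two together establish; uniqueness then gives $\sigma=\rho$, the contradiction.

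The main obstacle I anticipate is the bookkeeping in the inductive step: carefully justifying that "the first moment $\omega$ enters a new $\SCC$ after index $f(i)$" is a property of $\omega$ alone and simultaneously matches both the freshness requirement (the landing index is the first of its $\SCC$) and the inertia requirement (everything in between stays in the old $\SCC$). One must be careful that trivial $\SCCs$ (singletons, including states not on cycles) are handled uniformly — the $\sim$ relation is defined for all states, so a sequence of trivial $\SCCs$ is just a sequence of distinct-$\SCC$ singletons and inertia is vacuous there, while freshness is automatic. Once that lemma about the unique "next rail coordinate" is nailed down, both $f=g$ and $|\sigma|=|\rho|$ follow, and disjointness is immediate. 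I would also double-check the edge case where $\omega$ is finite (ending in an absorbing state) versus infinite, since Definition~\ref{dfn:sim} interprets the codomain of $f$ as $\mathbb{N}$ in the infinite case; the argument is identical but the indices must be read in the appropriate set.
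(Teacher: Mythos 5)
Your first step is correct, and since the paper states Lemma~\ref{lem:disjoint} without giving a proof, that induction is surely the intended core argument: $f(0)=g(0)=0$ because $\sigma_0=\rho_0=\omega_0=s_0$ and freshness forbids $f(0)>0$; and for the step, inertia (all indices strictly between $f(i)$ and $f(i+1)$ stay in the $\SCC$ of $\omega_{f(i)}$) together with freshness at $i+1$ (which gives $\omega_{f(i+1)}\not\sim\omega_{f(i)}$) pins $f(i+1)$ down as the least index after $f(i)$ at which $\omega$ leaves the $\SCC$ of $\omega_{f(i)}$ --- a quantity determined by $\omega$ and $f(i)$ alone --- so $f=g$ and $\sigma,\rho$ coincide up to $\min(|\sigma|,|\rho|)$. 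This already gives disjointness whenever neither rail is a prefix of the other.

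The genuine gap is your second step, the prefix case, and it cannot be closed the way you suggest because the statement actually fails there as literally written. Definition~\ref{dfn:sim} constrains $\omega$ only up to index $f(|\sigma|-1)$: ``freshness applied to the (nonexistent) next coordinate'' imposes nothing, and the generator decomposition gives no contradiction either ($\omega$ simply lies in a cylinder of a generator of each rail, one generator being a prefix of the other). In fact, if $\rho$ is a proper prefix of $\sigma$ and $\sigma\HsubS_f\omega$, then restricting $f$ to $\{0,\dots,|\rho|-1\}$ witnesses $\rho\HsubS\omega$, since the freshness and inertia conditions for the retained coordinates are a subset of those for $\sigma$; hence $\Torrent{\mc,\sigma}\subseteq\Torrent{\mc,\rho}$. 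With the paper's own example, $\sigma=s_0s_2s_6s_{14}$ and $\rho=s_0s_2s_6$ are distinct rails with nested, nonempty torrents. Your closing move --- ``observe that the map $\omega\mapsto$ the unique rail with $\sigma\HsubS\omega$ is well-defined'' --- is circular, since that uniqueness is exactly what is being proved. The honest resolution is that the lemma holds, and is used, only for rails neither of which is a proper prefix of the other (the rails relevant later form prefix-free sets, e.g.\ elements of $\FReach{\Acyclic{\mc_\psi},\LangSt{\psi}}$, and for rails ending in absorbing states the torrent of the longer one is empty); your step one is a complete proof of precisely that case, and step two should be replaced by this observation rather than by an attempted proof.
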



\begin{dfn}[Torrent Generators]\label{dfn:generators} Let $\mc$ be an $\MC$. Then we define for every rail $\sigma\in
\FPaths{\Acyclic{\mc}}$ the set
$$\gen{\mc,\sigma}\triangleq\lbrace \rho\in \FPaths{\mc}\mid
\exists f: \sigma \HsubS_f \rho \land
f(|\sigma|-1)=|\rho|-1\rbrace.$$
\end{dfn}

In the example from the Introduction (see Figure~\ref{fig:MC related to MDP}), $s_0 s_1 s_3$ and $s_0 s_2 s_4$ are rails. Their associated torrents are, respectively, $\{ s_0 s_1^n s_3^\omega \;|\; n \in \Nat^*\}$ and $\{ s_0 s_2^n s_4^\omega \;|\; n \in \Nat^*\}$ (note that $s_3$ and $s_4$ are absorbing states), i.e.~the paths going left and the paths going right. The generators of the first torrent are $\{ s_0 s_1^n s_3 \;|\; n \in \Nat^*\}$ and similarly for the second torrent.


\begin{lem}\label{lem:generators} Let $\mc$ be an $\MC$ and $\sigma\in \FPaths{\Acyclic{\mc}}$ a rail of $\mc$.
Then we have
\begin{align*}
\Torrent{\mc,\sigma}=\biguplus_{\rho\in\gen{\mc,\sigma}}\cyl{\rho}.
\end{align*}
\end{lem}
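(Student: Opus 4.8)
\textbf{Proof plan for Lemma~\ref{lem:generators}.}

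The plan is to show the two-way containment between $\Torrent{\mc,\sigma}$ and the union of cones over generators, and then separately argue that the union is disjoint. I would first unfold the definitions: a path $\omega \in \Torrent{\mc,\sigma}$ is one for which there exists $f$ with $\sigma \HsubS_f \omega$, i.e.\ $f$ is strictly increasing, $\sigma_i = \omega_{f(i)}$ for all $i$, and $f$ satisfies the freshness and inertia properties of Definition~\ref{dfn:sim}. A generator $\rho \in \gen{\mc,\sigma}$ is a \emph{finite} path with $\sigma \HsubS_f \rho$ for some $f$ with the additional constraint $f(|\sigma|-1) = |\rho|-1$, i.e.\ $\rho$ ends exactly at the last matched position. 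So the content of the lemma is: every $\omega$ similar to $\sigma$ has a unique prefix that is a generator, and conversely every extension of a generator is similar to $\sigma$.

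For the containment $\Torrent{\mc,\sigma} \subseteq \bigcup_{\rho} \cyl{\rho}$, I would take $\omega$ with $\sigma \HsubS_f \omega$ and set $n \eqdef f(|\sigma|-1)$, and let $\rho$ be the prefix $\omega_0 \omega_1 \cdots \omega_n$. One checks that $\sigma \HsubS_{f} \rho$ (the freshness and inertia conditions only refer to indices $j \le f(i+1)$, all of which are $\le n$, so they are inherited from $\omega$) and that $f(|\sigma|-1) = n = |\rho|-1$, so $\rho \in \gen{\mc,\sigma}$; moreover $\rho \le \omega$, hence $\omega \in \cyl{\rho}$. For the reverse containment $\bigcup_{\rho} \cyl{\rho} \subseteq \Torrent{\mc,\sigma}$, I would take $\rho \in \gen{\mc,\sigma}$ witnessed by $f$ with $f(|\sigma|-1)=|\rho|-1$, and any $\omega \in \cyl{\rho}$, so $\rho \le \omega$. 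The same $f$ still witnesses $\sigma \HsubS_f \omega$: $\sigma_i = \rho_{f(i)} = \omega_{f(i)}$ since $f(i) \le |\rho|-1$ and $\rho$ is a prefix of $\omega$; freshness at index $f(i)$ only constrains positions $j < f(i) \le |\rho|-1$, which agree in $\rho$ and $\omega$; and inertia only constrains positions strictly between $f(i)$ and $f(i+1)$, again all $\le |\rho|-1$. Hence $\omega \in \Torrent{\mc,\sigma}$.

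For disjointness of the union — i.e.\ that it is a $\biguplus$ — I would argue that distinct generators have disjoint cones. Suppose $\rho, \rho' \in \gen{\mc,\sigma}$ with $\cyl{\rho} \cap \cyl{\rho'} \neq \emptyset$; then one of $\rho,\rho'$ is a prefix of the other, say $\rho \le \rho'$. Both end at the last position matched to $\sigma_{|\sigma|-1}$ under their respective witnessing functions $f, f'$. The key point is that the witnessing function is essentially forced: I would show by induction on $i$ that any $f$ with $\sigma \HsubS_f \omega$ is uniquely determined by $\omega$ and $\sigma$ — $f(0)$ is the first index where $\omega$ enters $\sigma_0$'s $\SCC$ (freshness pins down that it is the \emph{first}), and given $f(i)$, inertia forces $\omega$ to stay in that $\SCC$ until the next index of a new $\SCC$, and freshness forces $f(i+1)$ to be exactly the first index at which $\omega_{f(i+1)} = \sigma_{i+1}$ after leaving; this is precisely the uniqueness already exploited in Lemma~\ref{lem:disjoint}. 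Since $\rho$ is a prefix of $\rho'$, the witnessing function for $\rho'$ restricted appropriately must agree with that for $\rho$ on $\{0,\dots,|\sigma|-1\}$, forcing $f(|\sigma|-1) = f'(|\sigma|-1)$, hence $|\rho|-1 = |\rho'|-1$ and $\rho = \rho'$.

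The main obstacle I anticipate is the disjointness/uniqueness argument: making precise that the embedding function $f$ is uniquely determined requires carefully combining the freshness and inertia properties, and is the same technical core that underlies Lemma~\ref{lem:disjoint}. Once that uniqueness is in hand, both containments are routine prefix manipulations. I would therefore either cite the argument from the proof of Lemma~\ref{lem:disjoint} or factor out a small auxiliary claim stating the uniqueness of $f$, and use it for both this lemma and Lemma~\ref{lem:disjoint}.
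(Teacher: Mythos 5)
Your proof is correct, but it is not the route the paper takes. You prove the identity directly: truncate any $\omega$ with $\sigma\HsubS_f\omega$ at $n=f(|\sigma|-1)$ to obtain a generator whose cone contains $\omega$; observe conversely that freshness and inertia only constrain indices $\leq f(|\sigma|-1)$, so a witnessing function for a generator $\rho$ still witnesses $\sigma\HsubS\omega$ for every $\omega\in\cyl{\rho}$; and get disjointness from uniqueness of the embedding function. All three steps are sound, and the uniqueness you flag as the main obstacle is lighter than you anticipate: it follows from freshness alone, with no induction and no use of inertia --- if $f$ and $g$ both witness $\sigma\HsubS\omega$ and $f(i)<g(i)$ for some $i$, then $\omega_{f(i)}=\sigma_i=\omega_{g(i)}$ exhibits a state in the same $\SCC$ as $\omega_{g(i)}$ strictly before position $g(i)$, contradicting freshness of $g$ at $i$ (do factor this out as an auxiliary claim rather than ``cite the proof of Lemma~\ref{lem:disjoint}'': the paper states that lemma without proof). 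The paper argues quite differently: it proceeds by cases on the shape of the rail, treats the case $\sigma' t s$ with $t$ an input and $s$ an output state, introduces the auxiliary set $\Delta_{\sigma' t s}\eqdef\{\rho\,\tail{\pi}\mid \rho\in\gen{\mc,\sigma' t},\ \pi\in\FPaths{\SCC^+_t,t,\lbrace s\rbrace}\}$, declares the cone decomposition of the torrent over $\Delta_{\sigma' t s}$ straightforward, and devotes the technical work to proving $\Delta_{\sigma' t s}=\gen{\mc,\sigma' t s}$ by constructing and restricting witnessing functions. That detour is not gratuitous: the decomposition of generators into a shorter generator followed by a path through the $\SCC$ of $t$ is exactly what the inductive step of Theorem~\ref{thm:railSameProb} consumes (its proof cites the two identities established inside this lemma's proof). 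So your argument is the more direct and self-contained proof of the lemma as stated, while the paper's proof simultaneously manufactures the $\SCC$-wise generator decomposition needed later; if you adopt your proof, you would still have to establish that decomposition (or an equivalent) when proving Theorem~\ref{thm:railSameProb}.
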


\begin{proof} \rev{The proof is by cases on the length of $\sigma$. We prove the result for the cases on which $\sigma$ is of the form $\sigma t s$, with $t$ an input state and $s$ an output state, the other cases are simpler. In order to proof this lemma, we define (for each $\sigma s t$ of the above form) the following set of finite paths 

\begin{equation}\label{def:delta}
\Delta_{\sigma t s}\eqdef \{\rho \tail{\pi}\mid \rho\in \gen{\sigma t}\text{ and } \pi\in \FPaths{\SCC^+_t,t,\lbrace s\rbrace}\}
\end{equation}

Checking that $\Torrent{\mc,\sigma}=\biguplus_{\rho\in\Delta_{\sigma t s}}\cyl{\rho}$ is straightforward. We now focus on proving that 

\begin{equation}\label{lem:delta}
\Delta_{\sigma s t}=\gen{\mc,\sigma}.
\end{equation}
\noindent For that purpose we need the following two observations.

\begin{observation}\label{obs:scc-rails}
Let $\mc$ be a $\MC$. Since $\Acyclic{\mc}$ is acyclic we have
$\sigma_i \not \sim \sigma_j$ for every $\sigma\in
\FPaths{\Acyclic{\mc}}$ and $i\not=j$ (with the exception of absorbing
states).
\end{observation}

\begin{observation}\label{obs:surSCC} Let $\sigma,\omega$ and $f$ be such that $\sigma\HsubS_f
\omega$. Then $\forall i: \exists j: \omega_i\sim \sigma_j$. This
follows from $\sigma\subS_f \omega$  and the inertia property.
\end{observation}

We now proceed to prove that $\Delta_{\sigma s t}=\gen{\mc,\sigma}$.
}

\noindent $(\ \supseteq\ )$ Let $\rho_0 \rho_1 \cdots \rho_k \in
\gen{\sigma t s}$ and $n_t$ the lowest subindex of $\rho$ such
that $\rho_{n_t}=t$. Take $\rho\triangleq\rho_0 \rho_1 \cdots
\rho_{n_t}$ and $\pi\triangleq\rho_{n_t} \cdots \rho_k$ (Note that
$\rho_0 \rho_1 \cdots \rho_k=\rho\tail{\pi}$). In order to prove
that $\rho_0 \rho_1 \cdots \rho_k \in \Delta_{\sigma t s}$ we need
to prove that

\begin{enumerate}
\item[(1)]{$\rho\in \gen{\sigma t}$, and}
\item[(2)]{$\pi\in \FPaths{\SCC^+_t,t,\lbrace s\rbrace}$.}
\end{enumerate}

    \begin{itemize}
    \item [(1)] Let $f$ be such that $\sigma t s \HsubS_f \rho_0 \rho_1 \cdots \rho_k$ and $f(|\sigma t s| -1)=k$. Take $g:\lbrace
    0,1,\ldots,|\sigma t|-1\rbrace\rightarrow \mathbb{N}$ be the restriction of $f$. It is easy to check that $\sigma t \HsubS_g
    \rho$.
    Additionally $f(|\sigma t|-1)=n_t$ (otherwise
    $f$ would not satisfy the freshness property for $i=|\sigma
    t|-1$). Then, by definition of $g$, we have $g(|\sigma t|-1)=n_t$.

    \item [(2)] It is clear that $\pi$ is a path from $t$ to $s$.
    Therefore we only have to show that every state of $\pi$ is in
    $\SCC^+_t$. By definition of $\SCC^+_t$, $\pi_0=t\in\SCC^+_t$ and
    $s\in\SCC^+_t$ since $s\in\Out_{\SCC^+_t}$.
    Additionally, since $f$ satisfies inertia property we have
    that $\forall_{f(|\sigma t|-1)<j<f(|\sigma t s|-1)}:\rho_{f(|\sigma t|-1)}\sim
    \rho_j$, since $f(|\sigma t|-1)=n_t$ and
    $\pi\triangleq\rho_{n_t}\cdots\rho_k$ we have $\forall_{0<j<|\pi|-1}:t\sim
    \pi_j$ proving that $\pi_j\in\SCC^+_t$ for
    $j\in\lbrace1,\cdots,|\pi|-2\rbrace$.

    \comment{Alternative prove: Suppose that there exists $i$ such that $\pi_i\not \in
    \SCC^+_t$. But then $\rho_{f(n_t)}\not\sim \rho_{f(n_t)+i}$
    contradicting the Inertia property of $f$.}

    \end{itemize}

\noindent $(\ \subseteq \ )$ Take $\rho\in \gen{\sigma t}$ and
$\tail{\pi}\in \FPaths{\SCC^+_t,t,\lbrace s\rbrace}$. In order to
prove that $\rho\tail{\pi}\in \gen{\sigma t s}$ we need to show
that there exists a function $g$ such that:

\begin{enumerate}
\item[(1)]{$\sigma t s\HsubS_g\rho\tail{\pi}$,}
\item[(2)]{$g(|\sigma t s|-1)=|\rho\tail{\pi}|-1$.}
\end{enumerate}

Since $\rho\in \gen{\sigma t}$ we know that there exists $f$ be
such that $\sigma t \HsubS_f \rho$ and $f(|\sigma t|-1)=|\rho|-1$.
We define $g:\lbrace 0,1,\ldots,|\sigma t s|-1\rbrace\rightarrow
\lbrace 0,1,\ldots,\allowbreak|\rho\tail{\pi}|-1\rbrace$ by
     \begin{align*}
        g(i) & \eqdef \left\lbrace
                                       \begin{array}{ll}
                                         f(i) & \mbox{ if } i<|\sigma t s| - 1,\\
                                         |\rho\tail{\pi}|-1 & \mbox{ if }i=|\sigma t s| -1.\\
                                       \end{array}
                                   \right.
     \end{align*}

   \begin{itemize}

   \item [(1)] It is easy to check that $\sigma t s\subS_g \rho\tail{\pi}$. Now we will show that $g$
    satisfies Freshness and Inertia properties.

    \underline{Freshness property:} We need to show that for all $0\leq i<|\sigma t s|$ we have $\forall_{0\leq j<g(i)}: \rho\tail{\pi}_{g(i)}\not
    \sim\rho\tail{\pi}_j$. For the cases $i\in\lbrace
    0,\ldots,|\sigma t|-1\rbrace$ this holds since $\sigma t\HsubS_f
    \rho$ and definition of $g$.

    Consider $i=|\sigma t s|-1$, in this case we have to prove
    $\forall_{0\leq j< |\rho\tail{\pi}|-1}:
    \rho\tail{\pi}_{|\rho
    \tail{\pi}|-1)}\not \sim \rho\tail{\pi}_j$ or equivalently $\forall_{0\leq j< |\rho\tail{\pi}|-1}:
    s\not \sim \rho\tail{\pi}_j$.

       \begin{itemize}
       \item [Case $j\in\lbrace
       |\rho|,\ldots|\rho\tail{\pi}|-1\rbrace$.]$\\$ From
       $\pi\in\FPaths{\SCC^+_t,t,\lbrace s\rbrace}$ and
       $s\in\Out^+_{\SCC^+_t}$ it is easy to see $\forall_{0\leq
       j<|\tail{\pi}|-1}\allowbreak:\allowbreak s\not\sim \tail{\pi}_j$

       \item [Case $j\in\lbrace 0,\ldots,|\rho|-1\rbrace.$]$\\$
        From $\sigma t s\in \FPaths{\Acyclic{\mc}}$ \rev{and Observation \ref{obs:scc-rails}}
        we have that $\forall_{0\leq j \allowbreak<|\sigma
        t|-1}:s\not\sim \sigma t_j$. Additionally, $\sigma t \HsubS_f
        \rho$ and definition of $g$  \rev{and Observation \ref{obs:surSCC}}
        imply $\forall_{0\leq j <|\rho|}:s\not\sim \rho_j$ or equivalently
        $\forall_{0\leq j <|\rho|}:s\not\sim \rho\tail{\pi}_j$.

       \end{itemize}

    \comment{Alternative prove (by absurd) Suppose that there exists
    $j<g(|\sigma s|-1)$ such that
    $\rho\tail{\pi}_j\sim
    \rho\tail{\pi}_{g(|\sigma s|-1)}$. Since
    $g(|\sigma|-1)=|\rho|-1$ and $\pi\in
    \Paths{\SCC^+_t,t,\lbrace s\rbrace}$ we have that $j<|\rho|-1$.
    Additionally, note that since $\rho_j {\sigma}_{j+1}
    \cdots \rho_{|\rho|-1}$ is a finite
    path from $\rho_j$ to $t$ and $\pi$ is a finite path
    from $t$ to $s$ we have that $t\sim s$ which contradicts
    Observation \ref{obs:scc-rails}.}

    \underline{Inertia property:} Since $\pi\in\FPaths{SCC^+_t,t,\lbrace s\rbrace }$ we
    have $\allowbreak\forall_{0\leq j < |\pi|-1}: t\sim \pi_j$ which implies
    that
    $\forall_{|\rho|-1<j<|\rho\tail{\pi}|-1}:
    \rho\tail{\pi}_{|\rho|-1}\sim
    \rho\tail{\pi}_j$ or equivalently
    $\forall_{g(|\sigma|-1)<\allowbreak j<\allowbreak
    g(|\sigma s|-1)}:
    \rho\tail{\pi}_{g(|\rho|-1)}\sim
    \rho\tail{\pi}_j$ showing that $g$ satisfies
    the inertia property.

   \item [(2)] Follows from the definition of $g$.\qedhere

    \end{itemize}

\end{proof}

\begin{thm}\label{thm:railSameProb} Let $\mc$ be an $\MC$. Then for every rail $\sigma\in\FPaths{\Acyclic{\mc}}$ we have
\begin{align*}
\measure{}{_{\Acyclic{\mc}}}{\cyl{\sigma}}=\measure{}{_{\mc}}{\Torrent{\mc,\sigma}}.
\end{align*}
\end{thm}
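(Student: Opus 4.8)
The plan is to reduce the statement, via Lemma~\ref{lem:generators}, to a sum over the generators of $\sigma$, and then to prove that sum equals $\measure{}{_{\Acyclic{\mc}}}{\cyl{\sigma}}$ by induction on the length of the rail $\sigma$, following the same case split used in the proof of Lemma~\ref{lem:generators}. By Lemma~\ref{lem:generators}, $\Torrent{\mc,\sigma}$ is the disjoint union $\biguplus_{\rho\in\gen{\mc,\sigma}}\cyl{\rho}$ (the generators are pairwise prefix‑incomparable, since a generator that properly extended another would match $\last{\sigma}$ at a position whose $\SCC$ has already been visited, contradicting the freshness clause), so countable additivity of $\measure{}{_{\mc}}{\cdot}$ gives $\measure{}{_{\mc}}{\Torrent{\mc,\sigma}}=\sum_{\rho\in\gen{\mc,\sigma}}\measure{}{_{\mc}}{\cyl{\rho}}$. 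It therefore suffices to prove, for every rail $\sigma$, the identity $\measure{}{_{\Acyclic{\mc}}}{\cyl{\sigma}}=\sum_{\rho\in\gen{\mc,\sigma}}\measure{}{_{\mc}}{\cyl{\rho}}$.

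The induction is on $|\sigma|$. The base case $\sigma=s_0$ is immediate, since both sides equal $1$ and $\gen{\mc,s_0}=\{s_0\}$. For the step, write $\sigma=\hat\sigma\,s$ with $\hat\sigma$ a rail ending in $t=\last{\hat\sigma}$ and $\M'(t,s)>0$, so that $\measure{}{_{\Acyclic{\mc}}}{\cyl{\sigma}}=\measure{}{_{\Acyclic{\mc}}}{\cyl{\hat\sigma}}\cdot\M'(t,s)$, and distinguish the two shapes the last transition of $\Acyclic{\mc}$ can take (Definition~\ref{dfn:acyclicMC}). If $t\in S_{\text{com}}$ then $\M'(t,s)=\M(t,s)$, and since the $\SCC$ of $t$ is trivial the freshness and inertia clauses force every generator of $\hat\sigma\,s$ to be of the form $\hat\rho\,s$ with $\hat\rho\in\gen{\mc,\hat\sigma}$, and conversely; hence $\sum_{\rho\in\gen{\mc,\hat\sigma s}}\measure{}{_{\mc}}{\cyl{\rho}}=\bigl(\sum_{\hat\rho\in\gen{\mc,\hat\sigma}}\measure{}{_{\mc}}{\cyl{\hat\rho}}\bigr)\cdot\M(t,s)$, and the induction hypothesis on $\hat\sigma$ closes this case. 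If $t\in S_{\text{inp}}$ and $s\in\Out_{\SCC^+_t}$, then the identity $\Delta_{\hat\sigma\,t\,s}=\gen{\mc,\hat\sigma\,s}$ established inside the proof of Lemma~\ref{lem:generators} (with $\Delta_{\hat\sigma\,t\,s}=\{\hat\rho\,\tail{\pi}\mid\hat\rho\in\gen{\mc,\hat\sigma},\ \pi\in\FReach{\SCC^+_t,t,\{s\}}\}$, the decomposition at the last occurrence of $t$ being unique by freshness), together with the factorization $\measure{}{_{\mc}}{\cyl{\hat\rho\,\tail{\pi}}}=\measure{}{_{\mc}}{\cyl{\hat\rho}}\cdot\measure{}{_{\SCC^+_t,t}}{\cyl{\pi}}$ (the last factor being $\measure{}{_{\mc,t}}{\cyl{\pi}}$, which equals $\measure{}{_{\SCC^+_t,t}}{\cyl{\pi}}$ because $\SCC^+_t$ agrees with $\mc$ on the transitions out of states of its component), gives
\[
\sum_{\rho\in\gen{\mc,\hat\sigma s}}\measure{}{_{\mc}}{\cyl{\rho}}=\sum_{\hat\rho\in\gen{\mc,\hat\sigma}}\ \sum_{\pi\in\FReach{\SCC^+_t,t,\{s\}}}\measure{}{_{\mc}}{\cyl{\hat\rho}}\cdot\measure{}{_{\SCC^+_t,t}}{\cyl{\pi}}.
\]
Since $\FReach{\SCC^+_t,t,\{s\}}$ is prefix free with $\bigcup_{\pi}\cyl{\pi}=\Reach{\SCC^+_t,t,\{s\}}$, the inner sum equals $\measure{}{_{\SCC^+_t,t}}{\Reach{\SCC^+_t,t,\{s\}}}$, which is exactly $\M'(t,s)$ by Definition~\ref{dfn:acyclicMC}; pulling it out and applying the induction hypothesis on $\hat\sigma$ yields $\measure{}{_{\Acyclic{\mc}}}{\cyl{\hat\sigma}}\cdot\M'(t,s)=\measure{}{_{\Acyclic{\mc}}}{\cyl{\sigma}}$. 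The only remaining sub-case, $\Out_{\SCC^+_t}=\emptyset$ (so $t$ is absorbing in $\Acyclic{\mc}$), is degenerate and can be dispatched directly.

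The main obstacle is the bookkeeping in the inductive step: one must be sure that the assignment $\rho\leftrightarrow(\hat\rho,\pi)$ between generators of $\hat\sigma\,s$ and pairs of a generator of $\hat\sigma$ and an $\SCC$-internal finite path from $t$ to the output $s$ is a genuine bijection, so that the double sum above is legitimate — but this is precisely the content of the two inclusions proved for Lemma~\ref{lem:generators}, so no new combinatorics is needed. What must be supplied here is only the measure-theoretic packaging (disjointness of the relevant cones, countable additivity, and the Markov factorization of path probabilities) and the recognition, from the definition of $\M'$, that summing $\measure{}{_{\SCC^+_t,t}}{\cyl{\pi}}$ over $\pi\in\FReach{\SCC^+_t,t,\{s\}}$ reproduces the compressed transition probability $\M'(t,s)$. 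A secondary point of care is to state the induction hypothesis for \emph{all} rails of smaller length rather than only for prefixes of $\sigma$, since the $S_{\text{com}}$ case applies it to $\hat\sigma$ and the compressed-$\SCC$ case applies it again to the same $\hat\sigma$.
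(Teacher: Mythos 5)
Your proposal is correct and takes essentially the same route as the paper's own proof: induction on the rail with the case split of Definition~\ref{dfn:acyclicMC} ($t\in S_{\text{com}}$ versus $t\in S_{\text{inp}}$), relying on Lemma~\ref{lem:generators} (and the $\Delta$-decomposition from its proof) together with the reading of $\Acyclic{\M}(t,s)$ as the $\SCC$-internal reachability probability. Restating the goal up front as the sum-over-generators identity and then inducting on that is only a cosmetic reorganization of the paper's argument, which invokes the same generator decomposition inside each inductive step.
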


\begin{proof} By induction on the structure of $\sigma$.
\begin{itemize}

\item [\emph{Base Case:}] Note that
$\measure{}{\Acyclic{\mc}}{\cyl{s_0}}=\measure{}{\Acyclic{\mc}}{\Paths{\Acyclic{\mc},s_0}}\allowbreak=1$
, and similarly
$1=\measure{}{\mc}{\Paths{\mc,s_0}}=\allowbreak\measure{}{\mc}{\Torrent{s_0}}.$

\item [\emph{Inductive Step:}]

Let $t$ be such that $\last{\sigma}=t$.
Suppose that $t\in S_{\Com}$ and denote by $\Acyclic{\M}$ to
the probability matrix of $\Acyclic{\mc}$. Then
$$\begin{array}{rclr}
\hbox to 0pt{$\measure{}{\Acyclic{\mc}}{\cyl{\sigma s}}$\hss} \\
&=&\measure{}{\Acyclic{\mc}}{\cyl{\sigma}}\cdot \Acyclic{\M}(t,s)\\
&=&\measure{}{\mc}{\Torrent{\sigma}}\cdot \M(t,s)&\\
&&&\llap{$\{\text{Inductive Hypothesis and definition of }\M\}$}\\
&=&\measure{}{\mc}{\biguplus_{\rho\in\gen{\sigma}} \cyl{\rho}}\cdot \M(t,s)&\explan{\text{Lem.~}\ref{lem:generators}}\\
&=&\sum_{\rho\in\gen{\sigma}}\measure{}{\mc}{\cyl{\rho}}\cdot \measure{}{\mc}{\cyl{ts}}\\
&=&\sum_{\rho\in\gen{\sigma}}\measure{}{\mc}{\cyl{\rho \tail{ts}}}&\\
&=&\sum_{\rho\in \gen{\sigma s}}\measure{}{_{\mc}}{\cyl{\rho}}&\\
&=&\measure{}{_{\mc}}{\biguplus_{\rho\in \gen{\sigma s}}\cyl{\rho}}\\
&=&\measure{}{_{\mc}}{\Torrent{\sigma s}}&\explan{\text{Lem.~}\ref{lem:generators}}\\
\end{array}$$
Now suppose that $t\in S_{\Inp}$, then
$$\begin{array}{rclr}
\hbox to 0pt{$\measure{}{_{\Acyclic{\mc}}}{\cyl{\sigma s}}$\hss} \\
&=&\measure{}{_{\Acyclic{\mc}}}{\cyl{\sigma}}\cdot \Acyclic{\M}(t,s)\\
&=&\measure{}{_{\mc}}{\Torrent{\sigma}}\cdot \Acyclic{\M}(t,s)&\explan{\text{HI}}\\
&=&\measure{}{_{\mc}}{\biguplus_{\rho\in\gen{\sigma}}\cyl{\rho}}\cdot\Acyclic{\M}(t,s)&\explan{\text{Lem.~\ref{lem:generators}}}\\
&=&\left(\sum_{\rho\in\gen{\sigma}}\measure{}{_{\mc}}{\cyl{\rho}}\right)\cdot\Acyclic{\M}(t,s)\\
&=&\sum_{\rho\in \gen{\sigma}}\measure{}{_{\mc}}{\cyl{\rho}}\cdot \measure{}{_{\mc,t}}{\Paths{\SCC^+_t,t,\lbrace s\rbrace}}\!\!\!\!\!\!\!\!\!\!\!\!\!\\
&&&\!\!\!\!\!\llap{$\{\text{By definition of }\Acyclic{\M} \text{ and distributivity}\}$}\\
&=&\sum_{\rho\in \gen{\sigma}}\measure{}{_{\mc}}{\cyl{\rho}}\cdot \sum_{\pi\in \FPaths{\SCC^+_t,t,\lbrace s\rbrace}}\measure{}{_{\mc,t}}{\cyl{\pi}}\!\!\!\!\!\!\!\!\!\!\!\!\!\!\!\!\!\\
&=&\sum_{\rho\in \gen{\sigma},\pi\in\FPaths{\SCC^+_t,t,\lbrace s\rbrace}}\measure{}{_{\mc}}{\cyl{\rho \tail{\pi}}}\!\!\!\!\!\!\!\!\!&\explan{\text{Dfn.~} \rev{\mathbb{P}}}\\
&=&\sum_{\rho\in \Delta_{\sigma s}}\measure{}{_{\mc}}{\cyl{\rho}}&\explan{\rev{(\ref{def:delta})}}\\
&=&\sum_{\rho\in \gen{\sigma s}}\measure{}{_{\mc}}{\cyl{\rho}}&\explan{\rev{(\ref{lem:delta})}}\\
&=&\measure{}{_{\mc}}{\biguplus_{\rho\in \gen{\sigma s}}\cyl{\rho}}\\
&=&\measure{}{_{\mc}}{\Torrent{\sigma s}}&
\qedhere\\
\end{array}$$
\end{itemize}


\end{proof}

\section{Significant Diagnostic Counterexamples}\label{sec:significantdiagnosticcounterexamples}

So far we have formalized the notion of paths behaving similarly
(i.e., behaving the same outside $\SCCs$) in an $\MC$ $\mc$ by
removing all $\SCC$ of $\mc$, obtaining $\Acyclic{\mc}$. A
representative counterexample to $\sat{\Acyclic{\mc}}{\leq
p}{\!\F\psi}$ gives rise to a representative counterexample to
$\sat{\mc}{\leq p}{\!\F\psi}$ in the following way: for every
finite path $\sigma$ in the representative counterexample to
$\sat{\Acyclic{\mc}}{\leq p}{\!\F\psi}$ the set $\gen{\mc,\sigma}$
is a witness, then we obtain the desired representative
counterexample to $\sat{\mc}{\leq p}{\!\F\psi}$ by taking the
union of these witnesses.

Before giving a formal definition, there is still one technical
issue to resolve: we need to be sure that by removing $\SCCs$ we are not
discarding useful information. Because torrents are built from
rails, we need to make sure that when we discard $\SCCs$, we do
not discard rails that reach $\psi$.

We achieve this by first making states satisfying $\psi$
absorbing. Additionally, we make absorbing states from which it is
not possible to reach $\psi$. Note that this does not affect
counterexamples.

\begin{dfn}\label{dfn:abs mc}
Let $\mc=(S,s_0,\M,L)$ be an $\MC$ and $\psi$ a propositional
formula. We define the $\MC$
$\mc_\psi\triangleq(S,s_0,\M_\psi,L)$, with
$$ \M_\psi(s,t) \triangleq \left\lbrace
                           \begin{array}{ll}
                             1 & \mbox{if } s\not\in \SatF{\psi} \land s=t,\\
                             1 & \mbox{if } s\in \LangSt{\psi} \land s=t,\\
                             \M(s,t) & \mbox{if } s\in \SatF{\psi}-\LangSt{\psi},\\
                             0 & \mbox{otherwise,}\\
                             \end{array}
              \right. $$
\noindent where $\SatF{\psi}\eqdef \lbrace \allowbreak s\in S\mid
\allowbreak \allowbreak \measure{}{_{\mc,s}}{\Reach{\mc,s,
\allowbreak \LangSt{\psi}}} \allowbreak >0\rbrace$ is the set of
states reaching $\psi$ in $\mc$.
\end{dfn}

The following theorem shows the relation between paths, finite
paths, and probabilities of $\mc$, $\mc_\psi$, and
$\Acyclic{\mc_\psi}$. Most importantly, the probability of a rail
$\sigma$ (in $\Acyclic{\mc_\psi}$) is equal to the probability of
its associated torrent (in $\mc$) (item \ref{i:5} below) and the
probability of $\F\psi$ is not affected by reducing $\mc$ to
$\Acyclic{\mc_\psi}$ (item \ref{i:6} below).

Note that a rail $\sigma$ is always a finite path in
$\Acyclic{\mc_\psi}$, but that we can talk about its associated
torrent $\Torrent{\mc_\psi,\sigma}$ in $\mc_\psi$ and about its
associated torrent $\Torrent{\mc,\sigma}$ in $\mc$. The former
exists for technical convenience; it is the latter that we are
ultimately interested in. The following theorem also shows that
for our purposes, viz.~the definition of the generators of the
torrent and the probability of the torrent, there is no difference
(items~\ref{i:3} and~\ref{i:4} below).

\begin{corollary}\label{thm:pathInAbsSameProb} Let $\mc=(S,s_0,\M,L)$ be an $\MC$ and $\psi$ a propositional formula. Then for every
$\sigma\in\FPaths{\mc_\psi}$
\begin{enumerate}
\item $\FReach{\mc_\psi,s_0,\LangSt{\psi}}=\FReach{\mc,s_0,\LangSt{\psi}}$,
\item $\measure{}{_{\mc_\psi}}{\cyl{\sigma}}=\measure{}{_{\mc}}{\cyl{\sigma}}$,
\item\label{i:3} $\gen{\mc_\psi,\sigma}=\gen{\mc,\sigma}$,
\item\label{i:4} $\measure{}{_{\mc_\psi}}{\Torrent{\mc_\psi,\sigma}}=\measure{}{_{\mc}}{\Torrent{\mc,\sigma}}$,
\item\label{i:5} $\measure{}{\!_{\Acyclic{\mc_{\psi}}}\!}{\cyl{\sigma}}=\measure{}{_{\mc}}{\Torrent{\mc,\sigma}}$,
\item\label{i:6} $\sat{\Acyclic{\mc_{\psi}}}{\leq p}{\F \psi}$ if and only if $\sat{\mc}{\leq p}{\F \psi}$, for any $p\in[0,1]$.
\end{enumerate}
\end{corollary}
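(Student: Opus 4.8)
The plan is to establish the six items in the order listed, each feeding the next, with the acyclic reduction (Theorem~\ref{thm:railSameProb}) together with Lemmas~\ref{lem:disjoint} and~\ref{lem:generators} doing the real work for the last two. The unifying observation is that the transition matrices $\M$ and $\M_\psi$ agree on the set $\SatF{\psi}\setminus\LangSt{\psi}$ --- the only states that a computation of $\mc$ visits \emph{in earnest} before it hits $\psi$ --- while on every remaining state $\mc_\psi$ is made absorbing; forcing those states to be absorbing changes neither the probability of reaching $\psi$ nor the measure of a cylinder generated by a finite path that stays inside $\SatF{\psi}\setminus\LangSt{\psi}$ until its last state. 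So the whole statement is really the bookkeeping that turns this remark into precise facts about $\mc$, $\mc_\psi$ and $\Acyclic{\mc_\psi}$.

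For item~1 I would prove both inclusions structurally. If $\sigma\in\FReach{\mc,s_0,\LangSt{\psi}}$, then $\last{\sigma}\in\LangSt{\psi}$, no earlier state is in $\LangSt{\psi}$, and every state of $\sigma$ reaches $\LangSt{\psi}$ through the corresponding suffix of $\sigma$, hence lies in $\SatF{\psi}$; thus every non-final state of $\sigma$ lies in $\SatF{\psi}\setminus\LangSt{\psi}$, where $\M_\psi=\M$, so $\sigma$ is also a finite path of $\mc_\psi$ reaching $\psi$ exactly at its last state. Conversely, a finite path of $\mc_\psi$ reaching $\psi$ only at the end cannot pass strictly through any state of $S\setminus(\SatF{\psi}\setminus\LangSt{\psi})$, since all such states are absorbing in $\mc_\psi$; hence it uses only transitions on which $\M_\psi$ and $\M$ coincide, and it is a finite path of $\mc$ of the required form. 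Item~2 then follows by comparing the two products of transition probabilities defining $\measure{}{_{\mc_\psi}}{\cyl{\sigma}}$ and $\measure{}{_{\mc}}{\cyl{\sigma}}$, which coincide factor by factor at every state where $\M_\psi=\M$; this covers exactly the finite paths used below, namely prefixes of elements of $\FReach{\cdot,\LangSt{\psi}}$ and rails of $\mc_\psi$.

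For items~3 and~4, note that $\gen{\cdot,\sigma}$ and the relation $\HsubS$ depend only on the finite paths of the chain and on its SCC-equivalence $\sim$. The map $\mc\mapsto\mc_\psi$ only turns states of $S\setminus(\SatF{\psi}\setminus\LangSt{\psi})$ into trivial components and leaves the induced subgraph on $\SatF{\psi}\setminus\LangSt{\psi}$ --- hence its SCC structure --- untouched; since a generator of a rail $\sigma$ ends at $\last{\sigma}$ and, by the argument for item~1, stays in $\SatF{\psi}\setminus\LangSt{\psi}$ before that, the relation $\sigma\HsubS\rho$ holds in $\mc$ iff it holds in $\mc_\psi$, so $\gen{\mc_\psi,\sigma}=\gen{\mc,\sigma}$. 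Item~4 then follows by expanding each torrent as the disjoint union of the cones of its generators (Lemma~\ref{lem:generators}) and applying item~2 term by term:
\[
\measure{}{_{\mc_\psi}}{\Torrent{\mc_\psi,\sigma}}
=\sum_{\rho\in\gen{\mc_\psi,\sigma}}\measure{}{_{\mc_\psi}}{\cyl{\rho}}
=\sum_{\rho\in\gen{\mc,\sigma}}\measure{}{_{\mc}}{\cyl{\rho}}
=\measure{}{_{\mc}}{\Torrent{\mc,\sigma}}.
\]
Items~5 and~6 close the loop. Item~5 is obtained by applying Theorem~\ref{thm:railSameProb} to the Markov chain $\mc_\psi$, which gives $\measure{}{_{\Acyclic{\mc_\psi}}}{\cyl{\sigma}}=\measure{}{_{\mc_\psi}}{\Torrent{\mc_\psi,\sigma}}$, and chaining with item~4. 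For item~6 I would write the reachability event in the acyclic chain as the disjoint union $\cyl{\FReach{\Acyclic{\mc_\psi},s_0,\LangSt{\psi}}}$ of cones of rails, use item~5 to replace the measure of each such cone by the measure of its torrent in $\mc$, invoke Lemma~\ref{lem:disjoint} so that these torrents are pairwise disjoint, and finally check that their union equals $\Reach{\mc,s_0,\LangSt{\psi}}$ up to a set of measure zero: every $\psi$-reaching path of $\mc$ is obtained from a unique rail of $\Acyclic{\mc_\psi}$ by recording the input state each time a new SCC is entered. Summing yields $\measure{}{_{\Acyclic{\mc_\psi}}}{\Reach{\Acyclic{\mc_\psi},s_0,\LangSt{\psi}}}=\measure{}{_{\mc}}{\Reach{\mc,s_0,\LangSt{\psi}}}$, hence $\sat{\Acyclic{\mc_{\psi}}}{\leq p}{\F\psi}\Leftrightarrow\sat{\mc}{\leq p}{\F\psi}$ for every $p\in[0,1]$.

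The main obstacle is the bookkeeping concealed in items~3 and~6: one must be sure that absorbing the $\psi$-states and the dead states (which may have lived inside non-trivial SCCs of $\mc$) and then collapsing SCCs together set up a measure-preserving correspondence between the rails of $\Acyclic{\mc_\psi}$ that reach $\psi$ and the torrent-partition of the $\psi$-reaching paths of $\mc$, respecting the freshness and inertia conditions in the definition of $\HsubS$. Everything else is a routine consequence of the definitions together with Theorem~\ref{thm:railSameProb} and Lemmas~\ref{lem:disjoint} and~\ref{lem:generators}.
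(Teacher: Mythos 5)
Your overall route is the natural one for this (unproved, in the paper) corollary: item 1 by the two inclusions, item 2 by comparing the defining products on the transitions where $\M_\psi=\M$ (restricted, as you rightly note, to the finite paths actually needed, since for a path of $\mc_\psi$ that continues past an absorbing state the literal equality can fail), item 4 from item 3 via Lemma~\ref{lem:generators}, item 5 by Theorem~\ref{thm:railSameProb} applied to $\mc_\psi$ chained with item 4, and item 6 by summing over rails using Lemma~\ref{lem:disjoint}. The gap is in your justification of item 3 (and hence of item 4 and of the use of Lemma~\ref{lem:generators} on the $\mc$ side). You argue that passing from $\mc$ to $\mc_\psi$ ``leaves the SCC structure untouched'' on $\SatF{\psi}\setminus\LangSt{\psi}$ and conclude that $\sigma\HsubS\rho$ holds in $\mc$ iff it holds in $\mc_\psi$. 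But the relation $\HsubS$ entering $\Torrent{\mc,\sigma}$ and $\gen{\mc,\sigma}$ is defined through the SCC-equivalence of the \emph{whole} chain $\mc$, not of the induced subgraph on $\SatF{\psi}\setminus\LangSt{\psi}$, and these genuinely differ: a $\psi$-state or a dead state may lie in a non-trivial $\SCC$ of $\mc$ together with live states, and making it absorbing breaks that component. Concretely, take $s_0\to a$, $a\to b$, $b\to a$, $a\to t$, $t\to a$ with $t\models\psi$: in $\mc$ one has $a\sim t$, so the freshness condition for the last state of the rail $\sigma=s_0\,a\,t$ fails for \emph{every} path of $\mc$ (any path reaching $t$ has already visited $a$); with the literal definitions $\gen{\mc,\sigma}=\emptyset$ and $\Torrent{\mc,\sigma}=\emptyset$, while $\gen{\mc_\psi,\sigma}\neq\emptyset$ and $\measure{}{_{\Acyclic{\mc_\psi}}}{\cyl{\sigma}}=1$. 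So the biconditional you assert is false as stated, and items 3--5 would fail under that reading.

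What is needed (and what the corollary and the later torrent-counterexample results implicitly assume) is that, for a rail $\sigma\in\FPaths{\Acyclic{\mc_\psi}}$, the torrent and generators \emph{in $\mc$} are taken with freshness and inertia relative to the $\SCCs$ of $\mc_\psi$ --- the components actually abstracted when building $\Acyclic{\mc_\psi}$ --- rather than relative to the SCC-equivalence of $\mc$. Once this is made explicit, your argument does go through: by your item 1 reasoning a generator stays in $\SatF{\psi}\setminus\LangSt{\psi}$ before its last state, every $\SCC$ of $\mc_\psi$ met there consists only of such states and is identical in both chains, so $\gen{\mc_\psi,\sigma}=\gen{\mc,\sigma}$, Lemma~\ref{lem:generators} applies on both sides (its proof only uses that equivalence), and items 4--6 follow as you describe. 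In short, the missing piece is not an extra lemma but the explicit choice of equivalence with respect to which $\HsubS$ is evaluated on the $\mc$ side --- exactly the point your closing paragraph labels ``bookkeeping'', but which the argument you actually give for item 3 resolves incorrectly.
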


\begin{dfn}[Torrent-Counterexamples]\label{dfn:torrentCounterexample} Let $\mc=(S,s_0,\M,L)$ be an $\MC$, $\psi$ a propositional formula, and $p\in[0,1]$ such that $\nonsat{\mc}{\leq p}{\F \psi}$.
Let $\ce$ be a representative counterexample to
$\sat{\Acyclic{\mc_{\psi}}}{\leq p}{\F \psi}$. We define the set
\begin{align*}
\TorRepCount{\ce}\eqdef \lbrace \gen{\mc,\sigma}\mid\sigma\in
\ce\rbrace.
\end{align*}
\noindent We call the set $\TorRepCount{\ce}$ a
\emph{torrent-counterexample} of $\ce$. Note that this set is a
partition of a representative counterexample to $\sat{\mc}{\leq p}{\F \psi}$.
Additionally, we denote by $\TorCountSet{\mc,p,\psi}$ to the set
of all torrent-counterexamples to $\sat{\mc}{\leq p}{\F \psi}$,
i.e., $\lbrace \TorRepCount{\ce}\mid \ce\in
\CountSet{\Acyclic{\mc},p,\psi}\rbrace$.
\end{dfn}

\begin{thm}\label{thm:torrentCounterexample} Let $\mc=(S,s_0,\M,L)$ be an $\MC$, $\psi$ a propositional formula, and $p\in[0,1]$ such that $\nonsat{\mc}{\leq p}{\F \psi}$.
Take $\ce$ a representative counterexample to
$\sat{\Acyclic{\mc_{\psi}}}{\leq p}{\F \psi}$. Then the set of
finite paths $\biguplus_{W\in\TorRepCount{\ce}} W$ is a
representative counterexample to $\sat{\mc}{\leq p}{\F \psi}$.
\end{thm}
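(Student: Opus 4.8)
The plan is to reduce Theorem~\ref{thm:torrentCounterexample} to the results already established, principally Corollary~\ref{thm:pathInAbsSameProb} and the disjointness Lemma~\ref{lem:disjoint}. Recall that $\TorRepCount{\ce} = \{\gen{\mc,\sigma}\mid \sigma\in\ce\}$, so the set $\biguplus_{W\in\TorRepCount{\ce}} W$ is literally $\bigcup_{\sigma\in\ce}\gen{\mc,\sigma}$. To show this is a representative counterexample to $\sat{\mc}{\leq p}{\F\psi}$, by the definition of representative counterexample I must check two things: (1) every finite path in the union lies in $\FReach{\mc,\LangSt{\psi}}$, i.e.\ it reaches $\psi$ exactly once at its last state; and (2) the probability mass $\measure{}{_{\mc}}{\cyl{\bigcup_{\sigma\in\ce}\gen{\mc,\sigma}}}$ exceeds $p$ (or, if the $\MC$ comes from a maximizing scheduler of an $\MDP$, that $\measure{+}{}{\cdot} > p$, but since $\mc$ is already an $\MC$ the plain measure suffices).

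For step (1), I would argue that each $\rho\in\gen{\mc,\sigma}$ with $\sigma\in\ce\subseteq\FPaths{\Acyclic{\mc_\psi}}$ is indeed in $\FReach{\mc,\LangSt{\psi}}$. Since $\ce$ is a representative counterexample to $\sat{\Acyclic{\mc_\psi}}{\leq p}{\F\psi}$, each $\sigma\in\ce$ ends in a state satisfying $\psi$ and visits $\psi$ only at its last state. By Definition~\ref{dfn:generators}, $\rho\in\gen{\mc,\sigma}$ satisfies $\sigma\HsubS_f\rho$ with $f(|\sigma|-1)=|\rho|-1$, so $\last{\rho}=\last{\sigma}\models\psi$; and because in $\mc_\psi$ all $\psi$-states (and all states not reaching $\psi$) are absorbing, the freshness/inertia conditions force that no earlier state of $\rho$ satisfies $\psi$. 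Here I would invoke Corollary~\ref{thm:pathInAbsSameProb}(\ref{i:3}), $\gen{\mc_\psi,\sigma}=\gen{\mc,\sigma}$, and (1) of that corollary, $\FReach{\mc_\psi,s_0,\LangSt{\psi}}=\FReach{\mc,s_0,\LangSt{\psi}}$, to transfer the reasoning cleanly between $\mc_\psi$ and $\mc$.

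For step (2), the key computation is:
\begin{align*}
\measure{}{_{\mc}}{\cyl{\textstyle\bigcup_{\sigma\in\ce}\gen{\mc,\sigma}}}
&= \measure{}{_{\mc}}{\textstyle\biguplus_{\sigma\in\ce}\Torrent{\mc,\sigma}}
= \sum_{\sigma\in\ce}\measure{}{_{\mc}}{\Torrent{\mc,\sigma}}
= \sum_{\sigma\in\ce}\measure{}{_{\Acyclic{\mc_\psi}}}{\cyl{\sigma}}
= \measure{}{_{\Acyclic{\mc_\psi}}}{\cyl{\ce}} > p.
\end{align*}
The first equality uses Lemma~\ref{lem:generators} ($\Torrent{\mc,\sigma}=\biguplus_{\rho\in\gen{\mc,\sigma}}\cyl{\rho}$), so $\cyl{\gen{\mc,\sigma}}=\Torrent{\mc,\sigma}$; the disjointness of the union over distinct $\sigma\in\ce$ is exactly Lemma~\ref{lem:disjoint}; the third equality is Corollary~\ref{thm:pathInAbsSameProb}(\ref{i:5}); and the final strict inequality holds because $\ce$ is a representative counterexample to $\sat{\Acyclic{\mc_\psi}}{\leq p}{\F\psi}$. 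I also need to note that $\ce$ being prefix-free (as a subset of $\FReach{}$) makes $\measure{}{}{\cyl{\ce}}=\sum_{\sigma\in\ce}\measure{}{}{\cyl{\sigma}}$ legitimate.

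I do not expect a serious obstacle here; the theorem is essentially a bookkeeping corollary of the machinery developed in Sections~\ref{sec:Torrents-Rails} and~\ref{sec:significantdiagnosticcounterexamples}. The one point that needs a little care is making sure the measure-theoretic manipulation $\cyl{\bigcup_\sigma \gen{\mc,\sigma}} = \biguplus_\sigma \Torrent{\mc,\sigma}$ is justified as a genuinely \emph{disjoint} union of measurable sets (so countable additivity applies), which is where Lemma~\ref{lem:disjoint} does the work, together with the observation that $\ce$ is at most countable. A secondary subtlety is the passage between $\mc$, $\mc_\psi$, and $\Acyclic{\mc_\psi}$: one must be disciplined about which automaton each torrent/generator set is taken in, but Corollary~\ref{thm:pathInAbsSameProb} parts~(\ref{i:3})--(\ref{i:5}) were stated precisely to make these identifications, so the argument goes through without real difficulty.
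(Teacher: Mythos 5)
Your proposal is correct and follows essentially the route the paper intends: the theorem is stated there without an explicit proof, precisely because it is the bookkeeping consequence of Lemma~\ref{lem:disjoint}, Lemma~\ref{lem:generators} and Corollary~\ref{thm:pathInAbsSameProb} (items 1, 3 and 5) that you spell out, together with prefix-freeness of $\ce$. Your attention to the $\mc$ versus $\mc_\psi$ versus $\Acyclic{\mc_\psi}$ bookkeeping is exactly the care the implicit argument requires, so nothing is missing.
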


Note that for each $\sigma\in \ce$ we get a witness
$\gen{\mc,\sigma}$. Also note that the number of rails is finite, so
there are also only finitely many witnesses.

Following \cite{hk_2007_counterexamples}, we extend the notions of
\emph{minimum counterexamples} and \emph{strongest evidence}.

\begin{dfn}[Minimum torrent-counterexample] Let $\mc$ be an $\MC$,
$\psi$ a propositional formula and $p\in[0,1]$. We say that
$\ce_t\in\TorCountSet{\mc,p,\psi}$ is a \emph{minimum
torrent-counterexample} if $|\ce_t|\leq|\ce_t^\prime|$, for all
$\ce^\prime_t\in\TorCountSet{\mc,p,\psi}$.
\end{dfn}

\begin{dfn}[Strongest torrent-evidence] Let $\mc$ be an $\MC$,
$\psi$ a propositional formula and $p\in[0,1]$. A \emph{strongest
torrent-evidence} to $\smash{\nonsat{\mc}{\leq p}{\F \psi}}$ is a
torrent $\Torrent{\mc,\sigma}$ such that
$\sigma\in\FPaths{\Acyclic{\mc_\psi}}$ and
$\allowbreak\measure{}{\mc}{\Torrent{\mc,\sigma}}\allowbreak\geq
\measure{}{\mc}{\Torrent{\mc,\rho}}$ for all $\rho \in
\FPaths{\Acyclic{\mc_\psi}}$.
\end{dfn}

Now we define our notion of significant diagnostic counterexamples.
It is the generalization of most indicative counterexample from
\cite{hk_2007_counterexamples} to our setting.

\begin{dfn}[Most indicative torrent-counterexample] Let $\mc$ be an $\MC$, $\psi$ a propositional formula and $p\in[0,1]$.
We say that $\ce_t\in\TorCountSet{\mc,p,\psi}$ is a \emph{most
indicative torrent-counterexample} if it is a minimum
torrent-counterexample and
$\measure{}{}{\bigcup_{T\in\ce_t}\cyl{T}}\allowbreak\geq\allowbreak
\measure{}{}{\bigcup_{T\in\ce^\prime_t}\cyl{T}}$ for all minimum
torrent-counterexamples $\ce^\prime_t\in\TorCountSet{\mc,p,\psi}$.
\end{dfn}


Note that in our setting, as in \cite{hk_2007_counterexamples}, a
minimal torrent-counterexample $\ce$ consists of the $|\ce|$
strongest torrent-evidences.

By Theorem \ref{thm:torrentCounterexample} it is possible to
obtain strongest torrent-evidence and most indicative
torrent-counterexamples of an $\MC$ $\mc$ by obtaining strongest
evidence and most indicative counterexamples of
$\Acyclic{\mc_\psi}$ respectively.

\section{Computing Counterexamples}
\label{sec:coumputing-counterexamples}

In this section we show how to compute most indicative torrent-counterexamples.
We also discuss what information to present to the user: how to present
witnesses and how to deal with overly large strongly connected components.

\subsection{Maximizing Schedulers}
\label{sec:max-schedulers}

The calculation of the maximal probability on a reachability
problem can be performed by solving a linear minimization
problem~\cite{ba_1995_probabilistic,dealfaro_1997_thesis}. This
minimization problem is defined on a system of inequalities that
has a variable $x_i$ for each different state $s_i$ and an
inequality $\sum_j\pi(s_j)\cdot x_j \leq x_i$ for each
distribution $\pi\in\tau(s_i)$.
The maximizing (deterministic memoryless) scheduler $\eta$ can be
easily extracted out of such system of inequalities after
obtaining the solution. If $p_0, \dots, p_n$ are the values that
minimize $\sum_ix_i$ in the previous system, then $\eta$ is such
that, for all $s_i$, $\eta(s_i)=\pi$ whenever $\sum_j\pi(s_j)\cdot
p_j = p_i$. In the following we denote $\Prob{}{s_i}{\F
\psi}\eqdef x_i$.
%




\subsection{Computing most indicative torrent-counterexamples}
\label{sec:computing}

We divide the computation of most indicative
torrent-counterexamples to $\sat{\mdp}{\leq p}{\F \psi}$ in three
stages: \emph{pre-processing}, \emph{$\SCC$ analysis}, and
\emph{searching}.

\paragraph{Pre-processing stage.}

We first modify the original $\MC$ $\mc$ by making all states in
$\LangSt{\psi} \cup S \setminus \SatF{\psi}$ absorbing.  In this
way we obtain the $\MC$ $\mc_{\psi}$ from Definition~\ref{dfn:abs
mc}.  Note that we do not have to spend additional computational
resources to compute this set, since $\SatF{\psi} = \lbrace s\in
S\mid \Prob{}{s}{\F\psi} > 0 \rbrace$ and hence all required data
is already available from the $\LTL$ model checking phase.

\paragraph{$\SCC$ analysis stage.}

We remove all $\SCCs$ $\scc$ of $\mc_\psi$ keeping just
\emph{input states} of $\scc$, getting the acyclic $\MC$
$\Acyclic{\mc_\psi}$ according to Definition~\ref{dfn:acyclicMC}.

To compute this, we first need to find the $\SCCs$ of $\mc_\psi$.
There exists several well known algorithms to achieve this:
Kosaraju's, Tarjan's, Gabow's algorithms (among others). We also
have to compute the reachability probability from input states to
output states of every $\SCC$. This can be done by using steady-state analysis
techniques~\cite{cassandras_1993_steadystateanalysis}.



\paragraph{Searching stage.}

To find most indicative torrent-counterexamples in $\mc$, we find
most indicative counterexamples in $\Acyclic{\mc_\psi}$.  For this
we use the same approach as \cite{hk_2007_counterexamples},
turning the MC into a weighted digraph to replace the problem of
finding the finite path with highest probability by a shortest
path problem.  The nodes of the digraph are the states of the
$\MC$ and there is an edge between $s$ and $t$ if $\M(s,t) > 0$.
The weight of such an edge is $-\log(\M(s,t))$.

Finding the most indicative counterexample in $\Acyclic{\mc_\psi}$
is now reduced to finding $k$ shortest paths. As explained in
\cite{hk_2007_counterexamples}, our algorithm has to compute $k$
on the fly. Eppstein's algorithm \cite{epps_98_k-shortest-paths}
produces the $k$ shortest paths in general in $O(m + n \log n +
k)$, where $m$ is the number of nodes and $n$ the number of edges.
In our case, since $\Acyclic{\mc_\psi}$ is acyclic, the complexity
decreases to $O(m + k)$.

\subsection{Debugging issues}\label{sec:debugging}

\paragraph{Representative finite paths.}

What we have computed so far is a most indicative counterexample
to $\sat{\Acyclic{\mc_\psi}}{\leq p}{\F \psi}$. This is a finite
set of rails, i.e., a finite set of paths in $\Acyclic{\mc_\psi}$.
Each of these paths $\sigma$ represents a witness
$\gen{\mc,\sigma}$. Note that this witness itself has usually
infinitely many elements.


In practice, one has to display a witness to the user. The obvious
way would be to show the user the rail $\sigma$. This, however,
may be confusing to the user as $\sigma$ is not a finite path of
the original Markov Decision Process. Instead of presenting the
user with $\sigma$, we therefore show the user the finite path of
$\gen{\mc,\sigma}$ with highest probability.

\begin{dfn} Let $\mc$ be an $\MC$, and $\sigma\in\FPaths{\Acyclic{\mc_\psi}}$ a rail of $\mc$.
We define the \emph{representant of} $\Torrent{\mc,\sigma}$ as
\begin{align*}
\repTorrent{\mc,\sigma}=\repTorrent{\biguplus_{\rho\in\gen{\mc,\sigma}}\cyl{\rho}}\eqdef
\arg \max_{\rho\in\gen{\mc,\sigma}}\measure{}{}{\cyl{\rho}}
\end{align*}
\end{dfn}

Note that given $\repTorrent{\mc,\sigma}$ one can easily recover
$\sigma$. Therefore, no information is lost by presenting torrents
as one of its generators instead of as a rail.

\paragraph{Expanding $\SCC$.}

\begin{wrapfigure}{r}{3cm}
\vspace{-0.3cm}
\includegraphics[width=3cm]{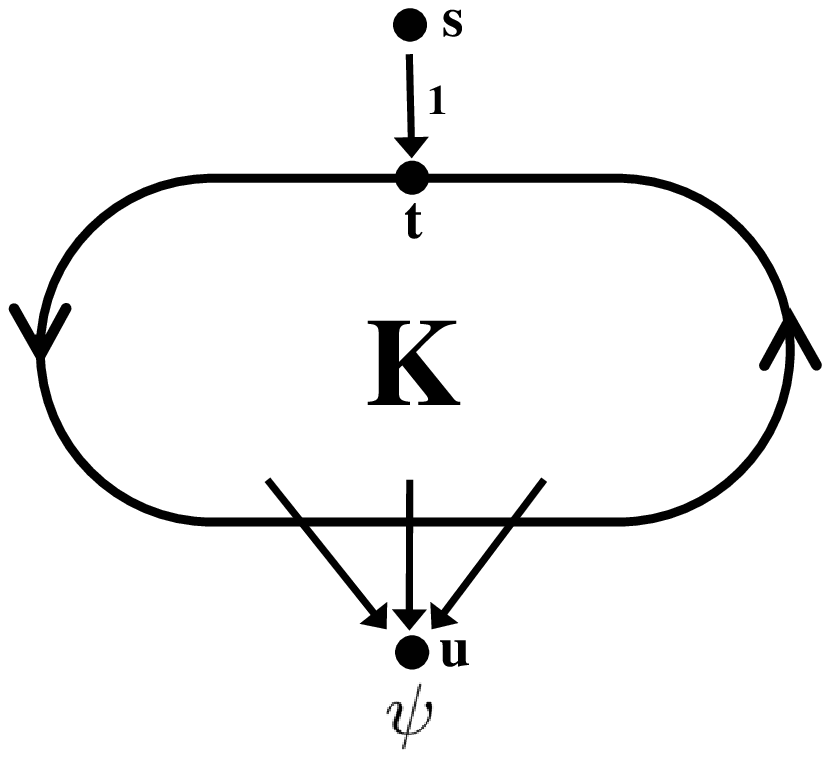}
\caption{} \label{fig:bigSCC} \vspace{-0.35cm}
\end{wrapfigure}
Note that in the Preprocessing stage, we reduced the size of many
$\SCCs$ of the system (and likely even completely removed some) by
making states in $\LangSt{\psi} \cup S \setminus \SatF{\psi}$
absorbing. However, It is possible that the system still contains
some very large strongly connected components. In that case, a
single witness could have a very large probability mass and one
could argue that the information presented to the user is not
detailed enough. For instance, consider the Markov Chain of
Figure~\ref{fig:bigSCC} in which there is a single large $\SCC$
with input state $t$ and output state $u$.

The most indicative torrent-counterexample to the property
$\sat{\mc}{\leq 0.9}{\F \psi}$ is simply $\{\gen{s t u}\}$, i.e.,
a single witness with probability mass 1 associated to the rail $s
t u$. Although this may seem uninformative, we argue that it is
more informative than listing several paths of the form $s t
\cdots u$ with probability summing up to, say, $0.91$. Our single
witness counterexample suggests that the outgoing transition to a state
not reaching $\psi$ was simply forgotten in the design; the listing of paths
still allows the possibility that one of the probabilities in the
whole system is simply wrong.

Nevertheless, if the user needs more information to tackle bugs
inside $\SCCs$, note that there is more
information available at this point. In particular, for every
strongly connected component $\scc$, every input state $s$ of
$\scc$ (even for every state in $\scc$), and every output state
$t$ of $\scc$, the probability of reaching $t$ from $s$ is already
available from the computation of $\Acyclic{\mc_\psi}$ during the
$\SCC$ analysis stage of Section~\ref{sec:computing}.

\section{Related Work}
\label{sec:conclusions}


Recently, some work has been done on counterexample generation
techniques for different variants of probabilistic models
(Discrete Markov Chains  and Continue Markov Chains )
\cite{ahl_2005_counterexamples,al_2006_search,hk_2007_counterexamples,hk_2007_counterexamplesDTMC}.
In our terminology, these works consider witnesses consisting of a
\emph{single} finite path.
We have already discussed in the Introduction that the single path
approach does not meet the properties of accuracy, originality,
significance, and finiteness.

Instead, our witness/torrent approach provides a high level of
abstraction of a counterexample.
By grouping together finite paths that behave the same outside
strongly connected components in a single witness, we can achieve
these properties to a higher extent.
Behaving the same outside strongly connected components is a
reasonable way of formalizing the concept of providing
\emph{similar}
debugging information.
This grouping also makes witnesses significantly different from
each other: each witness comes from a different rail and each rail
provides a different way to reach the undesired property. Then
each witness provides \emph{original} information. Of course, our
witnesses are more \emph{significant} than single finite paths,
because they are sets of finite paths.
This also gives us more \emph{accuracy} than the approach with
single finite paths, as a collection of finite paths behaving the
same and reaching an undesired condition with high probability is
more likely to show how the system reaches this condition than
just a single path.  Finally, because there is a finite number of
rails, there is also a \emph{finite} number of witnesses.

\rev{Another key difference of our work with respect to previous ones is that our
technique allows us to generate counterexamples for probabilistic
systems \emph{with} nondeterminism. However, an independent and concurrent study of counterexample generation for $\MDPs$ was carried out by Aljazzar and Leue~\cite{al_2007_counterexamplesMDP}. There, the authors consider generating counterexamples for a fragment of $\pCTL$, namely upper bounded formulas without nested
temporal operators. The authors present three methods for generating counterexamples and study conditions under which these methods are suitable. 

More recently, Schmalz et al. also investigated quantitative counterexample generation for LTL formulas \cite{Schmalz:09:CONCUR}.  In qualitative probabilistic model checking, a counterexample is presented as a pair $(\alpha, \gamma)$, where $\alpha$ and $\gamma$ are finite words such that all paths that extend $\alpha$ and have infinitely many occurrences of $\gamma$ violate the property under consideration. In quantitative probabilistic model checking, a counterexample is presented as a pair $(W,R)$, where $W$ is a set of such finite words $\alpha$ and $R$ is a set of such finite words $\gamma$. 

Similar $\SCC$ reduction techniques to the one presented in this paper have been studied for different purposes. In \cite{Guen:2002:SM},  the authors focus on the problem of software testing. They use Markov chains to model software behaviour and $\SCC$ analysis to decompose the state space of large Markov chains. More recently, {\'A}brah{\'a}m et al. presented a model checker for Markov chains based on the detection and abstraction of strongly connected components~\cite{Abraham:10:QEST}. Their algorithm has the advantage of offering abstract counterexamples, which can be interactively refined by the user.

Finally, the problem of presenting counterexamples as single paths has also been observed by Han, Katoen, and Damman~\cite{Damman:08:QEST,Han:09:TSE}. There, the authors propose to use regular expressions to group paths together. Thus, in the same way that we group together paths behaving the same outside $\SCC$,  they group together paths associated to the same regular expression.

For a more extensive survey on quantitative counterexample generation for (both discrete and continuous time) Markov chains we refer the reader to chapters $3$, $4$, and $5$ of \cite{Han:2009:THESIS}.}

%

\newpage
\thispagestyle{empty}

\chapter{Interactive Systems and Equivalences for Security}
\label{ch.pif}
\begin{quote}
\textit{\rev{In this overview chapter we briefly discuss extensions to the frameworks presented in Chapters 3 and 4\footnote{For more information about the topics discussed in this chapter we refer the reader to \cite{Alvim:10:CONCUR,Alvim:10:TRa,Alvim:10:LICS,Alvim:10:IFIP-TCS}.}}. First, we consider the case in which secrets and
observables interact (in contrast with the situation in Chapter
3), and show that it is still possible to define an
information-theoretic notion of leakage, provided that we consider
a more complex notion of channel, known in literature as
\emph{channel with memory and feedback}. Second, we extend the
systems proposed in Chapter 4 by allowing nondeterminism also
internally to the components. Correspondingly, we define a richer
notion of admissible scheduler suitable and we use it for defining
notion of process equivalences relating to nondeterminism in a
more flexible way than the standard ones in the literature. In
particular, we use these equivalences  for defining notions of
anonymity robust with respect to implementation refinement.}
\end{quote}



\section{Interactive Information Flow}\label{sec:interactive}

In this section we discuss the applicability of the
information-theoretic approach to interactive systems. These
systems were already considered in \cite{Desharnais:02:LICS}. In
that paper the authors proposed to define the matrix elements
$\PP(b\,|\, a)$ as the measure of the traces with (secret,
observable)-projection $(a,b)$, divided by the measure of the
trace with secret projection $a$. This  follows the definition of
conditional probability in terms of joint and marginal
probability.  However, this approach does not  lead to an
information-theoretic channel. In fact, (by definition) a channel
should be invariant with respect to the input distribution and
such construction is not (as shown by Example ~\ref{exa:isFails}).

In \cite{Alvim:10:CONCUR} and more recently in
\cite{Alvim:10:TRa}, we consider an extension of the theory of
channels which  makes the information-theoretic approach
applicable also the case of interactive systems. It turns out that
a richer notion of channel, known in Information Theory as
\emph{channels with memory and feedback}, serves our purposes. The
dependence of inputs on previous outputs corresponds  to feedback,
and the dependence of outputs on previous inputs and outputs
corresponds to memory.

Let us explain more in detail the difference with the classical
approach. In non-interactive systems, since the secrets always
precede the observables, it is possible to group the sequence of
secrets (and observables) in a single secret (respectively. observable)
string. If we consider only one activation of the system, or if
each use of the system is independent from the other, then we can
model it as a discrete classical channel (memoryless, and without
feedback) from a single input string to a single output string.
When we have interactive systems, however, inputs and outputs may
interleave and influence each other. Considering some sort of
feedback in the channel is a way to capture this richer behavior.
Secrets have a causal influence on observables via the channel,
and, in the presence of interactivity, observables have a causal
influence on secrets via the feedback. This alternating mutual
influence between inputs and outputs can be modeled by repeated
uses of the channels. However, each time the channel is used it
represents a different state of the computation, and the
conditional probabilities of observables on secrets can depend on
this state. The addition of memory to the model allows expressing
the dependency of the channel matrix on such a state (which, as we
will see, can also be represented by the history of inputs and
outputs).

Recent results in Information Theory~\cite{Tatikonda:09:TIT} have
shown that, in channels with memory and feedback, the transmission
rate does not correspond to the maximum mutual information
(capacity), but rather to the maximum of the so-called
\emph{directed information}. Intuitively, this is due to the fact
that mutual information expresses the correlation between the
input and the output, and therefore it includes feedback. However, the
feedback, i.e the way the output influences the next input, should
not be considered part of the information transmitted. Directed
information is essentially mutual information \emph{minus} the
dependence of the next input on previous output. We propose to
adopt directed information and the corresponding notion of
directed capacity to represent leakage.

Our extension is a generalization of the classical model, in the
sense that it can represent both interactive and non-interactive
systems. One important feature of the classical approach is that
the choice of secrets is seen as external to the system, i.e.
determined by the environment. This implies that the probability
distribution on the secrets (input distribution) constitutes the a
priori knowledge and does not count as leakage. In order to
encompass the classical approach, in our extended model we should
preserve this principle, and the most natural way is to consider
the secret choices, at every stage of the computation, as
external. Their probability distributions, which are now in
general conditional probability distributions (depending on the
history of secrets and observables) should be considered as part
of the external knowledge, and should not be counted as leakage.

A second contribution of \cite{Alvim:10:CONCUR} and
\cite{Alvim:10:TRa} is the proof that the channel capacity is a
continuous function of the Kantorovich metric on interactive
systems. This was pointed out also in \cite{Desharnais:02:LICS},
however their construction does not work in our case due to the
fact (as far as we understand) it assumes that the probability of
a secret action (in any point of the computation) is different
from 0. This assumption is not guaranteed in our case and
therefore we had to come out with a different reasoning. The fact
that our proof does not need this assumption shows that the
intuition of \cite{Desharnais:02:LICS} concerning the continuity
of capacity is valid in general.

\subsection{Applications}
Interactive systems can be found in a variety of disparate areas
such as game theory, auction protocols, and zero-knowledge proofs.
We now present two examples of interactive systems.

\begin{itemize}
\item In the area of auction protocols, consider the cocaine
auction protocol \cite{Stajano:99:IH}. The auction is organized as
a succession of rounds of bidding. Round $i$ starts with the
seller announcing the bid price $b_i$ for that round. Buyers have
$t$ seconds to make an offer (i.e. to say \emph{yes}, meaning ``I
am willing to buy at the current bid price $b_i$''). As soon as
one buyer says \emph{yes}, he becomes the winner $w_i$ of that
round and a new round begins. If nobody says anything for $t$
seconds, round $i$ is concluded by timeout and the auction is won
by the winner $w_{i-1}$ of the previous round. The identities of
the buyers in each round constitute the input of the channel,
whereas the bid prices constitute the output of the channel. Note
that inputs and outputs alternate so the system is interactive. It
is also easy to see that inputs depend on past outputs (feedback):
the identity of the winner of each round depends on the previous
bid prices. Furthermore, outputs depend on the previous inputs
(memory): (in some scenarios) the bid price of round $i$ may
depend on the identity of previous winners. For more details on
the modeling of this protocol using channels with memory and
feedback see \cite{Alvim:10:TRa}.

\item In the area of game theory, consider the classic prisoner's
dilemma (the present formulation is due to Albert W.
Tucker~\cite{Poundstone:92:Doubleday}, but it was originally
devised by  Merrill Flood and Melvin Dresher in 1950). Two
suspects are arrested by the police. The police have insufficient
evidence for a conviction, and, having separated both prisoners,
visit each of them to offer the same deal. If one testifies
(defects from the other) for the prosecution against the other and
the other remains silent (cooperates with the other), the betrayer
goes free and the silent accomplice receives the full 10-year
sentence. If both remain silent, both prisoners are sentenced to
only six months in jail for a minor charge. If each betrays the
other, each receives a five-year sentence. Each prisoner must
choose to betray the other or to remain silent. Each one is
assured that the other would not know about the betrayal before
the end of the investigation. In the iterated prisoner's dilemma,
the game is played repeatedly. Thus each player has an opportunity
to punish the other player for previous non-cooperative play. In
this case the strategy (cooperate or defect) of each player is the
input of the channel and the sentence is the output. Once again,
it is easy to see that the system is interactive:  inputs and
outputs alternate. Furthermore, inputs depend on previous outputs
(the strategy depend on the past sentences) and outputs depend on
previous inputs (the sentence of the suspects depend on their
declarations - cooperate or defect).
\end{itemize}

\section{Nondeterminism and Information Flow}\label{sec:nondeterminism}
The \emph{noise} of channel matrices, i.e. the similarity between the
rows of the channel matrix, helps preventing the inference of the \comment{Check this first paragraph}
secret from the observables. In practice noise is created by using
randomization, see for instance the DCNet \cite{Chaum:88:JC} and
the Crowds~\cite{Reiter:98:TISS} protocols.

In the literature about the foundations  of Computer Security, however, the quantitative aspects are often abstracted away, and probabilistic  behavior is replaced by nondeterministic behavior.
Correspondingly, there have been various approaches in which information-hiding properties are expressed in terms of equivalences based on nondeterminism, especially in a concurrent setting. For instance,
\cite{Schneider:96:ESORICS} defines \emph{anonymity} as follows\footnote{The actual definition of \cite{Schneider:96:ESORICS} is more complicated, but the spirit is the same.}: A protocol $S$ is anonymous if, for every pair of culprits $a$ and $b$, $S[^a/ _x]$ and $S[^b/ _x]$ produce the same observable traces.
A  similar definition is given in \cite{Abadi:99:IC} for \emph{secrecy}, with the difference that
$S[^a / _x]$ and $S[^b / _x]$ are required to be bisimilar. In \cite{Delaune:09:JCS}, an electoral system $S$ preserves the \emph{confidentiality of the vote} if
for any voters $v$ and $w$, the observable behavior of $S$ is the same if we swap the votes of $v$ and $w$.
Namely, $S[^a / _v\mid ^b / _w] \sim S[^b / _v \mid ^a / _w]$, where $\sim$ represents bisimilarity.

These proposals are based on the implicit assumption that
\emph{all the nondeterministic executions present in the  specification of $S$ will always be possible under every implementation of $S$}.
Or at least, that the adversary will believe so.
In concurrency, however, as argued in \cite{Chatzikokolakis:09:FOSSACS}, nondeterminism has a rather different meaning:
if a specification $S$ contains some nondeterministic alternatives, typically it is because we want to abstract from specific implementations, such as the scheduling policy.
A specification is considered correct, with respect to some property,  if every alternative satisfies the property. Correspondingly, an implementation is considered correct if all executions are among those possible in the specification, i.e. if the implementation is a refinement of the specification.  There is no expectation that the  implementation will actually make possible all the
alternatives indicated by the specification.

We argue that the use of nondeterminism in concurrency corresponds
to a \emph{demonic} view: the scheduler\index{scheduler!demonic},
i.e. the entity that will decide which alternative to select, may
try to choose the worst alternative. Hence we need to make sure
that ``all alternatives are good'', i.e. satisfy the intended
property. In  the above mentioned approaches to the formalization
of security properties, on the contrary, the interpretation of
nondeterminism is \emph{angelic}: the
scheduler\index{scheduler!angelic} is expected to actually help
the protocol to confuse the adversary and thus protect the secret
information.

There is another issue, orthogonal to the angelic/demonic dichotomy, but relevant for the achievement of security properties:   the scheduler \emph{should not be able to make its choices dependent on the secret}, or else nearly every protocol would be insecure, i.e. the scheduler would always be able to leak the secret to an external observer (for instance by producing different interleavings of the observables, depending on the secret).
This remark has been made several times already, and several approaches have  been proposed to cope with the problem of the ``almighty'' scheduler (aka omniscient, clairvoyant, etc.), see for example  \cite{Canetti:06:WODES,Giro:07:FORMATS, Chatzikokolakis:09:FOSSACS,Andres:10:Admissible,Chatzikokolakis:10:IC}.

The risk of a naive use of nondeterminism to specify a security property, is not only that it may rely on an implicit assumption that the scheduler behaves angelically, but also that it is clairvoyant, i.e. that it
peeks at the secrets (that it is not supposed to be able to see) to achieve its angelic strategy.


Consider the following system, in a CCS-like syntax:
\[S \stackrel{\rm def}{=} (c,\mathit{out})  (A \ \parallel\ \mathit{Corr} \ \parallel\ H_1 \ \parallel \ H_2  ),\]

\noindent with $A \stackrel{\rm def}{=} \overline{c}\langle
\mathit{sec}\rangle$ ,$\mathit{Corr} \stackrel{\rm def}{=}
c(s).\overline{out}\langle s \rangle$, $H_1 \stackrel{\rm def}{=}
c(s).\overline{out}\langle a \rangle$, $H_2 \stackrel{\rm def}{=}
c(s).\overline{out}\langle b \rangle$ and where $\parallel$ is the
parallel operator, $\overline{c}\langle sec\rangle$ is a process
that sends $\mathit{sec}$ on channel $c$, $c(s).P$ is a process
that receives $s$ on channel $c$ and then continues as $P$, and $
(c,\mathit{out}) $ is the restriction operator, enforcing
synchronization  on $c$ and $\mathit{out}$. In this example,
$\mathit{sec}$ represents a secret information.

It is easy to see that we have $S\left[ ^{a}/_{sec}\right]\sim S\left[ ^{b}/_{sec}\right]$. Note that, in order to  simulate the third branch in $S\left[ ^{a}/_{sec}\right]$, the process $S\left[ ^{b}/_{sec}\right]$ needs to select its first branch. Viceversa, in order  to  simulate the third branch in $S\left[ ^{b}/_{sec}\right]$, the process $S\left[ ^{a}/_{sec}\right]$ needs to select its second branch. This means that, in order to achieve bisimulation,  the scheduler needs to know the secret, and change its choice accordingly.

This example shows a system that intuitively is not secure, because the third component, $\mathit{Corr}$, reveals whatever secret it receives. However, according to the equivalence-based  notions of security discussed above, \emph{it is secure}. But it is secure thanks to a scheduler that angelically helps the system to protect the secret, and it does so by making its choices dependent on the secret! In our opinion  these assumptions on the scheduler are excessively strong.


In a recent work \cite{Alvim:10:IFIP-TCS} we address the above
issue by defining a framework in which it is possible to combine
both angelic and demonic nondeterminism in a setting in which also
probabilistic behavior may be present, and in a context in which
the scheduler is restricted (i.e. not clairvoyant). We propose
safe versions of typical equivalence relations (traces and
bisimulation), and we show how to use them to characterize
information-hiding properties.

\newpage
\thispagestyle{empty}

\chapter{Conclusion}
\label{chap:conclusion}
\rev{
\begin{quote}
\textit{In this chapter we summarize the main contributions of this thesis and discuss further directions.}
\end{quote} 

\section{Contributions}

%
%

The goal of this thesis is to develop a formal framework for specifying, analyzing and verifying anonymity protocols and, more in general, information hiding protocols. 

As discussed in the Introduction, conditional probabilities are a key concept in assessing the degree of information protection. In Chapter 2, we have extended the probabilistic temporal logic \pCTL to \cpCTL, in which it is
possible to express conditional probabilities. We have also proved that optimal scheduling decisions can always be
reached by a deterministic and semi history-independent scheduler. This fundamental result, allowed us to define an algorithm to verify \cpCTL formulas. Our algorithm first reduces the \MDP to an acyclic \MDP and then
computes optimal conditional probabilities in the acyclic \MDP. In addition, we have defined a notion of counterexample for conditional formulas and sketched an algorithm for counterexample generation. 

We then turned our attention to more practical grounds. In Chapter 3, we have addressed the problem of
computing the information leakage of a system in an efficient way. We have proposed two methods: one based on
reachability techniques and the other based on quantitative counterexample generation. In addition, we have shown that when the automaton is interactive it is not possible to define its channel in the standard way.  An intriguing problem is how to extend the notion of channel so to capture the dynamic nature of interaction. In Chapter 6 we have briefly discussed how to solve this problem by using more complex information theoretic channels, namely channels with history and feedback.

In Chapter 4, we have attacked a well known problem of concurrent information-hiding protocols, namely full-information
scheduling. To overcome this problem, we have defined a class of partial-information schedulers which can only base their decisions on the information that they have available. In particular they cannot base their decisions on the internal behavior of the components. We have used admissible schedulers to resolve nondeterminism in a realistic way, and to revise some anonymity definitions from the literature. In addition, we have presented a technique to prove the various definitions of anonymity proposed in the chapter. This is particularly interesting considering that many problems related to restricted schedulers have been shown to be undecidable. We have illustrated  the applicability of our proof technique by proving that the well-known DC protocol is anonymous when considering admissible schedulers, in contrast to the situation when considering full-information schedulers.

The last major contribution of this thesis is a novel technique for representing and computing
counterexamples for nondeterministic and probabilistic systems. In Chapter 5, we have shown how to carefully 
partition a counterexample in sets of paths. These sets are intended to provide information related to the violation of the property under consideration, so we call them \emph{witnesses}. Five properties that witnesses should satisfy (in order to provide significant debugging information) are identified in this chapter. The key contribution of this chapter is a technique based on strongly connected component analysis that makes it possible to partition counterexamples into witnesses satisfying the desired properties.

\section{Further directions}

There are several ways of extending the work presented in this thesis. 

As we have shown in Chapter 2, the most important issue when computing conditional probabilities is that optimizing schedulers are not determined by the local structure of the system. As a consequence, it is not possible to reduce the problem of verifying \cpCTL to a linear optimization problem (as it is the case with \pCTL). A natural question arising from this observation, is whether the problem of model checking conditional probabilities is inherently exponential or not. We believe that it is; however we are of the idea that it is also possible to find suitable restrictions (either to the formulas or to the systems under consideration) that would make it possible to model check conditional probabilities in polynomial time.

In a more practical matter, counterexample generation for probabilistic model checking is nowadays a very hot topic for which several applications in the most diverse areas have been identified. During the last few years, many techniques have been proposed for different flavours of logics and models. However, to the best of our knowledge, no practical tool to automatically generate quantitative counterexamples has been implemented. We believe that such a practical tool could be a significant contribution to the field. More concretely, we believe that a tool implementing the regular-expression and k-shortest path techniques introduced by Han et al. in combination with the SCC analysis techniques presented in this thesis would be of great value.

In Chapter 2, we have made a connection between quantitative counterexample generation and information leakage computation. Thanks to this connection, such a tool would also allow us to compute / approximate leakage of large scale protocols. Furthermore, it would make it possible to investigate in more depth how the debugging information provided by the tool can be used to identify flaws of the protocol causing high leakage.

Finally, as for most definitions of partial-information schedulers from the literature, our notions of admissible schedulers may raise undecidability issues. Thus, it would be interesting to investigate whether the notions of anonymity proposed in Chapter 4 are actually verifiable (remember that the proof technique we proposed is sufficient but not necessary). Another interesting direction for future work is to adapt well known isomorphism-checking algorithms and tolls to our setting in order to automatically verify some anonymity properties.

}
%


\newpage

\thispagestyle{empty}
\bibliographystyle{alpha}
\cleardoublepage
\addcontentsline{toc}{chapter}{Bibliography}
\bibliography{cpCTL,thesis2,bib_lics,bib_ihs,bib_adm_sch}
\pagebreak

\cleardoublepage
\notag{\chapter*{Samenvatting}}
\addcontentsline{toc}{chapter}{Samenvatting (Dutch Summary)}

Terwijl we het digitale tijdperk ingaan zijn er immer groeiende zorgen
over de hoeveelheid digitale data die over ons verzameld wordt. 
Websites houden vaak het browse-gedrag van mensen bij,
ziektenkostenverzekeraars verzamelen medische gegegevens en smartphones
en navigatiesystemen versturen informatie die het mogelijk maakt de
fysieke locatie van hun gebruikers te bepalen. Hierdoor staan anonimiteit,
en privacy in het algemeen, steeds meer op het spel. Anonimiteitsprotocollen
proberen iets tegen deze tendens te doen door anonieme communicatie over
het Internet mogelijk te maken. Om de correctheid van dergelijke protocollen,
die vaak extreem complex zijn, te garanderen, is een degelijk framework
vereist waarin anonimiteitseigenschappen kunnen worden uitgedrukt en
geanalyseerd. Formele methoden voorzien in een verzameling wiskundige
technieken die het mogelijk maken anonimiteitseigenschappen rigoreus
te specificeren en te verifi\"eren.

Dit proefschrift gaat over de grondslagen van formele methoden voor 
toepassingen in computerbeveiliging en in het bijzonder anonimiteit.
Concreet, we ontwikkelen frameworks om anonimiteitseigenschappen te
specificeren en algoritmen om ze te verifi\"eren. Omdat in de praktijk
anonimiteitsprotocollen altijd {\em wat} informatie lekken, leggen we de
focus op quantitatieve eigenschappen die de {\em mate} van gelekte
informatie van een protocol beschrijven.

We beginnen het onderzoek naar anonimiteit vanuit de basis, namelijk 
voorwaardelijke kansen. Dit zijn de sleutelingredi\"enten van de meeste
quantitatieve anonimiteitsprotocollen. In Hoofdstuk 2 prenteren we
cpCTL, de eerste temporele logica waarin voorwaardelijke kansen kunnen
worden uitgedrukt. We presenteren ook een algoritme om cpCTL formules
te verifi\"eren met een modelchecker. Samen met een modelchecker maakt
deze logica het mogelijk om quantitatieve anomimiteitseigenschappen van
complexe systemen waarin zowel probabilistisch als nondeterministisch
gedrag voorkomt te specificeren en verifi\"eren.

Vervolgens gaan we meer de praktijk in: de constructie van algoritmen
die de mate van het lekken van informatie meten. Om preciezer te zijn,
Hoofdstuk 3 beschrijft polynomiale algoritmen om de (informatie-theoretische)
information leakage te quantificeren voor verscheidene soorten 
volledig probabilistische protocllen (d.w.z., protocollen zonder
nondeterministisch gedrag). The technieken uit dit hoofdstuk zijn
de eerste die het mogelijk maken de informatie leakage voor
interactieve protocollen te berekenen.

\thispagestyle{empty}

In Hoofdstuk 4 behandelen we een bekend probleem in gedistribueerde
anonimiteitsprotocollen, namelijk schedulers met volledige informatie.
Om dit probleem op te lossen stellen we een alternatieve definitie
van scheduler voor, samen met nieuwe definities voor anonomiteit
(vari\"erend met de capaciteiten van de aanvaller) en herzien de
bekende definitie van sterke anonimiteit uit de literatuur. Bovendien
laten we een techniek zien waarmee gecontroleerd kan worden of een
gedistribueerd protocol aan enkele van deze definities voldoet.

\thispagestyle{empty}

In Hoofdstuk 5 laten we op tegenvoorbeelden gebaseerde technieken zien
die het mogelijk maken complexe systemen te debuggen. Dit maakt het
mogelijk fouten in security protocollen op te sporen.
Tenslotte, in Hoofdstuk 6, beschrijven we kort uitbreidingen van de
frameworks en technieken uit Hoofdstukken 3 en 4.

\vfill 

\cleardoublepage
\addcontentsline{toc}{chapter}{Index}
\printindex
\thispagestyle{empty}
\vfill

\notag{\chapter*{Curriculum Vitae}}
\addcontentsline{toc}{chapter}{Curriculum Vitae}

\vspace{2cm}

\begin{description}
\item [1980] Born on 2 July, R\'io Cuarto, Argentina.
\item [1994--1998] Private Institute Galileo Galilei (High School), R\'io Cuarto, C\'{o}rdoba, Argentina.
\item[1999--2006] Computer Science Licentiate (equivalent to MSc.), Faculty of Mathematics, Astronomy and Physics (Fa.M.A.F.). National University of C\'{o}rdoba (UNC), Argentina.
\item[2006--2010] PhD student in the Digital Security Group, Radboud University Nijmegen, The Netherlands.
\item[2010--] Postdoctoral researcher in the Com\`ete Team, Laboratory of Informatics of the \'Ecole Polytechnique (LIX), France.
\end{description}

%
%
%
%
%
%
%
%

\newcommand*{\promitem}[4]{\noindent \textbf{#1}. \emph{#2}. #3.
#4\medskip}

\clearpage \pagestyle{empty}

\setlength{\columnsep}{2em}
\begin{multicols}{2}
        [\subsection*{Titles in the IPA Dissertation Series since 2005}]

\promitem{E. \'{A}brah\'{a}m}
         {An Assertional Proof System for Multithreaded Java -Theory and Tool Support- }
         {Faculty of Mathematics and Natural Sciences, UL}
         {2005-01}

\promitem{R. Ruimerman}
         {Modeling and Remodeling in Bone Tissue}
         {Faculty of Biomedical Engineering, TU/e}
         {2005-02}

\promitem{C.N. Chong}
         {Experiments in Rights Control - Expression and Enforcement}
         {Faculty of Electrical Engineering, Mathematics \& Computer Science, UT}
         {2005-03}

\promitem{H. Gao}
         {Design and Verification of Lock-free Parallel Algorithms}
         {Faculty of Mathematics and Computing Sciences, RUG}
         {2005-04}

\promitem{H.M.A. van Beek}
         {Specification and Analysis of Internet Applications}
         {Faculty of Mathematics and Computer Science, TU/e}
         {2005-05}

\promitem{M.T. Ionita}
         {Scenario-Based System Architecting - A Systematic Approach to Developing Future-Proof System Architectures}
         {Faculty of Mathematics and Computing Sciences, TU/e}
         {2005-06}

\promitem{G. Lenzini}
         {Integration of Analysis Techniques in Security and Fault-Tolerance}
         {Faculty of Electrical Engineering, Mathematics \& Computer Science, UT}
         {2005-07}

\promitem{I. Kurtev}
         {Adaptability of Model Transformations}
         {Faculty of Electrical Engineering, Mathematics \& Computer Science, UT}
         {2005-08}

\promitem{T. Wolle}
         {Computational Aspects of Treewidth - Lower Bounds and Network Reliability}
         {Faculty of Science, UU}
         {2005-09}

\promitem{O. Tveretina}
         {Decision Procedures for Equality Logic with Uninterpreted Functions}
         {Faculty of Mathematics and Computer Science, TU/e}
         {2005-10}

\promitem{A.M.L. Liekens}
         {Evolution of Finite Populations in Dynamic Environments}
         {Faculty of Biomedical Engineering, TU/e}
         {2005-11}

\promitem{J. Eggermont}
         {Data Mining using Genetic Programming: Classification and Symbolic Regression}
         {Faculty of Mathematics and Natural Sciences, UL}
         {2005-12}

\promitem{B.J. Heeren}
         {Top Quality Type Error Messages}
         {Faculty of Science, UU}
         {2005-13}

\promitem{G.F. Frehse}
         {Compositional Verification of Hybrid Systems using Simulation Relations}
         {Faculty of Science, Mathematics and Computer Science, RU}
         {2005-14}

\promitem{M.R. Mousavi}
         {Structuring Structural Operational Semantics}
         {Faculty of Mathematics and Computer Science, TU/e}
         {2005-15}

\promitem{A. Sokolova}
         {Coalgebraic Analysis of Probabilistic Systems}
         {Faculty of Mathematics and Computer Science, TU/e}
         {2005-16}

\promitem{T. Gelsema}
         {Effective Models for the Structure of pi-Calculus Processes with Replication}
         {Faculty of Mathematics and Natural Sciences, UL}
         {2005-17}

\promitem{P. Zoeteweij}
         {Composing Constraint Solvers}
         {Faculty of Natural Sciences, Mathematics, and Computer Science, UvA}
         {2005-18}

\promitem{J.J. Vinju}
         {Analysis and Transformation of Source Code by Parsing and Rewriting}
         {Faculty of Natural Sciences, Mathematics, and Computer Science, UvA}
         {2005-19}

\promitem{M.Valero Espada}
         {Modal Abstraction and Replication of Processes with Data}
         {Faculty of Sciences, Division of Mathematics and Computer Science, VUA}
         {2005-20}

\promitem{A. Dijkstra}
         {Stepping through Haskell}
         {Faculty of Science, UU}
         {2005-21}

\promitem{Y.W. Law}
         {Key management and link-layer security of wireless sensor networks: energy-efficient attack and defense}
         {Faculty of Electrical Engineering, Mathematics \& Computer Science, UT}
         {2005-22}

\promitem{E. Dolstra}
         {The Purely Functional Software Deployment Model}
         {Faculty of Science, UU}
         {2006-01}

\promitem{R.J. Corin}
         {Analysis Models for Security Protocols}
         {Faculty of Electrical Engineering, Mathematics \& Computer Science, UT}
         {2006-02}

\promitem{P.R.A. Verbaan}
         {The Computational Complexity of Evolving Systems}
         {Faculty of Science, UU}
         {2006-03}

\promitem{K.L. Man and R.R.H. Schiffelers}
         {Formal Specification and Analysis of Hybrid Systems}
         {Faculty of Mathematics and Computer Science and Faculty of Mechanical Engineering, TU/e}
         {2006-04}

\promitem{M. Kyas}
         {Verifying OCL Specifications of UML Models: Tool Support and Compositionality}
         {Faculty of Mathematics and Natural Sciences, UL}
         {2006-05}

\promitem{M. Hendriks}
         {Model Checking Timed Automata - Techniques and Applications}
         {Faculty of Science, Mathematics and Computer Science, RU}
         {2006-06}

\promitem{J. Ketema}
         {B\"ohm-Like Trees for Rewriting}
         {Faculty of Sciences, VUA}
         {2006-07}

\promitem{C.-B. Breunesse}
         {On JML: topics in tool-assisted verification of JML programs}
         {Faculty of Science, Mathematics and Computer Science, RU}
         {2006-08}

\promitem{B. Markvoort}
         {Towards Hybrid Molecular Simulations}
         {Faculty of Biomedical Engineering, TU/e}
         {2006-09}

\promitem{S.G.R. Nijssen}
         {Mining Structured Data}
         {Faculty of Mathematics and Natural Sciences, UL}
         {2006-10}

\promitem{G. Russello}
         {Separation and Adaptation of Concerns in a Shared Data Space}
         {Faculty of Mathematics and Computer Science, TU/e}
         {2006-11}

\promitem{L. Cheung}
         {Reconciling Nondeterministic and Probabilistic Choices}
         {Faculty of Science, Mathematics and Computer Science, RU}
         {2006-12}

\promitem{B. Badban}
         {Verification techniques for Extensions of Equality Logic}
         {Faculty of Sciences, Division of Mathematics and Computer Science, VUA}
         {2006-13}

\promitem{A.J. Mooij}
         {Constructive formal methods and protocol standardization}
         {Faculty of Mathematics and Computer Science, TU/e}
         {2006-14}

\promitem{T. Krilavicius}
         {Hybrid Techniques for Hybrid Systems}
         {Faculty of Electrical Engineering, Mathematics \& Computer Science, UT}
         {2006-15}

\promitem{M.E. Warnier}
         {Language Based Security for Java and JML}
         {Faculty of Science, Mathematics and Computer Science, RU}
         {2006-16}

\promitem{V. Sundramoorthy}
         {At Home In Service Discovery}
         {Faculty of Electrical Engineering, Mathematics \& Computer Science, UT}
         {2006-17}

\promitem{B. Gebremichael}
         {Expressivity of Timed Automata Models}
         {Faculty of Science, Mathematics and Computer Science, RU}
         {2006-18}

\promitem{L.C.M. van Gool}
         {Formalising Interface Specifications}
         {Faculty of Mathematics and Computer Science, TU/e}
         {2006-19}

\promitem{C.J.F. Cremers}
         {Scyther - Semantics and Verification of Security Protocols}
         {Faculty of Mathematics and Computer Science, TU/e}
         {2006-20}

\promitem{J.V. Guillen Scholten}
         {Mobile Channels for Exogenous Coordination of Distributed Systems: Semantics, Implementation and Composition}
         {Faculty of Mathematics and Natural Sciences, UL}
         {2006-21}

\promitem{H.A. de Jong}
         {Flexible Heterogeneous Software Systems}
         {Faculty of Natural Sciences, Mathematics, and Computer Science, UvA}
         {2007-01}

\promitem{N.K. Kavaldjiev}
         {A run-time reconfigurable Network-on-Chip for streaming DSP applications}
         {Faculty of Electrical Engineering, Mathematics \& Computer Science, UT}
         {2007-02}

\promitem{M. van Veelen}
         {Considerations on Modeling for Early Detection of Abnormalities in Locally Autonomous Distributed Systems}
         {Faculty of Mathematics and Computing Sciences, RUG}
         {2007-03}

\promitem{T.D. Vu}
         {Semantics and Applications of Process and Program Algebra}
         {Faculty of Natural Sciences, Mathematics, and Computer Science, UvA}
         {2007-04}

\promitem{L. Brand\'an Briones}
         {Theories for Model-based Testing: Real-time and Coverage}
         {Faculty of Electrical Engineering, Mathematics \& Computer Science, UT}
         {2007-05}

\promitem{I. Loeb}
         {Natural Deduction: Sharing by Presentation}
         {Faculty of Science, Mathematics and Computer Science, RU}
         {2007-06}

\promitem{M.W.A. Streppel}
         {Multifunctional Geometric Data Structures}
         {Faculty of Mathematics and Computer Science, TU/e}
         {2007-07}

\promitem{N. Tr\v{c}ka}
         {Silent Steps in Transition Systems and Markov Chains}
         {Faculty of Mathematics and Computer Science, TU/e}
         {2007-08}

\promitem{R. Brinkman}
         {Searching in encrypted data}
         {Faculty of Electrical Engineering, Mathematics \& Computer Science, UT}
         {2007-09}

\promitem{A. van Weelden}
         {Putting types to good use}
         {Faculty of Science, Mathematics and Computer Science, RU}
         {2007-10}

\promitem{J.A.R. Noppen}
         {Imperfect Information in Software Development Processes}
         {Faculty of Electrical Engineering, Mathematics \& Computer Science, UT}
         {2007-11}

\promitem{R. Boumen}
         {Integration and Test plans for Complex Manufacturing Systems}
         {Faculty of Mechanical Engineering, TU/e}
         {2007-12}

\promitem{A.J. Wijs}
         {What to do Next?: Analysing and Optimising System Behaviour in Time}
         {Faculty of Sciences, Division of Mathematics and Computer Science, VUA}
         {2007-13}

\promitem{C.F.J. Lange}
         {Assessing and Improving the Quality of Modeling: A Series of Empirical Studies about the UML}
         {Faculty of Mathematics and Computer Science, TU/e}
         {2007-14}

\promitem{T. van der Storm}
         {Component-based Configuration, Integration and Delivery}
         {Faculty of Natural Sciences, Mathematics, and Computer Science,UvA}
         {2007-15}

\promitem{B.S. Graaf}
         {Model-Driven Evolution of Software Architectures}
         {Faculty of Electrical Engineering, Mathematics, and Computer Science, TUD}
         {2007-16}

\promitem{A.H.J. Mathijssen}
         {Logical Calculi for Reasoning with Binding}
         {Faculty of Mathematics and Computer Science, TU/e}
         {2007-17}

\promitem{D. Jarnikov}
         {QoS framework for Video Streaming in Home Networks}
         {Faculty of Mathematics and Computer Science, TU/e}
         {2007-18}

\promitem{M. A. Abam}
         {New Data Structures and Algorithms for Mobile Data}
         {Faculty of Mathematics and Computer Science, TU/e}
         {2007-19}

\promitem{W. Pieters}
         {La Volont\'{e} Machinale: Understanding the Electronic Voting Controversy}
         {Faculty of Science, Mathematics and Computer Science, RU}
         {2008-01}

\promitem{A.L. de Groot}
         {Practical Automaton Proofs in PVS}
         {Faculty of Science, Mathematics and Computer Science, RU}
         {2008-02}

\promitem{M. Bruntink}
         {Renovation of Idiomatic Crosscutting Concerns in Embedded Systems}
         {Faculty of Electrical Engineering, Mathematics, and Computer Science, TUD}
         {2008-03}

\promitem{A.M. Marin}
         {An Integrated System to Manage Crosscutting Concerns in Source Code}
         {Faculty of Electrical Engineering, Mathematics, and Computer Science, TUD}
         {2008-04}

\promitem{N.C.W.M. Braspenning}
         {Model-based Integration and Testing of High-tech Multi-disciplinary Systems}
         {Faculty of Mechanical Engineering, TU/e}
         {2008-05}

\promitem{M. Bravenboer}
         {Exercises in Free Syntax: Syntax Definition, Parsing, and Assimilation of Language Conglomerates}
         {Faculty of Science, UU}
         {2008-06}

\promitem{M. Torabi Dashti}
         {Keeping Fairness Alive: Design and Formal Verification of Optimistic Fair Exchange Protocols}
         {Faculty of Sciences, Division of Mathematics and Computer Science, VUA}
         {2008-07}

\promitem{I.S.M. de Jong}
         {Integration and Test Strategies for Complex Manufacturing Machines}
         {Faculty of Mechanical Engineering, TU/e}
         {2008-08}

\promitem{I. Hasuo}
         {Tracing Anonymity with Coalgebras}
         {Faculty of Science, Mathematics and Computer Science, RU}
         {2008-09}

\promitem{L.G.W.A. Cleophas}
         {Tree Algorithms: Two Taxonomies and a Toolkit}
         {Faculty of Mathematics and Computer Science, TU/e}
         {2008-10}

\promitem{I.S. Zapreev}
         {Model Checking Markov Chains: Techniques and Tools}
         {Faculty of Electrical Engineering, Mathematics \& Computer Science, UT}
         {2008-11}

\promitem{M. Farshi}
         {A Theoretical and Experimental Study of Geometric Networks}
     {Faculty of Mathematics and Computer Science, TU/e}
         {2008-12}

\promitem{G. Gulesir}
         {Evolvable Behavior Specifications Using Context-Sensitive Wildcards}
         {Faculty of Electrical Engineering, Mathematics \& Computer Science, UT}
         {2008-13}

\promitem{F.D. Garcia}
         {Formal and Computational Cryptography: Protocols, Hashes and Commitments}
     {Faculty of Science, Mathematics and Computer Science, RU}
         {2008-14}

\promitem{P. E. A. D\"{u}rr}
         {Resource-based Verification for Robust Composition of Aspects}
         {Faculty of Electrical Engineering, Mathematics \& Computer Science, UT}
         {2008-15}

\promitem{E.M. Bortnik}
         {Formal Methods in Support of SMC Design}
         {Faculty of Mechanical Engineering, TU/e}
         {2008-16}

\promitem{R.H. Mak}
         {Design and Performance Analysis of Data-Independent Stream Processing       Systems}
         {Faculty of Mathematics and Computer Science, TU/e}
         {2008-17}

\promitem{M. van der Horst}
         {Scalable Block Processing Algorithms}
         {Faculty of Mathematics and Computer Science, TU/e}
         {2008-18}

\promitem{C.M. Gray}
         {Algorithms for Fat Objects: Decompositions and Applications}
         {Faculty of Mathematics and Computer Science, TU/e}
         {2008-19}

\promitem{J.R. Calam\'{e}}
         {Testing Reactive Systems with Data - Enumerative Methods and Constraint Solving}
         {Faculty of Electrical Engineering, Mathematics \& Computer Science, UT}
         {2008-20}

\promitem{E. Mumford}
         {Drawing Graphs for Cartographic Applications}
         {Faculty of Mathematics and Computer Science, TU/e}
         {2008-21}

\promitem{E.H. de Graaf}
         {Mining Semi-structured Data, Theoretical and Experimental Aspects of Pattern Evaluation}
         {Faculty of Mathematics and Natural Sciences, UL}
         {2008-22}

\promitem{R. Brijder}
         {Models of Natural Computation: Gene Assembly and Membrane Systems}
         {Faculty of Mathematics and Natural Sciences, UL}
         {2008-23}

\promitem{A. Koprowski}
         {Termination of Rewriting and Its Certification}
         {Faculty of Mathematics and Computer Science, TU/e}
         {2008-24}

\promitem{U. Khadim}
         {Process Algebras for Hybrid Systems: Comparison and Development}
         {Faculty of Mathematics and Computer Science, TU/e}
         {2008-25}

\promitem{J. Markovski}
         {Real and Stochastic Time in Process Algebras for Performance Evaluation}
         {Faculty of Mathematics and Computer Science, TU/e}
         {2008-26}

\promitem{H. Kastenberg}
         {Graph-Based Software Specification and Verification}
         {Faculty of Electrical Engineering, Mathematics \& Computer Science, UT}
         {2008-27}

\promitem{I.R. Buhan}
         {Cryptographic Keys from Noisy Data Theory and Applications}
         {Faculty of Electrical Engineering, Mathematics \& Computer Science, UT}
         {2008-28}

\promitem{R.S. Marin-Perianu}
         {Wireless Sensor Networks in Motion: Clustering Algorithms for Service Discovery and Provisioning}
         {Faculty of Electrical Engineering, Mathematics \& Computer Science, UT}
         {2008-29}

\promitem{M.H.G. Verhoef}
         {Modeling and Validating Distributed Embedded Real-Time Control Systems}
         {Faculty of Science, Mathematics and Computer Science, RU}
         {2009-01}

\promitem{M. de Mol}
         {Reasoning about Functional Programs: Sparkle, a proof assistant for Clean}
         {Faculty of Science, Mathematics and Computer Science, RU}
         {2009-02}

\promitem{M. Lormans}
         {Managing Requirements Evolution}
         {Faculty of Electrical Engineering, Mathematics, and Computer Science, TUD}
         {2009-03}

\promitem{M.P.W.J. van Osch}
         {Automated Model-based Testing of Hybrid Systems}
         {Faculty of Mathematics and Computer Science, TU/e}
         {2009-04}

\promitem{H. Sozer}
         {Architecting Fault-Tolerant Software Systems}
         {Faculty of Electrical Engineering, Mathematics \& Computer Science, UT}
         {2009-05}

\promitem{M.J. van Weerdenburg}
         {Efficient Rewriting Techniques}
         {Faculty of Mathematics and Computer Science, TU/e}
         {2009-06}

\promitem{H.H. Hansen}
         {Coalgebraic Modelling: Applications in Automata Theory and Modal Logic}
         {Faculty of Sciences, Division of Mathematics and Computer Science, VUA}
         {2009-07}

\promitem{A. Mesbah}
         {Analysis and Testing of Ajax-based Single-page Web Applications}
         {Faculty of Electrical Engineering, Mathematics, and Computer Science, TUD}
         {2009-08}

\promitem{A.L. Rodriguez Yakushev}
         {Towards Getting Generic Programming Ready for Prime Time}
         {Faculty of Science, UU}
         {2009-9}

\promitem{K.R. Olmos Joffr\'e}
         {Strategies for Context Sensitive Program Transformation}
         {Faculty of Science, UU}
         {2009-10}

\promitem{J.A.G.M. van den Berg}
         {Reasoning about Java programs in PVS using JML}
         {Faculty of Science, Mathematics and Computer Science, RU}
         {2009-11}

\promitem{M.G. Khatib}
         {MEMS-Based Storage Devices. Integration in Energy-Constrained Mobile Systems}
         {Faculty of Electrical Engineering, Mathematics \& Computer Science, UT}
         {2009-12}

\promitem{S.G.M. Cornelissen}
         {Evaluating Dynamic Analysis Techniques for Program Comprehension}
         {Faculty of Electrical Engineering, Mathematics, and Computer Science, TUD}
         {2009-13}

\promitem{D. Bolzoni}
         {Revisiting Anomaly-based Network Intrusion Detection Systems}
         {Faculty of Electrical Engineering, Mathematics \& Computer Science, UT}
         {2009-14}

\promitem{H.L. Jonker}
         {Security Matters: Privacy in Voting and Fairness in Digital Exchange}
         {Faculty of Mathematics and Computer Science, TU/e}
         {2009-15}

\promitem{M.R. Czenko}
         {TuLiP - Reshaping Trust Management}
         {Faculty of Electrical Engineering, Mathematics \& Computer Science, UT}
         {2009-16}

\promitem{T. Chen}
         {Clocks, Dice and Processes}
         {Faculty of Sciences, Division of Mathematics and Computer Science, VUA}
         {2009-17}

\promitem{C. Kaliszyk}
         {Correctness and Availability:
 Building Computer Algebra on top of Proof Assistants and making Proof
Assistants available over the Web}
         {Faculty of Science, Mathematics and Computer Science, RU}
         {2009-18}

\promitem{R.S.S. O'Connor}
         {Incompleteness \& Completeness: Formalizing Logic and Analysis in Type Theory}
         {Faculty of Science, Mathematics and Computer Science, RU}
         {2009-19}

\promitem{B. Ploeger}
         {Improved Verification Methods for Concurrent Systems}
         {Faculty of Mathematics and Computer Science, TU/e}
         {2009-20}

\promitem{T. Han}
         {Diagnosis, Synthesis and Analysis of Probabilistic Models}
         {Faculty of Electrical Engineering, Mathematics \& Computer Science, UT}
         {2009-21}

\promitem{R. Li}
         {Mixed-Integer Evolution Strategies for Parameter Optimization and Their Applications to Medical Image Analysis}
         {Faculty of Mathematics and Natural Sciences, UL}
         {2009-22}

\promitem{J.H.P. Kwisthout}
         {The Computational Complexity of Probabilistic Networks}
         {Faculty of Science, UU}
         {2009-23}

\promitem{T.K. Cocx}
         {Algorithmic Tools for Data-Oriented Law Enforcement}
         {Faculty of Mathematics and Natural Sciences, UL}
         {2009-24}

\promitem{A.I. Baars}
         {Embedded Compilers}
         {Faculty of Science, UU}
         {2009-25}

\promitem{M.A.C. Dekker}
         {Flexible Access Control for Dynamic Collaborative Environments}
         {Faculty of Electrical Engineering, Mathematics \& Computer Science, UT}
         {2009-26}

\promitem{J.F.J. Laros}
         {Metrics and Visualisation for Crime Analysis and Genomics}
         {Faculty of Mathematics and Natural Sciences, UL}
         {2009-27}

\promitem{C.J. Boogerd}
         {Focusing Automatic Code Inspections}
         {Faculty of Electrical Engineering, Mathematics, and Computer Science, TUD}
         {2010-01}

\promitem{M.R. Neuh\"au{\ss}er}
         {Model Checking Nondeterministic and Randomly Timed Systems}
         {Faculty of Electrical Engineering, Mathematics \& Computer Science, UT}
         {2010-02}

\promitem{J. Endrullis}
         {Termination and Productivity}
         {Faculty of Sciences, Division of Mathematics and Computer Science, VUA}
         {2010-03}

\promitem{T. Staijen}
         {Graph-Based Specification and Verification for Aspect-Oriented Languages}
         {Faculty of Electrical Engineering, Mathematics \& Computer Science, UT}
         {2010-04}

\promitem{Y.  Wang}
         {Epistemic Modelling and Protocol Dynamics}
         {Faculty of Science, UvA}
         {2010-05}

\promitem{J.K.  Berendsen}
         {Abstraction, Prices and Probability in Model Checking Timed Automata}
         {Faculty of Science, Mathematics and Computer Science, RU}
         {2010-06}

\promitem{A. Nugroho}
         {The Effects of UML Modeling on the Quality of Software}
         {Faculty of Mathematics and Natural Sciences, UL}
         {2010-07}

\promitem{A. Silva}
         {Kleene Coalgebra}
         {Faculty of Science, Mathematics and Computer Science, RU}
         {2010-08}

\promitem{J.S. de Bruin}
         {Service-Oriented Discovery of Knowledge - Foundations, Implementations and Applications}
         {Faculty of Mathematics and Natural Sciences, UL}
         {2010-09}

\promitem{D. Costa}
         {Formal Models for Component Connectors}
         {Faculty of Sciences, Division of Mathematics and Computer Science, VUA}
         {2010-10}

\promitem{M.M. Jaghoori}
         {Time at Your Service: Schedulability Analysis of Real-Time and Distributed Services}
         {Faculty of Mathematics and Natural Sciences, UL}
         {2010-11}

\promitem{R. Bakhshi}
         {Gossiping Models: Formal Analysis of Epidemic Protocols}
         {Faculty of Sciences, Department of Computer Science, VUA}
         {2011-01}

\promitem{B.J. Arnoldus}
         {An Illumination of the Template Enigma: Software Code Generation with Templates}
         {Faculty of Mathematics and Computer Science, TU/e}
         {2011-02}

\promitem{E. Zambon}
         {Towards Optimal IT Availability Planning: Methods and Tools}
         {Faculty of Electrical Engineering, Mathematics \& Computer Science, UT}
         {2011-03}

\promitem{L. Astefanoaei}
         {An Executable Theory of Multi-Agent Systems Refinement}
         {Faculty of Mathematics and Natural Sciences, UL}
         {2011-04}

\promitem{J. Proen{\c c}a}
         {Synchronous coordination of distributed components}
         {Faculty of Mathematics and Natural Sciences, UL}
         {2011-05}

\promitem{A. Moral\i}
         {IT Architecture-Based Confidentiality Risk Assessment in Networks of Organizations}
         {Faculty of Electrical Engineering, Mathematics \& Computer Science, UT}
         {2011-06}

\promitem{M. van der Bijl}
         {On changing models in Model-Based Testing}
         {Faculty of Electrical Engineering, Mathematics \& Computer Science, UT}
         {2011-07}

\promitem{C. Krause}
         {Reconfigurable Component Connectors}
         {Faculty of Mathematics and Natural Sciences, UL}
         {2011-08}

\promitem{M.E. Andr\'es}
         {Quantitative Analysis of Information Leakage in Probabilistic and Nondeterministic Systems}
         {Faculty of Science, Mathematics and Computer Science, RU}
         {2011-09}

\end{multicols}

\end{document}